\begin{document}
\theoremstyle{plain}
\newtheorem*{ithm}{Theorem}
\newtheorem*{iprop}{Proposition}
\newtheorem*{idefn}{Definition}
\newtheorem{thm}{Theorem}[section]
\newtheorem{lem}[thm]{Lemma}
\newtheorem{dlem}[thm]{Lemma/Definition}
\newtheorem{prop}[thm]{Proposition}
\newtheorem{set}[thm]{Setting}
\newtheorem{cor}[thm]{Corollary}
\newtheorem*{icor}{Corollary}
\theoremstyle{definition}
\newtheorem{assum}[thm]{Assumption}
\newtheorem{notation}[thm]{Notation}
\newtheorem{setting}[thm]{Setting}
\newtheorem{defn}[thm]{Definition}
\newtheorem{clm}[thm]{Claim}
\newtheorem{ex}[thm]{Example}
\theoremstyle{remark}
\newtheorem{rem}[thm]{Remark}
\newcommand{\unit}{\mathbb I}
\newcommand{\ali}[1]{{\mathfrak A}_{[ #1 ,\infty)}}
\newcommand{\alm}[1]{{\mathfrak A}_{(-\infty, #1 ]}}
\newcommand{\nn}[1]{\lV #1 \rV}
\newcommand{\br}{{\mathbb R}}
\newcommand{\dm}{{\rm dom}\mu}
\newcommand{\lb}{l_{\bb}(n,n_0,k_R,k_L,\lal,\bbD,\bbG,Y)}
\newcommand{\Ad}{\mathop{\mathrm{Ad}}\nolimits}
\newcommand{\Proj}{\mathop{\mathrm{Proj}}\nolimits}
\newcommand{\RRe}{\mathop{\mathrm{Re}}\nolimits}
\newcommand{\RIm}{\mathop{\mathrm{Im}}\nolimits}
\newcommand{\Wo}{\mathop{\mathrm{Wo}}\nolimits}
\newcommand{\Prim}{\mathop{\mathrm{Prim}_1}\nolimits}
\newcommand{\Primz}{\mathop{\mathrm{Prim}}\nolimits}
\newcommand{\ClassA}{\mathop{\mathrm{ClassA}}\nolimits}
\newcommand{\Class}{\mathop{\mathrm{Class}}\nolimits}
\newcommand{\diam}{\mathop{\mathrm{diam}}\nolimits}
\def\qed{{\unskip\nobreak\hfil\penalty50
\hskip2em\hbox{}\nobreak\hfil$\square$
\parfillskip=0pt \finalhyphendemerits=0\par}\medskip}
\def\proof{\trivlist \item[\hskip \labelsep{\bf Proof.\ }]}
\def\endproof{\null\hfill\qed\endtrivlist\noindent}
\def\proofof[#1]{\trivlist \item[\hskip \labelsep{\bf Proof of #1.\ }]}
\def\endproofof{\null\hfill\qed\endtrivlist\noindent}
\numberwithin{equation}{section}

\newcommand{\kakunin}[1]{{\color{red}}}

\newcommand{\oo}{{\boldsymbol\omega}}
\newcommand{\ctv}{\caC_{(\theta,\varphi)}}
\newcommand{\btv}{\caB_{(\theta,\varphi)}}
\newcommand{\amf}{\mathfrak A}
\newcommand{\at}{\caA_{\bbZ^2}}
\newcommand{\oz}{\caO_0}
\newcommand{\caA}{{\mathcal A}}
\newcommand{\caB}{{\mathcal B}}
\newcommand{\caC}{{\mathcal C}}
\newcommand{\caD}{{\mathcal D}}
\newcommand{\caE}{{\mathcal E}}
\newcommand{\caF}{{\mathcal F}}
\newcommand{\caG}{{\mathcal G}}
\newcommand{\caH}{{\mathcal H}}
\newcommand{\caI}{{\mathcal I}}
\newcommand{\caJ}{{\mathcal J}}
\newcommand{\caK}{{\mathcal K}}
\newcommand{\caL}{{\mathcal L}}
\newcommand{\caM}{{\mathcal M}}
\newcommand{\caN}{{\mathcal N}}
\newcommand{\caO}{{\mathcal O}}
\newcommand{\caP}{{\mathcal P}}
\newcommand{\caQ}{{\mathcal Q}}
\newcommand{\caR}{{\mathcal R}}
\newcommand{\caS}{{\mathcal S}}
\newcommand{\caT}{{\mathcal T}}
\newcommand{\caU}{{\mathcal U}}
\newcommand{\caV}{{\mathcal V}}
\newcommand{\caW}{{\mathcal W}}
\newcommand{\caX}{{\mathcal X}}
\newcommand{\caY}{{\mathcal Y}}
\newcommand{\caZ}{{\mathcal Z}}
\newcommand{\bba}{{\mathbb a}}
\newcommand{\bbA}{{\mathbb A}}
\newcommand{\bbB}{{\mathbb B}}
\newcommand{\bbC}{{\mathbb C}}
\newcommand{\bbD}{{\mathbb D}}
\newcommand{\bbE}{{\mathbb E}}
\newcommand{\bbF}{{\mathbb F}}
\newcommand{\bbG}{{\mathbb G}}
\newcommand{\bbH}{{\mathbb H}}
\newcommand{\bbI}{{\mathbb I}}
\newcommand{\bbJ}{{\mathbb J}}
\newcommand{\bbK}{{\mathbb K}}
\newcommand{\bbL}{{\mathbb L}}
\newcommand{\bbM}{{\mathbb M}}
\newcommand{\bbN}{{\mathbb N}}
\newcommand{\bbO}{{\mathbb O}}
\newcommand{\bbP}{{\mathbb P}}
\newcommand{\bbQ}{{\mathbb Q}}
\newcommand{\bbR}{{\mathbb R}}
\newcommand{\bbS}{{\mathbb S}}
\newcommand{\bbT}{{\mathbb T}}
\newcommand{\bbU}{{\mathbb U}}
\newcommand{\bbV}{{\mathbb V}}
\newcommand{\bbW}{{\mathbb W}}
\newcommand{\bbX}{{\mathbb X}}
\newcommand{\bbY}{{\mathbb Y}}
\newcommand{\bbZ}{{\mathbb Z}}
\newcommand{\str}{^*}
\newcommand{\lv}{\left \vert}
\newcommand{\rv}{\right \vert}
\newcommand{\lV}{\left \Vert}
\newcommand{\rV}{\right \Vert}
\newcommand{\la}{\left \langle}
\newcommand{\ra}{\right \rangle}
\newcommand{\ltm}{\left \{}
\newcommand{\rtm}{\right \}}
\newcommand{\lcm}{\left [}
\newcommand{\rcm}{\right ]}
\newcommand{\ket}[1]{\lv #1 \ra}
\newcommand{\bra}[1]{\la #1 \rv}
\newcommand{\lmk}{\left (}
\newcommand{\rmk}{\right )}
\newcommand{\al}{{\mathcal A}}
\newcommand{\md}{M_d({\mathbb C})}
\newcommand{\ainn}{\mathop{\mathrm{AInn}}\nolimits}
\newcommand{\id}{\mathop{\mathrm{id}}\nolimits}
\newcommand{\Tr}{\mathop{\mathrm{Tr}}\nolimits}
\newcommand{\Ran}{\mathop{\mathrm{Ran}}\nolimits}
\newcommand{\Ker}{\mathop{\mathrm{Ker}}\nolimits}
\newcommand{\Aut}{\mathop{\mathrm{Aut}}\nolimits}
\newcommand{\spn}{\mathop{\mathrm{span}}\nolimits}
\newcommand{\Mat}{\mathop{\mathrm{M}}\nolimits}
\newcommand{\UT}{\mathop{\mathrm{UT}}\nolimits}
\newcommand{\DT}{\mathop{\mathrm{DT}}\nolimits}
\newcommand{\GL}{\mathop{\mathrm{GL}}\nolimits}
\newcommand{\spa}{\mathop{\mathrm{span}}\nolimits}
\newcommand{\supp}{\mathop{\mathrm{supp}}\nolimits}
\newcommand{\rank}{\mathop{\mathrm{rank}}\nolimits}
\newcommand{\idd}{\mathop{\mathrm{id}}\nolimits}
\newcommand{\ran}{\mathop{\mathrm{Ran}}\nolimits}
\newcommand{\dr}{ \mathop{\mathrm{d}_{{\mathbb R}^k}}\nolimits} 
\newcommand{\dc}{ \mathop{\mathrm{d}_{\cc}}\nolimits} \newcommand{\drr}{ \mathop{\mathrm{d}_{\rr}}\nolimits} 
\newcommand{\zin}{\mathbb{Z}}
\newcommand{\rr}{\mathbb{R}}
\newcommand{\cc}{\mathbb{C}}
\newcommand{\ww}{\mathbb{W}}
\newcommand{\nan}{\mathbb{N}}\newcommand{\bb}{\mathbb{B}}
\newcommand{\aaa}{\mathbb{A}}\newcommand{\ee}{\mathbb{E}}
\newcommand{\pp}{\mathbb{P}}
\newcommand{\wks}{\mathop{\mathrm{wk^*-}}\nolimits}
\newcommand{\mk}{{\Mat_k}}
\newcommand{\mnz}{\Mat_{n_0}}
\newcommand{\mn}{\Mat_{n}}
\newcommand{\dist}{\dc}
\newcommand{\braket}[2]{\left\langle#1,#2\right\rangle}
\newcommand{\ketbra}[2]{\left\vert #1\right \rangle \left\langle #2\right\vert}
\newcommand{\abs}[1]{\left\vert#1\right\vert}
\newcommand{\trl}[2]
{T_{#1}^{(\theta,\varphi), \Lambda_{#2},\bar V_{#1,\Lambda_{#2}}}}
\newcommand{\trlz}[1]
{T_{#1}^{(\theta,\varphi), \Lambda_{0},\unit}}
\newcommand{\trlt}[2]
{T_{#1}^{(\theta,\varphi), \Lambda_{#2}+t_{#2}\bm e_{\Lambda_{#2}},\bar V_{#1,\Lambda_{#2}+t_{#2}\bm e_{\Lambda_{#2}}}}}
\newcommand{\trltj}[4]
{T_{#1}^{(\theta,\varphi), \Lambda_{#2}^{(#3)}+t_{#4}\bm e_{\Lambda_{#2}^{(#3)}},\bar V_{#1,\Lambda_{#2}^{(#3)}+t_{#4}\bm e_{\Lambda_{#2}^{(#3)}}}}}
\newcommand{\trltjp}[4]
{T_{#1}^{(\theta,\varphi), {\Lambda'}_{#2}^{(#3)}+t_{#4}'\bm e_{{\Lambda'}_{#2}^{(#3)}},\bar V_{#1,{\Lambda'}_{#2}^{(#3)}+t_{#4}'\bm e_{{\Lambda'}_{#2}^{(#3)}}}}}
\newcommand{\trlta}[2]
{T_{#1}^{(\theta,\varphi), \Lambda_{#2}^{t_{#2}},\bar V_{#1,\Lambda_{#2}^{t_{#2}}}}}
\newcommand{\trltb}[2]
{T_{#1}^{(\theta,\varphi), \Lambda_{#2}+t\bm e_{\Lambda_{#2}},\bar V_{#1,\Lambda_{#2}+t\bm e_{\Lambda_{#2}}}}}

\newcommand{\trlpt}[2]
{T_{#1}^{(\theta,\varphi), \Lambda_{#2}'+t_{#2}'\bm e_{\Lambda_{#2}'},\bar V_{#1,\Lambda_{#2}'+t_{#2}'\bm e_{\Lambda_{#2}'}}}}

\newcommand{\trll}[3]
{T_{#1, #3}^{(\theta,\varphi), \Lambda_{#2},\bar V_{#1,\Lambda_{#2}}}}
\newcommand{\trlp}[2]
{T_{#1}^{(\theta,\varphi), \Lambda_{#2}',\bar V_{#1,\Lambda_{#2}'}}}
\newcommand{\trlpp}[2]
{T_{#1}^{(\theta,\varphi), \Lambda_{#2}'',\bar V_{#1,\Lambda_{#2}''}}}
\newcommand{\trlj}[3]
{T_{\rho_{#1}}^{(\theta,\varphi), \Lambda_{#2}^{(#3)}, V_{\rho_{#1},\Lambda_{#2}^{(#3)}}}}
\newcommand{\trljp}[3]
{T_{{\rho'}_{#1}}^{(\theta,\varphi), {\Lambda'}_{#2}^{(#3)},V_{\rho'_{#1},{\Lambda'}_{#2}^{(#3)}}}}
\newcommand{\wod}[3]
{W_{#1\Lambda_{#2}\Lambda_{#3}}}
\newcommand{\wodt}[3]
{{W^{\bm t}}_{#1\Lambda_{#2}\Lambda_{#3}}}
\newcommand{\comp}[2]
{{(\theta_{#1},\varphi_{#1}), \Lambda_{#2},\{\bar V_{\eta,\Lambda_{#2}}\}_\eta}}
\newcommand{\ltj}[2]{\Lambda_{#1}+{#2} \bm e_{\Lambda_{#1}} }
\newcommand{\ltjp}[2]{{\Lambda'}_{#1}+{#2} \bm e_{\Lambda_{#1}} }
\newtheorem{nota}{Notation}[section]
\def\qed{{\unskip\nobreak\hfil\penalty50
\hskip2em\hbox{}\nobreak\hfil$\square$
\parfillskip=0pt \finalhyphendemerits=0\par}\medskip}
\def\proof{\trivlist \item[\hskip \labelsep{\bf Proof.\ }]}
\def\endproof{\null\hfill\qed\endtrivlist\noindent}
\def\proofof[#1]{\trivlist \item[\hskip \labelsep{\bf Proof of #1.\ }]}
\def\endproofof{\null\hfill\qed\endtrivlist\noindent}
\newcommand{\wrl}[2]{Y_{#1}^{\Lambda_0^{(#2)}}}
\newcommand{\wrlt}[2]{\tilde Y_{#1}^{\Lambda_0^{(#2)}}}
\newcommand{\ZZ}{\bbZ_2\times\bbZ_2}
\newcommand{\SSS}{\mathcal{S}}
\newcommand{\cs}{S}
\newcommand{\ct}{t}
\newcommand{\hS}{S}
\newcommand{\vv}{{\boldsymbol v}}
\newcommand{\ala}{a}
\newcommand{\bet}{b}
\newcommand{\gam}{c}
\newcommand{\alphas}{\alpha}
\newcommand{\alphai}{\alpha^{(\sigma_{1})}}
\newcommand{\alphan}{\alpha^{(\sigma_{2})}}
\newcommand{\betas}{\beta}
\newcommand{\betai}{\beta^{(\sigma_{1})}}
\newcommand{\betan}{\beta^{(\sigma_{2})}}
\newcommand{\alphass}{\alpha^{{(\sigma)}}}
\newcommand{\uu}{V}
\newcommand{\vp}{\varsigma}
\newcommand{\vpr}{R}
\newcommand{\tg}{\tau_{\Gamma}}
\newcommand{\sgg}{\Sigma_{\Gamma}}
\newcommand{\nh}{t28}
\newcommand{\rk}{6}
\newcommand{\nii}{2}
\newcommand{\nhh}{28}
\newcommand{\sjt}{30}
\newcommand{\sjtg}{30}
\newcommand{\bcg}{\caB(\caH_{\alpha})\otimes  C^{*}(\Sigma)}
\newcommand{\pza}[1]{\pi_0\lmk\caA_{\Lambda_{#1}}\rmk''}
\newcommand{\pzac}[1]{\pi_0\lmk\caA_{\Lambda_{#1}}\rmk'}
\newcommand{\pzacc}[1]{\pi_0\lmk\caA_{\Lambda_{#1}^c}\rmk'}
\newcommand{\trlzi}[2]{T_{#1}^{(\theta,\varphi) \Lambda_0^{(#2)}\unit}}

\newcommand{\obk}{\omega_{\mathop{\mathrm{bk}}}}
\newcommand{\obd}{\omega_{\mathop{\mathrm{bd}}}}
\newcommand{\obdm}{\omega_{\mathop{\mathrm{bd}(-)}}}
\newcommand{\abk}{\caA_{\mathbb Z^2}}
\newcommand{\hu}{\mathop {\mathrm H_{U}}}
\newcommand{\hd}{\mathop {\mathrm H_{D}}}

\newcommand{\abd}{\caA_{\hu}}
\newcommand{\aloch}{\caA_{\hu,\mathop{\mathrm {loc}}}}
\newcommand{\alocg}[1]{\caA_{#1,\mathop{\mathrm {loc}}}}
\newcommand{\hbk}{\caH_{\mathop{\mathrm{bk}}}}
\newcommand{\hbd}{\caH_{\mathop{\mathrm{bd}}}}
\newcommand{\hbdm}{\caH_{\mathop{\mathrm{bd}(-)}}}
\newcommand{\pbk}{\pi_{\mathop{\mathrm{bk}}}}
\newcommand{\pbd}{\pi_{\mathop{\mathrm{bd}}}}
\newcommand{\pbdm}{\pi_{\mathop{\mathrm{bd}(-)}}}
\newcommand{\mopbk}{{\mathop{\mathrm{bk}}}}
\newcommand{\mopbd}{{\mathop{\mathrm{bd}}}}
\newcommand{\Obk}{O_{\mathop{\mathrm{bk}}}}
\newcommand{\OUbk}{O_{\mathop{\mathrm{bk}}}^U}
\newcommand{\Orbd}{O^r_{\mathop{\mathrm{bd}}}}
\newcommand{\OrUbd}{O^{rU}_{\mathop{\mathrm{bd}}}}
\newcommand{\Obkl}{O_{\mathop{\mathrm{bk},\Lambda_0}}}
\newcommand{\OUbkl}{O_{\mathop{\mathrm{bk},\Lambda_0}}^U}
\newcommand{\Orbdl}{O^r_{\mathop{\mathrm{bd},\Lambda_{r0}}}}
\newcommand{\OrUbdl}{O^{rU}_{\mathop{\mathrm{bd},\Lambda_{r0}}}}
\newcommand{\Obu}{O_{\mathop{\mathrm{bd}}}^{\mathop{\mathrm{bu}}}}
\newcommand{\Obul}{O_{\mathop{\mathrm{bd}},\lz}^{\mathop{\mathrm{bu}}}}
\newcommand{\Odl}{O_{\lzr}^{\caD}}
\newcommand{\fd}{F^{\caD}}
\newcommand{\hilb}{\mathop{\mathrm {Hilb}}_f}
\newcommand{\Obun}[1]{O_{\mathop{\mathrm{bd}#1}}^{\mathop{\mathrm{bu}}}}
\newcommand{\OrUbdn}[1]{O^{rU}_{\mathop{\mathrm{bd}#1 }}}
\newcommand{\OUbkm}{O_{\mathop{\mathrm{bk}}}^{U(-)}}
\newcommand{\Orbdm}{O^{r(-)_{\mathop{\mathrm{bd} }}}}
\newcommand{\OrUbdm}{O^{rU(-)}_{\mathop{\mathrm{bd}}}}
\newcommand{\OUbklm}{O_{\mathop{\mathrm{bk},\Lambda_{0(-)}}}^{U(-)}}
\newcommand{\Orbdlm}{O^{r(-)}_{\mathop{\mathrm{bd},\Lambda_{r0(-)}}}}
\newcommand{\OrUbdlm}{O^{rU(-)}_{\mathop{\mathrm{bd},\Lambda_{r0(-)}}}}
\newcommand{\Obum}{O_{\mathop{\mathrm{bd}}}^{\mathop{\mathrm{bu}(-)}}}
\newcommand{\Obulm}{O_{\mathop{\mathrm{bd}},{\lm {0(-)}}}^{\mathop{\mathrm{bu}(-)}}}
\newcommand{\Otot}{O_{\mathop{\mathrm{total}}}}
\newcommand{\Ototl}{O_{\mathop{\mathrm{total}}}^{\lz,\lzm}}

\newcommand{\Obj}{{\mathop{\mathrm{Obj}}}}
\newcommand{\Mor}{{\mathop{\mathrm{Mor}}}}
\newcommand{\ti}[4]{\tilde\iota\lmk(#1,#2), (#3,#4)\rmk}

\newcommand{\Cabkl}{C_{\mathop{\mathrm{bk},\lz}}}
\newcommand{\Cabk}{C_{\mathop{\mathrm{bk}}}}
\newcommand{\CaUbk}{C_{\mathop{\mathrm{bk}}}^U}
\newcommand{\Carbd}{C^r_{\mathop{\mathrm{bd}}}}
\newcommand{\CarUbd}{C^{rU}_{\mathop{\mathrm{bd}}}}
\newcommand{\CaUbkl}{C_{\mathop{\mathrm{bk},\Lambda_0}}^U}
\newcommand{\Carbdl}{C^r_{\mathop{\mathrm{bd},\Lambda_{r0}}}}
\newcommand{\CarUbdl}{C^{rU}_{\mathop{\mathrm{bd},\Lambda_{r0}}}}
\newcommand{\Cabu}{C_{\mathop{\mathrm{bd}}}^{\mathop{\mathrm{bu}}}}
\newcommand{\Cabul}{C_{\mathop{\mathrm{bd}},\lz}^{\mathop{\mathrm{bu}}}}
\newcommand{\Cadl}{C_{\lzr}^{\caD}}
\newcommand{\Hom}{\mathop{\mathrm{Hom}}}
\newcommand{\Catotl}{C_{\mathop{\mathrm{total}}}^{\lz,\lzm}}
\newcommand{\Cafin}{C_{\mathop{\mathrm{bd}},\lz}^{\mathop{\mathrm{fin}}}}

\newcommand{\mm}{\Mor_{\tilde\caM}}
\newcommand{\om}{\Obj\lmk {\tilde\caM}\rmk}
\newcommand{\omt}{\otimes_{\tilde\caM}}

\newcommand{\CaUbkm}{C_{\mathop{\mathrm{bk}}}^{U(-)}}
\newcommand{\Carbdm}{C^{r(-)}_{\mathop{\mathrm{bd}}}}
\newcommand{\CarUbdm}{C^{rU(m)}_{\mathop{\mathrm{bd}}}}
\newcommand{\CaUbklm}{C_{\mathop{\mathrm{bk},\Lambda_{0(-)}}}^{U(-)}}
\newcommand{\Carbdlm}{C^{r(-)}_{\mathop{\mathrm{bd},\Lambda_{r0(-)}}}}
\newcommand{\CarUbdlm}{C^{rU(-)}_{\mathop{\mathrm{bd},\Lambda_{r0(-)}}}}
\newcommand{\Cabum}{C_{\mathop{\mathrm{bd}}}^{\mathop{\mathrm{bu}(-)}}}
\newcommand{\Cabuln}{C_{\mathop{\mathrm{bd}},\lm{0(-)}}^{\mathop{\mathrm{bu}(-)}}}
\newcommand{\Irr}{\mathop{\mathrm{Irr}}}


\newcommand{\Tbk}[3]{T_{#1}^{(\frac{3\pi}2,\frac\pi 2), {#2},{#3}}}
\newcommand{\Vrl}[2]{V_{#1,#2}}
\newcommand{\Tbkv}[2]{T_{#1}^{(\frac{3\pi}2,\frac\pi 2), #2, V_{#1,#2}}}
\newcommand{\Tbd}[3]{T_{#1}^{\mathrm{(l)} #2 #3}}
\newcommand{\Tbdv}[2]{T_{#1}^{\mathrm{(l)} #2\Vrl{{#1}}{#2}}}

\newcommand{\zc}{\caZ\lmk\Cabul\rmk}
\newcommand{\hi}[2]{\hat\iota\lmk #1: #2\rmk}

\newcommand{\lz}{\Lambda_0}
\newcommand{\lzr}{\Lambda_{r0}}
\newcommand{\lm}[1]{{\Lambda_{#1}}}
\newcommand{\llz}{(\lz,\lzr)}
\newcommand{\lmr}[1]{{\Lambda_{r#1}}}
\newcommand{\hlm}[1]{{\hat\Lambda_{#1}}}
\newcommand{\tlm}[1]{{\tilde\Lambda_{#1}}}
\newcommand{\ld}{\Lambda}
\newcommand{\pc}{\mathcal{PC}}
\newcommand{\Cbk}{\caC_{\mopbk}}
\newcommand{\CUbk}{\caC_{\mopbk}^U}
\newcommand{\Crbd}{\caC_{\mopbd}^r}
\newcommand{\Clbd}{\caC_{\mopbd}^l}
\newcommand{\Vbk}[2]{\caV_{#1#2}^{\mopbk}}
\newcommand{\Vbd}[2]{\caV_{#1#2}^{\mopbd}}
\newcommand{\VUbd}[2]{\caV_{#1#2}^{\mopbd U} }
\newcommand{\Vbu}[2]{\caV_{#1#2}^{\mathop{\mathrm{bu}}}}
\newcommand{\bl}{\caB_l}
\newcommand{\fbk}{\caF_{\mopbk}^U}
\newcommand{\fbd}{\caF_{\mopbd}^U}
\newcommand{\gu}{\caG}
\newcommand{\hb}[2]{\iota^{(\lz,\lzr)}\lmk#1: #2\rmk}
\newcommand{\hfc}{\hat F^{\llz}}
\newcommand{\ffc}{F^{\llz}}

\newcommand{\lzm}{\Lambda_{0(-)}}
\newcommand{\lzrm}{\Lambda_{r0(-)}}
\newcommand{\pcm}{\mathcal{PC(-)}}
\newcommand{\CUbkm}{\caC_{\mopbk}^{U(-)}}
\newcommand{\Crbdm}{\caC_{\mopbd}^{r(-)}}
\newcommand{\Clbdm}{\caC_{\mopbd}^{l(-)}}
\newcommand{\Vbkm}[2]{\caV_{#1#2}^{\mopbk (-)}}
\newcommand{\Vbdm}[2]{\caV_{#1#2}^{\mopbd(-)}}
\newcommand{\VUbdm}[2]{\caV_{#1#2}^{\mopbd U(-)} }
\newcommand{\Vbum}[2]{\caV_{#1#2}^{\mathop{\mathrm{bu}(-)}}}
\newcommand{\fbkm}{\caF_{\mopbk}^{U(-)}}
\newcommand{\fbdm}{\caF_{\mopbd}^{U(-)}}

\newcommand{\lr}[2]{\Lambda_{(#1,0),#2,#2}}
\newcommand{\lef}[2]{\overline{\Lambda_{(#1,0),\pi-#2,#2}}}
\newcommand{\lrhu}{(\Lambda_r)^c\cap \hu}
\newcommand{\lhuc}{\Lambda^c\cap \hu}
\newcommand{\lhu}{\Lambda\cap \hu}
\newcommand{\lrhuz}{(\Lambda_{r0})^c\cap \hu}
\newcommand{\lhucz}{\Lambda_{0}^c\cap \hu}
\newcommand{\lc}[1]{(\lm #1)^c\cap\hu}

\newcommand{\zam}{\caZ_a\lmk\tilde\caM\rmk }
\newcommand{\ozam}{\Obj\lmk \zam\rmk}
\newcommand{\mzam}{\Mor_{\zam}}
\newcommand{\rpc}[2]{\lmk(#1,#2), C_{(#1,#2)}\rmk}
\newcommand{\ozt}{\otimes_{\zam}}

\newcommand{\change}[1]{#1}

\title{Boundary states of a bulk gapped ground state in $2$-D quantum spin systems }

\author{Yoshiko Ogata \\Research Institute for Mathematical Sciences\\
 Kyoto University, Kyoto 606-8502 JAPAN}
\maketitle

\begin{abstract}
We introduce a natural mathematical definition of
 {\it boundary states} of a bulk gapped ground state in the operator algebraic framework
 of $2$-D quantum spin systems.
With the approximate Haag duality at the boundary,
we derive a $C^*$-tensor category $\tilde \caM$ out of such boundary state.
Under a non-triviality condition of the braiding in the bulk,
we show that the Drinfeld center (with an asymptotic constraint) of
$\tilde \caM$ is equivalent to the bulk braided $C^*$-tensor category
derived in \cite{MTC}.
\end{abstract}

\section{Introduction}
Since the discovery of Kitaev's beautiful model \cite{kitaev2006anyons}, the classification problem of topologically ordered gapped
systems has attracted
a lot of attention.
Ingenious models were made  \cite{levin2005string}, and 
a lot of categorical arguments were carried out.
On the other hand,  Naaijkens carried out an analysis of
Kitaev quantum double  models from the operator algebraic/ algebraic quantum field theoretic (DHR theory)
point of view \cite{N1}, \cite{N2},\cite{FN}, \cite{naaijkens2013kosaki}. 
The advantage of this direction is that DHR theory allows us to
treat non-solvable models, hence it allows us to think of the classification.
In \cite{MTC}, we derived braided $C^*$-tensor categories out of
gapped ground states on two-dimensional quantum spin systems
satisfying the approximate Haag duality.
There, following the DHR theory \cite{BF},\cite{DHRI},\cite{fredenhagen1989superselection},\cite{frohlich1990braid},\cite{longo1989index}
our objects are representations satisfying the superselection criterions.
In the same paper, we showed that this braided $C^*$-tensor category is an
invariant of the classification of gapped ground state phases.
This can be regarded as an analytical realization of the statement 
that {\it anyons are invariant of topological phases}.
In particular, the existence of anyons means the ground state is
long-range entangled \cite{NaOg}, as was shown in \cite{haah2016invariant}
for a class of commuting Hamiltonians.

Bravyi and Kitaev introduced two gapped boundaries of the toric code model \cite{bravyi1998quantum}.
In \cite{kitaev2012models}, Kitaev and Kong studied the model from the categorical point of view.
There, Kitaev and Kong conjectured that the bulk theory is the Drinfeld center of the 
gapped boundary theory, and ``physically proved'' it
 for all 2+1D Levin-Wen models. 
 {It is also stated in $\rm Theorem^{ph}$ 4.2.20 of \cite{konginvitation}.}
Recently,  Wallick \cite{wa} analyzed the corresponding theory 
in the toric code model, in terms of superselection sectors, following the corresponding bulk analysis
by Naaijkens \cite{N1}\cite{N2}.
There, he showed the boundary theory is a module tensor category over the bulk.

Although it is a broadly studied subject, as far as we know, there is no mathematical formulation to specify what {\it a boundary state of a bulk gapped ground state} means in the operator algebraic framework.
In this paper, we suggest a natural mathematical definition of it. See Definition \ref{boundary}.
Physically, a pure state $\obd$
 on the half-plane is a boundary state of a bulk gapped ground state $\obk$ if
 they are macroscopically the same away from the boundary.
 We analyze such a boundary state $\obd$ in this paper.
We derive a $C^*$-tensor category $\tilde \caM$
for $\obd$
under the boundary version of approximate Haag duality.
We then construct a Drinfeld center $\zam$
of $\tilde\caM$, with some asymptotic constraint (Definition \ref{hato}).
We show that  
$\zam$ is a braided $C^*$-tensor category equivalent to the bulk braided $C^*$-tensor category
derived in \cite{MTC},
if the braiding is nontrivial in a certain sense (Assumption \ref{raichi} and Remark \ref{dil}).
\change{ If we interpret our nontriviality condition related to gapped boundary condition}, it gives a proof for the Kitaev and Kong's conjecture \cite{kitaev2012models}, \cite{konginvitation}, in our setting.

\subsection{Setting}\label{setting2}
Throughout this paper, we fix some $2\le d\in\nan$.
We denote the algebra of $d\times d$ matrices by $\Mat_{d}$.
For each $z\in\bbZ^2$,  let $\caA_{\{z\}}$ be an isomorphic copy of $\Mat_{d}$, and for any finite subset $\Lambda\subset\bbZ^2$, we set $\caA_{\Lambda} = \bigotimes_{z\in\Lambda}\caA_{\{z\}}$.
For finite $\Lambda$, the algebra $\caA_{\Lambda} $ can be regarded as the set of all bounded operators acting on
the Hilbert space $\bigotimes_{z\in\Lambda}{\bbC}^{d}$.
We use this identification freely.
If $\Lambda_1\subset\Lambda_2$, the algebra $\caA_{\Lambda_1}$ is naturally embedded in $\caA_{\Lambda_2}$ by tensoring its elements with the identity. 
For an infinite subset $\Gamma\subset \bbZ^{2}$,
$\caA_{\Gamma}$
is given as the inductive limit of the algebras $\caA_{\Lambda}$ with $\Lambda$, finite subsets of $\Gamma$.
We call $\caA_{\Gamma}$ the quantum spin system on $\Gamma$.
For a subset $\Gamma_1$ of $\Gamma\subset\bbZ^{2}$,
the algebra $\caA_{\Gamma_1}$ can be regarded as a subalgebra of $\caA_{\Gamma}$. 
For $\Gamma\subset \bbR^2$, with a bit of abuse of notation, we write $\caA_{\Gamma}$
to denote $\caA_{\Gamma\cap \bbZ^2}$.
Also, $\Gamma^c$ denotes the complement of $\Gamma$ in $\bbR^2$.
We denote the algebra of local elements in $\Gamma$ by $\caA_{\Gamma, \rm{loc}}$,
i.e. $\caA_{\Gamma, \rm{loc}}=\cup_{\Lambda\Subset\Gamma}\caA_\Lambda$.
The upper half-plane $\bbR\times [0,\infty)$ is denoted by $\hu$,
and the lower half-plane $\bbR\times (-\infty,0)$ is denoted by $\hd$.
For
each $\bm a\in \bbR^2$, $\theta\in\bbR$ and $\varphi\in (0,\pi)$,
we set 
\begin{align*}
\Lambda_{\bm a, \theta,\varphi}
:
=&\lmk \bm a+\left\{
t\bm e_{\beta}\mid t>0,\quad \beta\in (\theta-\varphi,\theta+\varphi)\right\}\rmk.
\end{align*}
where  $\bm e_{\beta}=(\cos\beta,\sin\beta)$.
We call a set of this form a cone, and denote by $\Cbk$ the set of all cones.
We set $\arg\Lambda_{\bm a, \theta,\varphi}:=(\theta-\varphi,\theta+\varphi)$ and $|\arg\Lambda_{\bm a, \theta,\varphi}|=2\varphi$.
We also set $\bm e_\Lambda:=\bm e_{\theta}$ for $\Lambda=\Lambda_{\bm a, \theta,\varphi}$.
For $\varepsilon>0$ and $\Lambda=\Lambda_{\bm a, \theta,\varphi}$, $\Lambda_\varepsilon$ 
denotes $\Lambda_\varepsilon=\Lambda_{\bm a, \theta,\varphi+\varepsilon}$.
We consider the following sets of cones:
\begin{align}
\begin{split}
&\CUbk:=\left\{
\Lambda_{\bm a,\theta,\varphi}\mid \bm a\in \bbR^2,\;
0<\theta-\varphi<\theta+\varphi<\pi
\right\},\\
&\Crbd:=\left\{
\Lambda_{(a,0),\varphi,\varphi}\mid a\in\bbR,\; 0<\varphi<\frac\pi 2
\right\},\\
&\Clbd:=\left\{
\hu\cap \lmk \Lambda_r\rmk^c\mid \Lambda_r\in\Crbd
\right\}=\left\{
\overline{\Lambda_{(a,0),\frac\pi 2+\varphi,\frac\pi 2-\varphi}}\mid a\in\bbR,\; 0<\varphi<\frac\pi 2
\right\},\\
&\caC_{(\theta,\varphi)}:=\left\{
\Lambda\mid \overline{\arg\Lambda}\cap{[\theta-\varphi,\theta+\varphi]}=\emptyset
\right\}.
\end{split}
\end{align}
See Figures \ref{fune}-\ref{ikada}.
Note that they are upward-filtering sets with respect to the inclusion relation.
We denote by $\pc$ the set of all pairs $(\lz,\lzr)\in \CUbk\times \Crbd$
with $\lz\subset \lzr$.

Let $\obk$ be a pure state on $\abk$ with a GNS triple
$(\hbk, \pbk,\Omega_{\mopbk} )$.
Let $\obd$ be a pure state on $\abd$ with a GNS triple
$(\hbd, \pbd,\Omega_{\mopbd} )$.
Now we introduce our definition of boundary states of a bulk state.
\begin{defn}\label{boundary}
We say that $\obd$ is a boundary state of the bulk state $\obk$ if 
$\pbd\vert_{\caA_{\lhu}}$ and $\pbk\vert_{\caA_{\lhu}}$ are quasi-equivalent
for any $\Lambda\in\CUbk$.
\end{defn}
Recall that in quantum spin systems, two quasi-equivalent states can be approximately represented by
 to each other's local perturbations.
From this point of view, this definition is telling us that  $\obd$ is a boundary state of the bulk state $\obk$ if they are macroscopically the same away from the boundary.
Namely, if we restrict these states to the regions of the form Figure \ref{fune} or Figure \ref{boat},
then they look macroscopically the same.
We show this property for ground states of frustration-free Hamiltonians satisfying ``strict cone
TQO''. See section \ref{gappedboundary}. 
Although it is not the main target of this paper, we take this definition also for the case that the boundary state has a current at the edge.

\begin{figure}[htbp]
\centering
\begin{tabular}{cccc}
\begin{minipage}{0.3\hsize}
\begin{tikzpicture}[domain=-4:4, scale=0.4]
   \filldraw[fill=lightgray, opacity=.5, very thick] (-3.5,5)--(-1,1)--(1.5,5);
   \draw[dashed] (-5,0)--(5,0);
\end{tikzpicture}
\caption{\small $\Lambda\cap\hu$ for $\ld\in \CUbk$}
\label{fune}
\end{minipage}
\begin{minipage}{0.3\hsize}
\begin{tikzpicture}[domain=-4:4, scale=0.4]
   \filldraw[fill=lightgray, opacity=.5, very thick] (-3.5,5)--(-2,0)--(1,0)--(2.5,5);
   \draw[dashed] (-5,0)--(5,0);
\end{tikzpicture}
\caption{\small $\Lambda\cap\hu$ for $\ld\in \CUbk$}
\label{boat}
\end{minipage}
\begin{minipage}{0.3\hsize}
\begin{tikzpicture}[domain=-4:4, scale=0.4]
   \filldraw[fill=lightgray, opacity=.5, very thick] (-5,0)--(1.5,0)--(-5,5);
   \draw[dashed] (-5,0)--(5,0);
\end{tikzpicture}
\caption{$\lm l\in \Clbd$}
\label{ahiru}
\end{minipage}
\begin{minipage}{0.3\hsize}
\begin{tikzpicture}[domain=-4:4, scale=0.4]
   \filldraw[fill=gray, opacity=.5, very thick] (5,0)--(-1.5,0)--(5,5);
   \draw[dashed] (-5,0)--(5,0);
\end{tikzpicture}
\caption{$\lm r\in \Crbd$}
\label{ikada}
\end{minipage}

\end{tabular}
\end{figure}

We set
\begin{align}
\begin{split}
&\caF_{\mopbk}^{U(0)}:=\cup_{\Lambda\in\CUbk}\pbk\lmk\caA_{\Lambda \cap \hu }\rmk'',\quad
\caF_{\mopbk}^U:=\overline{\caF_{\mopbk}^{U(0)}}^n,\\
&\caF_{\mopbd}^{U(0)}:=\cup_{\Lambda\in\CUbk}\pbd\lmk\caA_{\Lambda \cap \hu}\rmk'',\quad
\caF_{\mopbd}^U:=\overline{\caF_{\mopbd}^{U(0)}}^n,\\
&\caG^{(0)}:=\cup_{\Lambda\in\CUbk}\pbk\lmk\caA_{\Lambda}\rmk'',\quad
\caG:=\overline{\caG^{(0)}}^n,\\
&\caB_l^{(0)}:=\cup_{\Lambda_l\in \Clbd}\pbd\lmk\caA_{\Lambda_l}\rmk'',\quad
\caB_l:=\overline{\caB_l^{(0)}}^n,\\
&\caB_{(\theta,\varphi)}:=
\overline{\cup_{\Lambda\in\caC_{(\theta,\varphi)} }\pbk(\caA_{\Lambda})'' }^n.
\end{split}
\end{align}
where ${\overline{\cdot}}^n$ indicates the norm closure.
Note that $\caG=\caB_{(\frac{3\pi}2,\frac\pi 2)}$.
Because $\CUbk,\Clbd,\caC_{(\theta,\varphi)}$ are upward-filtering sets, 
$\caF^U_{\mopbk}$, $\caF^U_{\mopbd}$, $\caG$, $\caB_l$, 
$\caB_{(\theta,\varphi)}$ are $C^*$-algebras.
\change{Note that $\caB_{(\theta,\varphi)}$ and $\caB_l$ are introduced in \cite{MTC} and
\cite{wa} respectively.}
For a $C^*$-algebra $\mathfrak C$, we denote by $\caU(\mathfrak C)$ the set of all unitaries in $\mathfrak C$.

For example, to construct $\caF_{\mopbd}^{U(0)}$, we collect all the von Neumann algebras
$\pbd\lmk \caA_{\ld \cap\hu}\rmk''$ corresponding to the areas
of the shape Figure \ref{fune} or Figure \ref{boat}.
Then take the norm closure and obtain $\fbd$.
We regard this $C^*$-algebra $\fbd$, the $C^*$-algebra
corresponding to {\it operations using only the bulk}.
To construct $\caB_l^{(0)}$, we collect all the von Neumann algebras
$\pbd\lmk\caA_{\lm l}\rmk''$ corresponding to the areas
of the shape Figure \ref{ahiru}.
Then take the norm closure and obtain $\bl$.
We regard this $C^*$-algebra $\bl$, the $C^*$-algebra
corresponding to {\it operations using the left boundary}.
By definition, we can easily see 
\begin{align}
\pbk\lmk \abd\rmk\subset \fbk,\quad \pbd\lmk \abd\rmk\subset \fbd,\quad \pbk\lmk \abk\rmk\subset \caG.
\end{align}

\change{Next, we introduce representations of our interest. They are versions of localized and transportable representations in the DHR-theory. One important point that is not found in the DHR-theory is we sometimes consider only certain 
charge transporters ($\Vbd{\rho}{\Lambda_r}$ etc. below).}

For a representation $\rho$ of $\abk$ on $\hbk$ and $\Lambda\in \Cbk$,
we define a set of unitaries
\begin{align}
\Vbk{\rho}{\Lambda}:=\left\{
V_{\rho\Lambda}\in\caU\lmk \caH_{\mopbk}\rmk
\mid \left.\Ad\lmk V_{\rho\Lambda}\rmk\circ\rho\right\vert_{\caA_{\Lambda^c}}
=\left.\pbk \right\vert_{\caA_{\Lambda^c}}
\right\}.
\end{align}
We denote by $\Obk$ the set of all representations $\rho$ of $\abk$ on $\hbk$,
which has a non-empty $\Vbk{\rho}{\Lambda}$ for all $\Lambda\in \Cbk$.
We denote by $\OUbk$ the set of all representations $\rho$ of $\abk$ on $\hbk$,
which has a non-empty $\Vbk{\rho}{\Lambda}$ for any $\Lambda\in \CUbk$.
Note that $\Obk\subset\OUbk$.
For $\Lambda_0\in \CUbk$, we denote by $\OUbkl$
the set of all $\rho\in \OUbk$ such that $\rho\vert_{\caA_{\Lambda_0^c}}=\pbk\vert_{\caA_{\Lambda_0^c}}$.
\change{(The excitation is localized in $\ld_0$.) Physically, the excitation of $\rho\in \Obk$ can be moved to any cone in $\bbZ^2$, while for
$\rho\in \OUbk$, it is guaranteed only for cones in $\CUbk$. Later in Assumption \ref{oo}, we will require a kind of homogeneity, namely, $\Obk=\OUbk$ : excitations which can be localized to any $\ld\in \CUbk$ can be localized to any cone.}

For a representation $\rho$ of $\abd$ on $\hbd$ and $\Lambda_r\in \Crbd$,
we define sets of unitaries
\begin{align}
\begin{split}
&\Vbd{\rho}{\Lambda_r}:=
\left\{
V_{\rho\Lambda_r}\in\caU\lmk \bl\rmk
\mid \left.\Ad\lmk V_{\rho\Lambda_r}\rmk\circ\rho\right\vert_{\caA_{\Lambda_r^c\cap\hu}}
=\left.\pbd \right\vert_{\caA_{\Lambda_r^c\cap\hu}}
\right\}\\
&\VUbd{\rho}{\Lambda_r}:=
\left\{
V_{\rho\Lambda_r}\in\caU\lmk \fbd\rmk
\mid \left.\Ad\lmk V_{\rho\Lambda_r}\rmk\circ\rho\right\vert_{\caA_{\Lambda_r^c\cap\hu}}
=\left.\pbd \right\vert_{\caA_{\Lambda_r^c\cap\hu}}
\right\}.
\end{split}
\end{align}
We denote by $\Orbd$ the set of all representations $\rho$ of $\abd$ on $\hbd$,
which has a non-empty $\Vbd{\rho}{\Lambda_r}$ for any $\Lambda_r\in \Crbd$.
We denote by $\OrUbd$ the set of all representations $\rho$ of $\abd$ on $\hbd$,
which has a non-empty $\VUbd{\rho}{\Lambda_r}$ for any $\Lambda_r\in \Crbd$.
Note that $\OrUbd\subset \Orbd$.
Physically, objects in $\OrUbd$ (resp. $\Orbd$) correspond to low energy excitations 
which can be localized into the area of the form Figure \ref{ikada},
with {\it operations using only the bulk} (resp. {\it operations using the left boundary}).

For $\Lambda_{r0}\in \Crbd$, we denote by $\Orbdl$
(resp. $\OrUbdl$)
the set of all $\rho\in \Orbd$ (resp. $\rho\in \OrUbd$) such that $\rho\vert_{\caA_{\Lambda_{r0}^c}}=\pbd\vert_{\caA_{\Lambda_{r0}^c}}$.
\change{The excitation is localized in $\Lambda_{r0}$.}

For a representation $\rho$ of $\abd$ on $\hbd$ and $\Lambda\in \CUbk$,
we define sets of unitaries
\begin{align}
\begin{split}
&\Vbu{\rho}{\Lambda}:=
\left\{
V_{{\rho}\Lambda}\in\caU\lmk \fbd\rmk
\mid \left.\Ad\lmk V_{\rho\Lambda}\rmk\circ\rho\right\vert_{\caA_{\Lambda^c\cap\hu}}
=\left.\pbd \right\vert_{\caA_{\Lambda^c\cap\hu}}
\right\}.
\end{split}
\end{align}
We denote by $\Obu$ the set of all representations $\rho$ of $\abd$ on $\hbd$,
which has a non-empty $\Vbu{\rho}{\Lambda}$ for any $\Lambda\in \CUbk$.
Physically, objects in $\Obul$ correspond to low energy excitations 
which can be localized into the area of the form Figure \ref{fune} and Figuare \ref{boat},
with {\it operations using only the bulk}.
For $\lz\in \CUbk$, we denote by
$\Obul$ the set of all $\rho\in \Obu$ such that
$\rho\vert_{\caA_{\lhucz}}=\pbd\vert_{\caA_{\lhucz}}$.
\change{(The excitation is localized in $\ld_0\cap\hu$.)}
For any $\lm r\in\Crbd$, there exists a $\ld\in \CUbk$ such that
$(\ld,\lm r)\in\pc$.
Then we have $\Vbu\rho\ld\subset\Vbd\rho{\lm r}$,
and we have $\Obul\subset \Orbdl$
for $(\lz,\lzr)\in \pc$.


For each $\rho,\sigma\in \OUbk$,
the set of intertwiners between them are denoted by
\begin{align}
(\rho,\sigma):=
\left\{
R\in \caB(\hbk) \mid R\rho(A)=\sigma(A)R,\;\; \text{for all}\;\; A\;\in \abk
\right\}.
\end{align}
For any $\rho,\sigma\in \Orbd$, we set
\begin{align}
\begin{split}
&(\rho,\sigma)_l:=
\left\{
R\in \bl \mid R\rho(A)=\sigma(A)R,\;\; \text{for all}\;\; A\;\in \abd
\right\},\\
&(\rho,\sigma)_U:=
\left\{
R\in \fbd \mid R\rho(A)=\sigma(A)R,\;\; \text{for all}\;\; A\;\in \abd
\right\}.
\end{split}
\end{align}
Note that $(\rho,\sigma)_U\subset (\rho,\sigma)_l$.

For representations $\rho,\sigma$ of $\abd$ on $\hbd$, we write  $\rho\simeq_l\sigma$
(resp. $\rho\simeq_U\sigma$)
if there exists a unitary $U\in\bl$ (resp. $U\in\fbk$ ) such that $\rho=\Ad U\sigma$.
It is important to note that $\simeq_l$ is coarser than $\simeq_U$.
Physically, we understand $\rho\simeq_U\sigma$ means we can map $\rho$ to $\sigma$
using only ``bulk operation''.
On the other hand, $\rho\simeq_l\sigma$ can require ``truely boundary operation'' to map
$\rho$ to $\sigma$.
\change{
\begin{ex}\label{toric}
Let us consider the smooth boundary of the Toric code studied in \cite{wa}.
For reader's convenience, we briefly recall the morphisms introduced in \cite{wa}.
See \cite{wa} Section 6 for more detailed information.
For an infinite path $\gamma$ on the dual lattice and an infinite path $\gamma'$ 
on the lattice,
let $\rho_\gamma^X$, $\rho_{\gamma'}^Z$ be the endomorphisms given by
adjoints of  
products of Pauli $X$ along $\gamma$, products of Pauli $Z$ along $\gamma'$ respectively.
The representations $\pi_\gamma^X:=\pbd\rho_\gamma^X$,
$\pi_{\gamma'}^Z:=\pbd\rho_{\gamma'}^Z$ correspond to $m$-particle and $e$-particle, respectively.

Let $\ld,\ld_1\in \CUbk$ and ${\gamma'},\gamma_1'$ be infinite paths on the lattice
in $\ld\cap\hu$, $\ld_1\cap\hu$ respectively.
Note that there exists a cone $\ld_2\in \CUbk$ such that $\ld\cup\ld_1\subset \ld_2$.
In \cite{wa}, Wallick constructed a transporter from $\pi_{\gamma'}^Z$ to $\pi_{\gamma_1'}^Z$ 
as follows. :
Let $\zeta_n'$, $n\in\bbN$ be a sequence of paths on the lattice in $\ld_2\cap\hu$
from the $n$-th vertex of $\gamma'$ to the $n$-th vertex of $\gamma_1'$.
We take $\zeta_n'$ so that the distances from $\zeta_n'$ to the starting sites of 
$\gamma'$, $\gamma_1'$ go to infinity as $n\to\infty$. (See Figure \ref{same})
Then, as was shown in \cite{wa},  the weak limit $V$ of 
$\pbd\lmk \Gamma_{\gamma_n'}^Z\Gamma_{\zeta_n'}^Z \Gamma_{\gamma_{1n}'}^Z\rmk$
exists. 
Here, $\gamma_n'$ and $\gamma_{1n}'$ are the first $n$-step of 
$\gamma',\gamma_1'$  and $\Gamma_{\gamma_n'}^Z$, 
$\Gamma_{\zeta_n'}^Z$, $ \Gamma_{\gamma_{1n}'}^Z$
are the product of Pauli $Z$ along $\gamma_n'$, $\zeta_n'$, $\gamma_{1n}'$.
This $V$ is  a unitary, and
we have $\pi_{\gamma'_1}^Z=\Ad(V)\pi_{\gamma'}^Z$.
Because all of $\gamma_n'$, $\zeta_n'$, $\gamma_{1n}'$ are in
$\ld_2$, the operator  
$ \Gamma_{\gamma_n'}^Z\Gamma_{\zeta_n'}^Z \Gamma_{\gamma_{1n}'}^Z$
belongs to $\caA_{\ld_2\cap\hu}$. Therefore, we have
\begin{align}
\begin{split}
V={\mathop{\mathrm {w-}}}\lim_{n\to\infty}\pbd\lmk \Gamma_{\gamma_n'}^Z\Gamma_{\zeta_n'}^Z \Gamma_{\gamma_{1n}'}^Z\rmk
\in \pbd\lmk\caA_{\ld_2\cap\hu}\rmk''\subset\fbd.
\end{split}
\end{align}
From this and
\begin{align}
\begin{split}
\Ad(V)\pi_{\gamma'}^Z\vert_{\caA_{\ld_1^c\cap \hu}}=\pi_{\gamma'_1}^Z\vert_{\caA_{\ld_1^c\cap\hu}}=
\pbd\vert_{\caA_{\ld_1^c}},
\end{split}
\end{align}
we obtain $V\in \Vbu{\pi_{\gamma'}^Z}{{\Lambda}_1\cap\hu}$

In \cite{wa}, it is also shown that the $m$-particle is unitarily equivalent to the vacuum:
Let us consider an infinite path $\gamma$ on the dual lattice in an intersection of $\hu$
and $\ld\in \CUbk$, starting at the boundary.
Note that there is a cone $\ld_l\in \Clbd$ including $\ld$.
Let $\gamma_n$, $n\in\bbN$ be the path consisting of the first $n$ bonds
of $\gamma$.
 Let $\zeta_n$ be a
 path on the dual lattice in $\ld_l$ from the end of $\gamma_n$ to the boundary $(a_n,0)$.
 (See Figure \ref{kurage})
 We require the distance from $\zeta_n$ to the starting point of $\gamma$ goes to
 infinity as $n\to\infty$. In particular, $a_n\to-\infty$ as $n\to\infty$.
 Then, as was shown in \cite{wa},  the weak limit $V$ of 
$\pbd\lmk \Gamma_{\gamma_n}^X\Gamma_{\zeta_n}^X \rmk$
exists. 
Here,  $\Gamma_{\gamma_n}^X$, 
$\Gamma_{\zeta_n}^X$
are the product of Pauli $X$ along $\gamma_n$, $\zeta_n$.
This $V$ is  a unitary, and
we have $\pi_{\gamma}^X=\Ad(V)\pbd$.
Because both of $\gamma_n$, $\zeta_n$ are in
$\ld_l$, the operator  
$ \Gamma_{\gamma_n}^X\Gamma_{\zeta_n}^X$
belongs to $\caA_{\ld_l\cap\hu}$. Therefore, we have
\begin{align}
\begin{split}
V={\mathop{\mathrm {w-}}}\lim_{n\to\infty}
\pbd\lmk \Gamma_{\gamma_n}^X\Gamma_{\zeta_n}^X \rmk
\in \pbd\lmk\caA_{\ld_l\cap\hu}\rmk''\subset\caB_l.
\end{split}
\end{align}
This means $\pi_\gamma^X\simeq_l\pbd$.
Note also that this $V$ belongs to $\pbd\lmk\caA_{\ld^c\cap \hu}\rmk'$
because
\begin{align}
\begin{split}
\pbd\vert_{\caA_{\ld^c\cap\hu}}
=\Ad V^*\pi_\gamma^X\vert_{\caA_{\ld^c\cap\hu}}
=\Ad V^* \pbd \vert_{\caA_{\ld^c\cap\hu}},
\end{split}
\end{align}
since $\gamma$ is inside of $\ld$.
An important fact is that 
\begin{align}
\begin{split}
\pbd\lmk\caA_{\ld\cap\hu}\rmk''\subsetneq \pbd\lmk\caA_{\ld^c\cap \hu}\rmk'
\end{split}
\end{align}
by Proposition 12.7 of \cite{wa}. The right-hand side is truely bigger than the left-hand side. 
Note that the left-hand side $\pbd\lmk\caA_{\ld\cap\hu}\rmk''$
is included in $\caF$.
In fact as we will see in Remark \ref{bToric},
$\pi_\gamma^X$ and $\pbd$ are not equivalent with respect to $\simeq_\caF$.
In particular, $V\notin\caF$ and $V\in \pbd\lmk\caA_{\ld^c\cap \hu}\rmk'\setminus \pbd\lmk\caA_{\ld\cap\hu}\rmk'$.

\end{ex}
\begin{figure}[htbp]
  \begin{minipage}[b]{0.5\linewidth}
    \centering
    \includegraphics[width=5cm]{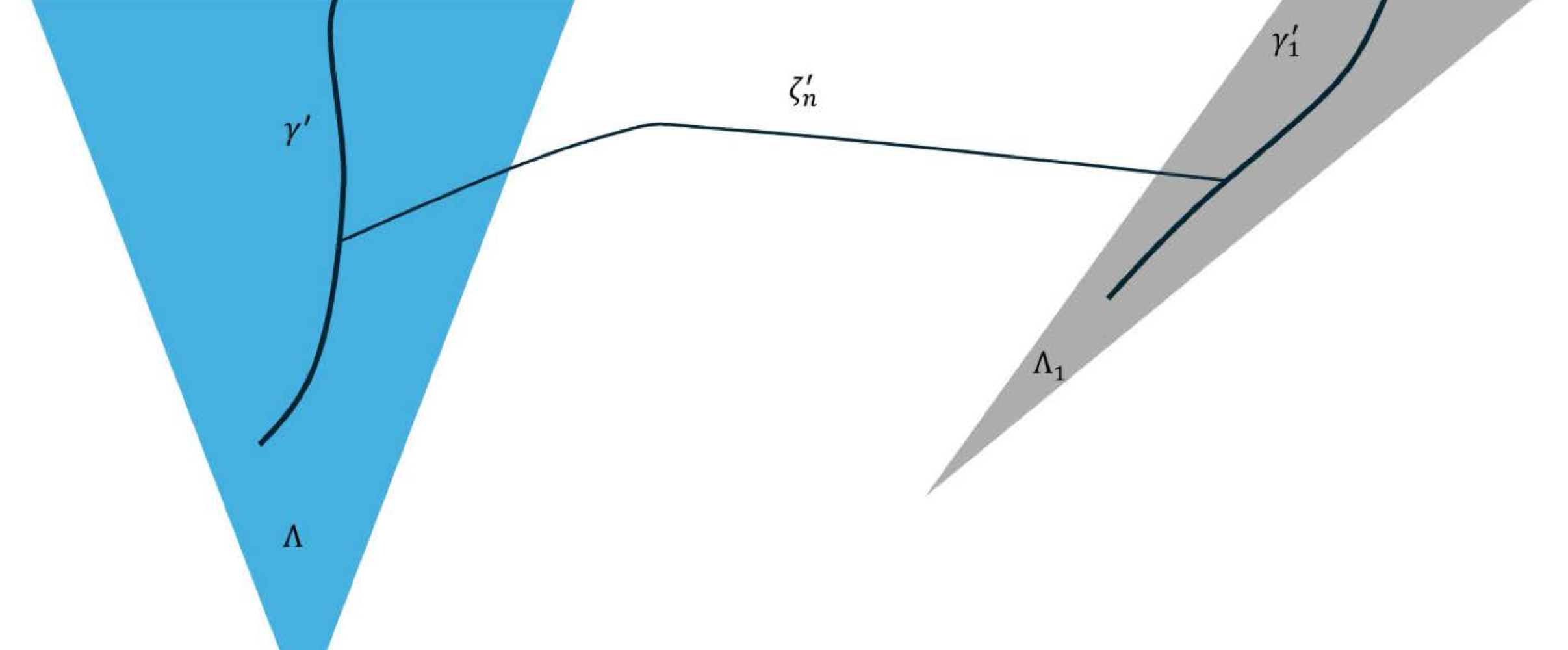}\\
    \quad\quad \quad\quad\quad \quad\quad \quad\quad\quad\caption{$\pi_{\gamma'_1}^Z=\Ad(V)\pi_{\gamma'}^Z$}
    \label{same}
  \end{minipage}
  \begin{minipage}[b]{0.5\linewidth}
    \centering
    \includegraphics[width=3cm]{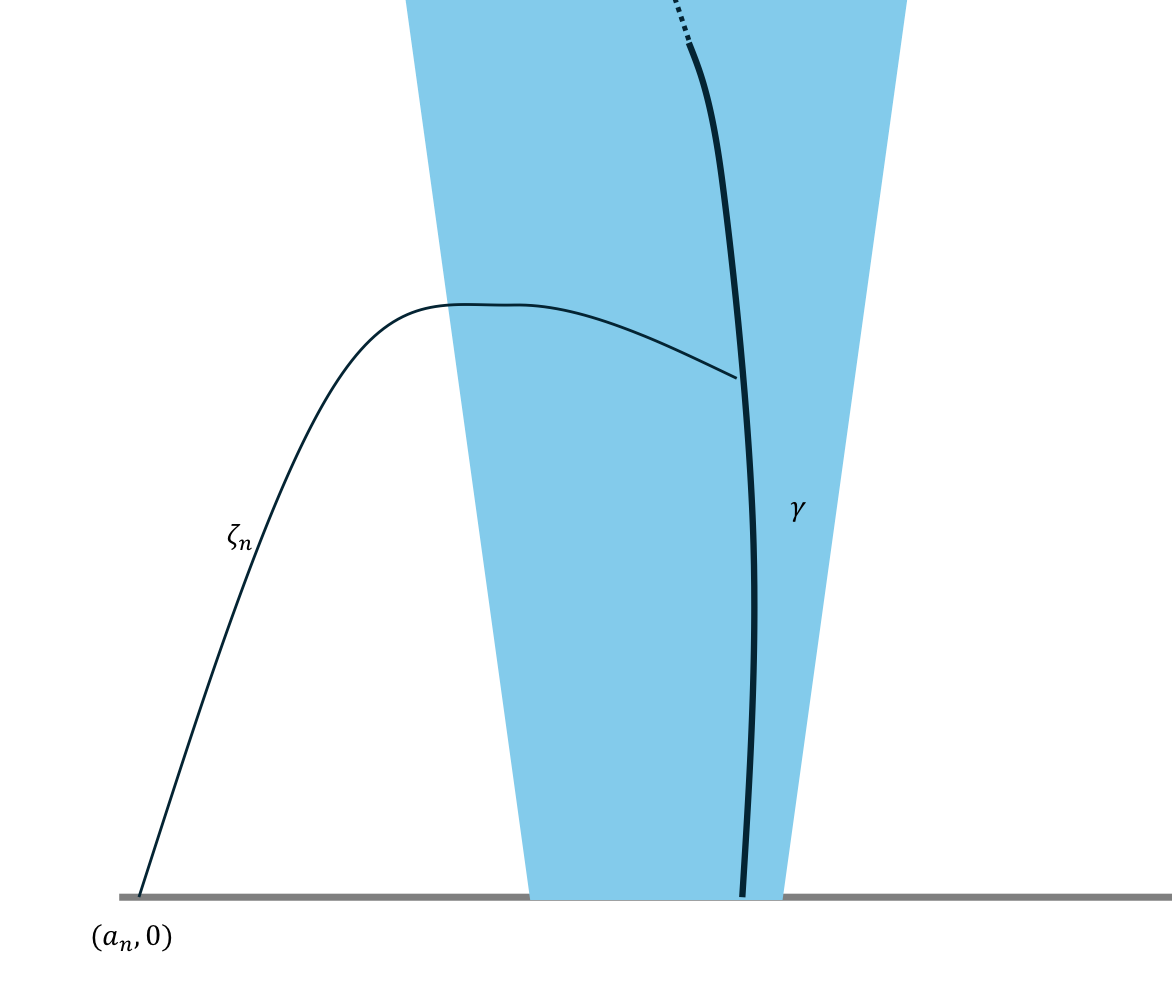}
   \quad\quad \quad\quad\quad\quad\quad \quad\quad\quad \caption{$\pi_{\gamma}^X=\Ad(V)\pbd$}
    \label{kurage}
  \end{minipage}
\end{figure}

}

\subsection{Assumptions}
Let us consider Setting \ref{setting2}.
We introduce our assumptions in this section.
First of all, we consider the situation that $\obd$ is a boundary state of $\obk$.
\begin{assum}\label{assump7}
The state $\obd$ is a boundary state of $\obk$.
Namely, for any $\Lambda\in\CUbk$, representations 
$\pbd\vert_{\caA_{\lhu}}$ and $\pbk\vert_{\caA_{\lhu}}$ are quasi-equivalent.
\end{assum}

\change{In the DHR theory, the Haag duality is assumed.
In the quantum spin setting, the Haag duality is defined as follows:
a representation $\pi$ of $\caA_{\bbZ^2}$ satisfies the Haag duality if
\begin{align}
\pi\lmk\caA_{\ld^c}\rmk'=\pi\lmk\caA_{\ld}\rmk''
\end{align}
for any cone $\ld$.\\
Note that $\pi(\caA_{\ld})''\subset \pi\lmk\caA_{\ld^c}\rmk'$ always holds.
The requirement here is that they are actually equal.
}
We introduced the following approximate  Haag duality in \cite{MTC}.
\change{This is a relaxation of the Haag duality.}
\begin{assum}\label{assum3}
Consider the setting in subsection \ref{setting2}.
For any $\varphi\in (0,\pi)$ and 
 $\varepsilon>0$ with
$2\varphi+4\varepsilon<2\pi$,
there is some $R^{\mopbk}_{\varphi,\varepsilon}\ge 0$ and decreasing
functions $f^{\mopbk}_{\varphi,\varepsilon,\delta}(t)$, $\delta>0$
on $\bbR_{\ge 0}$
with $\lim_{t\to\infty}f^{\mopbk}_{\varphi,\varepsilon,\delta}(t)=0$
such that
\begin{description}
\item[(i)]
for any cone $\Lambda$ with $|\arg\Lambda|=2\varphi$, there is a unitary 
$U^{\mopbk}_{\Lambda,\varepsilon}\in \caU(\hbk)$
satisfying
\begin{align}\label{lem7p}
\pbk\lmk\caA_{\Lambda^c}\rmk'\subset 
\Ad\lmk U^{\mopbk}_{\Lambda,\varepsilon}\rmk\lmk 
\pbk\lmk \caA_{\lmk \Lambda-R^{\mopbk}_{\varphi,\varepsilon}\bm e_\Lambda\rmk_\varepsilon}\rmk''
\rmk,
\end{align}
and 
\item[(ii)]
 for any $\delta>0$ and $t\ge 0$, there is a unitary 
 $\tilde U^{\mopbk}_{\Lambda,\varepsilon,\delta,t}\in \pbk\lmk \caA_{\Lambda_{\varepsilon+\delta}-t\bm e_{\Lambda}}\rmk''$
 satisfying
\begin{align}\label{uappro}
\lV
U^{\mopbk}_{\Lambda,\varepsilon}-\tilde U^{\mopbk}_{\Lambda,\varepsilon,\delta,t}
\rV\le f^{\mopbk}_{\varphi,\varepsilon,\delta}(t).
\end{align}
\end{description}
\end{assum}
\change{
Note that if the Haag duality holds, then the approximate Haag duality holds with
$U^{\mopbk}_{\Lambda,\varepsilon}=\tilde U^{\mopbk}_{\Lambda,\varepsilon,\delta,t}=\unit$
and $f^{\mopbk}_{\varphi,\varepsilon,\delta}(t)=0$, $R^{\mopbk}_{\varphi,\varepsilon}=0$.
}

\change{In Figure \ref{AH}, the dark colored cone is $\ld$,
 and the light colored cone is ${\Lambda_{\varepsilon+\delta}-t\bm e_{\Lambda}}$.
 The unitary $\tilde U^{\mopbk}_{\Lambda,\varepsilon,\delta,t}$ is localized in the light cone area.
 (i) combined with (ii) tells us the following:
 unlike the Haag duality, $\pbk(\caA_{\ld^c})'$ may not be equal to
 $\pbk(\caA_{\ld})''$. However, if we consider the enlarged cone
 ${\Lambda_{\varepsilon+\delta}-t\bm e_{\Lambda}}$, then
 $\pbk(\caA_{\ld^c})'$ is ``approximately'' included in 
 the von Neumann algebra corresponding to that cone $\pbk\lmk \caA_{\Lambda_{\varepsilon+\delta}-t\bm e_{\Lambda}}\rmk''$.
 More precisely, for any $a\in \pbk(\caA_{\ld^c})'$,
 there exists a $b\in \pbk\lmk \caA_{\Lambda_{\varepsilon+\delta}-t\bm e_{\Lambda}}\rmk''$
 with $\lV b\rV=\lV a\rV$
 such that
 \begin{align}
 \begin{split}
 \lV b-a\rV\le 2 f^{\mopbk}_{\varphi,\varepsilon,\delta}(t) \lV a\rV.
 \end{split}
 \end{align}
 The advantage of the approximate Haag duality over the Haag duality is that we know that the approximate Haag duality is stable under local unitary operations \cite{MTC}.
 }

The following Lemma is immediate from the definition.
\begin{lem}\label{lem11}
Consider the setting in subsection \ref{setting2} and assume Assumption \ref{assum3}.
Then for any $\Lambda\in \CUbk$, we have
$\pbk\lmk\caA_{\Lambda^c}\rmk'\subset\gu$.
\end{lem}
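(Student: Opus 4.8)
The plan is to show that every element $a\in\pbk\lmk\caA_{\Lambda^c}\rmk'$ is a norm-limit of elements of $\caG^{(0)}=\cup_{\Lambda'\in\CUbk}\pbk\lmk\caA_{\Lambda'}\rmk''$; since $\caG=\overline{\caG^{(0)}}^n$ is norm-closed, this gives $a\in\caG$. The mechanism is approximate Haag duality (Assumption \ref{assum3}), which approximates $a$ in norm by operators localized in a slightly enlarged, back-shifted cone $\Lambda_{\varepsilon+\delta}-t\bm e_\Lambda$; the one point requiring attention is to keep this enlarged cone inside $\CUbk$.

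Fix $\Lambda=\Lambda_{\bm a,\theta,\varphi}\in\CUbk$, so that $0<\theta-\varphi$ and $\theta+\varphi<\pi$, and fix an arbitrary $a\in\pbk\lmk\caA_{\Lambda^c}\rmk'$. Because $\Lambda\in\CUbk$ leaves strictly positive slack at both ends of $\arg\Lambda$, I can choose $\varepsilon,\delta>0$ so small that $2\varphi+4\varepsilon<2\pi$ (so that Assumption \ref{assum3} applies with this $\varphi,\varepsilon$) and simultaneously $0<\theta-(\varphi+\varepsilon+\delta)$ and $\theta+(\varphi+\varepsilon+\delta)<\pi$. The point of this choice is that for every $t\ge 0$ the cone $\Lambda_{\varepsilon+\delta}-t\bm e_\Lambda=\Lambda_{\bm a-t\bm e_\theta,\theta,\varphi+\varepsilon+\delta}$ again lies in $\CUbk$, since membership in $\CUbk$ is a condition on the angular data $\theta,\varphi+\varepsilon+\delta$ only and is unaffected by the apex.

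Now apply Assumption \ref{assum3}. By (i) there is a unitary $U:=U^{\mopbk}_{\Lambda,\varepsilon}$ with $a':=U^* a U\in\pbk\lmk\caA_{(\Lambda-R^{\mopbk}_{\varphi,\varepsilon}\bm e_\Lambda)_\varepsilon}\rmk''$ and $\lV a'\rV=\lV a\rV$. For $t\ge R^{\mopbk}_{\varphi,\varepsilon}$ the localizing cone is contained in $\Lambda_{\varepsilon+\delta}-t\bm e_\Lambda$ (wider half-angle, apex moved further back along $-\bm e_\Lambda$), so $a'\in\pbk\lmk\caA_{\Lambda_{\varepsilon+\delta}-t\bm e_\Lambda}\rmk''$. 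Taking $\tilde U:=\tilde U^{\mopbk}_{\Lambda,\varepsilon,\delta,t}\in\pbk\lmk\caA_{\Lambda_{\varepsilon+\delta}-t\bm e_\Lambda}\rmk''$ from (ii) and setting $b_t:=\tilde U a'\tilde U^*$, both $\tilde U$ and $a'$ lie in $\pbk\lmk\caA_{\Lambda_{\varepsilon+\delta}-t\bm e_\Lambda}\rmk''$, so $b_t$ does too, and a two-term triangle-inequality estimate gives
\[
\lV a-b_t\rV=\lV U a' U^*-\tilde U a'\tilde U^*\rV\le 2\lV U-\tilde U\rV\,\lV a\rV\le 2 f^{\mopbk}_{\varphi,\varepsilon,\delta}(t)\,\lV a\rV,
\]
which is exactly the ``more precisely'' consequence of approximate Haag duality recorded in the text.

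Since $\Lambda_{\varepsilon+\delta}-t\bm e_\Lambda\in\CUbk$ for all $t\ge 0$ by the choice of $\varepsilon,\delta$, each $b_t$ belongs to $\caG^{(0)}\subset\caG$. Letting $t\to\infty$ and using $\lim_{t\to\infty}f^{\mopbk}_{\varphi,\varepsilon,\delta}(t)=0$, we obtain $b_t\to a$ in norm, and norm-closedness of the $C^*$-algebra $\caG$ yields $a\in\caG$. As $a\in\pbk\lmk\caA_{\Lambda^c}\rmk'$ was arbitrary, the inclusion $\pbk\lmk\caA_{\Lambda^c}\rmk'\subset\caG$ follows. I do not expect a serious obstacle: the lemma is essentially immediate once the definitions are unwound, and the only care needed is the elementary angle bookkeeping of the second paragraph, ensuring the enlarged cone stays in $\CUbk$ for all $t$ and eventually swallows the localizing cone from (i).
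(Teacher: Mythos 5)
Your proof is correct and is exactly the argument the paper has in mind: the paper states the lemma as ``immediate from the definition,'' meaning precisely the unwinding you give, namely applying Assumption \ref{assum3}(i)--(ii) to approximate any $a\in\pbk\lmk\caA_{\Lambda^c}\rmk'$ in norm by elements of $\pbk\lmk\caA_{\Lambda_{\varepsilon+\delta}-t\bm e_\Lambda}\rmk''$ and using norm-closedness of $\gu$. Your angle bookkeeping (choosing $\varepsilon,\delta$ so that $\Lambda_{\varepsilon+\delta}-t\bm e_\Lambda$ stays in $\CUbk$ for all $t$, which is possible since membership in $\CUbk$ constrains only the angular data) is the one point that needs checking, and you handle it correctly.
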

The following two assumptions are boundary version of the approximate Haag duality.
\begin{assum}\label{assum80}
For any $\varphi\in (0,\frac\pi 2)$, $\varepsilon>0$ with $2\varphi+2\varepsilon<\pi$,
there is some $R^{(r)\mopbd}_{\varphi,\varepsilon}>0$ and decreasing
functions $f^{(r)\mopbd}_{\varphi,\varepsilon,\delta}(t)$, $\delta>0$
on $\bbR_{\ge 0}$
with $\lim_{t\to\infty}f^{(r)\mopbd}_{\varphi,\varepsilon,\delta}(t)=0$
such that
\begin{description}
\item[(i)]
for any cone $\Lambda_r=\lr a \varphi\in \Crbd$, there is a unitary 
$U^{(r)\mopbd}_{\Lambda_r,\varepsilon}\in \caU(\hbd)$
satisfying
\begin{align}\label{lem7p}
\pbd\lmk\caA_{\lrhu}\rmk'\subset 
\Ad\lmk U^{(r)\mopbd}_{\Lambda_r,\varepsilon}\rmk\lmk 
\pbd\lmk \caA_{ \lr{a-R^{(r)\mopbd}_{\varphi,\varepsilon}}{\varphi+\varepsilon}}\rmk''
\rmk,
\end{align}
and 
\item[(ii)]
 for any $\delta>0$ and $t\ge 0$, there is a unitary 
 $\tilde U^{(r)\mopbd}_{\Lambda_r,\varepsilon,\delta,t}\in \pbd\lmk 
 \caA_{
  \lr{a-t}{\varphi+\varepsilon+\delta}
 }\rmk''\cap \fbd$
 satisfying
\begin{align}\label{uappro}
\lV
U^{(r)\mopbd}_{\Lambda_r,\varepsilon}-\tilde U^{(r)\mopbd}_{\Lambda_r,\varepsilon,\delta,t}
\rV\le f_{\varphi,\varepsilon,\delta}^{(r)\mopbd}(t).
\end{align}
\end{description}
Note in particular, we have $U^{(r)\mopbd}_{\Lambda_r,\varepsilon}\in \fbd$.
\end{assum}
\change{In Figure \ref{AHr}, the dark-colored cone indicates $\ld_r$, and the light-colored cone
indicates $ \lr{a-t}{\varphi+\varepsilon+\delta}$.
The unitary $\tilde U^{(r)\mopbd}_{\Lambda_r,\varepsilon,\delta,t}$ is localized in the light-colored cone.
The new ingredient compared to the Assumption \ref{assum3} is that we require the unitary
$\tilde U^{(r)\mopbd}_{\Lambda_r,\varepsilon,\delta,t}$ to be in $\fbd$
, not only in $\pbd\lmk\caA_{\lr{a-t}{\varphi+\varepsilon+\delta}}\rmk''$.}
\begin{figure}[htbp]
  \begin{minipage}[b]{0.5\linewidth}
    \centering
    \includegraphics[width=5cm]{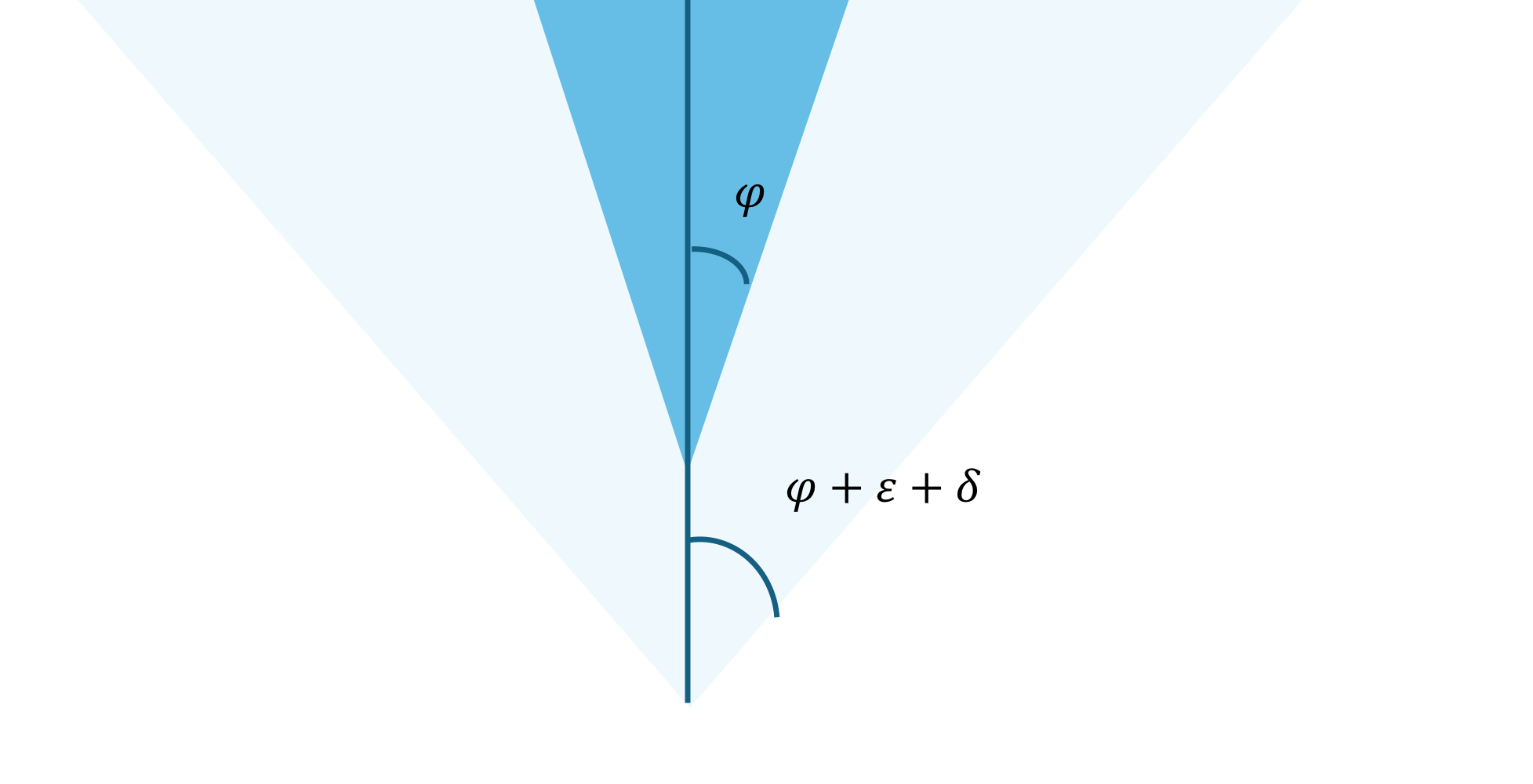}\\
\caption{}   \label{AH}
   \end{minipage}
  \begin{minipage}[b]{0.5\linewidth}
    \centering
    \includegraphics[width=3cm]{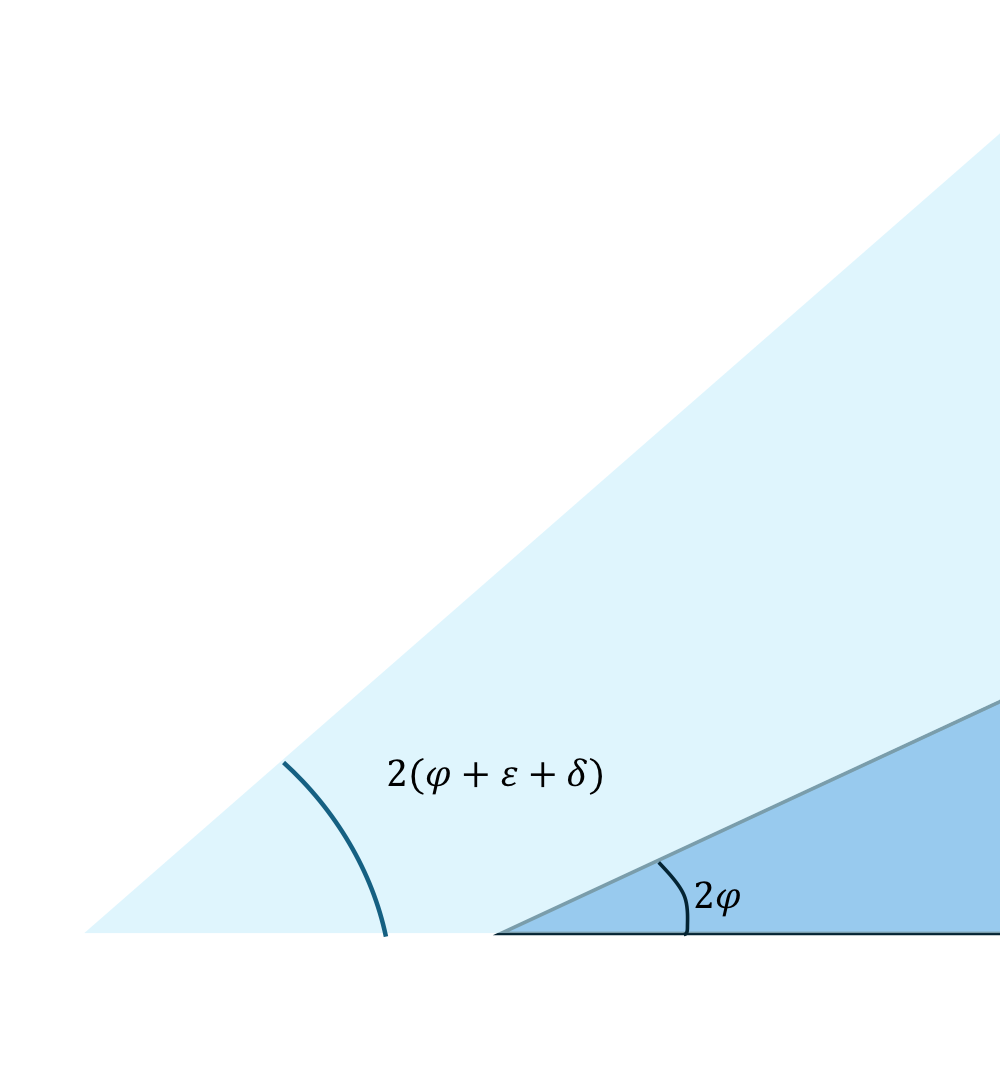}
    \caption{}\label{AHr}
  \end{minipage}
\end{figure}

\change{Here is the left version.}
\begin{assum}\label{assum80l}
For any $\varphi\in (0,\frac\pi 2)$, $\varepsilon>0$ with $2\varphi+2\varepsilon<\pi$,
there is some $R^{(l)\mopbd}_{\varphi,\varepsilon}>0$ and decreasing
functions $f^{(l)\mopbd}_{\varphi,\varepsilon,\delta}(t)$, $\delta>0$
on $\bbR_{\ge 0}$
with $\lim_{t\to\infty}f^{(l)\mopbd}_{\varphi,\varepsilon,\delta}(t)=0$
such that
\begin{description}
\item[(i)]
for any cone $\Lambda_l=\lef a \varphi\in \Clbd$, there is a unitary 
$U^{(l)\mopbd}_{\Lambda_l,\varepsilon}\in \caU(\hbd)$
satisfying
\begin{align}\label{lem7p}
\pbd\lmk\caA_{(\lm l)^c\cap \hu}\rmk'\subset 
\Ad\lmk U^{(l)\mopbd}_{\Lambda_l,\varepsilon}\rmk\lmk 
\pbd\lmk \caA_{ \lef{\lmk a+R^{(l)\mopbd}_{\varphi,\varepsilon}\rmk}{(\varphi+\varepsilon)}}\rmk''
\rmk,
\end{align}
and 
\item[(ii)]
 for any $\delta>0$ and $t\ge 0$, there is a unitary 
 $\tilde U^{(l)\mopbd}_{\Lambda_l,\varepsilon,\delta,t}\in \pbd\lmk 
 \caA_{
  \lef{a+t}{(\varphi+\varepsilon+\delta)}
 }\rmk''\cap \fbd$
 satisfying
\begin{align}\label{uappro}
\lV
U^{(l)\mopbd}_{\Lambda_l,\varepsilon}-\tilde U^{(l)\mopbd}_{\Lambda_l,\varepsilon,\delta,t}
\rV\le f^{(l)\mopbd}_{\varphi,\varepsilon,\delta}(t).
\end{align}
\end{description}
Note in particular, we have $U^{(l)\mopbd}_{\Lambda,\varepsilon}\in \fbd$.
\end{assum}
\change{
\begin{rem}
The non-approximate versions of Assumption \ref{assum80} and Assumption \ref{assum80l} are
\begin{align}
\begin{split}
&\pbd\lmk\caA_{(\lm r)^c\cap\hu }\rmk'
=\pbd\lmk\caA_{\lm r}\rmk'',\quad \lm r\in\Crbd,\\
&\pbd\lmk\caA_{(\lm l)^c\cap\hu }\rmk'
=\pbd\lmk\caA_{\lm l}\rmk'',\quad \lm l\in\Clbd,
\end{split}
\end{align}
respectively.
In \cite{wa}, Wallick showed this non-approximate version for the smooth boundary of the Toric code.
\end{rem}
}

In order to have direct sum and subobjects, we need the following assumption.
\begin{assum}\label{aichi}
For any $\Lambda\in \CUbk$, $\pbk(\caA_{\Lambda})''$ is an infinite factor.
\end{assum}
The boundary version is the following.
\begin{assum}\label{wakayama}
For any $\Lambda\in \CUbk$, $\pbd(\caA_{\Lambda\cap\hu})''$ is an infinite factor.
\end{assum}
From Assumption \ref{wakayama}, we can derive infiniteness of von Neumann algebras of larger regions.
\begin{lem}
Assume Assumption \ref{wakayama}.
Then for any $\lm l\in \Clbd$, $\pbd(\caA_{\lm{l}})''$ is an infinite factor.
\end{lem}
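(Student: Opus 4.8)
The plan is to separate the two assertions and feed each from a different source: factoriality will come entirely from purity of $\obd$, and infiniteness will be the only place where Assumption \ref{wakayama} is used.

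For factoriality I would fix $\lm l\in\Clbd$ and introduce the \emph{exact} set-theoretic complement $\Gamma:=\lmk\hu\setminus\lm l\rmk\cap\bbZ^2$ inside the upper half-plane lattice. Since $\lm l\cap\bbZ^2$ and $\Gamma$ partition $\hu\cap\bbZ^2$, we obtain a tensor factorization $\abd=\caA_{\lm l}\otimes\caA_{\Gamma}$, so $M:=\pbd\lmk\caA_{\lm l}\rmk''$ and $N:=\pbd\lmk\caA_{\Gamma}\rmk''$ are mutually commuting von Neumann algebras. Because $\obd$ is pure, $\pbd$ is irreducible, so $M\vee N=\pbd(\abd)''=\caB(\hbd)$ and hence $M'\cap N'=\lmk M\vee N\rmk'=\bbC\unit$. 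As $M$ and $N$ commute we have $M\subseteq N'$, and therefore the center of $M$ satisfies $M\cap M'\subseteq N'\cap M'=\bbC\unit$; thus $M$ is a factor. This step uses neither Assumption \ref{wakayama} nor any form of Haag duality.

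For infiniteness I would exhibit a single bulk cone $\ld\in\CUbk$ with $\ld\cap\hu\subseteq\lm l$: take the same apex as $\lm l$ and an angular sector compactly contained in that of $\lm l$, so that $\ld\subseteq\lm l$ and $\ld\subseteq\hu$. Then $\pbd\lmk\caA_{\ld\cap\hu}\rmk''\subseteq M$ with the common unit $\unit$. By Assumption \ref{wakayama} this subalgebra is an infinite factor, so $\unit$ is an infinite projection there, i.e. there is an isometry $v$ with $v^*v=\unit$ and $vv^*\neq\unit$. Since $v\in M$ as well, $\unit$ is an infinite projection in the factor $M$, so $M$ is an infinite factor, which is the claim.

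The only delicate point — and I would treat it as a bookkeeping matter rather than a conceptual obstacle — is the angular open/closed convention at the boundary: the cone $\lm l$ and the naturally associated $\lm r\in\Crbd$ do not quite tile $\hu$ (the ray along the positive $x$-axis from the apex is omitted), which is exactly why I use the exact complement $\Gamma$ in place of $\lm r$ to secure the clean tensor factorization and the identity $M\vee N=\caB(\hbd)$. Once that is set up, everything is a direct consequence of purity together with the infiniteness already assumed for the boat-shaped regions $\ld\cap\hu$.
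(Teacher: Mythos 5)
Your proof is correct and follows essentially the same route as the paper's: factoriality of $\pbd(\caA_{\lm l})''$ from purity (irreducibility) of $\pbd$, and infiniteness by locating a cone $\ld\in\CUbk$ with $\ld\cap\hu\subset\lm l$ and importing the isometry from the infinite factor $\pbd(\caA_{\ld\cap\hu})''$ guaranteed by Assumption \ref{wakayama}. The only difference is that you spell out the standard commutant argument for factoriality and the explicit construction of $\ld$, which the paper leaves implicit.
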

\begin{proof}
Because $\obd$ is pure, $\pbd(\caA_{\lm{l}})''$ is a factor  for any $\lm l\in \Clbd$.
For any $\lm l\in \Clbd$, there exists a $\ld\in \CUbk$ such that
$\ld\cap\hu\subset\lm l$, hence $\pbd(\caA_{\Lambda\cap\hu})''\subset \pbd(\caA_{\lm{l}})''$.
Therefore, if Assumption \ref{wakayama} holds, then $\pbd(\caA_{\lm{l}})''$ is an infinite factor.
\end{proof}
We also note the following.
\begin{lem}\label{daidai}
If Assumption \ref{aichi} and Assumption \ref{assump7} hold,
then Assumption \ref{wakayama} holds. 
\end{lem}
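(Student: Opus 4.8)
The plan is to transport the assertion from the bulk representation $\pbk$, where Assumption \ref{aichi} is available, to the boundary representation $\pbd$, using the quasi-equivalence furnished by Assumption \ref{assump7}, while extracting the factor property from purity exactly as in the preceding Lemma. Fix $\Lambda\in\CUbk$. First I would record the factor property: since $\obd$ is pure and $\lhu\subset\hu$, the purity argument of the preceding Lemma (the step ``because $\obd$ is pure, $\pbd\lmk\caA_{\lm l}\rmk''$ is a factor'') applies verbatim to show $\pbd\lmk\caA_{\lhu}\rmk''$ is a factor. It therefore remains only to prove this factor is infinite.

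For the infiniteness I would pass to the bulk through the quasi-equivalence. By Assumption \ref{assump7}, $\pbd\vert_{\caA_{\lhu}}$ and $\pbk\vert_{\caA_{\lhu}}$ are quasi-equivalent, so there is an (automatically normal) $*$-isomorphism $\Phi\colon\pbk\lmk\caA_{\lhu}\rmk''\to\pbd\lmk\caA_{\lhu}\rmk''$ with $\Phi\circ\pbk=\pbd$ on $\caA_{\lhu}$. Since a $*$-isomorphism of von Neumann algebras carries isometries to isometries and the unit to the unit, it preserves infiniteness of the identity; thus it suffices to show $\pbk\lmk\caA_{\lhu}\rmk''$ is infinite. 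Here I use that $\Lambda\in\CUbk$ opens strictly upward: its axis angle $\theta$ satisfies $0<\theta-\varphi$ and $\theta+\varphi<\pi$, whence $\theta\in(0,\pi)$ and $\lhu$ is unbounded. This lets me pick a thinner cone $\Lambda'=\Lambda_{\bm b,\theta,\varphi'}$ with $\varphi'<\varphi$ whose apex $\bm b$ sits far out along the axis of $\Lambda$, so that $\bm b\in\hu$ and $\Lambda'\subset\lhu$ with $\Lambda'\in\CUbk$; this inclusion is a routine planar estimate.

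Then $\pbk\lmk\caA_{\Lambda'}\rmk''\subset\pbk\lmk\caA_{\lhu}\rmk''$ is a \emph{unital} inclusion of von Neumann algebras on $\hbk$, and by Assumption \ref{aichi} the subalgebra $\pbk\lmk\caA_{\Lambda'}\rmk''$ is an infinite factor, hence contains a non-unitary isometry $v$ (with $v^*v=\unit$ and $vv^*\neq\unit$). Because the inclusion is unital, the same $v$ lies in $\pbk\lmk\caA_{\lhu}\rmk''$ and witnesses its infiniteness; applying $\Phi$ then shows $\pbd\lmk\caA_{\lhu}\rmk''$ is infinite. Combined with the factor property, $\pbd\lmk\caA_{\lhu}\rmk''$ is an infinite factor, and since $\Lambda\in\CUbk$ was arbitrary this is precisely Assumption \ref{wakayama}.

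The only genuinely delicate point I expect is the factor property, which rests on purity in exactly the way the preceding Lemma uses it; the infiniteness is soft, being a consequence of the unital sub-cone inclusion together with the isomorphism-invariance of infiniteness under the quasi-equivalence. The geometric construction of $\Lambda'$ is elementary and should be dispatched in a sentence.
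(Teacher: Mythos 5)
Your proof is correct, but it takes a different route from the paper's. The paper's entire proof is one sentence: $\Lambda\setminus\lmk\Lambda\cap\hu\rmk$ is finite for any $\Lambda\in\CUbk$. The implicit argument is that $\caA_\Lambda\cong\caA_{\lhu}\otimes\caA_S$ with $S$ finite, so $\pbk\lmk\caA_\Lambda\rmk''$ is, up to a finite type $\mathrm{I}$ tensor factor, the same as $\pbk\lmk\caA_{\lhu}\rmk''$; hence the infinite-factor property of Assumption \ref{aichi} passes directly to $\pbk\lmk\caA_{\lhu}\rmk''$, and the quasi-equivalence of Assumption \ref{assump7} transports \emph{both} factoriality and infiniteness at once to $\pbd\lmk\caA_{\lhu}\rmk''$. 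You instead split the two properties: factoriality of $\pbd\lmk\caA_{\lhu}\rmk''$ from purity of $\obd$ (the same standard commutant computation used in the lemma for $\Clbd$ cones), and infiniteness by choosing a translated cone $\Lambda'\in\CUbk$ with $\Lambda'\subset\lhu$, applying Assumption \ref{aichi} to $\Lambda'$, and pushing the resulting non-unitary isometry through the unital inclusion and then through the quasi-equivalence isomorphism. All of these steps are sound: the geometric claim holds because every direction in $\arg\Lambda\subset(0,\pi)$ points strictly upward, so translating the apex far along $\bm e_\Lambda$ lands the cone inside $\lhu$; and a unital embedding of an infinite von Neumann algebra indeed forces the ambient algebra to be infinite. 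What your argument buys is robustness: it never uses finiteness of $\Lambda\setminus\lmk\Lambda\cap\hu\rmk$, only that some $\CUbk$-cone fits inside $\lhu$, so it would survive geometries where the discarded region is infinite. What the paper's argument buys is brevity and economy of hypotheses at this step: it needs neither the purity of $\obd$ nor any cone-translation geometry, since the quasi-equivalence hands over the infinite-factor property in a single stroke.
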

\begin{proof}
This is because
$\Lambda\setminus \lmk \Lambda\cap \hu\rmk$ is finite
for any $\ld\in\CUbk$.
\end{proof}
Combining this and Lemma 5.3 Lemma 5.5 of \cite{MTC}, for boundary states of gapped ground states, Assumption \ref{aichi}
and Assumption \ref{wakayama} hold automatically.
\begin{lem}\
Suppose that $\obk$ is a gapped ground state in the sense it satisfies
the gap inequality (5.1)of \cite{MTC}, and $\obd$ is a boundary state of $\obk$.
Then Assumption \ref{aichi} and Assumption \ref{wakayama} hold.
\end{lem}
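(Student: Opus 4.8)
The plan is to split the claim into its bulk half (Assumption \ref{aichi}) and its boundary half (Assumption \ref{wakayama}), to obtain the bulk half directly from the gap hypothesis via the cited results of \cite{MTC}, and then to observe that the boundary half is a formal consequence of the bulk half together with the boundary-state hypothesis through the already-established Lemma \ref{daidai}.

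First I would establish Assumption \ref{aichi}, which concerns only the bulk state $\obk$. Since $\obk$ is pure and satisfies the gap inequality (5.1) of \cite{MTC}, this is precisely the situation analyzed there, so I would invoke Lemma 5.3 and Lemma 5.5 of \cite{MTC}: I expect Lemma 5.3 to give that $\pbk(\caA_\Lambda)''$ is a factor for every cone $\Lambda$, and Lemma 5.5 to give that this factor is infinite. Since every $\Lambda\in\CUbk$ is a cone of the type covered there, combining the two yields that $\pbk(\caA_\Lambda)''$ is an infinite factor for all $\Lambda\in\CUbk$, which is exactly Assumption \ref{aichi}. The only thing to verify carefully is that the class $\CUbk$ lies within the scope of those lemmas and that the gap inequality (5.1) is literally the hypothesis they require.

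Next, for Assumption \ref{wakayama}, I would note that by hypothesis $\obd$ is a boundary state of $\obk$, so Assumption \ref{assump7} holds by definition. Having just verified Assumption \ref{aichi}, I am then in the exact situation of Lemma \ref{daidai}, whose conclusion is Assumption \ref{wakayama}, so I would simply apply it. Unwinding that lemma, the content is that for $\Lambda\in\CUbk$ the set $\Lambda\setminus\lhu$ is finite, so $\caA_\Lambda$ and $\caA_{\lhu}$ differ only by a finite-dimensional tensor factor; since quasi-equivalence of $\pbd\vert_{\caA_{\lhu}}$ and $\pbk\vert_{\caA_{\lhu}}$ preserves the isomorphism class of the generated von Neumann algebra, the infinite-factor property passes from $\pbk(\caA_\Lambda)''$ to $\pbk(\caA_{\lhu})''$ and then across the quasi-equivalence to $\pbd(\caA_{\lhu})''$.

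The main obstacle is not in this lemma itself, which is essentially bookkeeping once the inputs are assembled; the genuine work is hidden in Lemma 5.3 and Lemma 5.5 of \cite{MTC}, where the spectral gap is converted into the factor and infiniteness properties of the cone algebras. As those are available as cited results, the remaining care is only to confirm that the hypotheses match—that purity of $\obk$ and the gap inequality (5.1) are exactly the assumptions of those lemmas, and that $\CUbk$-cones, together with their intersections with $\hu$ up to a finite set, fall within their scope.
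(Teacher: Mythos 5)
Your proposal is correct and follows exactly the route the paper intends: Assumption \ref{aichi} comes from Lemma 5.3 and Lemma 5.5 of \cite{MTC} applied to the gapped pure state $\obk$, and Assumption \ref{wakayama} then follows from Lemma \ref{daidai}, since being a boundary state is precisely Assumption \ref{assump7}. Your unwinding of Lemma \ref{daidai} (finiteness of $\Lambda\setminus\lmk\Lambda\cap\hu\rmk$ plus transport of the infinite-factor property across the quasi-equivalence) matches the paper's one-line justification of that lemma.
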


The following assumption means that in $\obk$, any quasi-particles 
 in the upper half plane can freely move to the lower half plane.
\begin{assum}\label{oo}
$\OUbk=\Obk$.
\end{assum}
In this paper, we consider such a homogeneous situation.

\change{
\begin{ex}\label{exas}
Assumptions in this subsection are satisfied by the Toric code:
By \cite{bravyi2010topological}, \cite{N2} or with the argument there, we know that the condition 
in Definition \ref{def104} holds for the Toric code.
From Proposition \ref{hotaru}, this guarantees Assumption \ref{assump7}
for the smooth/rough boundaries of Toric code.
Thanks to \cite{N1} and \cite{wa}, we know that Assumption \ref{assum3},
Assumption \ref{assum80} and Assumption \ref{assum80l} are also satisfied.
Assumption \ref{aichi} and Assumption \ref{wakayama} can be found there as well,
although they also follow from the gap condition \cite{MTC}.
Assumption \ref{oo} follows from the argument in \cite{naaijkens2013kosaki} :
the argument  following
\cite{kawahigashi2001multi} already gives a bound for the number of sectors in $\OUbk$.
But as was shown in \cite{naaijkens2013kosaki} this number is already saturated by
sectors coming from $\Obk$. (Recall $\Obk\subset \OUbk$.)

\end{ex}
}
\subsection{Results}
We consider the setting in subsection \ref{setting2}.
The proof in  \cite{MTC} shows the following.
\begin{thm}Consider the setting in subsection \ref{setting2}.
Under Assumption \ref{assum3} and Assumption \ref{aichi},
 $\Obkl$ with their morphisms $(\rho,\sigma)$,
$\rho,\sigma\in \Obkl$
forms a braided $C^*$-tensor category $\Cabkl$.
\end{thm}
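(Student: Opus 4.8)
The plan is to run the DHR superselection program, adapted to cones and to the approximate Haag duality of Assumption \ref{assum3}, exactly as in \cite{MTC}. Since every object of $\Obkl$ is localized in the fixed cone $\lz$ and the morphisms $(\rho,\sigma)$ are genuine bounded intertwiners on $\hbk$, the underlying categorical structure comes for free from $\caB(\hbk)$: composition of intertwiners is associative, identities are present, each $(\rho,\sigma)$ is a Banach space on which composition is bilinear, the $*$-operation is the Hilbert-space adjoint, and the $C^*$-identity $\lV R^*R\rV=\lV R\rV^2$ together with positivity is inherited. Thus all the real content lies in the tensor product, the braiding, and the existence of direct sums and subobjects.

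First I would localize the morphisms. If $\rho,\sigma\in\Obkl$, then for $A\in\caA_{\lz^c}$ one has $\rho(A)=\sigma(A)=\pbk(A)$, so any $R\in(\rho,\sigma)$ commutes with $\pbk(\caA_{\lz^c})$, whence $R\in\pbk(\caA_{\lz^c})'\subset\gu$ by Lemma \ref{lem11}; so every hom-set sits inside $\gu$. Next I would extend each object to a normal $*$-endomorphism $\hat\rho$ of $\gu$: for $x\in\pbk(\caA_\Lambda)''$ with $\Lambda\in\CUbk$, choose a cone $\Lambda_1\in\Cbk$ disjoint from $\Lambda$ and a charge transporter $V\in\Vbk{\rho}{\Lambda_1}$, and set $\hat\rho(x):=V^*xV$. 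A short intertwiner computation shows independence of $V$ and $\Lambda_1$ (two such transporters differ by a unitary commuting with $\pbk(\caA_\Lambda)''$), that $\hat\rho$ patches to an endomorphism of $\gu$, and that $\rho=\hat\rho\circ\pbk$. The tensor product is then $\rho\otimes\sigma:=\hat\rho\circ\sigma$ on objects (which lands back in $\Obkl$, and is a representation of $\abk$ since $\sigma(\abk)\subset\gu$) and $R\otimes S:=R\,\hat\rho(S)$ on morphisms (legitimate because $S\in\gu$); associativity and the interchange law follow from the endomorphism attached to $\rho\otimes\sigma$ being $\hat\rho\circ\hat\sigma$.

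The step I expect to be the main obstacle is the control of the localization of $\hat\rho$, i.e.\ showing $\hat\rho(\gu)\subset\gu$ and, more sharply, that $\rho\otimes\sigma$ is again localized in $\lz$. With exact Haag duality one argues that $\hat\rho(x)$ commutes with $\pbk(\caA_{(\Lambda\cup\lz)^c})$ and hence lies in $\pbk(\caA_{\Lambda\cup\lz})''$; under only the approximate Haag duality of Assumption \ref{assum3} this containment must be replaced by a norm approximation controlled by $f^{\mopbk}_{\varphi,\varepsilon,\delta}(t)$ and recovered in the limit $t\to\infty$, using the unitaries $\tilde U^{\mopbk}_{\Lambda,\varepsilon,\delta,t}$ and the stability of approximate Haag duality under the local unitaries $V$. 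This limiting argument is the technical heart of the construction and is precisely what \cite{MTC} supplies.

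Finally, for the braiding I would transport $\rho$ and $\sigma$ by charge transporters into two cones in a fixed separated configuration, one lying in $\caC_{(\theta,\varphi)}$ on each side of a chosen forbidden direction, and build $\epsilon(\rho,\sigma)$ from the resulting intertwiners in $\gu$; the two admissible cyclic orderings produce $\epsilon$ and $\epsilon^{-1}$, reflecting that in two dimensions the separated cones cannot be swapped through the forbidden direction, so one obtains a genuine braiding rather than a symmetry. Well-definedness (independence of the auxiliary cones and transporters), naturality in both variables, and the two hexagon identities are verified by the usual DHR intertwiner manipulations, again with the approximate-Haag-duality estimates inserted and the limits taken. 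Direct sums and subobjects are furnished by Assumption \ref{aichi}: since each $\pbk(\caA_\Lambda)''$ is an infinite factor it contains isometries with orthogonal ranges summing to $\unit$ and splits the projections arising in the hom-sets $(\rho,\rho)$, giving the direct sums $\rho_1\oplus\rho_2$ and the subobjects cut out by orthogonal projections. This completes the verification that $\Obkl$ with morphisms $(\rho,\sigma)$ is a braided $C^*$-tensor category $\Cabkl$.
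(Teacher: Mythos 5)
Your proposal is correct and follows essentially the same route as the paper: the paper's proof of this theorem is simply the citation to \cite{MTC}, and your sketch reconstructs exactly the construction carried out there --- cone-localized objects extended to endomorphisms of $\gu=\caB_{(\frac{3\pi}2,\frac\pi 2)}$ via charge transporters, tensor product by composition of the extensions, braiding via norm limits of transporters moved off to infinity (the limits being forced by the approximate rather than exact Haag duality), and direct sums/subobjects from the infiniteness of the cone algebras supplied by Assumption \ref{aichi}. The technical steps you defer (the $f^{\mopbk}_{\varphi,\varepsilon,\delta}$-controlled estimates for localization of the extended endomorphisms, and the projection-approximation arguments behind subobjects) are precisely what \cite{MTC} provides, which is also all the paper itself invokes.
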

See \cite{NT} for the definition of braided $C^*$-tensor category and related notions.

Analogous to this theorem, we can derive various $C^*$-tensor categories
from our boundary state $\obd$. We can consider different categories
depending on our choices of objects (representations of $\caA_{\hu}$) and
morphisms (intertwiners between representations).
Under 
Assumption \ref{wakayama} we can show that
all of the following categories are $C^*$-tensor categories
(Theorem \ref{naha}, Theorem \ref{okinawa}, Theorem \ref{nagoya}).
\begin{enumerate}
\item The category $\Carbdl$ with objects $\Orbdl$
and morphisms $(\rho,\sigma)_l$ between objects $\rho,\sigma\in \Orbdl$.
\item The category $\CarUbdl$ with objects $\OrUbdl$
and morphisms $(\rho,\sigma)_U$ between objects
$\rho,\sigma\in \OrUbdl$.
\item The category $\Cabul$ with objects $\Obul$
and morphisms $(\rho,\sigma)_U$ between objects $\rho,\sigma\in \Obul$.
\end{enumerate}

For each $\rho\in \OrUbdl$ we can define 
a family of natural isomorphisms
$\hb - \rho:\Obul\to (-\otimes \rho,\rho\otimes -)_U$, satisfying the half-braiding equation
(Proposition \ref{lem40}, Lemma \ref {lem41}, Lemma \ref{inu}, Lemma \ref{panama}),
under Assumption \ref{assum80} Assumption \ref{assum80l}.
This $\hb -\rho$  satisfies some asymptotic property (Lemma \ref{neko}). 
When we restrict this to $\Obul$, 
$\Cabul$ becomes braided $C^*$-tensor category (Theorem \ref{matsuyama}).
The category $\Cabul$ can be regarded as a center of $\CarUbdl$ in a certain sense.
See Appendix \ref{Orcen}.

Using the condition that $\obd$ is a boundary of $\obk$ (Assumption \ref{assump7}),
we can show the following. See Corollary \ref{kanazawa} for a more detailed statement.
\begin{ithm}
The braided $C^*$-tensor category $\Cabul$ and the bulk braided $C^*$-tensor category
$\Cabkl$ are equivalent.
\end{ithm}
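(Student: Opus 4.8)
The plan is to convert the boundary condition into a single normal $*$-isomorphism between the bulk and boundary operation algebras and then to transport the entire category $\Cabkl$ across it. First I would build the comparison isomorphism. For each $\Lambda\in\CUbk$, Assumption \ref{assump7} gives quasi-equivalence of $\pbk\vert_{\caA_{\lhu}}$ and $\pbd\vert_{\caA_{\lhu}}$, which by definition furnishes a normal $*$-isomorphism $\alpha_\Lambda\colon \pbk(\caA_{\lhu})''\to\pbd(\caA_{\lhu})''$ with $\alpha_\Lambda\circ\pbk=\pbd$ on $\caA_{\lhu}$. Since $\alpha_\Lambda$ is normal and agrees with a fixed map on the weakly dense set $\pbk(\caA_{\lhu})$, it is unique, and uniqueness forces $\alpha_{\Lambda'}\vert_{\pbk(\caA_{\lhu})''}=\alpha_\Lambda$ whenever $\Lambda\subset\Lambda'$. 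Hence the $\alpha_\Lambda$ glue on $\cup_\Lambda\pbk(\caA_{\lhu})''$ and extend by norm-continuity to a $*$-isomorphism
\[
\Phi\colon \fbk\xrightarrow{\ \sim\ }\fbd,\qquad \Phi\circ\pbk=\pbd\ \text{ on }\ \abd=\caA_{\hu}.
\]

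The core analytic input I would then establish is a localization lemma: for every $\rho\in\Obkl$ one has $\rho(\caA_{\hu})\subset\fbk$, and for every $\rho,\sigma\in\Obkl$ one has $(\rho,\sigma)\subset\fbk$. I would prove this from approximate Haag duality (Assumption \ref{assum3}): an intertwiner lies in $\pbk(\caA_{\lz^c})'$, which Assumption \ref{assum3} places, up to arbitrarily small norm error, inside $\pbk(\caA_{\Lambda'})''$ for enlarged, translated cones $\Lambda'$; using Assumption \ref{oo} ($\Obk=\OUbk$) to transport $\rho,\sigma$ into cones lying in the open upper half-plane, one may take the approximating algebras of the form $\pbk(\caA_{\Lambda'\cap\hu})''\subset\fbk$, and a limit argument closes the claim. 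Granting this, I set
\[
F(\rho):=\Phi\circ\rho\vert_{\caA_{\hu}},\qquad F(R):=\Phi(R),
\]
mapping $\Obkl$ to representations of $\caA_{\hu}$ on $\hbd$ and $(\rho,\sigma)$ into $\fbd$. From $\Phi\circ\pbk=\pbd$ one checks $F(\rho)\vert_{\caA_{\lhucz}}=\pbd\vert_{\caA_{\lhucz}}$ and that images of bulk charge transporters lie in $\Vbu{F(\rho)}{\Lambda}$, so $F(\rho)\in\Obul$ and $F(R)\in(F\rho,F\sigma)_U$; thus $F$ is a well-defined $*$-functor.

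Full faithfulness is then immediate: $\Phi$ restricts to a linear bijection $(\rho,\sigma)\to(F\rho,F\sigma)_U$, both sides being exactly the intertwiners lying in $\fbk$, respectively $\fbd$, matched by the isomorphism $\Phi$ (here the analogous boundary localization via Assumptions \ref{assum80}, \ref{assum80l} identifies $(F\rho,F\sigma)_U$ with the $\fbd$-valued intertwiners). For essential surjectivity, given $\tilde\rho\in\Obul$ I would produce a bulk preimage by extending $\Phi^{-1}\circ\tilde\rho$ from $\caA_{\hu}$ to $\abk=\abd\otimes\caA_{\hd}$ by the vacuum on the lower half-plane: since $\fbk\subset\pbk(\caA_{\hd})'$ by locality, the representations $\Phi^{-1}\circ\tilde\rho$ and $\pbk\vert_{\caA_{\hd}}$ have commuting ranges and define a representation $\rho$ of $\abk$; Assumption \ref{oo} upgrades its upper-half transportability to full transportability, so $\rho\in\Obkl$, and by construction $F(\rho)=\tilde\rho$.

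Finally I would check that $F$ is monoidal — the tensor products being compositions of localized endomorphisms, compatibility is inherited from $\Phi\circ\pbk=\pbd$ and the multiplicativity of $\Phi$ — and that $F$ carries the bulk braiding to the half-braiding $\hb{-}{\rho}$ of $\Cabul$ from Proposition \ref{lem40} and Theorem \ref{matsuyama}. The hardest part will be the localization lemma of the second step together with this matching of braidings: the bulk braiding is assembled from charge transporters to spatially separated cones in $\bbZ^2$, whereas at the boundary only the $\fbd$-valued half-braiding is available, and reconciling the two is exactly where the nontriviality hypothesis (Assumption \ref{raichi} and Remark \ref{dil}) is needed, ensuring that the braiding transported by $F$ is the genuine bulk braiding rather than a degenerate one. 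I expect these two points to carry essentially all of the difficulty, the remaining verifications being formal once $\Phi$ and the localization lemma are in place.
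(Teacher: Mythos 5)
Your skeleton is, in essence, the paper's own proof: your $\Phi$ is exactly the $*$-isomorphism $\tau:\fbk\to\fbd$ of Lemma \ref{lem10} (same gluing-by-uniqueness argument); your ``localization lemma'' is the content of Lemmas \ref{lem13}, \ref{lem15}, \ref{lem16} (the paper reaches it via the conditional-expectation statement $\gu\cap\pbk(\caA_{\hd})'\subset\fbk$ of Lemma \ref{lem13} combined with Lemma \ref{lem11}, rather than a direct $\varepsilon$-approximation, but the mechanism is the same); your $F(\rho)=\Phi\circ\rho\vert_{\abd}$ coincides with $F_0^{\llz}(\rho)=\tau\,\Tbk\rho\lz\unit\,\tau^{-1}\pbd$ of Lemma \ref{lem17}; and your essential-surjectivity step (extend $\Phi^{-1}\circ\tilde\rho$ by $\pbk$ on $\caA_{\hd}$ through commuting ranges, then use Assumption \ref{oo} to upgrade membership in $\OUbk$ to $\Obk$) is precisely the map $G$ of Lemma \ref{lem86}, which together with Theorem \ref{lem74} yields Corollary \ref{kanazawa}.

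There is, however, one genuine error: your handling of braiding compatibility. The statement (Corollary \ref{kanazawa}) does \emph{not} assume Assumption \ref{raichi}, and no non-degeneracy of the braiding is needed to show that the functor is braided. The matching of braidings is Lemma \ref{lem47}, and it is a direct computation: the bulk braiding $\epsilon_-^{(\lz)}(\rho:\sigma)$ is by definition a norm limit of products of charge transporters to cones $\lm 1(t_1),\lm 2(t_2)$ which may be chosen inside $\hu$; by Lemma \ref{lem16} all of these operators lie in $\fbk$, and by Lemma \ref{lem17}(ii) their $\tau$-images are admissible boundary transporters, with $\tau(V_{\sigma,\lm 2(t_2)})\in\Vbu{F_0^{\llz}(\sigma)}{\lm 2(t_2)}\subset\VUbd{F_0^{\llz}(\sigma)}{\lm {r2}(t)}$ for suitable $t_2$, while Lemma \ref{lem17}(iv) intertwines the extensions. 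Applying the isometry $\tau$ to the limit formula therefore reproduces verbatim the limit (\ref{halfdef}) defining $\hb{F_0^{\llz}(\rho)}{F_0^{\llz}(\sigma)}$. Assumption \ref{raichi} enters only in the later Drinfeld-center statement (Theorem \ref{mainthm}), where it is used (Lemma \ref{zabon}) to force projections and morphisms compatible with asymptotically constrained half-braidings into $\fbd$; it is irrelevant to the ``copy of the bulk at the boundary'' theorem you are asked to prove. As written, your argument would only establish the equivalence under a hypothesis the statement does not grant, and your diagnosis of where the difficulty sits is mistaken: the reconciliation you worry about is automatic once the transporters are known to lie in $\fbk$. (A second, minor slip: Assumption \ref{oo} is not needed in your localization lemma, since objects of $\Obkl$ are already localized in $\lz\subset\hu$ and transportable within $\CUbk$; it is needed exactly once, in the essential-surjectivity step, as you yourself correctly use it there.)
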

Hence at the boundary, we have a ``copy'' of the bulk theory.
However, $\Cabul$ is not preferable as the boundary theory.
This is because the morphisms are too limited there, i.e.,
we allow ourselves to use only $\fbd$.
In particular, as we saw in Example \ref{toric} for toric code
 with rough boundary, $m$ particle is {\it not} equivalent to the vacuum
  with respect to
$\simeq_U$, while
it {\it is} equivalent to the vacuum
  with respect to
$\simeq_l$.
Therefore, we prefer to use $\bl$ as morphisms.
Hence in section \ref{drinfeld} we introduce a category  $\caM$ with object $\Obul$, 
and morphisms between objects $\rho,\sigma\in \Obul$
defined by
 $\Mor_{\caM}(\rho,\sigma):=(\rho,\sigma)_l$.
 The category $\caM$ is a tensor category.
 However, it is {\it not} a $C^*$-tensor category, in the sense that we cannot
 construct a subobject in general.
 In order to overcome this difficulty, we introduce the idempotent completion $\tilde\caM$
 of $\caM$.
Then we obtain the following (see Proposition \ref{mtcat}).
\begin{iprop}
This $\tilde\caM$ is a $C^*$-tensor category
\end{iprop}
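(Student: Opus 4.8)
The plan is to verify, for the idempotent completion $\tilde\caM$, the structural axioms of a $C^*$-tensor category in the order: (1) $\tilde\caM$ is a $C^*$-category with simple unit, (2) it has subobjects and direct sums, and (3) the monoidal structure of $\caM$ extends. Concretely, an object of $\tilde\caM$ is a pair $(\rho,p)$ with $\rho\in\Obul$ and $p=p^2=p^*\in(\rho,\rho)_l$ a projection, and
\begin{align*}
\mm\lmk(\rho,p),(\sigma,q)\rmk:=\left\{T\in(\rho,\sigma)_l\mid T=qTp\right\},
\end{align*}
with composition, identities $p$, and $*$ inherited from $\caB_l$. First I would check that each such hom-set is a Banach space satisfying the $C^*$-identity: the intertwiner space $(\rho,\sigma)_l$ is norm-closed, being the intersection of the $C^*$-algebra $\bl$ with the norm-closed space $\{T\in\caB(\hbd)\mid T\rho(A)=\sigma(A)T,\ \forall A\in\abd\}$; the cut-down $T\mapsto qTp$ is a norm-continuous idempotent linear map, so its range is norm-closed, hence complete. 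The adjoint of $qTp$ is $pT^*q$, and $T\mapsto T^*$ maps $(\rho,\sigma)_l$ into $(\sigma,\rho)_l$, so the $*$-operation is well defined; the identity $\lV T^*T\rV=\lV T\rV^2$ is inherited from $\bl$. Thus $\tilde\caM$ is a $C^*$-category, and its unit $(\pbd,\unit)$ is simple because $(\pbd,\pbd)_l\subset\pbd(\abd)'=\bbC\unit$ by purity of $\obd$.

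Next I would establish subobjects and direct sums. Subobjects split by construction: for a projection $e\in\mm\lmk(\rho,p),(\rho,p)\rmk=p(\rho,\rho)_lp$ one has $e\le p$ in $(\rho,\rho)_l$, and the morphisms $e\in\mm\lmk(\rho,e),(\rho,p)\rmk$, $e\in\mm\lmk(\rho,p),(\rho,e)\rmk$ exhibit $(\rho,e)$ as a subobject of $(\rho,p)$; this is exactly what passing to $\tilde\caM$ provides. For direct sums I would use Assumption \ref{wakayama}. Both $\rho,\sigma\in\Obul$ are localized in the common cone $\lz\cap\hu$, and since $\pbd(\caA_{\lz\cap\hu})''$ is an infinite factor it contains isometries $S_1,S_2$ with $S_1S_1^*+S_2S_2^*=\unit$ and $S_i^*S_j=\delta_{ij}\unit$. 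Setting $\tau(A):=S_1\rho(A)S_1^*+S_2\sigma(A)S_2^*$ gives a representation localized in $\lz\cap\hu$ (the $S_i$ commute with $\pbd(\caA_{\lhucz})$), with $S_1\in(\rho,\tau)_l$, $S_2\in(\sigma,\tau)_l$; assembling the charge transporters of $\rho,\sigma$ through $S_1,S_2$ shows $\tau\in\Obul$, and $\lmk\tau,\,S_1pS_1^*+S_2qS_2^*\rmk$ is the direct sum of $(\rho,p)$ and $(\sigma,q)$.

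Finally I would extend the monoidal structure. On objects set $(\rho,p)\otimes(\sigma,q):=(\rho\otimes\sigma,\,p\otimes q)$, where $p\otimes q$ denotes the tensor product of the morphisms $p,q$ in $\caM$; since $\otimes$ is a $*$-functor, $p\otimes q$ is again a projection in $(\rho\otimes\sigma,\rho\otimes\sigma)_l$. On morphisms, $S\otimes T$ restricts to a well-defined map between the cut-down hom-spaces, and the associativity and unit constraints of $\caM$, being natural isomorphisms, transport to $\tilde\caM$ after sandwiching by the relevant projections. This is the standard fact that the idempotent completion of a $*$-monoidal category is again $*$-monoidal, so no new coherence identities need checking by hand; combined with the previous two steps, $\tilde\caM$ is a $C^*$-tensor category.

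I expect the main obstacle to be the direct-sum step, specifically verifying that $\tau$ genuinely belongs to $\Obul$ and that the witnessing isometries are legitimate morphisms of $\caM$, i.e.\ elements of $\bl$. One must choose $S_1,S_2$ inside the von Neumann algebra $\pbd(\caA_{\lz\cap\hu})''\subset\bl$ attached to the localization cone, and confirm that transportability with charge transporters of the correct type survives the construction. This is precisely where the infinite-factor hypothesis of Assumption \ref{wakayama} is indispensable; the remaining $C^*$-categorical and monoidal verifications are routine bookkeeping.
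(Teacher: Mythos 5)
Your proposal matches the paper's proof of Proposition \ref{mtcat} in all essentials: the same objects, hom-sets, identities $\id_{(\rho,p)}=p$, the same tensor $(\rho,p)\otimes(\sigma,q)=(\rho\otimes\sigma,p\otimes q)$ with unit and associativity constraints given by sandwiched projections, the same built-in subobject splitting, and the same direct-sum construction via isometries from the infinite factor $\pbd(\caA_{\lz\cap\hu})''$ supplied by Assumption \ref{wakayama} --- the paper simply packages this last step as Lemma \ref{lem61} (producing $\gamma\in\Obul$, $u\in(\rho,\gamma)_U$, $v\in(\sigma,\gamma)_U$ with $uu^*+vv^*=\unit$ and direct sum $(\gamma,\,upu^*+vqv^*)$), whereas you inline its content. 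The only cosmetic difference is that the paper verifies the pentagon and triangle identities by hand while you invoke the standard fact about idempotent completions of $*$-monoidal categories, which the paper itself acknowledges by citing the appendix of \cite{longo1996theory}.
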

We regard this $\tilde \caM$ as our boundary theory.
For this $C^*$-tensor category $\tilde\caM$,
we introduce  {\it the Drinfeld center with an asymptotic constraint}
$\caZ_a(\tilde \caM)$ of $\tilde \caM$.
This is the same as the usual Drinfeld center  (Definition \ref{hato}), except that the half-braidings
are required to satisfy the asymptotic property satisfied by
our $\iota^{\llz}$.
Now we assume that the braiding of $\Cabkl$ is non-trivial in the following sense :
if two quasi-particles in the bulk have the asymptotically unitarily equivalent braiding, then they
  are equivalent. For more precise condition, see Assumption \ref{raichi} and Remark \ref{dil}.
 \change{We regard this condition as ``a stable gapped boundary'' condition. See Remark \ref{gapmean}
 for the reasoning.}
Then we obtain the following.
\begin{ithm}
The Drinfeld center with the asymptotic constraint
$\zam$ is a braided $C^*$-tensor category equivalent to the bulk braided $C^*$-tensor category
$\Cabkl$.
\end{ithm}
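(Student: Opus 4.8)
The plan is to prove the equivalence by factoring through the already-established equivalence $\Cabul\simeq\Cabkl$ of the first Theorem, so that it suffices to construct a braided $C^*$-tensor equivalence $\Phi\colon\Cabul\to\zam$. I would define $\Phi$ on objects by $\rho\mapsto\lmk\rho,\{\hb\eta\rho\}_\eta\rmk$, where $\rho\in\Obul$ is regarded as an object of $\tilde\caM$ and $\{\hb\eta\rho\}_\eta$ is the half-braiding constructed in Proposition \ref{lem40} and the subsequent lemmas, and on morphisms by the inclusion $(\rho,\sigma)_U\hookrightarrow(\rho,\sigma)_l=\mm(\rho,\sigma)$. The first task is to verify that $\Phi$ is well defined. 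Each $\hb\eta\rho$, a priori an intertwiner in $(\eta\otimes\rho,\rho\otimes\eta)_U$, must be promoted to a genuine half-braiding in $\tilde\caM$: one has to check naturality with respect to all $\bl$-morphisms (not merely the $\fbd$-morphisms used in its construction) together with the half-braiding hexagon, and that it lands in $\zam$ rather than in the full Drinfeld center. The latter is exactly the matching of the asymptotic property of Lemma \ref{neko} against the asymptotic constraint of Definition \ref{hato}, while naturality against the larger morphism space $\bl$ is where the boundary approximate Haag duality (Assumptions \ref{assum80} and \ref{assum80l}) enters: it lets one approximate any $\bl$-intertwiner by operators localized far from the transported charge and commute them past the half-braiding up to vanishing error. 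Finally one checks tensor-functoriality, i.e.\ that the half-braiding of a tensor product factors as the composite of the individual half-braidings, which is the half-braiding equation recorded in the cited lemmas.

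Next I would establish full faithfulness. Faithfulness is immediate from the inclusion $(\rho,\sigma)_U\subset(\rho,\sigma)_l$. For fullness, a morphism $\Phi(\rho)\to\Phi(\sigma)$ in $\zam$ is an element $T\in(\rho,\sigma)_l$ intertwining the half-braidings for every object $\eta$, and the claim is that this compatibility forces $T\in\fbd$, hence $T\in(\rho,\sigma)_U$. I would prove this by a double-commutant-type argument: an operator in $\bl$ that commutes with the half-braidings of all transportable objects must commute with the boundary charge transporters $\Vbu\eta\Lambda$, and combining this with the boundary approximate Haag duality pins $T$ into $\fbd$.

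The main obstacle is essential surjectivity, the realization of every center object as a bulk charge. Given $(M,\beta)\in\zam$, write $M=(\rho_0,p)$ from the idempotent completion $\tilde\caM$, with $\rho_0\in\Obul$ and projection $p\in(\rho_0,\rho_0)_l$; one first reduces to the case of an honest object and then recovers the subobject at the end. The delicate point is to turn the abstract, merely asymptotically-constrained half-braiding $\beta$ into honest bulk charge transporters: using the asymptotic constraint on $\beta$ together with the boundary approximate Haag duality, I would show that $\beta$ is implemented by a family of unitaries lying in $\fbd$, which promotes $(M,\beta)$ to an object of $\Cabul$. It is precisely here that the non-triviality/stable-boundary hypothesis (Assumption \ref{raichi} and Remark \ref{dil}) is indispensable: it guarantees that this promotion is consistent and that the half-braiding data sharply determines the bulk object up to equivalence, so that $\beta$ coincides, up to isomorphism in $\zam$, with the canonical half-braiding $\{\hb\eta\rho\}_\eta$ of the corresponding $\rho\in\Obul$.

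Finally I would verify that $\Phi$ intertwines the braidings: the canonical braiding of the Drinfeld center $\zam$ evaluated on $\Phi(\rho),\Phi(\sigma)$ is by construction the half-braiding $\hb\sigma\rho$, which is exactly the braiding of $\Cabul$ restricted to $\Obul$, so $\Phi$ is braided. A fully faithful, essentially surjective braided tensor functor is a braided tensor equivalence, so $\Phi\colon\Cabul\to\zam$ is such an equivalence; composing with $\Cabul\simeq\Cabkl$ yields the asserted braided equivalence $\zam\simeq\Cabkl$, and tracking the explicit functors and natural isomorphisms gives the refined statement of Corollary \ref{kanazawa}.
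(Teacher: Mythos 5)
Your overall architecture is the same as the paper's proof of Theorem \ref{mainthm}: the paper's functor $\Psi$ sends $\rho\in\Obkl$ to $\lmk\lmk F_0^{\llz}(\rho),\unit\rmk, \ti{-}{-}{F_0^{\llz}(\rho)}{\unit}\rmk$, which is exactly your $\Phi$ precomposed with the equivalence $F^{\llz}\colon\Cabkl\to\Cabul$ of Corollary \ref{kanazawa}; your use of Lemma \ref{inu}/Lemma \ref{neko} for well-definedness, and of subobjects (Lemma \ref{lem70}) together with the boundary-to-bulk map $G$ (Lemma \ref{lem86}) for essential surjectivity, also matches. The problem is where you place Assumption \ref{raichi}.

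The genuine gap is in your fullness step. You claim that a morphism $T\in(\rho,\sigma)_l$ intertwining the half-braidings of all objects is forced into $\fbd$ by ``a double-commutant-type argument'' combined with boundary approximate Haag duality. That implication is not a consequence of approximate Haag duality: it is precisely Assumption \ref{raichi}. In the paper this is Lemma \ref{zabon} \emph{2}, whose proof only derives the asymptotic relation $\lim_{t\to\infty}\sup_{\gamma\in\Obun{\lm 1(t)}}\lV \tilde R-\Tbd\gamma{\lzr}\unit(\tilde R)\rV=0$ and then invokes Assumption \ref{raichi} --- nothing weaker --- to conclude $\tilde R\in\fbd$. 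The failure of your route is not cosmetic: in the toric code with smooth boundary, \emph{exact} boundary Haag duality holds, and yet there exist unitaries in $\pbd\lmk\caA_{\ld^c\cap\hu}\rmk'\cap\bl$ outside $\fbd$ (weak limits of string operators ending on the boundary, Example \ref{toric}, Remark \ref{bToric}); so no Haag-duality or commutant argument can, by itself, pin an operator into $\fbd$, and Remark \ref{gapmean} explains that for an unstable boundary the desired conclusion is expected to be false. The same issue already arises at the level of objects: to know that the projection $p$ of an object $\rpc\rho p\in\ozam$ lies in $\fbd$ (Lemma \ref{zabon} \emph{1}, needed both for fullness and before your subobject reduction), one again needs Assumption \ref{raichi}, via Lemma \ref{mikan} \emph{2}. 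You do invoke Assumption \ref{raichi}, but only in the essential-surjectivity step; moving it there does not repair fullness. Finally, a simplification you miss: by Lemma \ref{panda}, the asymptotic constraint alone already forces any half-braiding $C_{(\rho,p)}$ to coincide with the canonical one $\ti{-}{-}{\rho}{p}$, so in essential surjectivity there is nothing to ``implement by unitaries in $\fbd$'' --- the only remaining point is $p\in\fbd$, which is, once more, exactly Assumption \ref{raichi}.
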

\change{
\begin{ex}
From Example \ref{exas}, all the assumptions in this subsection are
satisfied in Toric code, and we may apply our theorem.
We will see from Example \ref{ebToric} and Theorem \ref{mainthm} that this equivalence
maps bulk objects $\pbk\rho_{\gamma'}^Z$ (with $\gamma'$ an infinite path in the upper half plane) to a boundary object
$(\pbd\rho_{\gamma'}^Z, \ti - -{\pbd\rho_{\gamma'}^Z}\unit)$
where $\tilde \iota$ is the half-braiding introduced in Lemma \ref{panda}.
This recovers the equivalence shown in \cite{wa}.
\end{ex}
}

\change{
\subsection{Comparisons with former works}
In \cite{wa}, Wallick (besides things we already mentioned about),
showed that the superselection sectors $\pi_\gamma^k$,
given by infinite paths of all types, and their direct sums form a fusion category $\Delta(\ld)$ (section 8 \cite{wa}).
In fact, as was shown in Theorem 12.9 of \cite{wa}, they are basically all the superselection sectors at the boundary. The proof of Theorem 12.9 of \cite{wa} relies on the index theory given by \cite{kawahigashi2001multi}, and requires quite a bit of information of the model.
The functor from the bulk to the boundary constructed in \cite{wa}
is given by a concrete formula, using the information of the path $\gamma$.
In our general framework, it is hard to talk about the index, and there is no path 
$\gamma$ that allows us to define the functor explicitly.
The thing we use instead to introduce our functor is
the quasi-equivalence given by Assumption \ref{assump7}.
This allows us to prove that there is a copy of the bulk theory at the boundary
Corollary \ref{kanazawa}.

Our main theorem proves  that the bulk theory is the Drinfeld center of the 
``stable gapped boundary'' theory, in our setting of superselection sectors.
Because the formulation of \cite{kitaev2012models} is
not given in terms of superselection sectors, we cannot make a direct connection
with our objects and objects studied in  \cite{kitaev2012models}.
However, assuming that the category given by superselection sectors (in our sense)
and that of \cite{kitaev2012models} are identical, we can relate our result to that 
of \cite{kitaev2012models}.
The identification is non-trivial though.
There is yet another approach to boundary theory \cite{jones2023local}, mostly focusing on systems with commuting Hamiltonians, but affording more informations for such cases. It would be interesting to see the relation between their approach and ours.
That might prove the identification we mentioned above.

}

\section{Boundary theories}
\change{In this section, we introduce braided $C^*$-tensor categories 
$\Carbdl$, $\CarUbdl$, $\Cabul$.
The proof follows the standard DHR theory \cite{BF} that is adapted to spin systems 
\cite{MTC} at the boundary. 
}

First, we extend each $\rho\in\Orbdl$ to an endomorphism of $\bl$.
\change{Note, as in the standard DHR theory, we cannot extend $\rho$ to the whole 
$\pbd(\caA_{H_U})''=\caB(\hbd)$ $\sigma$-weak continuously.
If that was possible, $\rho$ would be quasi-equivalent to $\pbd$.}
\begin{lem}\label{lem20}
Consider the setting in subsection \ref{setting2}.
Then for any $\rho\in\Orbd$, $\lzr\in\Crbd$ and $\Vrl\rho\lzr\in \Vbd\rho\lzr$,
there exists a unique $*$-homomorphism $\Tbdv\rho\lzr: \bl\to\caB(\hbd)$
such that
\begin{description}
\item[(i)]
$\Tbdv\rho\lzr\pbd=\Ad\lmk \Vrl\rho\lzr\rmk\rho$,
\item[(ii)]
$\Tbdv\rho\lzr$ is $\sigma$-weak continuous on
$\pbd(\caA_{\Lambda_l})''$ for all $\Lambda_l\in\Clbd$.
\end{description}
It satisfies $\Tbdv\rho\lzr(\bl)\subset \bl$ and defines an endomorphism
$\Tbdv\rho\lzr : \bl\to \bl$.
This endomorphism also satisfies 
\begin{align}\label{soto}
\left.\Tbdv\rho\lzr\right\vert_{\pbd\lmk\caA_{(\lzr)^c\cap\hu}\rmk''}
=\id_{\pbd\lmk\caA_{(\lzr)^c\cap\hu}\rmk''}.
\end{align}
If furthermore $\rho\in\OrUbd$ and $\Vrl\rho\lzr\in \VUbd\rho\lzr$
then we have $\Tbdv\rho\lzr(\fbd)\subset \fbd$.
\end{lem}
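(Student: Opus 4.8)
The plan is to follow the DHR-type extension argument adapted to the cone geometry in \cite{MTC}: I build $T:=\Tbdv\rho\lzr$ locally by conjugating $\pbd$ with a \emph{ratio of two charge transporters}, and then check that $\bl$ is precisely the $C^*$-algebra on which this is well defined. Fix $\Lambda_l\in\Clbd$. Since $\Clbd$ is upward-filtering and each member has the form $\hu\cap\Lambda_r^c$, I can choose an auxiliary cone $\Lambda_r'\in\Crbd$ disjoint from $\Lambda_l$, so that $\Lambda_l\subset(\Lambda_r')^c\cap\hu$, together with a transporter $V':=\Vrl\rho{\Lambda_r'}\in\Vbd\rho{\Lambda_r'}$. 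The localization property of $V'$ gives $\Ad(V')\rho\vert_{\caA_{\Lambda_l}}=\pbd\vert_{\caA_{\Lambda_l}}$, so for $x\in\pbd(\caA_{\Lambda_l})''$ I set $T(x):=\Ad(V(V')^*)(x)$ with $V:=\Vrl\rho\lzr$. The key point is independence of the auxiliary data: for a second choice $\Lambda_r'',V''$ both $\Ad(V')\rho$ and $\Ad(V'')\rho$ equal $\pbd$ on $\caA_{\Lambda_l}$, hence $V'(V'')^*\in\pbd(\caA_{\Lambda_l})'$, and a short computation yields $\Ad(V(V'')^*)=\Ad(V(V')^*)$ on $\pbd(\caA_{\Lambda_l})''$.

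Consistency across $\Lambda_l\subset\Lambda_l'$ follows by taking a single auxiliary cone disjoint from the larger $\Lambda_l'$, so the pieces glue to a $*$-homomorphism on $\bl^{(0)}=\cup_{\Lambda_l\in\Clbd}\pbd(\caA_{\Lambda_l})''$; since each piece is conjugation by a unitary it is isometric, so it extends uniquely to an isometric $*$-homomorphism $T\colon\bl\to\caB(\hbd)$. Every finite region of $\hu$ lies in some $\Lambda_l\in\Clbd$, so $\pbd(\caA_{\hu,\mathrm{loc}})\subset\bl^{(0)}$; for $A$ supported in $\Lambda_l$ one has $\Ad((V')^*)\pbd(A)=\rho(A)$, whence $T(\pbd(A))=\Ad(V)\rho(A)$, and continuity yields (i). Property (ii) is immediate because on each $\pbd(\caA_{\Lambda_l})''$ the map $T$ is the inner automorphism $\Ad(V(V')^*)$ of $\caB(\hbd)$. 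For uniqueness, (i) fixes $T$ on $\pbd(\abd)$, and (ii) together with Kaplansky density propagates this to each $\pbd(\caA_{\Lambda_l})''$ and hence to $\bl$.

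The endomorphism property $T(\bl)\subset\bl$ holds because $V,V'\in\caU(\bl)$, so $V(V')^*\in\bl$ and the image of each $\pbd(\caA_{\Lambda_l})''$ stays inside the algebra $\bl$. For \eqref{soto} I use that $(\lzr)^c\cap\hu\in\Clbd$: there $T$ is $\sigma$-weakly continuous, and on local elements supported in $(\lzr)^c\cap\hu$ property (i) combined with $\Ad(V)\rho=\pbd$ on $\caA_{(\lzr)^c\cap\hu}$ gives $T=\id$, so density yields \eqref{soto}. For the last claim, if $\rho\in\OrUbd$ and $V\in\VUbd\rho\lzr\subset\caU(\fbd)$, I redo the local construction using auxiliary transporters $V'\in\VUbd\rho{\Lambda_r'}\subset\caU(\fbd)$, available precisely because $\rho\in\OrUbd$; by uniqueness this recomputes the same $T$, now with $V(V')^*\in\fbd$. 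Since every $\Lambda\cap\hu$ with $\Lambda\in\CUbk$ is contained in some $\Lambda_l\in\Clbd$, the generating algebras $\pbd(\caA_{\Lambda\cap\hu})''$ lie in $\fbd$ and are mapped into $\fbd$ by $\Ad(V(V')^*)$; norm closure then gives $T(\fbd)\subset\fbd$.

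The genuinely delicate points are geometric and structural rather than computational: verifying that the auxiliary disjoint cones and the coverings $\Lambda\cap\hu\subset\Lambda_l$ and $(\lzr)^c\cap\hu\in\Clbd$ exist, and, conceptually, recognizing that $\bl$ (resp. $\fbd$) is \emph{exactly} the domain on which conjugation by the transporter ratio extends $\sigma$-weakly. As the remark preceding the lemma stresses, one cannot extend to all of $\caB(\hbd)=\pbd(\caA_{\hu})''$, since that would force $\rho$ to be quasi-equivalent to $\pbd$; the whole construction is engineered so that the enlargement stops exactly at $\bl$.
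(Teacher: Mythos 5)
Your proof is correct and follows essentially the same route as the paper: the paper's own argument defines $T_\rho^{(0)}(x):=\Ad\lmk\Vrl\rho{\lzr}\Vrl\rho{K_{\lm l}}^*\rmk(x)$ on each $\pbd(\caA_{\lm l})''$ using exactly your ``ratio of transporters'' with an auxiliary cone $K_{\lm l}\in\Crbd$ disjoint from $\lm l$, checks independence of choices, and extends by continuity (citing Lemmas 2.11 and 2.13 of \cite{MTC} for the details you spell out). Your verifications of (i), (ii), (\ref{soto}), the $\fbd$-invariance, and uniqueness via Kaplansky density match what those cited lemmas provide.
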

\begin{proof}
Each $\lm l\in\Clbd$ can be written as $\lm l=(\lm r)^c\cap \hu$
with some $\lm r=\lr a \varphi\in\Crbd$.
Set
\[
\kappa_{\lm l}:=\left\{
\Gamma_r\mid\Gamma_r=\lr b{\varphi_1},\;
a<b,\; 0<\varphi_1<\varphi
\right\}
\subset\Crbd.
\]
Then, as in Lemma 2.11 of \cite{MTC},
for any $\rho\in\Orbd$, $\lzr\in\Crbd$ and $\Vrl\rho\lzr\in \Vbd\rho\lzr$,
\begin{align}
T_\rho^{(0)}(x):=\Ad\lmk\Vrl\rho{\lzr}\Vrl\rho{K_{\lm l}}^*\rmk(x),\quad\text{if}\;
x\in \pbd\lmk\caA_{\lm l }\rmk'',\;\lm l\in\Clbd
\end{align}
defines a isometric $*$-homomorphism $T_\rho^{(0)}: \bl^{(0)}\to\bl$,
independent of the choice of $K_{\lm l}$, $\Vrl\rho{K_{\lm l}}$, $\lm l$.
As in Lemma 2.13 of \cite{MTC}, this $T_\rho^{(0)}$ extends to
the $\Tbdv\rho\lzr$ with the desired property.
The last statement is trivial from the definition above.
\end{proof}

\change{
\begin{rem}
If we consider the case $\rho\in\Orbdl$ and $\Vrl\rho\lzr=\unit\in  \Vbd\rho\lzr$, then this gives the extension 
$\Tbd\rho{\lzr}\unit$ of $\rho$.

\end{rem}
}
\begin{rem}
We extend the equivalence relations $\simeq_l$,  $\simeq_U$
to endomorphisms of $\bl$ in an obvious way.
\end{rem}
%

By the same proof as that of \cite{MTC} Lemma 2.14, 
for any $\lm {r1},\lm{r2}\in\Crbd$, $\rho\in \Orbd$, $\Vrl{\rho}{\lm {ri}}\in \Vbd\rho{\lm {ri}}$, $i=1,2$,
we have
 $\Tbdv\rho{{\lm {r2}}}=\Ad\lmk\Vrl{\rho}{\lm {r2}}\Vrl{\rho}{\lm {r1}}^*\rmk\Tbdv\rho{{\lm {r1}}}$.
  For $\lmr i\in\Crbd$, $\rho_i\in \Orbd$ (resp. $\rho_i\in \OrUbd$), $\Vrl{\rho_i}{\lmr i}\in \Vbd{\rho_i}{\lmr i}$
  (resp. $\Vrl{\rho_i}{\lmr i}\in \VUbd{\rho_i}{\lmr i}$), $i=1,2$
  with $\rho_1\simeq_l\rho_2$ 
(resp. $\rho_1\simeq_U\rho_2$) we have 
$\Tbdv{\rho_1}{\lmr 1}\simeq_l\Tbdv{\rho_2}{\lmr 2}$
(resp. $\Tbdv{\rho_1}{\lmr 1}\simeq_U\Tbdv{\rho_2}{\lmr 2}$).

\change{The extension allows us to consider compositions, which will give us the tensor structure. }

\begin{lem}\label{lem22}
Consider the setting in subsection \ref{setting2}, and let $\lzr\in \Crbd$.
Then for any $\rho,\sigma\in \Orbdl$, we have
\begin{align}\label{iwate}
\rho\circ_{\lzr} \sigma:=
\Tbd\rho{\lzr}\unit\;\Tbd\sigma{\lzr}\unit\;\pbd\in\Orbdl,
\end{align}
and this defines a composition $\circ_{\lzr} : \Orbdl\times \Orbdl \to \Orbdl$.
This composition $\circ_{\lzr}$ satisfies the following properties.
\begin{description}
\item[(i)]
For $\rho,\sigma\in \Orbdl$, we have $\Tbd{\rho\circ_{\lzr} \sigma}\lzr\unit
=\Tbd{\rho}\lzr\unit\Tbd{\sigma}\lzr\unit$.
\item[(ii)]
For $\rho,\sigma,\gamma\in \Orbdl$,
$\lmk \rho\circ_{\lzr} \sigma\rmk\circ_{\lzr} \gamma=\rho\circ_{\lzr}\lmk \sigma\circ_{\lzr} \gamma\rmk$.
\item[(iii)]
If $\rho,\sigma\in \OrUbdl$, then $\rho\circ_{\lzr} \sigma\in\OrUbdl $.
\item[(iv)] If $\llz\in\pc$ and $\rho,\sigma\in \Obul$, then 
${\rho\circ_{\lzr} \sigma}\in \Obul$.
\end{description}
\end{lem}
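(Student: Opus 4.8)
The plan is to reduce the whole statement to two short computations plus the uniqueness clause of Lemma \ref{lem20}. Recall that here $\lzr=\Lambda_{r0}$, so that every $\rho\in\Orbdl$ is localized in $\lzr$ and hence $\unit\in\Vbd\rho\lzr$; thus the extensions $T_\rho:=\Tbd\rho\lzr\unit$ and $T_\sigma:=\Tbd\sigma\lzr\unit$ are defined and, by Lemma \ref{lem20}(i), satisfy $T_\rho\pbd=\rho$ and $T_\sigma\pbd=\sigma$. Since $\pbd(\abd)\subset\bl$ and $T_\rho,T_\sigma$ are endomorphisms of $\bl$, the composite $\rho\circ_{\lzr}\sigma=T_\rho T_\sigma\pbd=T_\rho\circ\sigma$ is a unital $*$-homomorphism $\abd\to\bl\subset\caB(\hbd)$, hence a representation. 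For localization, take $A\in\caA_{\lzr^c\cap\hu}$: then $\sigma(A)=\pbd(A)$, so $\rho\circ_{\lzr}\sigma(A)=T_\rho(\pbd(A))=\rho(A)=\pbd(A)$, which shows $\rho\circ_{\lzr}\sigma$ is again localized in $\lzr$.

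For transportability, fix $\Lambda_r\in\Crbd$ and choose $\Vrl\rho{\Lambda_r}\in\Vbd\rho{\Lambda_r}$ and $\Vrl\sigma{\Lambda_r}\in\Vbd\sigma{\Lambda_r}$. I claim $W:=\Vrl\rho{\Lambda_r}\,T_\rho(\Vrl\sigma{\Lambda_r})$ lies in $\Vbd{\rho\circ_{\lzr}\sigma}{\Lambda_r}$; it is a unitary of $\bl$ because $T_\rho(\bl)\subset\bl$. For $A\in\caA_{\Lambda_r^c\cap\hu}$, multiplicativity of $T_\rho$ gives $\Ad(W)\bigl(\rho\circ_{\lzr}\sigma(A)\bigr)=\Vrl\rho{\Lambda_r}\,T_\rho\bigl(\Ad(\Vrl\sigma{\Lambda_r})\,T_\sigma\pbd(A)\bigr)\Vrl\rho{\Lambda_r}^*$, and since $T_\sigma\pbd(A)=\sigma(A)$ with $\Ad(\Vrl\sigma{\Lambda_r})\sigma(A)=\pbd(A)$, this collapses to $\Vrl\rho{\Lambda_r}\,T_\rho(\pbd(A))\,\Vrl\rho{\Lambda_r}^*=\Ad(\Vrl\rho{\Lambda_r})\rho(A)=\pbd(A)$. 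Together with the previous paragraph this gives $\rho\circ_{\lzr}\sigma\in\Orbdl$, so $\circ_{\lzr}$ indeed maps $\Orbdl\times\Orbdl$ into $\Orbdl$.

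Property (i) then follows from the uniqueness in Lemma \ref{lem20}: $T_\rho T_\sigma$ is a $*$-endomorphism of $\bl$ with $T_\rho T_\sigma\pbd=\rho\circ_{\lzr}\sigma$, so once it is checked to be $\sigma$-weakly continuous on every $\pbd(\caA_{\Lambda_l})''$, $\Lambda_l\in\Clbd$, it must coincide with $\Tbd{\rho\circ_{\lzr}\sigma}\lzr\unit$. Associativity (ii) is then immediate, since by (i) both $(\rho\circ_{\lzr}\sigma)\circ_{\lzr}\gamma$ and $\rho\circ_{\lzr}(\sigma\circ_{\lzr}\gamma)$ equal $T_\rho T_\sigma T_\gamma\pbd$. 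For (iii) and (iv) I reuse the transporter formula but choose the transporters in $\fbd$: if $\rho,\sigma\in\OrUbdl$ then $\unit\in\VUbd\rho\lzr$, so the last clause of Lemma \ref{lem20} gives $T_\rho(\fbd)\subset\fbd$, whence $W=\Vrl\rho{\Lambda_r}\,T_\rho(\Vrl\sigma{\Lambda_r})\in\caU(\fbd)$ and $\rho\circ_{\lzr}\sigma\in\OrUbdl$. For (iv), with $\llz\in\pc$ and $\rho,\sigma\in\Obul\subset\OrUbdl$, the same localization computation run on $\caA_{\lz^c\cap\hu}$ shows $\rho\circ_{\lzr}\sigma$ is localized in $\lz\cap\hu$, and for each $\ld\in\CUbk$ the unitary $\Vbu\rho\ld\,T_\rho(\Vbu\sigma\ld)\in\caU(\fbd)$ transports it, so $\rho\circ_{\lzr}\sigma\in\Obul$.

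The \textbf{main obstacle} is the single analytic point left open in the proof of (i): the $\sigma$-weak continuity of the composite $T_\rho T_\sigma$ on the cone algebras $\pbd(\caA_{\Lambda_l})''$, which does not follow formally from continuity of the two factors. This is settled exactly as in \cite{MTC} (Lemma 2.13--2.14): the extension $T_\sigma$ carries each $\pbd(\caA_{\Lambda_l})''$ $\sigma$-weakly continuously into an enlarged cone algebra $\pbd(\caA_{\Lambda_l'})''$ with $\Lambda_l'\in\Clbd$, on which $T_\rho$ is in turn $\sigma$-weakly continuous. Everything else in the lemma is bookkeeping with the defining intertwining relations.
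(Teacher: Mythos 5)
Your proposal is correct and follows essentially the same route as the paper: the paper disposes of (\ref{iwate}), (i), (ii) by citing Lemmas 3.3, 3.6, 3.7 of \cite{MTC}, and for (iii) and (iv) it uses exactly your unitary $\Vrl{\rho}{\lm r}\,\Tbd\rho{\lzr}\unit\lmk \Vrl{\sigma}{\lm r}\rmk$, noting that in the setting of (iii) it can be taken in $\fbd$, and that in the setting of (iv) choosing $\Vrl\rho\ld\in\Vbu\rho\ld$, $\Vrl\sigma\ld\in\Vbu\sigma\ld$ makes it an element of $\caU(\fbd)$ implementing the localization on $\caA_{\ld^c\cap\hu}$; your localization and transportability computations are the same ones.

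The one point I would correct is your proposed mechanism for the $\sigma$-weak continuity of the composite $\Tbd\rho\lzr\unit\Tbd\sigma\lzr\unit$. You claim that $\Tbd\sigma\lzr\unit$ maps $\pbd\lmk\caA_{\lm l}\rmk''$ into an enlarged cone algebra $\pbd\lmk\caA_{{\lm l}'}\rmk''$, ${\lm l}'\in\Clbd$. That is not available here: by the construction in Lemma \ref{lem20}, on $\pbd\lmk\caA_{\lm l}\rmk''$ the extension acts as $\Ad\lmk \Vrl\sigma{K_{\lm l}}^*\rmk$, where $K_{\lm l}\in\Crbd$ is disjoint from $\lm l$ and $\Vrl\sigma{K_{\lm l}}$ is merely a unitary of $\bl$; the conjugate of a cone algebra by a unitary of $\bl$ need not lie inside any single cone algebra. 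Locating such charge transporters inside (enlarged) cone algebras is precisely what Haag duality would provide, and Lemma \ref{lem22} assumes no duality at all. The repair is one line: for $x\in\pbd\lmk\caA_{\lm l}\rmk''$, writing $u:=\Vrl\sigma{K_{\lm l}}^*\in\caU(\bl)$,
\begin{align*}
\Tbd\rho\lzr\unit\Tbd\sigma\lzr\unit(x)
=\Tbd\rho\lzr\unit\lmk uxu^*\rmk
=\Ad\lmk \Tbd\rho\lzr\unit(u)\rmk\lmk \Tbd\rho\lzr\unit(x)\rmk,
\end{align*}
since $\Tbd\rho\lzr\unit$ is a $*$-homomorphism of $\bl$. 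Because $\Tbd\rho\lzr\unit$ is $\sigma$-weakly continuous on $\pbd\lmk\caA_{\lm l}\rmk''$ by Lemma \ref{lem20}(ii), and conjugation by the fixed unitary $\Tbd\rho\lzr\unit(u)$ is $\sigma$-weakly continuous on all of $\caB(\hbd)$, the composite is $\sigma$-weakly continuous on each $\pbd\lmk\caA_{\lm l}\rmk''$. With that, your appeal to the uniqueness clause of Lemma \ref{lem20} gives (i), and the rest of your argument goes through unchanged.
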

\begin{proof}
This follows from the same proof as that of Lemma 3.3, 3.6, 3.7 of \cite{MTC},
and definition of $\Obul$,$\OrUbdl$.
\kakunin{The proof of (\ref{iwate}), (i)(ii) are the same as that of Lemma 3.3, 3.6, 3.7 of \cite{MTC}.
(iii) follows because we have
\begin{align}
\Tbdv\rho{\lm r }\Tbdv\sigma{\lm r }\pbd=\Ad\lmk\Vrl{\rho}{\lm r}\rmk\Tbd\rho{{\lzr}}\unit
\Ad\lmk\Vrl{\sigma}{\lm r}\rmk\Tbd\sigma{{\lzr}}\unit\pbd
=\Ad\lmk\Vrl{\rho}{\lm r}\Tbd\rho{{\lzr}}\unit\lmk \Vrl{\sigma}{\lm r}\rmk\rmk
\Tbd\rho{{\lzr}}\unit
\Tbd\sigma{{\lzr}}\unit\pbd,
\end{align}
and in the setting of (iii), may take $\Vrl{\rho}{\lm r}\Tbd\rho{{\lzr}}\unit\lmk \Vrl{\sigma}{\lm r}\rmk\in \fbd$.
In the setting of (iv), for each $\ld\in\CUbk$, fix $\Vrl\rho\ld\in\Vbu\rho\ld$
and $\Vrl\sigma\ld\in\Vbu\sigma\ld$.
Then $\Vrl\rho\ld\Tbd\rho\lzr\unit \lmk \Vrl\sigma\ld\rmk\in \fbd$ and 
\begin{align}
\left. \Ad\lmk \Vrl\rho\ld\Tbd\rho\lzr\unit \lmk \Vrl\sigma\ld\rmk \rmk
\Tbd\rho\lzr\unit\Tbd\sigma\lzr\unit\pbd\right\vert_{\caA_{\ld^c\cap\hu}}
=\pbd\vert_{\caA_{\ld^c\cap\hu}}.
\end{align}
Hence ${\rho\circ_{\lzr} \sigma}\in \Obul$.}
\end{proof}
Let $\rho_i,\sigma_i\in \Orbd$ (resp $\rho_i,\sigma_i\in \OrUbd$), 
$\lm {ri},\lm {ri}'\in \Crbd$, $\Vrl{\rho_i}{\lm{ri}}\in \Vbd{\rho_i}{\lm{ri}}$,
$\Vrl{\sigma_i}{\lm{ri}'}\in \Vbd{\sigma_i}{\lm{ri}'}$
(resp $\Vrl{\rho_i}{\lm{ri}}\in \VUbd{\rho_i}{\lm{ri}}$,
$\Vrl{\sigma_i}{\lm{ri}'}\in \VUbd{\sigma_i}{\lm{ri}'}$), for $i=1,2$.
If $\rho_1\simeq_l\rho_2$ and $\sigma_1\simeq_l\sigma_2$
(resp. $\rho_1\simeq_U\rho_2$ and $\sigma_1\simeq_U\sigma_2$)
then we have
$\Tbdv{\rho_1}{\lm {r1}}\Tbdv{\sigma_1}{\lm {r1}'}\simeq_l \Tbdv{\rho_2}{\lm {2r}}\Tbdv{\sigma_2}{\lm {2r}'}$
(resp. $\Tbdv{\rho_1}{\lm {r1}}\Tbdv{\sigma_1}{\lm {r1}'}\simeq_U\Tbdv{\rho_2}{\lm {2r}}\Tbdv{\sigma_2}{\lm {2r}'}$), by the same proof as Lemma 3.4 of \cite{MTC}.
\begin{lem}\label{lem24}
Consider the setting in subsection \ref{setting2}.
For any $\rho_i,\rho_i'\in \Orbd$, 
$\lm {ri},\lm {ri}'\in \Crbd$, $\Vrl{\rho_i}{\lm{ri}}\in \Vbd{\rho_i}{\lm{ri}}$,
$\Vrl{\rho_i'}{\lm{ri}'}\in \Vbd{\rho_i'}{\lm{ri}'}$
and $R_i\in \lmk\Tbdv{\rho_i}{\lm {ri}}, \Tbdv{\rho_i'}{\lm {ri}'}\rmk_l$
 for $i=1,2$, 
 we have
 \begin{align}
 R_1\otimes R_2:=R_1\Tbdv{\rho_1}{\lm {r1}}\lmk R_2\rmk
 \in \lmk
 \Tbdv{\rho_1}{\lm {r1}}\Tbdv{\rho_2}{\lm {r2}},
  \Tbdv{\rho_1'}{\lm {r1}'}\Tbdv{\rho_2'}{\lm {r2}'}
 \rmk_l,
 \end{align}
 and this defines the tensor product of intertwiners.
 This tensor product satisfies following.
 \begin{description}
 \item[(i)]$(R_1\otimes R_2)^*=R_1^*\otimes R_2^*$.
 \item[(ii)] With 
 $\rho_i''\in \Orbd$, 
$\lm {ri}''\in \Crbd$, 
$\Vrl{\rho_i''}{\lm{ri}''}\in \Vbd{\rho_i''}{\lm{ri}''}$
and $R_i'\in \lmk\Tbdv{\rho_i'}{\lm {ri}'}, \Tbdv{\rho_i''}{\lm {ri}''}\rmk_l$
 for $i=1,2$, 
 we have  
 $R'_1R_1\otimes R_2'R_2=(R_1'\otimes R_2')(R_1\otimes R_2)$.
 \item[(iii)] If $\rho_i,\rho_i'\in \OrUbd$, 
$\Vrl{\rho_i}{\lm{ri}}\in \VUbd{\rho_i}{\lm{ri}}$,
$\Vrl{\rho_i'}{\lm{ri}'}\in \VUbd{\rho_i'}{\lm{ri}'}$
and $R_i\in \lmk\Tbdv{\rho_i}{\lm {r1}}, \Tbdv{\rho_i'}{\lm {r1}'}\rmk_U$,
then $R_1\otimes R_2\in\fbd$.
 \end{description}
\end{lem}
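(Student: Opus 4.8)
The plan is to verify every assertion by direct computation, following the standard DHR pattern as adapted to the boundary in \cite{MTC}, using only the defining intertwiner relations together with Lemma \ref{lem20}, which guarantees that each extension is a genuine $*$-endomorphism of $\bl$ carrying $\bl$ into $\bl$ (and $\fbd$ into $\fbd$ in the $U$-case). Throughout I would abbreviate $T_i:=\Tbdv{\rho_i}{\lm{ri}}$ and $T_i':=\Tbdv{\rho_i'}{\lm{ri}'}$, so that $R_i\in(T_i,T_i')_l$ means $R_iT_i(x)=T_i'(x)R_i$ for all $x\in\bl$, and all four maps send $\bl$ into $\bl$.

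First I would check that $R_1\otimes R_2:=R_1T_1(R_2)$ is a well-defined element of $(T_1T_2,T_1'T_2')_l$. Since $R_2\in\bl$ and $T_1(\bl)\subset\bl$ by Lemma \ref{lem20}, the product lies in $\bl$, and $T_1T_2$, $T_1'T_2'$ are endomorphisms of $\bl$ for the same reason. The intertwining relation is the short computation
\begin{align*}
R_1T_1(R_2)\cdot T_1(T_2(x)) &= R_1T_1\lmk R_2T_2(x)\rmk = R_1T_1\lmk T_2'(x)R_2\rmk\\
&= R_1T_1\lmk T_2'(x)\rmk T_1(R_2) = T_1'\lmk T_2'(x)\rmk R_1T_1(R_2),
\end{align*}
where the second equality uses $R_2\in(T_2,T_2')_l$ and the last uses $R_1\in(T_1,T_1')_l$ applied to the element $T_2'(x)\in\bl$. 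The right-hand side is $(T_1'T_2')(x)\,(R_1\otimes R_2)$, as required.

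For (i), note $R_1^*\in(T_1',T_1)_l$ and $R_2^*\in(T_2',T_2)_l$, so by definition $R_1^*\otimes R_2^*=R_1^*T_1'(R_2^*)$; on the other hand $(R_1\otimes R_2)^*=T_1(R_2^*)R_1^*$, and the relation $R_1^*T_1'(y)=T_1(y)R_1^*$ (which is exactly $R_1^*\in(T_1',T_1)_l$) with $y=R_2^*$ identifies the two. Part (ii) follows from the same idea: since the composite $R_i'R_i$ lies in $(T_i,T_i'')_l$, the left-hand tensor product is literally $(R_1'R_1)T_1(R_2'R_2)=R_1'R_1T_1(R_2')T_1(R_2)$, while the right-hand side is $R_1'T_1'(R_2')R_1T_1(R_2)$; equality reduces to the single commutation $R_1T_1(R_2')=T_1'(R_2')R_1$, which is again the intertwining relation for $R_1$ with $y=R_2'$.

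Finally, for (iii) I would invoke the last sentence of Lemma \ref{lem20}: under the hypotheses $\rho_1,\rho_1'\in\OrUbd$ with $\Vrl{\rho_1}{\lm{r1}}\in\VUbd{\rho_1}{\lm{r1}}$, the extension $T_1$ restricts to an endomorphism of $\fbd$. Since $R_2\in\fbd$ (being a $U$-intertwiner) we get $T_1(R_2)\in\fbd$, and as $R_1\in\fbd$ and $\fbd$ is an algebra, $R_1\otimes R_2=R_1T_1(R_2)\in\fbd$. The algebra here is entirely routine, so there is no genuine obstacle; the only point demanding care is the bookkeeping that every product $T_1(R_2)$, $T_1(T_2'(x))$, etc., stays inside $\bl$ (resp.\ $\fbd$ for (iii)), which is precisely what the stability statements of Lemma \ref{lem20} supply. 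I would therefore present the proof as the computation above, deferring the parallel claims to the proof of Lemma 3.4 of \cite{MTC} and citing the $\fbd$-preservation in Lemma \ref{lem20} for the extra claim (iii).
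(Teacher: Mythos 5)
Your algebraic manipulations are exactly those of the paper's intended proof (the paper itself simply defers to Lemma 4.3 and Lemma 4.5 of \cite{MTC}), and your appeal to Lemma \ref{lem20} for the inclusions $T_1(\bl)\subset\bl$ and, in (iii), $T_1(\fbd)\subset\fbd$ (writing $T_i:=\Tbdv{\rho_i}{\lm{ri}}$, $T_i':=\Tbdv{\rho_i'}{\lm{ri}'}$ as you do) is correct. There is, however, one genuine gap, and it sits in your opening sentence: you decree that $R_i\in\lmk T_i,T_i'\rmk_l$ ``means'' $R_iT_i(x)=T_i'(x)R_i$ for all $x\in\bl$. The paper defines $(\cdot,\cdot)_l$ only between representations of $\abd$, so the hypothesis actually supplies $R_i\,T_i\pbd(A)=T_i'\pbd(A)\,R_i$ for $A\in\abd$ and nothing more; and it is this weaker, representation-level notion that the lemma must accept as input, since that is what is fed into it later (in Theorem \ref{naha} the morphisms $(\rho,\rho')_l$ are representation-level intertwiners). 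Every key step of your computation --- the final equality of your first display (the relation for $R_1$ applied at $y=T_2'\pbd(A)$), part (i) (applied at $y=R_2^*$), and part (ii) (applied at $y=R_2'$) --- invokes the intertwining relation at elements of $\bl$ that do not lie in $\pbd(\abd)$, so under the paper's definition these steps are not yet justified.

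The missing ingredient is the extension of the intertwining relation from $\pbd(\abd)$ to all of $\bl$, and this is precisely where the $\sigma$-weak continuity in Lemma \ref{lem20}(ii) --- which you never invoke --- enters. Fix $\Lambda_l\in\Clbd$ and $y\in\pbd(\caA_{\Lambda_l})''$; by Kaplansky density there is a bounded net $A_\alpha\in\caA_{\Lambda_l,\mathrm{loc}}$ with $\pbd(A_\alpha)\to y$ $\sigma$-strongly. Since $T_1$ and $T_1'$ are $\sigma$-weakly continuous on $\pbd(\caA_{\Lambda_l})''$ and multiplication by the fixed operator $R_1$ is $\sigma$-weakly continuous, taking limits in $R_1T_1\pbd(A_\alpha)=T_1'\pbd(A_\alpha)R_1$ yields $R_1T_1(y)=T_1'(y)R_1$. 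As $\bl$ is by definition the norm closure of $\bigcup_{\Lambda_l\in\Clbd}\pbd(\caA_{\Lambda_l})''$ and $T_1,T_1'$ are norm continuous, the relation then holds for every $y\in\bl$ (and likewise for $R_2$, and in the $U$-case for (iii)). With this argument inserted at the start, the rest of your proof is complete and coincides with the paper's.
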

\begin{proof}
The proof is identical to that of Lemma 4.3,4.5 of \cite{MTC}.
\end{proof}

Next we introduce the direct sum to $\Orbdl$.
\begin{lem}\label{lem61}
Consider the setting in subsection \ref{setting2}.
Let $\llz\in \pc$.
Assume Assumption \ref{wakayama}.
Then for any $\rho,\sigma\in \Orbdl$, there exist
$\gamma\in \Orbdl$, isometries
$u\in (\rho,\gamma)_l$, $v\in (\sigma,\gamma)_l$
such that $uu^*+vv^*=\unit$.
In particular, $\gamma=\Ad u \circ\rho+\Ad v \circ \sigma$.
If $\rho,\sigma\in \OrUbdl$, then 
$\gamma$ and $u,v$ can be chosen so that
$\gamma\in \OrUbdl$, 
$u\in (\rho,\gamma)_U$, $v\in (\sigma,\gamma)_U$.
If $\rho,\sigma\in \Obul$, then
$\gamma$ and $u,v$ can be chosen so that
$\gamma\in \Obul$, 
$u\in (\rho,\gamma)_U$, $v\in (\sigma,\gamma)_U$.
\end{lem}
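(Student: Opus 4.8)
The plan is to realize $\gamma$ as a ``direct sum'' $\Ad u\circ\rho+\Ad v\circ\sigma$ implemented by a Cuntz pair of isometries, and simply read off $u,v$ from that pair. The essential point is to localize the isometries \emph{inside} $\lzr$ so that $\gamma$ remains localized there. Since $(\lz,\lzr)\in\pc$ we have $\lz\subset\lzr$ with $\lz\in\CUbk$, hence $\lz\cap\hu\subset\lzr$. By Assumption \ref{wakayama} the algebra $\pbd(\caA_{\lz\cap\hu})''$ is an infinite, hence properly infinite, factor, so it contains isometries $w_1,w_2$ with $w_1^*w_1=w_2^*w_2=\unit$ and $w_1w_1^*+w_2w_2^*=\unit$. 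First I would record where these live: $\pbd(\caA_{\lz\cap\hu})''\subset\fbd$ because $\lz\in\CUbk$, and then they also lie in $\bl$ (this is the reason behind the already-noted inclusion $(\rho,\sigma)_U\subset(\rho,\sigma)_l$, i.e. $\fbd\subset\bl$).

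Next I would set $\gamma(A):=w_1\rho(A)w_1^*+w_2\sigma(A)w_2^*$ and check the elementary algebra: the orthogonality $w_1^*w_2=0$ makes $\gamma$ a unital $*$-homomorphism, and $\gamma(A)w_1=w_1\rho(A)$, $\gamma(A)w_2=w_2\sigma(A)$. Thus $u:=w_1$ and $v:=w_2$ are isometries with $uu^*+vv^*=\unit$, $\gamma=\Ad u\circ\rho+\Ad v\circ\sigma$, and $u\in(\rho,\gamma)_U\subset(\rho,\gamma)_l$, $v\in(\sigma,\gamma)_U\subset(\sigma,\gamma)_l$. Localization is then immediate: for $A\in\caA_{\lrhuz}$ one has $\rho(A)=\sigma(A)=\pbd(A)$ since $\rho,\sigma\in\Orbdl$, while $w_1,w_2$ commute with $\pbd(\caA_{\lrhuz})$ because $\lz\cap\hu$ is disjoint from $\lrhuz$; hence $\gamma(A)=\pbd(A)(w_1w_1^*+w_2w_2^*)=\pbd(A)$, i.e. $\gamma\vert_{\caA_{\lrhuz}}=\pbd\vert_{\caA_{\lrhuz}}$.

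It then remains to verify transportability, namely $\Vbd{\gamma}{\Lambda_r}\neq\emptyset$ for every $\Lambda_r\in\Crbd$, which with the previous paragraph yields $\gamma\in\Orbdl$. Given $\Lambda_r$, I would choose $V_\rho\in\Vbd{\rho}{\Lambda_r}$, $V_\sigma\in\Vbd{\sigma}{\Lambda_r}$ and a second Cuntz pair $w_1',w_2'$ of isometries in $\pbd(\caA_{\Lambda'\cap\hu})''$ for some $\Lambda'\in\CUbk$ with $\Lambda'\cap\hu\subset\Lambda_r$, and set $V_\gamma:=w_1'V_\rho w_1^*+w_2'V_\sigma w_2^*$. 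Using $w_i^*w_j=\delta_{ij}$ for both families, a direct computation shows $V_\gamma$ is unitary and $\Ad V_\gamma\circ\gamma=\Ad w_1'\circ(\Ad V_\rho\circ\rho)+\Ad w_2'\circ(\Ad V_\sigma\circ\sigma)$, which is localized in $\Lambda_r$; hence $V_\gamma\in\Vbd{\gamma}{\Lambda_r}$. This is precisely the boundary analogue of the direct-sum construction of \cite{MTC}, built on the extension and tensor machinery of Lemma \ref{lem20} and Lemma \ref{lem24}.

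Finally, the three regimes differ only in which transporters and algebras are used. For $\rho,\sigma\in\OrUbdl$ I would take $V_\rho\in\VUbd{\rho}{\Lambda_r}$, $V_\sigma\in\VUbd{\sigma}{\Lambda_r}$ and the primed isometries inside $\fbd$, so that $V_\gamma\in\fbd$ and $\gamma\in\OrUbdl$; for $\rho,\sigma\in\Obul$ I would run the same argument with transporters $\Vbu{\rho}{\Lambda}$, $\Vbu{\sigma}{\Lambda}$ for every $\Lambda\in\CUbk$, again keeping $V_\gamma\in\fbd$, to get $\gamma\in\Obul$. In both of these cases $u=w_1,v=w_2\in\fbd$ already give intertwiners in $(\rho,\gamma)_U$ and $(\sigma,\gamma)_U$. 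I expect the main obstacle to be purely organizational rather than conceptual: confirming that in the two $U$-regimes the transporter $V_\gamma$ and all auxiliary isometries can be kept in $\fbd$ (not merely in $\bl$), which rests on applying Assumption \ref{wakayama} to the sub-cones $\lz\cap\hu$ and $\Lambda'\cap\hu$ and on the stability of $\fbd$ and $\bl$ established in Lemma \ref{lem20}.
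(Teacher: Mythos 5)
Your proposal is correct and follows essentially the same route as the paper's own proof: Cuntz isometries taken from the properly infinite factor $\pbd(\caA_{\lz\cap\hu})''$ supplied by Assumption \ref{wakayama}, the direct sum $\gamma=\Ad u\circ\rho+\Ad v\circ\sigma$, and charge transporters of the form $w_1'V_\rho w_1^*+w_2'V_\sigma w_2^*$ with the primed isometries relocalized inside the target cone (inside $\Lambda_r$, resp.\ inside $\Lambda\cap\hu$ for the $\Obul$ case), which is exactly the paper's $W_{\lm r}$ and $X_\ld$. The bookkeeping you flag at the end (keeping all transporters and auxiliary isometries in $\fbd$ in the two $U$-regimes) goes through just as you describe, so there is no gap.
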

\begin{proof}
For each $\ld\in\CUbk$, with $\Lambda\subset\hu$,
$ \pbd(\caA_{\ld})''$ is properly infinite by Assumption \ref{wakayama}.
Therefore, there exist
 isometries
$u_{\ld},v_{\ld}\in  \pbd(\caA_{{\ld}})''\subset\fbd$ such that 
$u_{\ld}u_{\ld}^*+v_{\ld}v_{\ld}^*=\unit$.

For each $\lm r\in \Crbd$, choose $\lm {\lm r}\in \CUbk$ with $(\lm {\lm r},\lm r)\in \pc$.
Set $\gamma:=\Ad(u_{\lz})\rho+\Ad(v_{\lz})\sigma$, $u:=u_{\lz}$, $v:=v_{\lz}$ and
$W_{{{\lm r}}}:=u_{{{\ld}_{\lm r}}}\Vrl{\rho}{{{\lm r}}}u_{\lz}^*+v_{{{\ld}_{\lm r}}}\Vrl{\sigma}{{{\lm r}}}v_{\lz}^*$,
with $\Vrl{\rho}{{{\lm r}}}\in \Vbd\rho{{{\lm r}}}$, $\Vrl{\sigma}{{{\lm r}}}\in \Vbd\sigma{{{\lm r}}}$.
Then as in the proof of 5.7 \cite{MTC}, we have
$\gamma\in \Orbdl$, $u\in (\rho,\gamma)_U$, $v\in (\sigma,\gamma)_U$
 and $W_{\lm r}\in \Vbd\rho{\lm r}$.
 
 Suppose $\rho,\sigma\in \Obul$.
 For any $\ld\in\CUbk$, choose $\tilde\Lambda_\ld\in \CUbk$ with
 ${\tilde\Lambda_\ld}\subset\ld\cap\hu$.
 Set 
 $X_\ld:=u_{\tilde\Lambda_\ld}\Vrl\rho\ld u_{\lz}^*+v_{\tilde\Lambda_\ld}\Vrl\sigma\ld v_{\lz}^*$
 with $\Vrl\rho\ld\in \Vbu\rho\ld$, $\Vrl\sigma\ld\in \Vbu\sigma\ld$.
 Then we have $X_\ld\in\Vbu\gamma\ld$ and 
 $\gamma\in\Obul$.
\end{proof}
This defines the direct sum.
Note that it is unique up to $\simeq_l$
(resp. $\simeq_U$) because
for other $\tilde \gamma\in \Orbdl$
(resp. $\tilde \gamma\in \OrUbdl$)
$\tilde u\in (\rho,\tilde \gamma)_l$, $\tilde v\in (\sigma,\tilde \gamma)_l$,
(resp. $\tilde u\in (\rho,\tilde \gamma)_U$, $\tilde v\in (\sigma,\tilde \gamma)_U$), we have
$\tilde u u^*+\tilde v v^*\in (\gamma,\tilde\gamma)_l$
(resp. $\tilde u u^*+\tilde v v^*\in (\gamma,\tilde\gamma)_U$).

Next we prove the existence of subobject.
First we prepare several Lemmas.
\begin{lem}\label{lem65}
Consider the setting in subsection \ref{setting2}.
Then for any $\lm 1,\lm 2\subset \hu$,
$\pbd\lmk \caA_{\lm 1}\rmk ''\cap\pbd\lmk \caA_{(\lm 2)^c\cap\hu}\rmk'$
is a factor.
\end{lem}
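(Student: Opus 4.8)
The plan is to prove that the center of $\caN:=\pbd\lmk\caA_{\lm 1}\rmk''\cap\pbd\lmk\caA_{(\lm 2)^c\cap\hu}\rmk'$ is trivial, using only that $\obd$ is pure --- equivalently, that $\pbd$ is irreducible, i.e. $\pbd\lmk\abd\rmk'=\bbC\unit$ --- together with locality. No Haag-duality-type input, and no factoriality of the individual cone algebras, is needed; this is why the statement only refers to the setting of subsection \ref{setting2} and allows arbitrary subsets $\lm 1,\lm 2\subset\hu$.

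Fix a central element $z\in\caN\cap\caN'$. First I would read off commutation relations from $z\in\caN$. Since $z\in\pbd\lmk\caA_{\lm 1}\rmk''$ and $\lm 1$ is disjoint from $(\lm 1)^c\cap\hu$, locality gives $\pbd\lmk\caA_{(\lm 1)^c\cap\hu}\rmk\subset\pbd\lmk\caA_{\lm 1}\rmk'$, so $z$ commutes with $\pbd\lmk\caA_{(\lm 1)^c\cap\hu}\rmk$. Likewise $z\in\pbd\lmk\caA_{(\lm 2)^c\cap\hu}\rmk'$ means $z$ commutes with $\pbd\lmk\caA_{(\lm 2)^c\cap\hu}\rmk$. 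Because
\begin{align*}
(\lm 1\cap\lm 2)^c\cap\hu=\lmk (\lm 1)^c\cap\hu\rmk\cup\lmk(\lm 2)^c\cap\hu\rmk,
\end{align*}
and $\caA_{S\cup T}$ is generated by $\caA_S$ and $\caA_T$, I conclude that $z$ commutes with all of $\pbd\lmk\caA_{(\lm 1\cap\lm 2)^c\cap\hu}\rmk$.

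Next I would exploit $z\in\caN'$. The point is that $\pbd\lmk\caA_{\lm 1\cap\lm 2}\rmk''\subset\caN$: indeed $\lm 1\cap\lm 2\subset\lm 1$ gives containment in $\pbd\lmk\caA_{\lm 1}\rmk''$, while $\lm 1\cap\lm 2\subset\lm 2$ is disjoint from $(\lm 2)^c\cap\hu$, so by locality $\pbd\lmk\caA_{\lm 1\cap\lm 2}\rmk''\subset\pbd\lmk\caA_{(\lm 2)^c\cap\hu}\rmk'$. Hence $z\in\caN'$ forces $z$ to commute with $\pbd\lmk\caA_{\lm 1\cap\lm 2}\rmk$.

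Combining the two steps, $z$ commutes with $\pbd\lmk\caA_{\lm 1\cap\lm 2}\rmk$ and with $\pbd\lmk\caA_{(\lm 1\cap\lm 2)^c\cap\hu}\rmk$, and since $(\lm 1\cap\lm 2)\cup\lmk(\lm 1\cap\lm 2)^c\cap\hu\rmk=\hu$ these together generate $\pbd\lmk\abd\rmk$. Therefore $z\in\pbd\lmk\abd\rmk'=\bbC\unit$ by purity, so $\caN$ is a factor. The only points needing care are the two ``union of regions generates the corresponding algebra'' claims, which are routine from the inductive-limit definition of $\caA_\Gamma$ (a finite subset of $S\cup T$ splits into a part in $S$ and a part in $T$); I expect no genuine obstacle here. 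The real content is simply the observation that irreducibility of $\pbd$ alone suffices, with the two memberships $z\in\caN$ and $z\in\caN'$ controlling the complementary halves $(\lm 1\cap\lm 2)^c\cap\hu$ and $\lm 1\cap\lm 2$ of $\hu$ respectively.
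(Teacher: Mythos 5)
Your proof is correct and follows essentially the same route as the paper's: both hinge on the inclusion $\pbd\lmk\caA_{\lm 1\cap\lm 2}\rmk''\subset\caN$, locality for the two complementary regions, and irreducibility of $\pbd$. The paper merely compresses your element-wise argument into the single chain $\caN\cap\caN'\subset\pbd\lmk\caA_{(\lm 2)^c\cap\hu}\rmk'\cap\pbd\lmk\caA_{(\lm 1)^c\cap\hu}\rmk'\cap\pbd\lmk\caA_{\lm 1\cap\lm 2}\rmk'=\pbd\lmk\abd\rmk'=\bbC\unit$.
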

\begin{proof}
Note that $\caM:=\pbd\lmk \caA_{\lm 1}\rmk ''\cap\pbd\lmk \caA_{(\lm 2)^c\cap\hu }\rmk'$
includes $\pbd\lmk\caA_{\lm 1\cap\lm 2}\rmk''$.
Therefore, we have
\begin{align}
\caM\cap\caM'
\subset 
\pbd\lmk \caA_{(\lm 2)^c\cap\hu}\rmk '\cap\pbd\lmk \caA_{(\lm 1)^c\cap\hu }\rmk'\cap \pbd\lmk\caA_{\lm 2\cap\lm 1}\rmk'
=\pbd(\abd)'
=\bbC\unit,
\end{align}
because $\pbd$ is irreducible.
\end{proof}

\begin{lem}\label{lem66}
Consider the setting in subsection \ref{setting2} and assume Assumption \ref{wakayama}.
Let $\lm{r1},\lm{r2}\in\Crbd$, $D\in \CUbk$ with
$D\cap\lm {r1}=\emptyset$ and $D,\lm {r1}\subset\lm{r2}$.
Then for any projection $p\in \pbd\lmk \caA_{(\lm {r1})^c\cap\hu}\rmk'\cap\bl$,
there exists an isometry
$w\in \pbd\lmk \caA_{(\lm {r2})^c\cap\hu}\rmk'\cap\bl$
such that $ww^*=p$.
\end{lem}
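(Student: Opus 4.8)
The plan is to treat this as the existence-of-subobjects step: I want an isometry $w$ whose range projection is exactly $p$, living in the prescribed relative commutant inside $\bl$. The geometric input is the spectator cone $D$, which is disjoint from $\lm{r1}$ yet sits inside $\lm{r2}$; its algebra will supply the proper infiniteness needed to realize $p$ as the range of an isometry. Throughout I assume $p\neq 0$ (otherwise no isometry exists and the statement is vacuous), and I use that $\hbd$ is separable, being the GNS space of a state on the separable algebra $\caA_{\hu}$.

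First I would record the algebraic consequences of the geometry. Since $D\subseteq \lm{r2}\subseteq\hu$ we have $D\cap\hu=D$, so by Assumption \ref{wakayama} together with purity of $\obd$ (which makes it a factor), $\caN:=\pbd\lmk\caA_D\rmk''$ is an infinite, hence properly infinite, factor on the separable space $\hbd$. Because $D\cap\lm{r1}=\emptyset$ gives $D\subseteq (\lm{r1})^c\cap\hu$, we get $\caN\subseteq \pbd\lmk\caA_{(\lm{r1})^c\cap\hu}\rmk''$, and therefore the given $p\in \pbd\lmk\caA_{(\lm{r1})^c\cap\hu}\rmk'$ commutes with $\caN$, i.e. $p\in\caN'$. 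Likewise $D\subseteq\lm{r2}$ gives $\caN\subseteq \pbd\lmk\caA_{(\lm{r2})^c\cap\hu}\rmk'$, so $\caN$ already lies in the target commutant.

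Next I would produce the equivalence $p\sim\unit$. Since $\caN$ is a factor and $0\neq p\in\caN'$, the map $x\mapsto xp$ is a normal injective $*$-homomorphism of $\caN$ onto $\caN p$, so $\caN p$ is a properly infinite factor acting on $p\hbd$ with identity $p$; in particular $p$ is an infinite projection. I would then choose a left cone $\lm l\in\Clbd$ with $D\subseteq\lm l$ (possible since any cone in $\CUbk$ lying in $\hu$ is contained in some $\lm l\in\Clbd$), so that $\caN\subseteq\pbd\lmk\caA_{\lm l}\rmk''\subseteq\bl$, and form $\caQ:=\pbd\lmk\caA_{\lm l}\rmk''\cap\pbd\lmk\caA_{(\lm{r2})^c\cap\hu}\rmk'$. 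By Lemma \ref{lem65} $\caQ$ is a factor; it contains $\caN$, hence is properly infinite, and sits inside $\pbd\lmk\caA_{(\lm{r2})^c\cap\hu}\rmk'\cap\bl$. \emph{If} $p\in\caQ$, then $\caN p\subseteq\caQ$, so $p$ is an infinite projection of the properly infinite factor $\caQ$ on the separable space $\hbd$; consequently $p\sim\unit$ in $\caQ$, and any implementing partial isometry $w\in\caQ$ is an isometry with $w^*w=\unit$, $ww^*=p$ and $w\in\pbd\lmk\caA_{(\lm{r2})^c\cap\hu}\rmk'\cap\bl$, as required.

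The main obstacle is precisely the clause $p\in\caQ$, i.e. $p\in\pbd\lmk\caA_{\lm l}\rmk''$. A priori one only knows $p\in\bl=\overline{\cup_{\lm l}\pbd\lmk\caA_{\lm l}\rmk''}^n$, and a projection in such a norm closure need not lie in any single $\pbd\lmk\caA_{\lm l}\rmk''$; moreover the enveloping factor genuinely cannot be avoided, since on the model $\caN=B(K)\otimes\unit$, $p=\unit\otimes p_0$ one checks $C^*(\caN,p)\cong B(K)\oplus B(K)$ and there $p\not\sim\unit$. I would resolve this in one of two ways. Either (i) invoke the boundary approximate Haag duality (Assumptions \ref{assum80}, \ref{assum80l}) to conjugate $p$, by a unitary in $\fbd\subseteq\bl$, into an enlarged cone algebra and thence into a suitable $\pbd\lmk\caA_{\lm l}\rmk''$; or (ii) approximate $p$ in norm by a projection $p'\in\pbd\lmk\caA_{\lm l}\rmk''$ via self-adjoint lifting and functional calculus with $\lV p-p'\rV<1$, build the isometry onto $p'$ inside $\caQ$ as above, and transport it to $p$ by the canonical unitary $u\in C^*(p,p')\subseteq\bl$. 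The delicate point in (ii), and where I expect the real work to lie, is keeping $u$ (hence the transported isometry) inside the commutant $\pbd\lmk\caA_{(\lm{r2})^c\cap\hu}\rmk'$.
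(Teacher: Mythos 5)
Your skeleton matches the paper's strategy in spirit: the spectator cone $D$ supplies the properly infinite factor $\caN=\pbd\lmk\caA_D\rmk''$ via Assumption \ref{wakayama}, Lemma \ref{lem65} supplies factoriality of the relative commutant $\caQ$, and the goal is $p\sim\unit$ there. You also correctly diagnose the crux: $p$ lies only in the norm closure $\bl$, not in any single $\pbd\lmk\caA_{\lm l}\rmk''$, so ``$p\in\caQ$'' is not available. But neither of your two proposed repairs closes this gap. Route (i) is inadmissible outright: the lemma assumes only Assumption \ref{wakayama}, not the boundary approximate Haag duality (Assumptions \ref{assum80}, \ref{assum80l}); indeed this lemma is needed to build the $C^*$-tensor category $\Carbdl$ (Theorem \ref{naha}) under Assumption \ref{wakayama} alone, so invoking Haag duality here would weaken the results downstream.

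Route (ii) is the paper's actual route, but it is incomplete at precisely the two points you flag, and both require real lemmas you do not supply. First, the ``delicate point'' of keeping the transporting unitary in the $\lm{r2}$-commutant is resolved in the paper not by an ad hoc choice of $p'$, but by Lemma \ref{lem14} (a conditional-expectation argument): the self-adjoint norm-approximant $x\in\pbd\lmk\caA_{\lm l}\rmk''$ of $p$ is replaced by $z\in \pbd\lmk\caA_{(\lm{r1})^c\cap\hu}\rmk'\cap\pbd\lmk\caA_{\lm l}\rmk''$ with the same error; after Lemma \ref{lem64} one gets a projection $q$ in that \emph{same} intersection, and Lemma \ref{lem63}, applied inside the $C^*$-algebra $\pbd\lmk\caA_{(\lm{r1})^c\cap\hu}\rmk'\cap\bl$, yields a unitary $u$ with $p=uqu^*$ lying there. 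The trick is to work with the \emph{smaller} cone $\lm{r1}$: since $(\lm{r2})^c\subset(\lm{r1})^c$, both $u$ and the eventual isometry land automatically in $\pbd\lmk\caA_{(\lm{r2})^c\cap\hu}\rmk'\cap\bl$. Second, your phrase ``build the isometry onto $p'$ inside $\caQ$ as above'' does not go through: your infiniteness argument used that $p$ commutes with $\caN$ \emph{exactly}, whereas the perturbed projection $q$ only commutes with $\caN$ approximately (your own $B(K)\oplus B(K)$ example shows why this cannot be waved away). The paper handles this with Lemma 5.10 of \cite{MTC}, whose hypotheses are tailored to this situation: $p\in\caN'$ exactly, $\caN=\pbd\lmk\caA_D\rmk''\subset\caM:=\pbd\lmk\caA_{(\lm{r2})^c\cap\hu}\rmk'\cap\pbd\lmk\caA_{\lm l}\rmk''$ an infinite factor, and $\lV u-\unit\rV<\delta$; the conclusion $q=\Ad(u^*)(p)\sim\unit$ in $\caM$ is exactly the step your sketch leaves open. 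So the proposal identifies where the work lies but does not do it; as written it is not a proof.
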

\begin{proof}
We apply Lemma 5.10 of \cite{MTC}.
Let $\delta>0$ be the number given in  Lemma 5.10 of \cite{MTC}.
We also use Lemma \ref{lem63}, Lemma \ref{lem64} and Lemma \ref{lem14}.
Consider the number $\delta_3(\delta_2(\delta))$ given for 
 $\delta>0$ with the functions
 $\delta_2$, $\delta_3$ given in Lemma \ref{lem63}, Lemma \ref{lem64}.

Let $p\in \pbd\lmk \caA_{(\lm {r1})^c\cap\hu}\rmk'\cap\bl$ be a projection.
Then, because $p\in\bl$, there exists $\lm l\in \Clbd$, 
and self-adjoint $x\in \pbd(\caA_{\lm l})''$ such that
$\lV
p-x
\rV\le \delta_3(\delta_2(\delta))$.
Taking larger cone if necessary, we may assume
$(\lm{r1})^c\cap \hu\subset\lm l$ and that  there exists $\Lambda\in\CUbk$
such that \begin{align}\label{shizuoka}
\Lambda\subset\lm l\cap\lm {r1}\subset \lm l \cap\lm {r2}.
\end{align}
Applying Lemma \ref{lem14} with replacing
$\lm 1$, $\lm 2$, $\Gamma$, $(\caH,\pi)$, $x$, $y$, $\varepsilon$
by
$\lm {r1}$, $\lm l$, $\hu$, $(\hbd,\pbd)$, $p$, $x$, $\delta_3(\delta_2(\delta))$
respectively, we obtain self-adjoint
$z\in \pbd\lmk\caA_{(\lm {r1})^c\cap\hu}\rmk'\cap\pbd(\caA_{\lm l})''$
such that $\lV p-z\rV\le \delta_3(\delta_2(\delta))$.

Now apply Lemma \ref{lem64} with
$\caH$, $\caA$, $p$, $x$
replaced by $\hbd$, $\pbd\lmk\caA_{(\lm {r1})^c\cap\hu}\rmk'\cap\pbd(\caA_{\lm l})''$,
$p$, $z$ respectively.
Then from Lemma \ref{lem64}, we obtain
a projection $q\in \pbd\lmk\caA_{(\lm {r1})^c\cap\hu}\rmk'\cap\pbd(\caA_{\lm l})''$
such that $\lV q-p\rV<\delta_2(\delta)$.

Next we apply Lemma \ref{lem63} with
$\caA$, $p$, $q$ replaced by
$ \pbd\lmk\caA_{(\lm {r1})^c\cap\hu}\rmk'\cap\bl$, $p$, $q$ respectively.
Then from Lemma \ref{lem63}, we obtain a unitary
$u\in \pbd\lmk\caA_{(\lm {r1})^c\cap\hu}\rmk'\cap\bl$
such that $p=uqu^*$, $\lV u-\unit\rV<\delta$.

From Lemma \ref{lem65}, $\caM:=\pbd\lmk\caA_{(\lm {r2})^c\cap\hu}\rmk'\cap\pbd(\caA_{\lm l})''$
is a factor.
For the cone $\Lambda\in\CUbk$ satisfying (\ref{shizuoka}),
$\pbd(\caA_{\Lambda})''$ is an infinite factor from Assumption \ref{wakayama}.
Because $\caM$ includes $\pbd(\caA_{\Lambda})''$, it means
$\caM$ is an infinite factor.

Now we apply Lemma 5.10 of \cite{MTC}, with
$\caH$, $\caN$, $\caM$, $p$, $u$
replaced by $\hbd$, $\pbd(\caA_{D})''$, $\caM$, $p$, $u^*$.
Then from Lemma 5.10 of \cite{MTC}, we have
$q\sim \unit$ in $\caM$.
Namely, there exists an isometry $v\in\caM$
 such that
 $vv^*=q$.
 
 Set $w:=uv\in \pbd\lmk\caA_{(\lm {r2})^c\cap\hu}\rmk'\cap\bl$.
 Then this $w$ is isometry with
 $ww^*=uqu^*=p$, proving the claim.
\end{proof}
Now we are ready to introduce subobjects.
\begin{lem}\label{lem67}
Consider the setting in subsection \ref{setting2} and assume Assumption \ref{wakayama}.
Let $\llz\in\pc$.
Then for any $\rho\in\Orbdl$ and a projection $p\in(\rho,\rho)_l$, there exists $\gamma\in\Orbdl$
and an isometry $v\in (\gamma,\rho)_l$ 
such that $vv^*=p$.
In particular, $\gamma=\Ad v^*\rho$.
\end{lem}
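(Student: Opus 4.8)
The plan is to exhibit $p$ as the range projection $vv^*$ of an isometric intertwiner $v\in(\gamma,\rho)_l$ and then to define $\gamma:=\Ad(v^*)\rho$; the one genuinely delicate point is that Lemma \ref{lem66} produces such an isometry localized only in a cone strictly larger than $\lzr$, so after the compression $\gamma$ must be transported back into $\Orbdl$.

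First I would locate $p$. Since $\rho\in\Orbdl$, we have $\rho\vert_{\caA_{(\lzr)^c\cap\hu}}=\pbd\vert_{\caA_{(\lzr)^c\cap\hu}}$, so from $p\rho(A)=\rho(A)p$ for all $A\in\abd$ together with $p\in\bl$ we obtain $p\in\pbd\lmk\caA_{(\lzr)^c\cap\hu}\rmk'\cap\bl$. This is exactly the class of projections to which Lemma \ref{lem66} applies with $\lm{r1}=\lzr$: choosing $\lm{r2}\in\Crbd$ and $D\in\CUbk$ with $\lzr\subset\lm{r2}$, $D\cap\lzr=\emptyset$ and $D\subset\lm{r2}$, Lemma \ref{lem66} yields an isometry $w\in\pbd\lmk\caA_{(\lm{r2})^c\cap\hu}\rmk'\cap\bl$ with $ww^*=p$ and $w^*w=\unit$.

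Next I would set $\mu:=\Ad(w^*)\rho$, that is $\mu(A)=w^*\rho(A)w$. Using $p\in(\rho,\rho)_l$ and $pw=w$ one checks that $\mu$ is a unital $*$-representation and that $w$ is an isometric intertwiner $w\in(\mu,\rho)_l$, since $w\mu(A)=p\rho(A)w=\rho(A)pw=\rho(A)w$. As $w$ commutes with $\pbd\lmk\caA_{(\lm{r2})^c\cap\hu}\rmk$, the representation $\mu$ agrees with $\pbd$ on $\caA_{(\lm{r2})^c\cap\hu}$, so $\mu$ is localized in the enlarged cone $\lm{r2}$. The substantive work is to show $\mu\in\Orbd$: for each $\Lambda_r\in\Crbd$ I would take $\Vrl\rho{\Lambda_r}\in\Vbd\rho{\Lambda_r}$ and observe that $\Vrl\rho{\Lambda_r}w$ is an isometry intertwining $\mu$ with the $\Lambda_r$-localized representation $\Ad(\Vrl\rho{\Lambda_r})\rho$, with range projection $\Vrl\rho{\Lambda_r}p\Vrl\rho{\Lambda_r}^*\in\pbd\lmk\caA_{\Lambda_r^c\cap\hu}\rmk'$; since the relevant cone algebras are properly infinite factors by Assumption \ref{wakayama}, this isometric intertwiner can be completed to a unitary charge transporter lying in $\Vbd{\mu}{\Lambda_r}\subset\caU(\bl)$, giving $\mu\in\Orbd$. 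This is the same mechanism as in the direct-sum Lemma \ref{lem61}.

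Finally I would repair the localization. Since $\mu\in\Orbd$, pick $V\in\Vbd\mu\lzr$ and put $\gamma:=\Ad(V)\mu$, so that $\gamma\in\Orbdl$ is localized in $\lzr$, and set $v:=wV^*\in\bl$. Then $v$ is an isometry with $vv^*=ww^*=p$, and $\gamma=\Ad(v^*)\rho$ because $v^*\rho(A)v=Vw^*\rho(A)wV^*=V\mu(A)V^*=\gamma(A)$; the relation $v\gamma(A)=\rho(A)v$ follows from $w\mu(A)=\rho(A)w$, so $v\in(\gamma,\rho)_l$, and the construction is complete. The main obstacle is precisely this passage through the larger cone $\lm{r2}$: Lemma \ref{lem66} cannot return an isometry localized in $\lzr$ itself, which both forces the transport step and makes the completion of isometric intertwiners to unitary charge transporters—where Assumption \ref{wakayama} enters—the technical heart of the argument.
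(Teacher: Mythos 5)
There is a genuine gap, and it sits exactly at the step you yourself flagged as the technical heart: the claim that the isometric intertwiner $\Vrl\rho{\Lambda_r}w$ from $\mu$ into $\Ad\lmk\Vrl\rho{\Lambda_r}\rmk\rho$ ``can be completed to a unitary charge transporter lying in $\Vbd{\mu}{\Lambda_r}$'' because cone algebras are properly infinite. Write $p_{\Lambda_r}:=\Ad\lmk\Vrl\rho{\Lambda_r}\rmk(p)$. To complete $\Vrl\rho{\Lambda_r}w$ to a unitary in $\Vbd{\mu}{\Lambda_r}$ you need an isometry $s\in\pbd\lmk\caA_{\Lambda_r^c\cap\hu}\rmk'\cap\bl$ with the \emph{prescribed} range $ss^*=p_{\Lambda_r}$; then $V:=s^*\Vrl\rho{\Lambda_r}w$ works. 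But Assumption \ref{wakayama} concerns the von Neumann algebras $\pbd\lmk\caA_{\Lambda\cap\hu}\rmk''$, whereas $p_{\Lambda_r}$ lives only in $\pbd\lmk\caA_{\Lambda_r^c\cap\hu}\rmk'\cap\bl$: without Haag duality there is no reason it lies in any cone algebra where proper infiniteness could be invoked, and membership of $s$ in the $C^*$-algebra $\bl$ is a norm-approximation condition that Murray--von Neumann type equivalence arguments do not supply. Producing an isometry in this algebra with prescribed range is precisely the content of Lemma \ref{lem66}, and that lemma forces a cone enlargement: applied to a projection localized in $\Lambda_r$, it returns an isometry localized only in a strictly larger cone, hence \emph{not} an element of $\pbd\lmk\caA_{\Lambda_r^c\cap\hu}\rmk'\cap\bl$. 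The analogy with Lemma \ref{lem61} is also misleading: there proper infiniteness supplies isometries $u_\ld,v_\ld\in\pbd\lmk\caA_{\ld}\rmk''$ with freely chosen complementary ranges, while here the range is prescribed to equal $p$, which is a genuinely harder problem.

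The repair---and it is the paper's actual proof---is to shrink \emph{before} applying Lemma \ref{lem66} rather than to relocalize afterwards. For each $\lm r\in\Crbd$ choose $\Gamma_{\lm r}\in\Crbd$ and $D_{\lm r}\in\CUbk$ with $\Gamma_{\lm r},D_{\lm r}\subset\lm r$ and $\Gamma_{\lm r}\cap D_{\lm r}=\emptyset$, transport $p$ into the \emph{smaller} cone by $p_{\Gamma_{\lm r}}:=\Ad\lmk\Vrl\rho{\Gamma_{\lm r}}\rmk(p)$, and apply Lemma \ref{lem66} to the pair $(\Gamma_{\lm r},\lm r)$: now the isometry $w_{\lm r}$ satisfies $w_{\lm r}w_{\lm r}^*=p_{\Gamma_{\lm r}}$ and lies in $\pbd\lmk\caA_{\lm r^c\cap\hu}\rmk'\cap\bl$, i.e.\ it is localized in $\lm r$ itself. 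With this, the paper sets $\gamma:=\Ad\lmk w_{\lzr}^*\Vrl\rho{\Gamma_{\lzr}}\rmk\circ\rho$, $v:=\Vrl\rho{\Gamma_{\lzr}}^*w_{\lzr}$ and $X_{\lm r}:=w_{\lm r}^*\Vrl\rho{\Gamma_{\lm r}}\Vrl\rho{\Gamma_{\lzr}}^*w_{\lzr}\in\Vbd\gamma{\lm r}$ in one stroke, with no intermediate representation $\mu$ and no post-hoc transport. Your outline becomes correct if, in your step showing $\mu\in\Orbd$ (and in particular producing $V\in\Vbd\mu\lzr$), you replace the appeal to proper infiniteness by this shrink-then-apply-Lemma-\ref{lem66} construction for every $\Lambda_r\in\Crbd$; as written, however, the completion step is unjustified, and it is exactly the point where the paper's extra idea is needed.
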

\begin{proof}
For any $\lm r\in\Crbd$, fix $\Vrl\rho{\lm r}\in\Vbd\rho{\lm r}$ and set
$p_{\lm r}:=\Ad(\Vrl\rho{\lm r})(p)$.
Then as in the proof of Lemma 5.8 \cite{MTC}, 
$p_{\lm r}$ is a projection in $\pbd\lmk\caA_{\lc{r}}\rmk'\cap\bl$.

For each $\lm r\in\Crbd$, fix $\Gamma_{\lm r}\in\Crbd$, $D_{\lm r}\in \CUbk$
such that $D_{\lm r}\cap \Gamma_{\lm r}=\emptyset$ and
$D_{\lm r},\Gamma_{\lm r}\subset\lm r$.
By Lemma \ref{lem66}, for the projection
$p_{{\Gamma_{\lm r}}}\in \pbd\lmk\caA_{(\Gamma_{\lm r})^c\cap\hu}\rmk'\cap\bl$, there exists
an isometry $w_{\lm r}\in \pbd\lmk\caA_{\lc{r}}\rmk'\cap\bl$
such that $w_{\lm r}w_{\lm r}^*=p_{\Gamma_{\lm r}}$.
We set $\gamma:=\Ad\lmk  w_{\lzr}^* \Vrl\rho{\Gamma_{\lzr}}\rmk\circ\rho$,
$X_{\lm r}:=w_{\lm r}^*\Vrl\rho{\Gamma_{\lm r}}\Vrl{\rho}{\Gamma_{\lzr}}^* w_{\lzr}$,
$v:=\Vrl{\rho}{\Gamma_{\lzr}}^* w_{\lzr}$.
Then, as in the proof of Lemma 5.8 \cite{MTC}, 
we get 
$\gamma\in\Orbdl$ $X_{\lm r}\in \Vbd\gamma{\lm r}$
and $v\in (\gamma,\rho)_l$ is an isometry 
such that $vv^*=p$.
\end{proof}
This defines the subobjects.
Note that it is unique up to $\simeq_l$ because
for other $\tilde \gamma\in\Orbd$,
$\tilde v\in (\gamma,\rho)_l$, $v^*\tilde v\in(\tilde\gamma,\gamma)_l$.
Hence we obtain the following.
\begin{thm}\label{naha}
Consider the setting in subsection \ref{setting2} and assume
Assumption \ref{wakayama}.
Let $\lzr\in\Crbd$.
Let $\Carbdl$ be the category with objects $\Orbdl$ and morphisms 
$(\rho,\sigma)_l$ between each $\rho,\sigma\in\Orbdl$.
Then $\Carbdl$ is the $C^*$-tensor category with respect to the
tensor structure $\otimes$, 
\begin{align}\label{osaka}
\begin{split}
&\rho\otimes\sigma:=\rho\circ_{\lm {r0}}\sigma,\quad \rho,\sigma\in \Orbdl,\\
&R\otimes S:=R\Tbd\rho{\lzr}\unit(S),\quad
R\in (\rho,\rho')_l,\quad S\in (\sigma,\sigma')_l,\quad
\rho,\rho',\sigma,\sigma'\in\Orbdl.
\end{split}
\end{align}
The representation $\pbd\in \Orbdl$ is the tensor unit,
with $\unit$ as the associativity isomorphisms as well as 
left and right multiplication by the tensor unit $\pbd$.
\end{thm}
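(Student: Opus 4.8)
The plan is to assemble the lemmas already established in this section into the axioms of a $C^*$-tensor category, so that essentially all of the analytic content is already in hand and what remains is categorical bookkeeping. First I would record that $\Carbdl$ is a $C^*$-category: each morphism space $(\rho,\sigma)_l$ is a norm-closed subspace of $\bl\subset\caB(\hbd)$, it is closed under adjoints (applying $*$ to $R\rho(A)=\sigma(A)R$ and using that $\rho,\sigma$ are $*$-homomorphisms gives $R^*\in(\sigma,\rho)_l$), composition is operator multiplication, and the $C^*$-identity together with positivity is inherited from $\caB(\hbd)$; in particular each $(\rho,\rho)_l=\bl\cap\rho(\caA)'$ is a unital $C^*$-algebra.

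Next I would verify that $\otimes$ is a $*$-compatible bifunctor. On objects, $\rho\otimes\sigma=\rho\circ_{\lzr}\sigma\in\Orbdl$ by Lemma \ref{lem22}; on morphisms, $R\otimes S=R\,\Tbd\rho\lzr\unit(S)$ lands in the correct intertwiner space by the $V=\unit$ instance of Lemma \ref{lem24}. The interchange law $(R'\otimes S')(R\otimes S)=R'R\otimes S'S$ is Lemma \ref{lem24}(ii), $*$-compatibility $(R\otimes S)^*=R^*\otimes S^*$ is Lemma \ref{lem24}(i), and identity morphisms are sent to identity morphisms by inspection. Associativity holds strictly: on objects it is Lemma \ref{lem22}(ii), and on morphisms
\[
(R\otimes S)\otimes T
=R\,\Tbd\rho\lzr\unit(S)\,\Tbd{\rho\circ_{\lzr}\sigma}\lzr\unit(T)
=R\,\Tbd\rho\lzr\unit\bigl(S\,\Tbd\sigma\lzr\unit(T)\bigr)
=R\otimes(S\otimes T),
\]
where I use Lemma \ref{lem22}(i) to rewrite $\Tbd{\rho\circ_{\lzr}\sigma}\lzr\unit=\Tbd\rho\lzr\unit\,\Tbd\sigma\lzr\unit$ and the fact that $\Tbd\rho\lzr\unit$ is a homomorphism. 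Hence the associativity constraints may be taken to be $\unit$.

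For the unit I would first observe that $\Tbd\pbd\lzr\unit=\id_{\bl}$: Lemma \ref{lem20}(i) with $\rho=\pbd$ and $V=\unit$ gives $\Tbd\pbd\lzr\unit\,\pbd=\pbd$, and the uniqueness and $\sigma$-weak continuity in Lemma \ref{lem20} force the extension to be the identity. Then $\pbd\circ_{\lzr}\sigma=\Tbd\sigma\lzr\unit\,\pbd=\sigma$ and $\rho\circ_{\lzr}\pbd=\Tbd\rho\lzr\unit\,\pbd=\rho$, so $\pbd$ is a strict tensor unit with left and right unit constraints equal to $\unit$. Since all associativity and unit constraints are identities, the pentagon and triangle coherence conditions hold trivially. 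Finally, closure under direct sums (Lemma \ref{lem61}) and subobjects (Lemma \ref{lem67}) completes the verification that $\Carbdl$ is a $C^*$-tensor category.

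The main point to be careful about — rather than a genuine obstacle, since the difficult analysis is carried by the earlier lemmas — is to confirm that each structural map ($\otimes$ on morphisms and the unit identifications) truly lands inside $\bl$ and is compatible with the $\simeq_l$ bookkeeping, and in particular to pin down the identification $\Tbd\pbd\lzr\unit=\id_{\bl}$, on which the strictness of the unit laws and hence the trivial coherence rests.
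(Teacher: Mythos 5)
Your proposal is correct and takes essentially the same route as the paper: both proofs assemble Lemmas \ref{lem20}, \ref{lem22}, \ref{lem24}, \ref{lem61} and \ref{lem67} into the axioms of Definition 2.1.1 of \cite{NT}, with the identical strict-associativity computation on morphisms; your explicit identification $\Tbd\pbd\lzr\unit=\id_{\bl}$ via the uniqueness and $\sigma$-weak continuity clauses of Lemma \ref{lem20} is, if anything, more careful than the paper's one-line treatment of the unit constraints. The only item you omit is the simplicity of the tensor unit, which that definition requires: since $\obd$ is pure, $\pbd$ is irreducible, hence $(\pbd,\pbd)_l=\bl\cap\pbd(\abd)'\subset\pbd(\abd)'=\bbC\unit$; the paper closes with exactly this remark, and your verification should include it as well.
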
\kakunin{
\begin{proof}
Recall the definition of the $C^*$-tensor category given in Definition 2.1.1 \cite{NT}.
Clearly, for any $\rho,\sigma\in \Orbdl$, 
$(\rho,\sigma)_l$ is a Banach space
and composition as operators on $\hbd$
\begin{align}
(\sigma,\gamma)_l\times(\rho,\sigma)_l\ni (S,T)\mapsto ST\in(\rho,\gamma)_l
\end{align}
is a well defined bilinear map
satisfying $\lV ST\rV\le \lV S\rV\lV T\rV$.
The adjoint as operators on $\hbd$ gives a map
\begin{align}
(\rho,\sigma)_l\ni T\mapsto T^*\in (\sigma,\rho)_l
\end{align}
satisfying $T^{**}=T$, $\lV T^*T\rV=\lV T\rV^2$, $T^*T\ge 0$.
The formula (\ref{osaka}) gives 
$\rho\otimes\sigma\in\Orbdl$ by Lemma \ref{lem22}, and
$R\otimes S\in (\rho\otimes\sigma,\rho'\otimes \sigma')$
by Lemma \ref{lem24}.
By Lemma \ref{lem22}, 
$\unit : (\rho\otimes\sigma)\otimes\gamma\mapsto \rho\otimes(\sigma\otimes\gamma $ 
is an isomorphism
satisfying the naturality
\begin{align}
\begin{split}
(R\otimes S)\otimes T
=R\Tbd\rho{\lzr}\unit(S)\Tbd{\rho\otimes \sigma}{\lzr} \unit(T)
=R\Tbd\rho{\lzr}\unit(S)\Tbd{\rho}{\lzr} \unit\Tbd{\sigma}{\lzr}(T)\\
=R\Tbd\rho{\lzr}\unit\lmk S\Tbd{\sigma}{\lzr}(T)\rmk
=R\otimes\lmk S\otimes T\rmk
\end{split}
\end{align}
for $R\in (\rho,\rho')_l$, $S\in(\sigma,\sigma')_l$, $T\in (\gamma,\gamma')_l$,
using Lemma \ref{lem22}.
The unit $\unit$ on $\hbd$ also
gives natural isomorphisms
$\unit : \pbd\otimes\rho\to\rho$, $\unit :\rho\otimes\pbd\to \rho$
corresponding to the left and right multiplication by $\pbd$.

By Lemma \ref{lem24}, we have $(S\otimes T)^*=S^*\otimes T^*$.
By Lemma \ref{lem61}, the direct sums exist and by Lemma \ref{lem67},
subobjects exist.
Because $\pbd$ is irreducible, $\pbd$ is simple.
\end{proof}}
Analogous to Lemma \ref{lem66} and Lemma \ref{lem67}, we have the followings.
\begin{lem}\label{lem77}
Consider the setting in subsection \ref{setting2} and assume Assumption \ref{wakayama}.
Let $\lm{1},\lm{2}\in\CUbk$, $D\in \CUbk$ with
$D\cap\lm {1}=\emptyset$ and $D,\lm {1}\subset\lm{2}\cap\hu $.
Then for any projection $p\in \pbd\lmk \caA_{(\lm {1})^c\cap\hu}\rmk'\cap\fbd$,
there exists an isometry
$w\in \pbd\lmk \caA_{(\lm {2})^c\cap\hu}\rmk'\cap\fbd$
such that $ww^*=p$.
\end{lem}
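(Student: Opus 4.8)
The plan is to transcribe the proof of Lemma \ref{lem66}, replacing the left-boundary algebra $\bl$ by the bulk algebra $\fbd$ and the cone family $\Crbd$ by $\CUbk$. First I would fix the constant $\delta>0$ produced by Lemma 5.10 of \cite{MTC} together with the functions $\delta_2,\delta_3$ of Lemma \ref{lem63}, Lemma \ref{lem64}, and set $\varepsilon_0:=\delta_3(\delta_2(\delta))$. Since $p\in\fbd=\overline{\cup_{\Lambda\in\CUbk}\pbd(\caA_{\Lambda\cap\hu})''}^n$, I can choose $\ld\in\CUbk$ and a self-adjoint $x\in\pbd(\caA_{\ld\cap\hu})''$ with $\lV p-x\rV\le\varepsilon_0$. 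Because enlarging $\ld$ inside $\CUbk$ only enlarges $\pbd(\caA_{\ld\cap\hu})''$ (so $x$ remains inside), I am free to assume $\lm 1\cup D\subset\ld$ and to fix an auxiliary cone $\Lambda'\in\CUbk$ with $\Lambda'\subset\ld\cap\lm 1$.

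The middle of the argument is the usual approximate-to-exact passage. I would apply Lemma \ref{lem14} with $(\lm 1,\lm 2,\Gamma)$ replaced by $(\lm 1,\ld\cap\hu,\hu)$, $(\caH,\pi)$ by $(\hbd,\pbd)$, and $(x,y,\varepsilon)$ by $(p,x,\varepsilon_0)$, to obtain a self-adjoint $z\in\pbd(\caA_{\lc 1})'\cap\pbd(\caA_{\ld\cap\hu})''$ with $\lV p-z\rV\le\varepsilon_0$. Lemma \ref{lem64} then upgrades $z$ to a projection $q$ in that same relative-commutant algebra with $\lV q-p\rV<\delta_2(\delta)$, and Lemma \ref{lem63}, used inside $\pbd(\caA_{\lc 1})'\cap\fbd$, produces a unitary $u\in\pbd(\caA_{\lc 1})'\cap\fbd$ with $p=uqu^*$ and $\lV u-\unit\rV<\delta$.

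For the final comparison I would set $\caM:=\pbd(\caA_{\lc 2})'\cap\pbd(\caA_{\ld\cap\hu})''$, which is a factor by Lemma \ref{lem65}. Since $\Lambda'\subset\ld\cap\lm 1$ and $\lm 1\subset\lm 2\cap\hu$, one has $\pbd(\caA_{\Lambda'})''\subset\caM$, an infinite factor by Assumption \ref{wakayama}, so $\caM$ is an infinite factor. The hypothesis $D\cap\lm 1=\emptyset$ gives $D\subset\lc 1$, whence $\caN:=\pbd(\caA_D)''$ commutes with $u\in\pbd(\caA_{\lc 1})'$; and $D\subset\lm 2\cap\hu$ together with $D\subset\ld$ places $\caN\subset\caM$ as an infinite subfactor. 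Applying Lemma 5.10 of \cite{MTC} with $(\caN,\caM,p,u)$ replaced by $(\caN,\caM,p,u^*)$ then yields $q\sim\unit$ in $\caM$, i.e. an isometry $v\in\caM$ with $vv^*=q$. Finally, because $\lm 1\subset\lm 2$ forces $\pbd(\caA_{\lc 1})'\subset\pbd(\caA_{\lc 2})'$, the product $w:=uv$ lies in $\pbd(\caA_{\lc 2})'\cap\fbd$ and is an isometry with $ww^*=uqu^*=p$.

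The step I expect to be the main obstacle is the application of Lemma \ref{lem14}. In Lemma \ref{lem66} the localization region $\lm l\in\Clbd$ was large enough to contain the whole complement $(\lm{r1})^c\cap\hu$, which is what makes the relative-commutant extraction clean. Here the localization region is a cone of the form $\ld\cap\hu$ with $\ld\in\CUbk$, and no single such cone contains $\lc 1$; so I must instead exploit that every element of $\pbd(\caA_{\ld\cap\hu})''$ automatically commutes with $\caA_{(\ld\cap\hu)^c\cap\hu}$, and therefore only treat the overlap $\lc 1\cap\ld$ when invoking Lemma \ref{lem14}. Verifying that Lemma \ref{lem14} still applies in this cone geometry, namely that the partition $\ld\cap\hu=(\ld\cap\lm 1\cap\hu)\sqcup(\ld\cap\lc 1)$ suffices to push $z$ into the relative commutant, is the single place where the transcription from Lemma \ref{lem66} is not purely formal.
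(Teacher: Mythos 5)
Your proof is correct and follows essentially the same route as the paper's own argument for Lemma \ref{lem77}, which is itself exactly the transcription of Lemma \ref{lem66} you describe (approximate $p$ by a local self-adjoint element, pass to the relative commutant via Lemma \ref{lem14}, upgrade to a projection via Lemma \ref{lem64}, conjugate via Lemma \ref{lem63}, and finish with Lemma \ref{lem65}, Assumption \ref{wakayama} and Lemma 5.10 of \cite{MTC}); your only deviation is cosmetic, using an auxiliary cone $\Lambda'\subset\ld\cap\lm 1$ for infiniteness of $\caM$ where the paper instead assumes $\lm 2\subset\ld$ and uses $\pbd(\caA_{\lm 2\cap\hu})''\subset\caM$. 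The ``main obstacle'' you flag is in fact a non-issue: Lemma \ref{lem14} is stated for arbitrary subsets $\Lambda_1,\Lambda_2\subset\Gamma$ with no requirement that $\Lambda_2$ contain $(\Lambda_1)^c\cap\Gamma$ (its proof works with finite subsets exhausting $\Lambda_1^c\cap\Gamma\cap\Lambda_2$), so the application with $\Lambda_2=\ld\cap\hu$ is purely formal, exactly as in the paper.
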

\kakunin{Remove at next version
\begin{proof}
We apply Lemma 5.10 of \cite{MTC}.
Let $\delta>0$ be the number given in  Lemma 5.10 of \cite{MTC}.
We also use Lemma \ref{lem63}, Lemma \ref{lem64}.
Consider the number $\delta_3(\delta_2(\delta))$ given for 
 $\delta>0$(the number given in  Lemma 5.10 of \cite{MTC}) with the functions
 $\delta_2$, $\delta_3$ given in Lemma \ref{lem63}, Lemma \ref{lem64}.

Let $p\in \pbd\lmk \caA_{(\lm {1})^c\cap\hu}\rmk'\cap\fbd$ be a projection.
Then, because $p\in\fbd$, there exists $\lm {}\in \CUbk$, 
and self-adjoint $x\in \pbd(\caA_{\Lambda\cap\hu})''$ such that
$\lV
p-x
\rV\le \delta_3(\delta_2(\delta))$.
We may assume $D, \lm 2\subset\Lambda$.

Apply Lemma \ref{lem14} with
$\lm 1$, $\lm 2$, $\Gamma$, $(\caH,\pi)$,
$x,y,\varepsilon$
replaced by 
$\lm {1}$, $\ld\cap\hu$, $\hu$, $(\hbd,\pbd)$
$p,x,\delta_3(\delta_2(\delta))$.
Then we obtain a selfadjoint
$z\in \pbd\lmk\caA_{(\lm {1})^c\cap\hu}\rmk'\cap\pbd(\caA_{\lm {}\cap\hu })''$
such that $\lV p-z\rV\le \delta_3(\delta_2(\delta))$.

Now apply Lemma \ref{lem64} with
$\caH$, $\caA$, $p$, $x$
replaced by $\hbd$, $\pbd\lmk\caA_{(\lm {1})^c\cap\hu}\rmk'\cap\pbd(\caA_{\lm {}\cap\hu})''$,
$p$, $z$ respectively.
Then from Lemma \ref{lem64}, we obtain
a projection $q\in \pbd\lmk\caA_{(\lm {1})^c\cap\hu}\rmk'\cap\pbd(\caA_{\lm {}\cap\hu})''$
such that $\lV q-p\rV<\delta_2(\delta)$.

Note that $p, q\in \pbd\lmk\caA_{(\lm {1})^c\cap\hu}\rmk'\cap\fbd$.
Next we apply Lemma \ref{lem63} with
$\caA$, $p$, $q$ replaced by
$ \pbd\lmk\caA_{(\lm {1})^c\cap\hu}\rmk'\cap\fbd$, $p$, $q$ respectively.
Then from Lemma \ref{lem63}, we obtain a unitary
$u\in \pbd\lmk\caA_{(\lm {1})^c\cap\hu}\rmk'\cap\fbd$
such that $p=uqu^*$, $\lV u-\unit\rV<\delta$.

From Lemma \ref{lem65}, $\caM:=\pbd\lmk\caA_{(\lm {2})^c\cap\hu}\rmk'\cap\pbd(\caA_{\lm {}\cap\hu })''$
is a factor.
Because $\caM$ includes $\pbd(\caA_{\Lambda_2\cap\hu})''$, which is an infinite factor from Assumption \ref{wakayama},
$\caM$ is an infinite factor.

Now we apply Lemma 5.10 of \cite{MTC}, with
$\caH$, $\caN$, $\caM$, $p$, $u$
replaced by $\hbd$, $\pbd(\caA_{D})''$, $\caM$, $p$, $u^*$.
Then from Lemma 5.10 of \cite{MTC}, we have
$q\sim \unit$ in $\caM$.
Namely, there exists an isometry $v\in\caM$
 such that
 $vv^*=q$.
 
 Set $w:=uv\in \pbd\lmk\caA_{(\lm {2})^c\cap\hu}\rmk'\cap\fbd$.
 Then this $w$ is isometry with 
 $ww^*=uqu^*=p$, proving the claim.
\end{proof}
}
\begin{lem}\label{lem70}
Consider the setting in subsection \ref{setting2} and assume Assumption \ref{wakayama}.
Let $\llz\in\pc$.
Then for any $\rho\in\Obul$ and a projection $p\in(\rho,\rho)_U$, there exists $\gamma\in\Obul$
and an isometry $v\in (\gamma,\rho)_U$ 
such that $vv^*=p$.
In particular, $\gamma=\Ad v^*\rho$.
\end{lem}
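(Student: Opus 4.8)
The plan is to mimic the proof of Lemma \ref{lem67} almost verbatim, making the systematic replacements: the boundary algebra $\bl$ by the bulk algebra $\fbd$, the right cones $\Crbd$ by the upper cones $\CUbk$, the transporters $\Vbd\rho{\lm r}$ by $\Vbu\rho\ld$, and invoking Lemma \ref{lem77} in place of Lemma \ref{lem66}. The only substantive input is Lemma \ref{lem77}, which guarantees that the isometry we need can be chosen inside $\fbd$; everything else is formal.

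First I would transport the projection. For every $\ld\in\CUbk$ fix $\Vrl\rho\ld\in\Vbu\rho\ld$ and set $p_\ld:=\Ad\lmk\Vrl\rho\ld\rmk(p)$. Since $p\in(\rho,\rho)_U\subset\fbd$ and $\Vrl\rho\ld\in\caU\lmk\fbd\rmk$, we have $p_\ld\in\fbd$; and because $\Ad\lmk\Vrl\rho\ld\rmk\circ\rho$ agrees with $\pbd$ on $\caA_{\ld^c\cap\hu}$, the relation $p\in(\rho,\rho)_U$ forces $p_\ld$ to be a projection in $\pbd\lmk\caA_{\ld^c\cap\hu}\rmk'\cap\fbd$. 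In particular, taking $\ld=\Gamma_\ld$ below, $p_{\Gamma_\ld}\in\pbd\lmk\caA_{(\Gamma_\ld)^c\cap\hu}\rmk'\cap\fbd$.

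Next I would produce the subobject. For each $\ld\in\CUbk$ choose auxiliary cones $\Gamma_\ld,D_\ld\in\CUbk$ with $D_\ld\cap\Gamma_\ld=\emptyset$ and $D_\ld,\Gamma_\ld\subset\ld\cap\hu$, and apply Lemma \ref{lem77} (with $\lm 1,\lm 2$ there taken to be $\Gamma_\ld,\ld$) to the projection $p_{\Gamma_\ld}$. This yields an isometry $w_\ld\in\pbd\lmk\caA_{\ld^c\cap\hu}\rmk'\cap\fbd$ with $w_\ld w_\ld^*=p_{\Gamma_\ld}$. Setting
\begin{align*}
\gamma:=\Ad\lmk w_{\lz}^*\Vrl\rho{\Gamma_{\lz}}\rmk\circ\rho,\quad
X_\ld:=w_\ld^*\Vrl\rho{\Gamma_\ld}\Vrl\rho{\Gamma_{\lz}}^*w_{\lz},\quad
v:=\Vrl\rho{\Gamma_{\lz}}^*w_{\lz},
\end{align*}
the same computation as in the proof of Lemma 5.8 of \cite{MTC} (already used for Lemma \ref{lem67}) should give $\gamma\in\Obul$, $X_\ld\in\Vbu\gamma\ld$ for all $\ld\in\CUbk$, and $v\in(\gamma,\rho)_U$ an isometry with $vv^*=p$. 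The localization $\gamma\vert_{\caA_{\lz^c\cap\hu}}=\pbd\vert_{\caA_{\lz^c\cap\hu}}$ needed for $\gamma\in\Obul$ follows from $\rho\in\Obul$ together with $\Gamma_{\lz}\subset\lz\cap\hu$ and $w_{\lz}\in\pbd\lmk\caA_{\lz^c\cap\hu}\rmk'$, exactly as in the boundary case.

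The one point to watch — and the reason Lemma \ref{lem77} rather than Lemma \ref{lem66} must be used — is that every witness ($p_\ld$, $w_\ld$, $X_\ld$, $v$, and hence $\gamma$) has to remain inside $\fbd$, the algebra of ``bulk operations only'', so that $v\in(\gamma,\rho)_U$ and not merely $v\in(\gamma,\rho)_l$. This is precisely what Lemma \ref{lem77} supplies: the isometry $w_\ld$ it produces lies in $\fbd$, not just in $\caB\lmk\hbd\rmk$ or $\bl$. Granting this, I expect no further obstacle, since the verification that each $X_\ld$ is a genuine $\Vbu$-transporter for $\gamma$ and that $\gamma$ is localized in $\lz\cap\hu$ is identical to the argument for Lemma \ref{lem67}.
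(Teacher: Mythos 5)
Your proposal is correct and is essentially identical to the paper's own proof: the same transported projections $p_{\ld}=\Ad\lmk \Vrl\rho\ld\rmk(p)$, the same auxiliary cones $\Gamma_\ld, D_\ld\subset\ld\cap\hu$, the same appeal to Lemma \ref{lem77} to obtain $w_\ld\in\pbd\lmk\caA_{\ld^c\cap\hu}\rmk'\cap\fbd$, and the same definitions of $\gamma$, $X_\ld$, and $v$, with the final verifications deferred to the argument of Lemma 5.8 of \cite{MTC} exactly as the paper does. Your emphasis that Lemma \ref{lem77} (rather than Lemma \ref{lem66}) is what keeps all witnesses inside $\fbd$ is precisely the point of the paper's construction.
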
\kakunin{
\begin{proof}
For each $\ld \in\CUbk$, fix $\Vrl\rho{\ld}\in \Vbu\rho{\ld}$ and set
$p_{\ld }:=\Ad(\Vrl\rho{\ld})(p)$.
Then as in the proof of Lemma 5.8 \cite{MTC}, 
$p_{\ld}$ is a projection in $\pbd\lmk\caA_{\ld^c\cap\hu }\rmk'\cap\fbd$.

For each $\ld\in \CUbk$, fix some $\Gamma_\ld,D_\ld\in \CUbk$
such that $\Gamma_\ld,D_\ld\subset \ld\cap \hu$ and $\Gamma_\ld\cap D_\ld=\emptyset$.
Applying Lemma \ref{lem77}, with 
$\lm{1},\lm{2}$, $D$, $p$ with
replaced by $\Gamma_\ld$, $\ld$, $D_\ld$, $p_{\Gamma_\ld}$,
there exists an isometry
$w_\Lambda\in \pbd\lmk \caA_{(\ld)^c\cap\hu}\rmk'\cap\fbd$
such that $w_\ld w_\ld^*=p_{\Gamma_\ld}$.

We set $\gamma:=\Ad\lmk  w_{\lz}^* \Vrl\rho{\Gamma_{\lz}}\rmk\circ\rho$,
$X_{\ld}:=w_{\ld}^*\Vrl\rho{\Gamma_{\ld}}\Vrl{\rho}{\Gamma_{\lz}}^* w_{\lz}$,
$v:=\Vrl{\rho}{\Gamma_{\lz}}^* w_{\lz}$.
Then, as in the proof of Lemma 5.8 \cite{MTC}, 
we get 
$\gamma\in\Obul$, $X_{\ld}\in \Vbu\gamma{\ld}$
and $v\in (\gamma,\rho)_U$ is an isometry 
such that $vv^*=p$.
\end{proof}
}

\begin{thm}\label{okinawa}
Consider the setting in subsection \ref{setting2} and assume
Assumption \ref{wakayama}.
Let $\llz\in\pc$.
Let $\Cabul$ be the category with objects $\Obul$ and morphisms 
$(\rho,\sigma)_U$ between each $\rho,\sigma\in\Obul$.
Then $\Cabul$ is a $C^*$-tensor subcategory of $\Carbdl$ in the following sense:
\begin{description}
\item[(i)] $\Cabul$ is closed under $*$, composition of morphisms, and tensor.
\item[(ii)] Direct sum $\gamma$, $u\in (\rho,\gamma)_l$, 
$v\in (\sigma,\gamma)_l$ of $\rho,\sigma\in \Obul$ can be taken so that
$\gamma\in \Obul$, $u\in (\rho,\gamma)_U$, 
$v\in (\sigma,\gamma)_U$.
\item[(iii)] For any $\rho\in \Obul$ and projection $p\in (\rho,\rho)_U$,
subobject $\gamma$, $v\in (\gamma,\rho)_l$ can be taken so that $\gamma\in \Obul$
and $v\in (\gamma,\rho)_U$ .

\end{description}
\end{thm}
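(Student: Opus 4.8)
The plan is to derive all three assertions from the lemmas already established, so that the theorem becomes an exercise in assembling them once the correct inclusions are in place. First I would record that $\Cabul$ really is a (non-full) subcategory of the $C^*$-tensor category $\Carbdl$ of Theorem \ref{naha}. For $(\lz,\lzr)\in\pc$ and $\rho\in\Obul$, given $\lm r\in\Crbd$ I choose $\ld\in\CUbk$ with $(\ld,\lm r)\in\pc$; then $\lm r^c\cap\hu\subset\ld^c\cap\hu$ gives $\Vbu\rho\ld\subset\VUbd\rho{\lm r}$, and since $\lz\subset\lzr$ the defining $\lz$-localization of $\rho$ restricts to the $\lzr$-localization required of an element of $\OrUbdl$. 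Hence $\Obul\subset\OrUbdl\subset\Orbdl$, and, as already noted, $\fbd\subset\bl$ yields $(\rho,\sigma)_U\subset(\rho,\sigma)_l$. Thus objects and morphisms of $\Cabul$ are objects and morphisms of $\Carbdl$, and the tensor, composition and $*$ are the ones inherited from $\Carbdl$.

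For (i), closure under $*$ and composition is immediate from $\fbd$ being a $C^*$-algebra: if $R\in(\rho,\sigma)_U\subset\fbd$ then the intertwining relation gives $R^*\in(\sigma,\rho)_U$ with $R^*\in\fbd$, and composable morphisms in $\fbd$ multiply inside $\fbd$. Closure of the tensor on objects is exactly Lemma \ref{lem22}(iv), $\rho\circ_{\lzr}\sigma\in\Obul$. Closure on morphisms follows from Lemma \ref{lem24}(iii): since $\rho\in\OrUbdl$ we may take the trivial transporter $\unit\in\VUbd\rho{\lzr}$, and then the tensor $R\otimes S=R\,\Tbd\rho{\lzr}\unit(S)$ lands in $\fbd$ whenever $R$ and $S$ do, hence in the appropriate $(\cdot)_U$. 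The tensor unit $\pbd$ lies in $\Obul$, with $\unit$ as transporter.

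For (ii) and (iii) there is essentially nothing to do beyond quoting the finer clauses of the earlier lemmas. The $\Obul$ case of Lemma \ref{lem61} produces a direct sum $\gamma\in\Obul$ together with isometries $u\in(\rho,\gamma)_U$, $v\in(\sigma,\gamma)_U$ satisfying $uu^*+vv^*=\unit$, which is exactly the refinement (ii) of the direct sum already available in $\Carbdl$. Likewise Lemma \ref{lem70} gives, for $\rho\in\Obul$ and a projection $p\in(\rho,\rho)_U$, a subobject $\gamma\in\Obul$ and an isometry $v\in(\gamma,\rho)_U$ with $vv^*=p$, which is (iii). Simplicity of the tensor unit is inherited from the irreducibility of $\pbd$.

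Since the substantive analytic content---extending endomorphisms, and the factor / infinite-factor arguments underlying direct sums and subobjects---has already been discharged in Lemmas \ref{lem22}, \ref{lem24}, \ref{lem61} and \ref{lem70}, I do not expect a genuine obstacle at the level of the theorem itself. The one point requiring care, and the only place where the distinction between $\fbd$ and $\bl$ is felt, is ensuring that every charge transporter used in forming tensors of morphisms stays inside the bulk algebra $\fbd$; this is precisely what the inclusion $\Obul\subset\OrUbdl$ secures, letting me invoke the $(\cdot)_U$-version Lemma \ref{lem24}(iii) rather than only its $(\cdot)_l$-analogue.
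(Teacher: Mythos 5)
Your proposal is correct and follows essentially the same route as the paper: the paper's own proof simply notes $(\rho,\sigma)_U\subset(\rho,\sigma)_l$, cites Lemma \ref{lem22}(iv) for (i), Lemma \ref{lem61} for (ii), and Lemma \ref{lem70} for (iii). Your additional care in spelling out the inclusion $\Obul\subset\OrUbdl$ and in invoking Lemma \ref{lem24}(iii) (together with the trivial transporter $\unit\in\VUbd\rho\lzr$ and Lemma \ref{lem20}) for closure of the tensor on morphisms is exactly the content the paper leaves implicit, so there is no substantive difference.
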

\begin{proof}
For any $\rho,\sigma\in \Obul$, $(\rho,\sigma)_U\subset (\rho,\sigma)_l$, and (i) is 
(iv) of Lemma \ref{lem22}.
(ii) is proven in Lemma \ref{lem61}.(iii) is Lemma \ref{lem70}.
\end{proof}

The same proof gives the following.
\begin{thm}\label{nagoya}
Consider the setting in subsection \ref{setting2} and assume
Assumption \ref{wakayama}.
Then the same statement as Theorem \ref{okinawa}  holds 
when we replace $\Cabul$ by $\CarUbdl$,
the category with objects $\OrUbdl$ and morphisms
$(\rho,\sigma)_U$ for each $\rho,\sigma\in \OrUbdl$.
\end{thm}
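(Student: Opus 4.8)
The plan is to run the proof of Theorem \ref{okinawa} essentially verbatim, replacing $\Obul$ by $\OrUbdl$, the $\CUbk$-cone transporters $\Vbu\rho\ld$ by the $\Crbd$-cone transporters $\VUbd\rho{\lm r}$, and quoting throughout the $\OrUbdl$-branches of the lemmas of this section. Thus I must verify the three items (i)--(iii) in the statement of Theorem \ref{okinawa}.

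First I would dispose of (i). By definition $(\rho,\sigma)_U\subset(\rho,\sigma)_l$, so the objects $\OrUbdl$ with the morphisms $(\rho,\sigma)_U$ sit inside $\Carbdl$, and closure of each morphism space under adjoint and composition is immediate because $(\rho,\sigma)_U$ consists of intertwiners lying in the $C^*$-algebra $\fbd$. Closure of objects under the tensor $\circ_{\lzr}$ is precisely (iii) of Lemma \ref{lem22}, and the fact that the tensor product $R_1\otimes R_2$ of two morphisms of $\CarUbdl$ again lies in $\fbd$ is (iii) of Lemma \ref{lem24}. This yields (i). For (ii) there is nothing new to prove: the direct-sum construction of Lemma \ref{lem61} already produces, for $\rho,\sigma\in\OrUbdl$, an object $\gamma\in\OrUbdl$ together with isometries $u\in(\rho,\gamma)_U$, $v\in(\sigma,\gamma)_U$ satisfying $uu^*+vv^*=\unit$.

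The real content is (iii), which I expect to be the main obstacle, since it is the only place where the localization region genuinely changes from $\CUbk$ (as in Theorem \ref{okinawa}) to $\Crbd$. I would establish the $\OrUbdl$-analogue of Lemma \ref{lem70}: for $\rho\in\OrUbdl$ and a projection $p\in(\rho,\rho)_U$, there exist $\gamma\in\OrUbdl$ and an isometry $v\in(\gamma,\rho)_U$ with $vv^*=p$. The skeleton is that of Lemma \ref{lem67}: fix $\Vrl\rho{\lm r}\in\VUbd\rho{\lm r}$ for each $\lm r\in\Crbd$, transport $p$ to $p_{\lm r}:=\Ad(\Vrl\rho{\lm r})(p)$, choose disjoint cones $\Gamma_{\lm r},D_{\lm r}\in\Crbd$ inside $\lm r$, split $p_{\Gamma_{\lm r}}$ as $w_{\lm r}w_{\lm r}^*$ for an isometry $w_{\lm r}$, and then set $\gamma:=\Ad(w_{\lzr}^*\Vrl\rho{\Gamma_{\lzr}})\circ\rho$ and $v:=\Vrl\rho{\Gamma_{\lzr}}^*w_{\lzr}$, the operators $w_{\lm r}^*\Vrl\rho{\Gamma_{\lm r}}\Vrl\rho{\Gamma_{\lzr}}^*w_{\lzr}$ serving as charge transporters for $\gamma$.

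The one genuinely new ingredient is a splitting lemma that is the common refinement of Lemma \ref{lem66} and Lemma \ref{lem77}: for $\Gamma_r,\lm r,D\in\Crbd$ with $D\cap\Gamma_r=\emptyset$ and $D,\Gamma_r\subset\lm r$, and a projection $p\in\pbd(\caA_{(\Gamma_r)^c\cap\hu})'\cap\fbd$, one needs an isometry $w\in\pbd(\caA_{(\lm r)^c\cap\hu})'\cap\fbd$ with $ww^*=p$. One uses the right-cone geometry of Lemma \ref{lem66} but keeps every intermediate object in $\fbd$ as in Lemma \ref{lem77}; the proof is the verbatim combination of those two. Concretely I would approximate $p$ within $\fbd$, hence by a self-adjoint element of $\pbd(\caA_{\Lambda\cap\hu})''$ for some $\Lambda\in\CUbk$ with $D,\lm r\subset\Lambda$; relativize it into $\pbd(\caA_{(\Gamma_r)^c\cap\hu})'\cap\pbd(\caA_{\Lambda\cap\hu})''$ by Lemma \ref{lem14}; round it to a nearby projection $q$ by Lemma \ref{lem64}, and conjugate $p$ to $q$ by a unitary $u\in\pbd(\caA_{(\Gamma_r)^c\cap\hu})'\cap\fbd$ close to $\unit$ by Lemma \ref{lem63}; observe by Lemma \ref{lem65} that $\caM:=\pbd(\caA_{(\lm r)^c\cap\hu})'\cap\pbd(\caA_{\Lambda\cap\hu})''$ is a factor, which is infinite by Assumption \ref{wakayama} because it contains $\pbd(\caA_{\Lambda'})''$ for an auxiliary cone $\Lambda'\in\CUbk$ inside $D$; and finally split $q=vv^*$ with an isometry $v\in\caM$ via the comparison Lemma 5.10 of \cite{MTC} using the infinite subfactor $\pbd(\caA_{D})''$, so that $w:=uv$ satisfies $ww^*=uqu^*=p$ and $w\in\pbd(\caA_{(\lm r)^c\cap\hu})'\cap\fbd$. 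The delicate bookkeeping is that every unitary and isometry produced along the way must be kept in $\fbd$, not merely in $\bl$, so that the resulting $v$ lies in $(\gamma,\rho)_U$ and $\gamma\in\OrUbdl$; this is forced because all the approximating algebras already sit in $\fbd$. Once this splitting lemma is in hand, the computation of Lemma \ref{lem67} carries over unchanged and gives (iii).
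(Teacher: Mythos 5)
Your skeleton is exactly the paper's: the paper disposes of Theorem \ref{nagoya} with the words ``the same proof,'' meaning item (i) via Lemma \ref{lem22}(iii) and Lemma \ref{lem24}(iii), item (ii) via the $\OrUbdl$-branch already stated in Lemma \ref{lem61}, and item (iii) via an $\OrUbdl$-analogue of Lemma \ref{lem70} whose engine is an $\fbd$-valued version of the splitting Lemma \ref{lem66}. Your treatment of (i) and (ii) is correct, and you have correctly isolated that splitting lemma as the only genuinely new ingredient.

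However, your execution of the splitting lemma contains a step that fails. You put the auxiliary cone $D$ in $\Crbd$ and then require a cone $\Lambda\in\CUbk$ with $D,\lm r\subset\Lambda$. No such $\Lambda$ exists: a cone in $\Crbd$ contains points $(x,1)$ with $x$ arbitrarily large (take directions $\beta\to 0^+$), whereas for $\Lambda\in\CUbk$ the closure of $\arg\Lambda$ lies in $(0,\pi)$, so the points of $\Lambda$ at height $1$ have bounded $x$-coordinate; hence no cone of $\CUbk$ contains any cone of $\Crbd$. Consequently $\pbd\lmk\caA_{D}\rmk''\not\subset\pbd\lmk\caA_{\Lambda\cap\hu}\rmk''$, so your $\caN=\pbd\lmk\caA_D\rmk''$ is not a subalgebra of $\caM=\pbd\lmk\caA_{(\lm r)^c\cap\hu}\rmk'\cap\pbd\lmk\caA_{\Lambda\cap\hu}\rmk''$, and Lemma 5.10 of \cite{MTC} --- which manufactures the equivalence $q\sim\unit$ \emph{inside} $\caM$ precisely from a properly infinite subalgebra of $\caM$ commuting with $p$ --- cannot be invoked; your stated reason for the infiniteness of $\caM$ (a bulk cone inside $D$) collapses for the same reason. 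The repair is to follow the convention of Lemma \ref{lem66} and of the proof of Lemma \ref{lem67}: the localization cone $\Gamma_{\lm r}$ must indeed lie in $\Crbd$ (forced, since $\rho\in\OrUbdl$ only admits transporters to cones of $\Crbd$), but the auxiliary cone must be taken as $D\in\CUbk$ with $D\subset\lm r$ and $D\cap\Gamma_{\lm r}=\emptyset$. Then one only needs $D\subset\Lambda$, which the upward-filtering property of $\CUbk$ provides; this gives $\pbd\lmk\caA_D\rmk''\subset\caM$, and Assumption \ref{wakayama} makes $\pbd\lmk\caA_D\rmk''$ an infinite factor, which simultaneously yields the proper infiniteness of the factor $\caM$ and the subalgebra $\caN$ required by Lemma 5.10 of \cite{MTC}. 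With this correction, the rest of your chain (Lemma \ref{lem14}, Lemma \ref{lem64}, Lemma \ref{lem63}, then $w=uv$) goes through verbatim and reproduces the argument the paper intends.
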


\section{Braiding on $\Obul$}
In this section, we show that for $\Obul$, we may assign braiding.
We use the following notation.
\begin{nota}
For cones $\{\lm {1\alpha}\}\subset\CUbk$ 
(resp. $\{\lm {l2\alpha}\}\subset\Clbd$) and $\{\lm {r2\beta}\}\subset \Crbd$,
we write $\{\lm {1\alpha}\}\leftarrow_r\{\lm {r2\beta}\}$ (resp $\{\lm {l2\alpha}\}\leftarrow_r\{\lm {r2\beta}\}$) if
there exists an $\varepsilon>0$ such that
$\arg(\lm {1\alpha})_\varepsilon\cap \arg(\lm {r2\beta})_\varepsilon=\emptyset$
(resp. $\arg(\lm {l2\alpha})_\varepsilon\cap \arg(\lm {r2\beta})_\varepsilon=\emptyset$
) for all $\alpha,\beta$.
We also write $\lm 1\leftarrow_r\lm {r2}$ if $\{\lm 1\}\leftarrow_r\{\lm {r2}\}$.
\change{See Figure \ref{arrow} for an example of $\ld_l\leftarrow \ld_r$.}

For $\lm {1}=\Lambda_{\bm a,\theta,\varphi}\in\CUbk$,
$\lm {r2}\in \Crbd$ and $t\in \bbR$, we set
$\lm 1(t):=\lm 1+t\bm e_\theta$,
$\lm {r2}(t):=\lm{r2}+t\bm e_{0}$.

\end{nota}
\begin{figure}[htbp]
  \begin{minipage}[b]{0.5\linewidth}
    \centering
    \includegraphics[width=5cm]{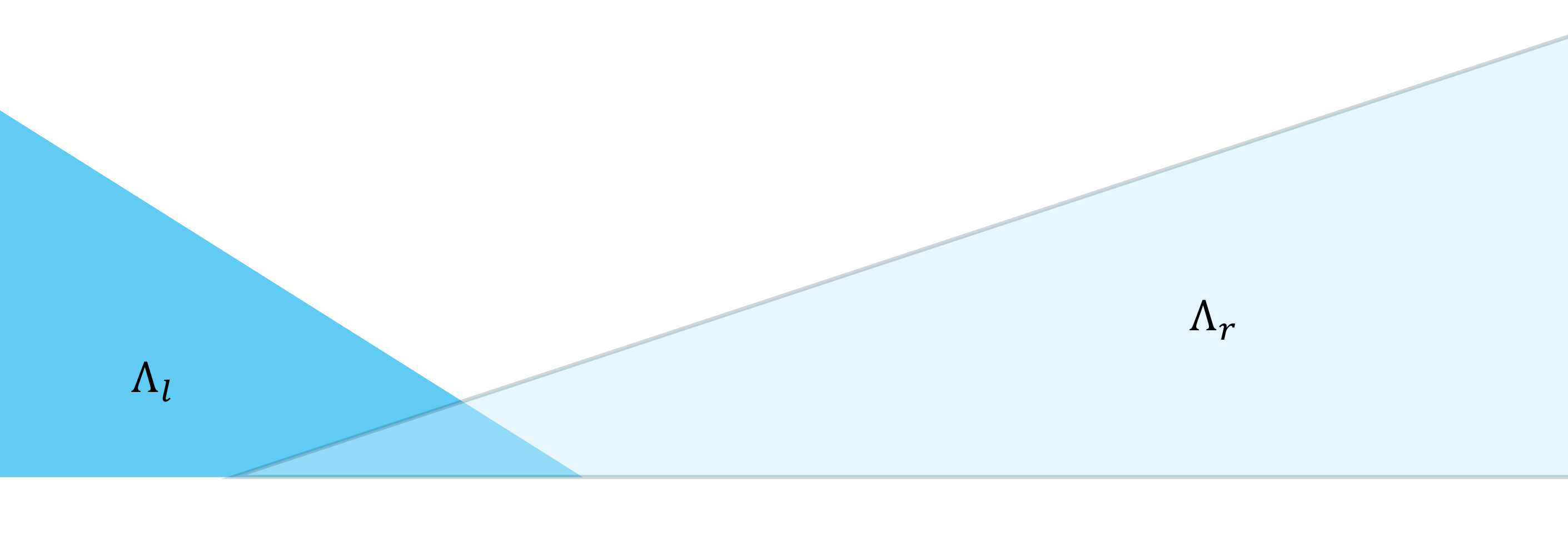}\\
\caption{}   \label{arrow}
   \end{minipage}
  \begin{minipage}[b]{0.5\linewidth}
    \centering
    \includegraphics[width=3cm]{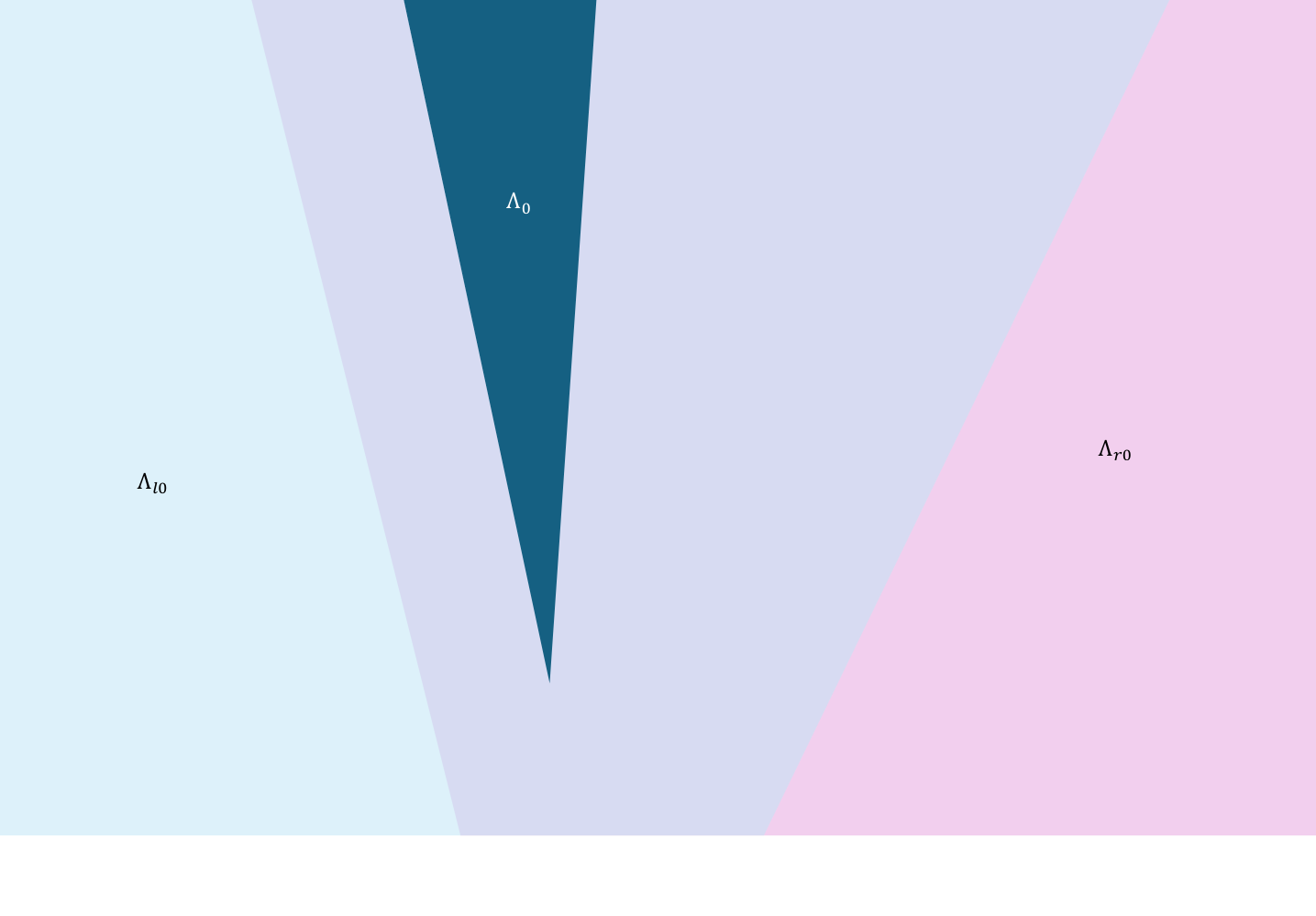}
    \caption{}\label{32}
  \end{minipage}
\end{figure}

We first investigate the asymptotic behaviors of our endomorphisms.

\begin{lem}\label{lem35}
Consider the setting in subsection \ref{setting2}. Assume Assumption \ref{assum80l}.
Let $\lm {l1}\in\Clbd$ and $\lm {r2}\in \Crbd$,
with $\lm {l1}\leftarrow_r\lm {r2}$.
Then for any $\sigma\in \Orbd$ and $\Vrl\sigma{\lm {r2}(t)}\in \Vbd\sigma{\lm {r2}(t)}$, $t\ge 0$, we have
\begin{align}
\lim_{t\to\infty}
\lV\left.
\lmk
\Tbdv\sigma{\lm {r2}(t)}-\id
\rmk\right\vert_{\pbd\lmk\caA_{(\lm {l1})^c\cap\hu} \rmk'}
\rV=0,
\end{align}
and
\begin{align}
\lim_{t\to\infty}
\sup_{\sigma\in O^{r}_{\mathop{\mathrm{bd},\lm {r2}(t)}}}
\lV\left.
\lmk
\Tbd\sigma{\lm {r2}(t)}\unit-\id
\rmk\right\vert_{\pbd\lmk\caA_{(\lm {l1})^c\cap\hu} \rmk'}
\rV=0.
\end{align}
\end{lem}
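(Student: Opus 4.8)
The plan is to play two facts against each other. On one hand, by \eqref{soto} the extension $\Tbdv\sigma{\lm {r2}(t)}$ is the identity on $\pbd\lmk\caA_{(\lm {r2}(t))^c\cap\hu}\rmk''$. On the other hand, the left approximate Haag duality (Assumption \ref{assum80l}) lets me approximate any element of $\pbd\lmk\caA_{(\lm {l1})^c\cap\hu}\rmk'$ by an operator localized in a slightly enlarged left cone. Since $\lm {r2}(t)$ recedes to the right as $t\to\infty$, for large $t$ this enlarged left cone lies inside $(\lm {r2}(t))^c\cap\hu$, so the extension fixes the approximant up to the Haag-duality error; sending that error to $0$ yields both limits.

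Write $\lm {l1}=\lef{a}{\varphi_1}$ and $\lm {r2}=\lr{b}{\varphi_2}$, and let $\varepsilon_0>0$ be the buffer from $\lm {l1}\leftarrow_r\lm {r2}$. First I would fix $\varepsilon,\delta>0$ with $\varepsilon+\delta<\varepsilon_0$. Given $x\in\pbd\lmk\caA_{(\lm {l1})^c\cap\hu}\rmk'$ with $\lV x\rV\le1$, part (i) of Assumption \ref{assum80l} produces $y:=\Ad\lmk\lmk U^{(l)\mopbd}_{\lm {l1},\varepsilon}\rmk^{*}\rmk(x)$ in $\pbd\lmk\caA_{\lef{a+R^{(l)\mopbd}_{\varphi_1,\varepsilon}}{(\varphi_1+\varepsilon)}}\rmk''$ with $\lV y\rV\le1$ and $x=\Ad\lmk U^{(l)\mopbd}_{\lm {l1},\varepsilon}\rmk(y)$. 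For $s\ge R^{(l)\mopbd}_{\varphi_1,\varepsilon}$, part (ii) supplies a unitary $\tilde U\in\pbd\lmk\caA_{\hat\Lambda}\rmk''\cap\fbd$, where $\hat\Lambda:=\lef{a+s}{(\varphi_1+\varepsilon+\delta)}$, with $\lV U^{(l)\mopbd}_{\lm {l1},\varepsilon}-\tilde U\rV\le f^{(l)\mopbd}_{\varphi_1,\varepsilon,\delta}(s)$. Then $\tilde x:=\Ad(\tilde U)(y)$ satisfies $\lV x-\tilde x\rV\le2f^{(l)\mopbd}_{\varphi_1,\varepsilon,\delta}(s)$, and since $s\ge R^{(l)\mopbd}_{\varphi_1,\varepsilon}$ gives $\lef{a+R^{(l)\mopbd}_{\varphi_1,\varepsilon}}{(\varphi_1+\varepsilon)}\subset\hat\Lambda$, both $\tilde U$ and $y$ lie in $\pbd\lmk\caA_{\hat\Lambda}\rmk''$, hence so does $\tilde x$.

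The geometric heart is to check that $\hat\Lambda$ and $\lm {r2}(t)$ are disjoint for all large $t$. Writing $u:=2(\varphi_1+\varepsilon+\delta)$ and $v:=2\varphi_2$, the cone $\hat\Lambda$ opens into the angular interval $(\pi-u,\pi)$ from its apex $(a+s,0)$, while $\lm {r2}(t)$ opens into $(0,v)$ from $(b+t,0)$; the choice $\varepsilon+\delta<\varepsilon_0$ keeps $u+v<\pi$. A common point would give, for some height $h>0$ and angles $\alpha_1\in[\pi-u,\pi]$, $\alpha_2\in(0,v)$, the identity $(b+t)-(a+s)=h\lmk\cot\alpha_1-\cot\alpha_2\rmk$. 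Since $\cot u+\cot v=\sin(u+v)/(\sin u\sin v)>0$ for $0<u+v<\pi$, one has $\cot\alpha_1-\cot\alpha_2<-\lmk\cot u+\cot v\rmk<0$, so the identity forces $b+t<a+s$. Hence for $t\ge a+s-b$ we get $\hat\Lambda\cap\hu\subset(\lm {r2}(t))^c\cap\hu$, and \eqref{soto} gives $\Tbdv\sigma{\lm {r2}(t)}(\tilde x)=\tilde x$. Therefore, using that $\Tbdv\sigma{\lm {r2}(t)}$ is a contractive $*$-homomorphism,
\begin{align*}
\lV\lmk\Tbdv\sigma{\lm {r2}(t)}-\id\rmk(x)\rV
&\le 2\lV x-\tilde x\rV+\lV\Tbdv\sigma{\lm {r2}(t)}(\tilde x)-\tilde x\rV\\
&\le 4f^{(l)\mopbd}_{\varphi_1,\varepsilon,\delta}(s),
\end{align*}
a bound independent of $x$ in the unit ball. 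Taking the supremum over such $x$ and then $\limsup_{t\to\infty}$ bounds the left-hand norm by $4f^{(l)\mopbd}_{\varphi_1,\varepsilon,\delta}(s)$ for every $s$, and letting $s\to\infty$ proves the first identity.

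For the second identity, every $\sigma\in O^{r}_{\mathop{\mathrm{bd},\lm {r2}(t)}}$ has $\unit\in\Vbd\sigma{\lm {r2}(t)}$, so \eqref{soto} applied to $\Tbd\sigma{\lm {r2}(t)}\unit$ again fixes $\tilde x$ by the same geometry; as the bound $4f^{(l)\mopbd}_{\varphi_1,\varepsilon,\delta}(s)$ refers only to $\pbd$ and the cones, it is uniform in $\sigma$, and the same limit gives the claim. The step I expect to be the main obstacle is the bookkeeping of the two independent scales: the radius $s$ controls the localization error $f^{(l)\mopbd}_{\varphi_1,\varepsilon,\delta}(s)$ but simultaneously enlarges $\hat\Lambda$, so the limits must be ordered as ``fix $s$, let $t\to\infty$, then $s\to\infty$'', and one must check that the enlargements $\varepsilon,\delta$ coming from the approximate Haag duality never erode the angular separation guaranteed by $\lm {l1}\leftarrow_r\lm {r2}$ — precisely the role of the constraint $\varepsilon+\delta<\varepsilon_0$. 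As a preliminary, the same approximation shows $\pbd\lmk\caA_{(\lm {l1})^c\cap\hu}\rmk'\subset\bl$, the boundary analogue of Lemma \ref{lem11}, so that $\Tbdv\sigma{\lm {r2}(t)}$ is genuinely defined on the operators in question.
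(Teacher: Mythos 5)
Your proof is correct and follows essentially the same route as the paper's: approximate an element of $\pbd\lmk\caA_{(\lm {l1})^c\cap\hu}\rmk'$ by one localized in an enlarged left cone using Assumption \ref{assum80l}, observe that this cone is eventually disjoint from the receding cone $\lm {r2}(t)$ so that (\ref{soto}) fixes the approximant, and conclude with the bound $4f^{(l)\mopbd}_{\varphi,\varepsilon,\delta}(s)$ via the limit ordering ``fix $s$, let $t\to\infty$, then $s\to\infty$,'' with the uniformity in $\sigma$ giving the second limit exactly as in the paper. The only differences are presentational: you spell out the cotangent geometry of the disjointness and the preliminary membership $\pbd\lmk\caA_{(\lm {l1})^c\cap\hu}\rmk'\subset\bl$, both of which the paper asserts without computation.
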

\change{
\begin{rem}
If the Haag duality holds,
then we have
\begin{align}
\pbd\lmk \caA_{(\lm {l1})^c\cap\hu}\rmk'= \pbd\lmk \caA_{\lm {l1}}\rmk''.
\end{align}
Hence if $\ld_{r2}(t)\cap\ld_{l1}=\emptyset$ (namely, if $t$ is large enough)
then we have
\begin{align}
\begin{split}
\Tbdv\sigma{\lm {r2}(t)}\vert_{\pbd\lmk\caA_{(\lm {l1})^c\cap\hu} \rmk'}
=
\id\vert_{\pbd\lmk\caA_{(\lm {l1})^c\cap\hu} \rmk'},
\end{split}
\end{align}
by (\ref{soto}).
The Lemma above is a ``tailed version" of this.
\end{rem}}
\begin{proof}
Let $\lm {l1}=\lef a \varphi\in \Clbd$.
Fix $\varepsilon>0$ such that $\arg(\lm {l1})_{2\varepsilon}\cap \arg(\lm {r2})_{2\varepsilon}
=\emptyset$.
Then there exists a $t_1\in\bbR_+$ such that
$\lmk \lef {a+s} {(\varphi+2\varepsilon)} \rmk\cap \lm {r2}(t)=\emptyset$ for all $t,s\in\bbR_+$ with
$t-s\ge t_1$.

By Assumption \ref{assum80l}, for any $s\ge R^{(l)\mopbd}_{\varphi,\varepsilon}$,
we have
\begin{align}
{\pbd\lmk\caA_{(\lm {l1})^c\cap\hu} \rmk'}\subset_{2 f^{(l)\mopbd}_{\varphi,\varepsilon,\varepsilon}(s)} \pbd\lmk 
 \caA_{
  \lef{a+s}{(\varphi+2\varepsilon)}
 }\rmk''.
\end{align}
(See Appendix \ref{basic} for the notation $\subset_\varepsilon$).
By Lemma \ref{lem20}, if $\lmk \lef {a+s} {(\varphi+2\varepsilon)} \rmk\cap \lm {r2}(t)=\emptyset$ ,
(hence if $t-s\ge t_1$)
we have 
\begin{align}\left.
\lmk
\Tbdv\sigma{\lm {r2}(t)}-\id
\rmk
\right\vert_{\pbd\lmk\caA_{ \lef{a+s}{(\varphi+2\varepsilon)} }\rmk''}=0.
\end{align}
Therefore, if $t-s\ge t_1$, we have
\begin{align}
\lV
\left.
\lmk
\Tbdv\sigma{\lm {r2}(t)}-\id
\rmk
\right\vert_{{\pbd\lmk\caA_{(\lm {l1})^c\cap\hu} \rmk'}}\rV
\le 4f^{(l)\mopbd}_{\varphi,\varepsilon,\varepsilon}(s).
\end{align}
This completes the proof of the first equality.
The proof of the second one is the same.
\end{proof}

The following Lemma gives an information on the support of $\Tbd\rho{\lzr}{\Vrl\rho\lz}$
for $\rho\in\Obu$
\begin{lem}\label{lem81}
Consider the setting in subsection \ref{setting2}.
Let $\lz\in\CUbk$, $\lzr\in\Crbd$, $\lm {l0}\in \Clbd$ satisfying
$\hu\cap\lz\subset \lzr\cap \lm {l0}$.
If $\rho\in\Obu$ and $\Vrl\rho\lz\in \Vbu\rho\lz\subset \VUbd\rho{\lzr}$ then 
\begin{align}\label{fukushima}\left.
\Tbd\rho{\lzr}{\Vrl\rho\lz}\right\vert_{\pbd\lmk \caA_{(\lm {l0})^c\cap\hu}\rmk''\cap\fbd}=\id,
\end{align}
and
\begin{align}\label{tochigi}
\left.
\Tbd\rho{\lzr}{\Vrl\rho\lz}\pbd\right\vert_{ \caA_{\lmk (\lm {l0})^c\cup (\lm {r0})^c\rmk\cap\hu}}
=\pbd.
\end{align}
\end{lem}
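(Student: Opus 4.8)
The plan is to prove the two equalities separately, getting \eqref{tochigi} for free from the localization of the transporter and then bootstrapping \eqref{fukushima} from it by a density argument. Write $T:=\Tbd\rho\lzr{\Vrl\rho\lz}$; this is well defined by Lemma \ref{lem20}, since $\Vrl\rho\lz\in\Vbu\rho\lz\subset\VUbd\rho\lzr\subset\Vbd\rho\lzr$, and I will use repeatedly that $T\circ\pbd=\Ad\lmk\Vrl\rho\lz\rmk\circ\rho$, that $T$ is $\sigma$-weakly continuous on $\pbd(\caA_{\lm l})''$ for every $\lm l\in\Clbd$, and that $T(\fbd)\subset\fbd$. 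First I would record the geometry forced by $\hu\cap\lz\subset\lzr\cap\lm{l0}$: from $\lz\cap\hu\subset\lm{l0}$ we get $(\lm{l0})^c\cap\hu\subset\lhucz$, and from $\lz\cap\hu\subset\lzr$ we get $(\lzr)^c\cap\hu\subset\lhucz$, hence
\begin{align*}
\lmk (\lm{l0})^c\cup(\lzr)^c\rmk\cap\hu\subset\lhucz .
\end{align*}
Since $\Vrl\rho\lz\in\Vbu\rho\lz$ means precisely $\Ad\lmk\Vrl\rho\lz\rmk\circ\rho=\pbd$ on $\caA_{\lhucz}$, the identity $T\circ\pbd=\Ad\lmk\Vrl\rho\lz\rmk\circ\rho$ gives $T\circ\pbd=\pbd$ on $\caA_{((\lm{l0})^c\cup(\lzr)^c)\cap\hu}$, which is exactly \eqref{tochigi}.

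For \eqref{fukushima} the same computation shows $T(\pbd(A))=\pbd(A)$ for every local $A$ supported in $(\lm{l0})^c\cap\hu$, because that region lies in $\lhucz$. The key intermediate step I would establish is that $T$ restricts to the identity on $\pbd(\caA_{B\cap\hu})''$ for every cone $B\in\CUbk$ with $B\cap\hu\subset(\lm{l0})^c\cap\hu$. Indeed such a boat $B\cap\hu$ is contained in some left cone $\lm l\in\Clbd$, so $\pbd(\caA_{B\cap\hu})''\subset\pbd(\caA_{\lm l})''$, where $T$ is $\sigma$-weakly continuous; as $\pbd(\caA_{B\cap\hu,\mathrm{loc}})$ is $\sigma$-weakly dense in $\pbd(\caA_{B\cap\hu})''$ and is fixed by $T$ (its generators sit in $(\lm{l0})^c\cap\hu\subset\lhucz$), continuity forces $T=\id$ on all of $\pbd(\caA_{B\cap\hu})''$. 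I emphasize that this uses only $\sigma$-weak continuity of $T$ on $\lm l$, not triviality of $T$ there, so it is harmless that $\lm l$ may meet $\lz$.

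It then remains to pass from these boats to a general $y\in\pbd(\caA_{(\lm{l0})^c\cap\hu})''\cap\fbd$. Since $T$ is a $*$-homomorphism, hence norm-contractive, it suffices to approximate $y$ in norm by elements of $\bigcup\{\pbd(\caA_{B\cap\hu})''\mid B\in\CUbk,\ B\cap\hu\subset(\lm{l0})^c\cap\hu\}$. For each $\varepsilon>0$ the membership $y\in\fbd$ supplies $\Lambda\in\CUbk$ and $b\in\pbd(\caA_{\Lambda\cap\hu})''$ with $\lV y-b\rV<\varepsilon$, while $y\in\pbd(\caA_{(\lm{l0})^c\cap\hu})''$ forces $y$ to commute with $\pbd(\caA_{\lm{l0}})$, so that $\lV ubu^*-b\rV\le 2\lV y-b\rV<2\varepsilon$ for every $u\in\caU\lmk\pbd(\caA_{\lm{l0}})\rmk$. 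I would then use the localization technique behind Lemma \ref{lem14} to replace $b$ by a nearby $z$ genuinely supported in $(\lm{l0})^c\cap\Lambda\cap\hu$; as the latter is a boat inside $(\lm{l0})^c\cap\hu$, the previous step gives $T(z)=z$, and letting $\varepsilon\to0$ yields $T(y)=y$.

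The hard part is exactly this last cutting-down. The region $(\lm{l0})^c\cap\hu$ is a right-boundary cone: it cannot be exhausted by left cones disjoint from $\lz$, and $T$ is not a priori $\sigma$-weakly continuous on $\pbd(\caA_{(\lm{l0})^c\cap\hu})''$, so one cannot simply take $\sigma$-weak limits there. The intersection with $\fbd$ is what saves the argument, but only after one converts a generic $\fbd$-approximant $b$ into an approximant localized \emph{inside} the right cone; the delicate point is to make $z$ land in an honest boat subalgebra $\pbd(\caA_{(\lm{l0})^c\cap\Lambda\cap\hu})''$ rather than merely in the relative commutant $\pbd(\caA_{\lm{l0}})'\cap\pbd(\caA_{\Lambda\cap\hu})''$, since on the latter the inner form $\Ad(\Vrl\rho\lz\,\Vrl\rho{K}^*)$ of $T$ on a left cone need not act trivially. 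This is precisely where the quantitative approximation estimates of Lemma \ref{lem14}, applied to the near-commutation of $b$ with $\pbd(\caA_{\lm{l0}})$, have to be pushed through.
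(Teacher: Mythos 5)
Your treatment of (\ref{tochigi}) is correct, and is in fact more direct than the paper's: the paper deduces (\ref{tochigi}) from (\ref{fukushima}) together with (\ref{soto}), whereas you obtain it in one step from $\Tbd\rho{\lzr}{\Vrl\rho\lz}\pbd=\Ad\lmk\Vrl\rho\lz\rmk\rho$ and the inclusion $\lmk(\lm{l0})^c\cup(\lzr)^c\rmk\cap\hu\subset\lz^c\cap\hu$. Your Steps A and B are also sound. The problem is (\ref{fukushima}), and the gap you yourself flag at the end is genuine and, on your route, not closable.

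Concretely: a general $y\in\pbd\lmk\caA_{(\lm{l0})^c\cap\hu}\rmk''\cap\fbd$ is by definition a \emph{norm} limit of elements of $\pbd(\caA_{\Lambda\cap\hu})''$, $\Lambda\in\CUbk$, but there is no reason it should be a norm limit of elements of boat algebras $\pbd(\caA_{B\cap\hu})''$ with $B\cap\hu\subset(\lm{l0})^c\cap\hu$, and your Step C needs exactly that. Lemma \ref{lem14} (or any conditional-expectation/averaging argument of that type) only delivers an approximant in the relative commutant $\pbd(\caA_{\lm{l0}})'\cap\pbd(\caA_{\Lambda\cap\hu})''$, and converting relative-commutant membership into membership in the honest boat algebra is a Haag-duality statement, which is precisely what is \emph{not} available: Lemma \ref{lem81} is stated in the bare setting of subsection \ref{setting2}, without even the approximate duality Assumptions \ref{assum80}, \ref{assum80l}. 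Moreover, asserting that $T$ acts trivially on that relative commutant (intersected with $\fbd$) is essentially the statement being proven, so the proposed repair is circular. A telling symptom is that your argument never uses the full hypothesis $\rho\in\Obu$ — only the single relation $\Ad\lmk\Vrl\rho\lz\rmk\rho=\pbd$ on $\caA_{\lz^c\cap\hu}$ — whereas that hypothesis is exactly the missing ingredient.

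The paper's proof avoids any approximation inside the right region by a mirror trick. Because $\rho\in\Obu$ has transporters $\Vbu\rho\Lambda$ for \emph{every} $\Lambda\in\CUbk$, in particular into cones lying inside arbitrary left cones, one can run Lemma \ref{lem20} with the roles of left and right interchanged and obtain a second extension $T^{(r)}$ of $\Ad\lmk\Vrl\rho\lz\rmk\rho$, defined and $\sigma$-weakly continuous on every $\pbd(\caA_{(\lm l)^c\cap\hu})''$, $\lm l\in\Clbd$, and satisfying the mirrored version of (\ref{soto}): since $\Ad\lmk\Vrl\rho\lz\rmk\rho$ is localized in $\lm{l0}$, $T^{(r)}$ is the identity on the \emph{entire} von Neumann algebra $\pbd\lmk\caA_{(\lm{l0})^c\cap\hu}\rmk''$, with no intersection with $\fbd$ needed. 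Then, because $T$ and $T^{(r)}$ agree on $\pbd(\abd)$ and are both $\sigma$-weakly continuous on each $\pbd(\caA_{\Lambda\cap\hu})''$, $\Lambda\in\CUbk$ (such a region sits inside some left cone and inside some right-cone region), they agree on all of $\fbd$; combining the two facts yields (\ref{fukushima}). In short, the $\sigma$-weak continuity of the mirrored extension does the job that your norm-approximation step cannot.
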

\change{
\begin{rem}
See Figure \ref{32}. The pink part is $\ld_{r0}$, and the light-blue part is $\ld_{l0}$.
The cone $\ld_0$ is the dark-blus part in the intersection of them.
Note that $\Tbd\rho{\lzr}{\Vrl\rho\lz}$ is not defined on the whole
$\pbd\lmk \caA_{(\lm {l0})^c\cap\hu}\rmk''$, but it is defined on
${\pbd\lmk \caA_{(\lm {l0})^c\cap\hu}\rmk''\cap\fbd}$.
\end{rem}}
\change{
\begin{rem}
We cannot replace $\fbd$ by $\bl$.
In fact this difference between $\bl$ and $\fbd$ is the main point of
Assumption \ref{raichi}.
\end{rem}
}
\begin{proof}
By interchanging the role of left and right in Lemma \ref{lem20},
we can define an endomorphism ${T_\rho^{\mathrm{(r)} \lm {l0}\Vrl{{\rho}}{\lz}}}$ of 
$\caB_r:=\overline{\cup_{\Lambda_r\in \Crbd}\pbd\lmk\caA_{\Lambda_r}\rmk''}^n
=\overline{\cup_{\lm l\in\Clbd}\pbd\lmk\caA_{\lm l^c\cap\hu}\rmk''}$
which is $\sigma$-weak continuous on any
$\pbd(\caA_{{(\lm {l})^c\cap \hu}})''$, $\lm{l}\in\Clbd$, and
\begin{align}\label{yamagata}
{T_\rho^{\mathrm{(r)} \lm {l0}\Vrl{{\rho}}{\lz}}}\pbd=\Ad\lmk \Vrl\rho\lz\rmk\rho=\Tbd\rho{\lzr}{\Vrl\rho\lz}\pbd
\end{align}
and 
\begin{align}\label{miyagi}
\left.{T_\rho^{\mathrm{(r)} \lm {l0}\Vrl{{\rho}}{\lz}}}\right\vert_{\pbd\lmk \caA_{(\lm {l0})^c\cap\hu}\rmk''}
=\id_{\pbd\lmk \caA_{(\lm {l0})^c\cap\hu}\rmk''}.
\end{align}By the $\sigma$-weak continuity and
(\ref{yamagata}), we have
${T_\rho^{\mathrm{(r)} \lm {l0}\Vrl{{\rho}}{\lz}}}\vert_{\fbd}=\Tbd\rho{\lzr}{\Vrl\rho\lz}\vert_{\fbd}$.
Combining this with (\ref{miyagi}),
we obtain (\ref{fukushima}).

By Lemma \ref{lem20}, we have $
\Tbd\rho{\lzr}{\Vrl\rho\lz}\vert_{\pbd\lmk \caA_{\lmk (\lm {r0})^c\rmk\cap\hu}\rmk}=\id
$. We also have $
\Tbd\rho{\lzr}{\Vrl\rho\lz}\vert_{\pbd\lmk \caA_{\lmk (\lm {l0})^c\rmk\cap\hu}\rmk}=\id
$  from (\ref{fukushima}).
Hence we obtain (\ref{tochigi}).
\end{proof}

\begin{lem}\label{lem82}
Consider the setting in subsection \ref{setting2}, and assume Assumption \ref{assum80}.
Let $(\lm 1,\lm {r1})\in \pc$, $\lm {r2}\in \Crbd$
with $\lm 1\leftarrow_r \lm {r2}$.
Let $\rho\in \Obu$.
Then the following hold.
\begin{description}
\item[(i)]
For any $\Vrl \rho{\lm 1}\in \Vbu\rho{\lm 1}$,
\begin{align}
\lim_{t\to\infty }
\lV\left.
\lmk \Tbd\rho{\lm {r1}}{\Vrl\rho{\lm 1}}-\id\rmk
\right\vert_{\pbd\lmk \caA_{\lmk \lm {r2}(t)\rmk^c\cap\hu}\rmk'\cap\fbd}
\rV=0.
\end{align}
\item[(ii)] 
For any 
$\Vrl\rho{\lm 1(t)}\in \Vbu\rho{\lm 1(t)}$, $t\ge 0$
we have
\begin{align}
\lim_{t\to\infty}
\lV
\left.
\lmk \Tbd\rho{\lm {r1}}{\Vrl\rho{\lm 1(t)}}-\id\rmk
\right\vert_{{\pbd\lmk \caA_{\lmk \lm {r2}\rmk^c\cap\hu}\rmk'\cap\fbd}}
\rV=0.
\end{align}
\end{description}
\end{lem}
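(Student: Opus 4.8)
The plan is to prove both statements by the mechanism already used for Lemma \ref{lem35}: reduce an arbitrary norm-$\le 1$ element of the test algebra to something localized in a slightly enlarged right cone by the right approximate Haag duality (Assumption \ref{assum80}), use the support information of Lemma \ref{lem81} to see that the endomorphism acts trivially on such localized elements, and control the residual error by the tail function $f^{(r)\mopbd}_{\varphi_2,\varepsilon,\delta}$. Since $\rho\in\Obu$ and $\Vrl\rho{\lm 1}\in\Vbu\rho{\lm 1}\subset\VUbd\rho{\lm {r1}}$ — using $(\lm 1,\lm {r1})\in\pc$, and in (ii) the fact that $\lm 1(t)=\lm 1+t\bm e_\theta\subset\lm {r1}$ for all $t\ge0$ because $\bm e_\theta$ lies in the direction cone of $\lm {r1}$ — Lemma \ref{lem20} ensures that the endomorphism maps $\fbd$ into $\fbd$, so every manipulation stays inside $\fbd$.

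For (i), fix $\varepsilon,\delta>0$ small enough that $2\varphi_2+2\varepsilon+2\delta$ still lies strictly below $\arg\lm 1$ (possible by $\lm 1\leftarrow_r\lm {r2}$), and fix a left cone $\lm {l0}\in\Clbd$ with $\lm 1\cap\hu\subset\lm {r1}\cap\lm {l0}$ whose argument separates $\arg\lm 1$ from $(0,2\varphi_2+2\varepsilon+2\delta)$. Set $T:=\Tbd\rho{\lm {r1}}{\Vrl\rho{\lm 1}}$. Given $x\in\pbd(\caA_{(\lm {r2}(t))^c\cap\hu})'\cap\fbd$ with $\lV x\rV\le 1$, Assumption \ref{assum80} applied to $\lm {r2}(t)=\lr{a_2+t}{\varphi_2}$ lets me write $x=\Ad(U)(b)$ with $U:=U^{(r)\mopbd}_{\lm {r2}(t),\varepsilon}\in\fbd$ and $b\in\pbd(\caA_{\lr{a_2+t-R}{\varphi_2+\varepsilon}})''\cap\fbd$, $\lV b\rV=\lV x\rV$, and approximate $U$ within $f^{(r)\mopbd}_{\varphi_2,\varepsilon,\delta}(s)$ by $\tilde U_s:=\tilde U^{(r)\mopbd}_{\lm {r2}(t),\varepsilon,\delta,s}\in\pbd(\caA_{\lr{a_2+t-s}{\varphi_2+\varepsilon+\delta}})''\cap\fbd$. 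Choosing $s=s(t)\to\infty$ with $t-s(t)\to\infty$ (e.g. $s(t)=t/2$), both cones $\lr{a_2+t-R}{\varphi_2+\varepsilon}$ and $\lr{a_2+t-s}{\varphi_2+\varepsilon+\delta}$ recede to the right and, by the angular separation, lie in $(\lm {l0})^c\cap\hu$ for $t$ large; hence Lemma \ref{lem81} gives $T(b)=b$ and $T(\tilde U_s)=\tilde U_s$. As $T$ is a unital $*$-endomorphism, $\lV T(U)-U\rV\le 2f^{(r)\mopbd}_{\varphi_2,\varepsilon,\delta}(s)$, and the elementary estimate $\lV T(U)bT(U)^*-UbU^*\rV\le 4f^{(r)\mopbd}_{\varphi_2,\varepsilon,\delta}(s)$ gives $\lV(T-\id)(x)\rV\le 4f^{(r)\mopbd}_{\varphi_2,\varepsilon,\delta}(s(t))\to0$, uniformly over $\lV x\rV\le 1$.

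For (ii), the reduction is identical but now uses the fixed cone $\lm {r2}=\lr{a_2}{\varphi_2}$, so the approximant $\tilde U_s\in\pbd(\caA_{\lr{a_2-s}{\varphi_2+\varepsilon+\delta}})''\cap\fbd$ sits in a right cone whose apex recedes to the \emph{left} as $s\to\infty$, while $T_t:=\Tbd\rho{\lm {r1}}{\Vrl\rho{\lm 1(t)}}$ is localized near the receding cone $\lm 1(t)$. The piece $b$ now lives in the \emph{fixed} cone $\lr{a_2-R}{\varphi_2+\varepsilon}$ and is killed by any fixed separating left cone $\lm {l0}\supset\lm 1$ (which also contains every $\lm 1(t)$, $t\ge0$, by the direction-cone argument above). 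The main obstacle is the remaining balance: making $f^{(r)\mopbd}_{\varphi_2,\varepsilon,\delta}(s)$ small forces $s\to\infty$, but then $\lr{a_2-s}{\varphi_2+\varepsilon+\delta}$ eventually meets any fixed $\lm {l0}$ (indeed $\bigcup_s\lr{a_2-s}{\varphi_2+\varepsilon+\delta}=\hu$), so $T_t(\tilde U_s)=\tilde U_s$ can fail. I resolve this by letting the separating cone $\lm {l0}(t)$ depend on $t$: since $\lm 1(t)$ recedes to infinity along $\bm e_\theta$ with $\theta$ angularly separated from $\arg\lm {r2}$, I slide the apex of $\lm {l0}(t)$ so that $\lm 1(t)\cap\hu\subset\lm {r1}\cap\lm {l0}(t)$ while the boundary part of $(\lm {l0}(t))^c\cap\hu$ grows to contain $\lr{a_2-s}{\varphi_2+\varepsilon+\delta}$ for all $s$ up to a threshold $s^*(t)\to\infty$. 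Taking $s(t)\le s^*(t)$ with $s(t)\to\infty$ then yields $T_t(b)=b$ and $T_t(\tilde U_{s(t)})=\tilde U_{s(t)}$ for large $t$, and the same estimate as in (i) gives $\lV(T_t-\id)(x)\rV\le 4f^{(r)\mopbd}_{\varphi_2,\varepsilon,\delta}(s(t))\to0$. The one genuinely nontrivial point is verifying that such a $t$-dependent $\lm {l0}(t)$ exists with $s^*(t)\to\infty$ — i.e. that the receding localization cone frees up room on the right at a linear rate — which is exactly the analogue, with left and right interchanged, of the cone bookkeeping in the proof of Lemma \ref{lem35}.
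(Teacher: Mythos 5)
Your proposal is correct and is essentially the paper's own proof: both parts rest on exactly the same ingredients --- Lemma \ref{lem81} giving triviality of the endomorphism on $\pbd\lmk\caA_{(\lm{l0})^c\cap\hu}\rmk''\cap\fbd$, Assumption \ref{assum80} to pull an element of the commutant into such an algebra up to the tail $f^{(r)\mopbd}$, and the cone-separation geometry --- and your decomposition $x=UbU^*$ with $\lV T(U)-U\rV\le 2f^{(r)\mopbd}(s)$ is just a reorganization of the paper's conjugation of $x$ by a near-identity unitary $W_t$ built from $U$ and its localized approximant. The one place you diverge is (ii), where the paper sidesteps the point you flag as genuinely nontrivial (a $t$-dependent separating cone with threshold $s^*(t)\to\infty$): it simply fixes $s$, takes the left cone to be $\lmk\lr{a-s}{\varphi+\varepsilon}\rmk^c\cap\hu$, obtains the uniform bound $4f^{(r)\mopbd}_{\varphi,\frac{\varepsilon}{2},\frac{\varepsilon}{2}}(s)$ for all $t\ge t_0(s)$, and only then lets $s\to\infty$ --- an iterated limit that makes your diagonal construction unnecessary.
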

\change{
\begin{rem}
If the Haag duality holds, then we 
have
\begin{align}
\pbd\lmk \caA_{\lmk \lm {r2}\rmk^c\cap\hu}\rmk'=\pbd\lmk \caA_{\lm {r2}\cap\hu}\rmk''.
\end{align}
In this case, if ${\lm 1(t)}\cap \lm {r2}=\emptyset$ (i.e., if $t$ is large enough),
then we have
\begin{align}
\Tbd\rho{\lm {r1}}{\Vrl\rho{\lm 1(t)}}\vert_{{\pbd\lmk \caA_{\lmk \lm {r2}\rmk^c\cap\hu}\rmk'\cap\fbd}}=
\id\vert_{{\pbd\lmk \caA_{\lmk \lm {r2}\rmk^c\cap\hu}\rmk'\cap\fbd}}
\end{align}
from Lemma \ref{lem81}.
The above is its ``tailed version''.
\end{rem}
}
\begin{proof}
The cone $\lm {r2}$ is of the form $\lm {r2}=\lr a\varphi$.
Because $\lm 1\leftarrow_r \lm {r2}$, there exists $\varepsilon>0$ such that 
$\arg \lmk \lm 1\rmk_{2\varepsilon}\cap \arg\lmk \lm{r2}\rmk_{2\varepsilon}=\emptyset$.
\\
(i)
Fix $b\in \bbR$ large enough so that
$\lm l:=\lmk\lr  {b} {(\varphi+\varepsilon)}\rmk^c\cap \hu\in\Clbd$ satisfies
$\Lambda_1\subset \lm l\cap\lm {r1}$.
We apply Lemma \ref{lem81} replacing
$\lz$, $\lm {r0}$, $\lm {l0}$, $\rho$, $\Vrl\rho{\lz}$ by
$\lm 1$, $\lm {r1}$, $\lm l$, $\rho$,
$\Vrl\rho{\lm 1}\in\Vbu\rho{\lm 1}$ respectively.
Then we obtain 
\begin{align}\label{taiwan}
\left.\Tbd\rho{\lm {r1}}{\Vrl\rho{\lm1}}\right\vert_{\pbd\lmk \caA_{(\lm {l})^c\cap\hu}\rmk''\cap\fbd}=\id.
\end{align}

By Assumption \ref{assum80}, for any $t\ge 2R^{(r)\mathop{\mathrm {bd}}}_{\varphi,\frac \varepsilon 2}$,
there exists $W_t\in \caU(\fbd)$ such that
\begin{align}\label{ibaraki}
&\pbd\lmk \caA_{\lmk \lm {r2}(t)\rmk^c\cap\hu}\rmk'
\subset
\Ad\lmk W_t\rmk\lmk
\pbd\lmk
\caA_{\lr {a+\frac t2}{\varphi+\varepsilon}}
\rmk''
\rmk,\\
&\lV
W_t-\unit
\rV\le {f_{\varphi, \frac \varepsilon 2,\frac \varepsilon 2 }^{(r)\mopbd}\lmk \frac t2\rmk}.
\end{align}
If $t\ge \max\{2(b-a),2R^{(r)\mathop{\mathrm {bd}}}_{\varphi,\frac \varepsilon 2}\}$, then 
$\lr {(a+\frac t2)}{\varphi+\varepsilon}\subset \lr {b}{\varphi+\varepsilon}=(\lm l)^c\cap\hu$,
and combining with (\ref{ibaraki}), we have
\begin{align}
\begin{split}
\pbd\lmk \caA_{\lmk \lm {r2}(t)\rmk^c\cap\hu}\rmk'
\subset
\Ad\lmk W_t\rmk
\lmk
\pbd\lmk
\caA_{(\lm l)^c\cap\hu}
\rmk''
\rmk.
\end{split}
\end{align}
Now for any $t\ge \max\{2(b-a),2R^{(r)\mathop{\mathrm {bd}}}_{\varphi,\frac \varepsilon 2}\}$
and  $x\in {\pbd\lmk \caA_{\lmk \lm {r2}(t)\rmk^c\cap\hu}\rmk'\cap\fbd}$,
we have $\Ad(W_t^*)(x)\in \pbd\lmk
\caA_{(\lm l)^c\cap\hu}
\rmk''\cap\fbd
$ and 
\begin{align}\label{gunma}
\begin{split}
&\lV\lmk \Tbd\rho{\lm {r1}}{\Vrl\rho{\lm 1}}-\id\rmk(x)\rV
\le
\lV\lmk \Tbd\rho{\lm {r1}}{\Vrl\rho{\lm 1}}-\id\rmk\lmk \Ad(W_t^*)(x)\rmk\rV
+2\lV x-\Ad(W_t^*)(x)\rV<4f_{\varphi, \frac \varepsilon 2,\frac \varepsilon 2 }^{(r)\mopbd}\lmk \frac t2\rmk\lV x\rV,
\end{split}
\end{align}
using (\ref{taiwan})
This proves (i).
\\
(ii)For each $s\ge R^{(r)\mathop{\mathrm {bd}}}_{\varphi,\frac \varepsilon 2}$,
by Assumption \ref{assum80}, there exists $\tilde W_s\in\caU(\fbd)$ such that 
\begin{align}
&\pbd\lmk \caA_{\lmk \lm {r2}\rmk^c\cap\hu}\rmk'
\subset
\Ad\lmk\tilde W_s\rmk
\lmk
\pbd\lmk
\caA_{\lr {a-s}{\varphi+\varepsilon}}
\rmk''
\rmk,\\
&\lV
\tilde W_s-\unit
\rV\le {f^{(r)\mopbd}_{\varphi, \frac \varepsilon 2,\frac \varepsilon 2 }(s)}
\end{align}
For each $s\ge 0$, there exists a $t_0(s)\ge 0$ such that
$\lm 1(t)\cap {\lr {a-s}{\varphi+\varepsilon}}=\emptyset$ for all $t\ge t_0(s)$.

Applying Lemma \ref{lem81} replacing
$\lz$, $\lm {r0}$, $\lm {l0}$, $\rho$, $\Vrl\rho{\lz}$ by
$\lm 1(t)$, $\lm {r1}$, $\lmk \lr {a-s}{\varphi+\varepsilon}\rmk^c\cap\hu$, $\rho$,
$\Vrl\rho{\lm 1(t)}\in\Vbu\rho{\lm 1(t)}$ respectively, for $t\ge t_0(s)$,
we obtain 
\begin{align}
\begin{split}
\left.\Tbd{\rho}{\lm {r1}}{\Vrl \rho{\lm 1(t)}}\right\vert_{\pbd\lmk
\caA_{{\lr {a-s}{\varphi+\varepsilon}}}\rmk''\cap\fbd}
=\id.
\end{split}
\end{align}
Now for any $x\in {{\pbd\lmk \caA_{\lmk \lm {r2}\rmk^c\cap\hu}\rmk'\cap\fbd}}
$,
 $s\ge R^{(r)\mathop{\mathrm {bd}}}_{\varphi,\frac \varepsilon 2}$ and 
 $t\ge t_0(s)$
and  we have $\Ad(\tilde W_s^*)(x)\in \pbd\lmk
\caA_{\lr {a-s}{\varphi+\varepsilon}}
\rmk''\cap\fbd
$ and 
\begin{align}\label{nagasaki}
\begin{split}
&\lV \lmk \Tbd\rho{\lm {r1}}{\Vrl\rho{\lm 1(t)}}-\id\rmk(x)\rV\\
&\le
\lV \lmk \Tbd\rho{\lm {r1}}{\Vrl\rho{\lm 1(t)}}-\id\rmk
\lmk \Ad(\tilde W_s^*)(x)\rmk\rV
+2\lV \Ad(\tilde W_s^*)(x)-x\rV
\le {4f^{(r)\mopbd}_{\varphi, \frac \varepsilon 2,\frac \varepsilon 2 }(s)}\lV x\rV.
\end{split}
\end{align}
Hence for any $s\ge R^{(r)\mathop{\mathrm {bd}}}_{\varphi,\frac \varepsilon 2}$ and 
 $t\ge t_0(s)$,
 we have
$\lV \lmk \Tbd\rho{\lm {r1}}{\Vrl\rho{\lm 1(t)}}-\id\rmk\vert_{{{\pbd\lmk \caA_{\lmk \lm {r2}\rmk^c\cap\hu}\rmk'\cap\fbd}}}\rV
\le {4f^{(r)\mopbd}_{\varphi, \frac \varepsilon 2,\frac \varepsilon 2 }(s)}$,
and this proves (ii).

\end{proof}
Using the Lemmas we obtained, we obtain the following.
\begin{lem}\label{lem38}
Consider the setting in subsection \ref{setting2}.
Assume Assumption \ref{assum80} and Assumption \ref{assum80l}.
Let $\rho,\rho'\in \Obu$, $\sigma,\sigma'\in \OrUbd$.
Let $\lm 1,\lm 1'\in \CUbk$, $\lm {r2},\lm {r2}'\in\Crbd$
with $\{\lm 1,\lm 1'\}\leftarrow_r \{\lm {r2},\lm {r2}'\}$.
Let $\lm {r1}, \lm {r1}'\in\Crbd$ with
$(\lm {1}, \lm {r1})\in \pc$, $(\lm {1}', \lm {r1}')\in \pc$.
Let $\Vrl\rho{\lm 1(t_1)}\in \Vbu\rho{\lm 1(t_1)}\subset \Vbd\rho{\lm {r1}}$,
$\Vrl{\rho'}{{\lm 1}'(t_1')}\in \Vbu{\rho'}{{\lm 1}'(t_1')}\subset \Vbd\rho{\lm {r1}'}$,
$\Vrl\sigma{\lm {r2}(t_2)}\in \VUbd\sigma{\lm {r2}(t_2)}$,
$\Vrl{\sigma'}{\lm {r2}'(t_2')}\in \VUbd{\sigma'}{\lm {r2}'(t_2')}$,
for $t_1,t_1',t_2,t_2'\ge 0$.
Then for any
\begin{align}
\begin{split}
S_1^{(t_1,t_1')}\in \lmk
\Tbd\rho{\lm {r1}}{\Vrl\rho{\lm 1(t_1)}},
\Tbd{\rho'}{\lm {r1}'}{\Vrl{\rho'}{\lm 1'(t_1')}}
\rmk_U,\\
S_2^{(t_2,t_2')}\in \lmk
\Tbd\sigma{\lm {r2}}{\Vrl\sigma{\lm {r2}(t_2)}},
\Tbd{\sigma'}{\lm {r2}'}{\Vrl{\sigma'}{\lm {r2}'(t_2')}}
\rmk_U
\end{split}
\end{align}
with $\lV S_1^{(t_1,t_1')}\rV, \lV S_2^{(t_2,t_2')}\rV\le 1$,
we have
\begin{align}
\lV
S_1^{(t_1,t_1')}\otimes S_2^{(t_2,t_2')}-
S_2^{(t_2,t_2')}\otimes S_1^{(t_1,t_1')}
\rV\to 0,\quad t_1,t_2,t_1',t_2'\to\infty.
\end{align}
\end{lem}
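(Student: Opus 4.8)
The plan is to unfold both tensor products with the formula of Lemma \ref{lem24} and show the two sides collapse to the same product in the limit. Abbreviate the source endomorphisms by $\alpha_1:=\Tbd\rho{\lm {r1}}{\Vrl\rho{\lm 1(t_1)}}$ and $\alpha_2:=\Tbd\sigma{\lm {r2}}{\Vrl\sigma{\lm {r2}(t_2)}}$, so that $S_1\otimes S_2=S_1\,\alpha_1(S_2)$ and $S_2\otimes S_1=S_2\,\alpha_2(S_1)$. The claim then reduces to three facts: (a) $\lV\alpha_1(S_2)-S_2\rV\to0$, (b) $\lV\alpha_2(S_1)-S_1\rV\to0$, and (c) $S_1S_2=S_2S_1$, since with $\lV S_1\rV,\lV S_2\rV\le1$ these give $\lV S_1\otimes S_2-S_2\otimes S_1\rV\le\lV\alpha_1(S_2)-S_2\rV+\lV\alpha_2(S_1)-S_1\rV$. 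To prepare the geometry, using the angular separation witnessing $\{\lm 1,\lm 1'\}\leftarrow_r\{\lm {r2},\lm {r2}'\}$, I would fix one right cone $\hlm{r2}=\lr a\varphi\in\Crbd$ containing $\lm {r2}\cup\lm {r2}'$ with $\{\lm 1,\lm 1'\}\leftarrow_r\hlm{r2}$, and one left cone $\lm {l1}\in\Clbd$ containing $(\lm 1\cup\lm 1')\cap\hu$ with $\lm {l1}\leftarrow_r\hlm{r2}$ (hence $\hlm{r2}\cap\lm {l1}=\emptyset$). This is possible because the arguments of $\lm 1,\lm 1'$ lie with margin above those of $\lm {r2},\lm {r2}'$, so the separation survives enlargement. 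Since $\lm 1(t)\subset\lm 1$ and $\lm {r2}(t)\subset\lm {r2}$ for $t\ge0$, the cones $\lm 1(t_1),\lm 1'(t_1')$ sit in $\lm {l1}$ and $\lm {r2}(t_2),\lm {r2}'(t_2')$ sit in $\hlm{r2}$ for all parameter values.

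Next I would localize the intertwiners uniformly in $t_1,t_1',t_2,t_2'$. For $S_1$: its source and target are extensions of $\rho,\rho'$ with excitations transported into $\lm 1(t_1),\lm 1'(t_1')\subset\lm {l1}$, so by (\ref{tochigi}) of Lemma \ref{lem81} (taking the left cone there to be $\lm {l1}$) both act as $\pbd$ on $\caA_{(\lm {l1})^c\cap\hu}$; intertwining then forces $S_1\in\pbd\lmk\caA_{(\lm {l1})^c\cap\hu}\rmk'$. For $S_2$: by (\ref{soto}) of Lemma \ref{lem20} its source and target are the identity on $\pbd\lmk\caA_{(\hlm{r2})^c\cap\hu}\rmk''$, and $S_2\in\fbd$ as a $U$-intertwiner, so $S_2\in\pbd\lmk\caA_{(\hlm{r2})^c\cap\hu}\rmk'\cap\fbd$. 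Facts (a) and (b) are then read off from the tailed lemmas: because $S_2$ lies in $\pbd\lmk\caA_{(\hlm{r2})^c\cap\hu}\rmk'\cap\fbd$ and $\lm 1\leftarrow_r\hlm{r2}$, Lemma \ref{lem82}(ii) (with $\hlm{r2}$ in place of $\lm {r2}$, and using $\alpha_1(\fbd)\subset\fbd$ from Lemma \ref{lem20}) gives (a) as $t_1\to\infty$; and because $S_1$ lies in $\pbd\lmk\caA_{(\lm {l1})^c\cap\hu}\rmk'$, $\lm {l1}\leftarrow_r\lm {r2}$, and $\alpha_2$ coincides with the endomorphism $\Tbdv\sigma{\lm {r2}(t_2)}$ of Lemma \ref{lem35} (same transporter, hence equal by the uniqueness in Lemma \ref{lem20}), Lemma \ref{lem35} gives (b) as $t_2\to\infty$.

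The main obstacle is fact (c), the exact commutation, which I would extract from approximate Haag duality at the boundary. Morally $S_1\in\pbd\lmk\caA_{(\lm {l1})^c\cap\hu}\rmk'$ commutes with anything supported in a right cone disjoint from $\lm {l1}$, but $S_2$ only lies in the commutant $\pbd\lmk\caA_{(\hlm{r2})^c\cap\hu}\rmk'$, not in an honest right-cone algebra. To bridge this I would apply Assumption \ref{assum80} to $\hlm{r2}$: setting $Y:=\Ad\lmk U^{(r)\mopbd}_{\hlm{r2},\varepsilon}\rmk^*(S_2)\in\pbd\lmk\caA_{\lr{a-R^{(r)\mopbd}_{\varphi,\varepsilon}}{\varphi+\varepsilon}}\rmk''$ with $\lV Y\rV=\lV S_2\rV$, and replacing $U^{(r)\mopbd}_{\hlm{r2},\varepsilon}$ by its localized approximant $\tilde U^{(r)\mopbd}_{\hlm{r2},\varepsilon,\delta,t}\in\pbd\lmk\caA_{\lr{a-t}{\varphi+\varepsilon+\delta}}\rmk''\cap\fbd$, I obtain $\tilde S_2:=\Ad\lmk\tilde U^{(r)\mopbd}_{\hlm{r2},\varepsilon,\delta,t}\rmk(Y)$, which lies in $\pbd\lmk\caA_{\Gamma\cap\hu}\rmk''$ for the right cone $\Gamma:=\lr{a-\max\{t,R^{(r)\mopbd}_{\varphi,\varepsilon}\}}{\varphi+\varepsilon+\delta}$ and satisfies $\lV S_2-\tilde S_2\rV\le2f^{(r)\mopbd}_{\varphi,\varepsilon,\delta}(t)\lV S_2\rV$. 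Taking $\varepsilon,\delta$ small, $\Gamma$ stays disjoint from $\lm {l1}$, so $\Gamma\cap\hu\subset(\lm {l1})^c\cap\hu$ and $\tilde S_2$ commutes with $S_1$; hence $\lV S_1S_2-S_2S_1\rV\le2\lV S_1\rV\lV S_2-\tilde S_2\rV\le4f^{(r)\mopbd}_{\varphi,\varepsilon,\delta}(t)$, and letting $t\to\infty$ with $t_1,t_1',t_2,t_2'$ held fixed forces $S_1S_2=S_2S_1$. Substituting (a), (b), (c) into the bound of the first paragraph finishes the proof; since the estimate is uniform in $t_1',t_2'$, only $t_1,t_2\to\infty$ is genuinely used.
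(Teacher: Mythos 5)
Your overall skeleton --- writing $S_1\otimes S_2=S_1\alpha_1(S_2)$, $S_2\otimes S_1=S_2\alpha_2(S_1)$, localizing $S_1$ in a commutant attached to the left cones and $S_2$ in one attached to the right cones, and then invoking the tailed Lemmas \ref{lem35} and \ref{lem82} --- is the same as the paper's, and your facts (a) and (b) are established correctly (your fixed-cone localizations $S_1\in\pbd\lmk\caA_{(\lm {l1})^c\cap\hu}\rmk'$ and $S_2\in\pbd\lmk\caA_{(\hlm{r2})^c\cap\hu}\rmk'\cap\fbd$, valid for all parameter values, are fine). The genuine gap is fact (c): exact commutation $S_1S_2=S_2S_1$ at \emph{fixed} $t_1,t_1',t_2,t_2'$ does not follow from your argument, and under approximate Haag duality it is not true in general. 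The problem is geometric: Assumption \ref{assum80}(ii) localizes the approximant $\tilde U^{(r)\mopbd}_{\hlm{r2},\varepsilon,\delta,t}$ in $\pbd\lmk\caA_{\lr{a-t}{\varphi+\varepsilon+\delta}}\rmk''$, a cone whose apex $(a-t,0)$ runs off to the \emph{left} as $t\to\infty$. Such a cone eventually meets \emph{every} left cone $\lm{l1}\in\Clbd$: once $a-t$ lies to the left of the apex of $\lm{l1}$, a point of $\lr{a-t}{\varphi+\varepsilon+\delta}$ sitting just above its own apex has argument close to $\pi$ when viewed from the apex of $\lm{l1}$, hence lies in $\lm{l1}$. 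Shrinking $\varepsilon,\delta$ controls only the opening angle, never the apex displacement $-t$. So the two requirements of your step (c) --- $f^{(r)\mopbd}_{\varphi,\varepsilon,\delta}(t)$ small (which forces $t$ large) and $\Gamma\cap\lm{l1}=\emptyset$ (which forces $t$ bounded) --- are incompatible, and $[S_1,\tilde S_2]=0$ is unavailable exactly when the approximation error is small. This is not a repairable detail: as the Remark following the Lemma in the paper indicates, exact vanishing of $S_1S_2-S_2S_1$ is what you get under \emph{exact} Haag duality, and the Lemma is precisely the ``tailed version'' needed because only the approximate form is assumed.

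The paper's way around this is to give up on exact commutation and apply Assumption \ref{assum80} to the \emph{rightward-shifted} cone $\hlm{r2}(t)$ rather than to a fixed cone: then the approximating region is $\lmk\lmk\hlm{r2}\rmk_\varepsilon\cap\hu\rmk+\frac t2\bm e_1$, whose apex moves to the right, so it stays inside $(\hlm 1(t))^c\cap\hu$, with a unitary $W_t\in\caU(\fbd)$ satisfying $\lV W_t-\unit\rV\to 0$. This gives $\lV [S_1^{(t_1,t_1')},S_2^{(t_2,t_2')}]\rV\le 4\lV W_t-\unit\rV$, but only on the shrinking commutant $\pbd\lmk\caA_{(\hlm{r2}(t))^c\cap\hu}\rmk'$, i.e.\ only for $t_2,t_2'\ge\tilde t_0(t)$, and the matching localization of $S_1$ in $\pbd\lmk\caA_{(\hlm 1(t))^c\cap\hu}\rmk'$ holds only for $t_1,t_1'\ge t_0(t)$. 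Hence the commutator vanishes only asymptotically, and all four parameters must diverge together; this also invalidates your closing remark that only $t_1,t_2\to\infty$ is genuinely used.
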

\begin{rem}
Assume the Haag duality to hold. Suppose that there are cones 
$\ld\in\CUbk$
$\lm l\in \Clbd$ and $\lm r\in \Crbd$ such that $\lm 1,\lm 1'\subset \ld\subset \lm l$,
$\lm {r2},\lm{r2}'\subset \lm r$ and $\lm l\cap\lm r=\emptyset$.
If $\rho\in \Obun{\lm1}$, $\rho'\in \Obun{\lm1'}$, 
$\sigma\in \OrUbdn{\lm{r2}}$, $\sigma'\in \OrUbdn{\lm{r2}'}$  
then for any 
\begin{align}
\begin{split}
S_1\in (\rho,\rho')_U\subset \pbd(\caA_{\lm l^c})'=\pbd(\caA_{\lm l})'',\\
S_2\in (\sigma,\sigma')_U\subset \pbd(\caA_{\lm r^c})'\cap \fbd
=\pbd(\caA_{\lm r})''\cap \fbd,
\end{split}
\end{align}
we have
\begin{align}
S_1\otimes S_2-S_2\otimes S_1=S_1\Tbd\rho{\lm {r1}} \unit(S_2)
-S_2\Tbd\sigma{\lm {r2}} \unit(S_1)=S_1S_2-S_2S_1=0,
\end{align}
\end{rem}
by Lemma \ref{lem81}, Lemma \ref{lem20}.
This Lemma is a ``tailed version '' of this.
\begin{proof}
By the assumption $\{\lm 1,\lm 1'\}\leftarrow_r \{\lm {r2},\lm {r2}'\}$,
there exist an $\varepsilon>0$ and ${\hlm {1}}\in\CUbk$, ${\hlm {r2}}\in \Crbd$
such that
\begin{align}
\begin{split}
 &\arg\lmk\lm 1\rmk_\varepsilon \cup \arg\lmk\lm 1'\rmk_\varepsilon\subset \arg{\hlm 1},\quad
 \arg\lmk\lm {r2}\rmk_\varepsilon\cup \arg\lmk\lm {r2}'\rmk_\varepsilon\subset \arg{\hlm {r2}},\\
 &(\hlm 1)_{\varepsilon}\cap(\hlm {r2})_\varepsilon=\emptyset,\quad
\arg (\hlm 1)_{\varepsilon}\cap\arg (\hlm {r2})_\varepsilon=\emptyset, \\
&{\hlm 1}\subset \hu,\quad |\arg\hlm 1|+4\varepsilon<2\pi.
\end{split}
\end{align}
By Lemma A.2 of \cite{MTC}, for any $t\ge 0$ fixed, there exists $t_0(t)\ge 0$
such that
\begin{align}\label{hiei}
\lm 1(t_1), \lm 1'(t_1')\subset \hlm 1(t),\quad
t_1,t_1'\ge t_0(t).
\end{align}
Similarly, for $t\ge 0$ fixed, there exists $\tilde t_0(t)\ge 0$ such that
\begin{align}\label{atago}
\lm {r2}(t_2), \lm {r2}'({t_2'})\subset \hlm{r2}(t),\quad 
t_2,t_2'\ge\tilde t_0(t).
\end{align}
By Lemma \ref{lem81} (\ref{tochigi}), for any $t\ge 0$, $t_1,t_1'\ge t_0(t)$,
$\Tbd\rho{\lm {r1}}{\Vrl\rho{\lm 1(t_1)}}\pbd$
and $\Tbd{\rho'}{\lm {r1}'}{\Vrl{\rho'}{\lm 1'(t_1')}}\pbd$
 are $\pbd$ on $\caA_{(\hlm 1(t))^c\cap \hu}$.
 Therefore,
as in Lemma 4.2 of \cite{MTC}, we have
\begin{align}\label{chiba}
\begin{split}
S_1^{(t_1,t_1')}\in \lmk
\Tbd\rho{\lm {r1}}{\Vrl\rho{\lm 1(t_1)}},
\Tbd{\rho'}{\lm {r1}'}{\Vrl{\rho'}{\lm 1'(t_1')}}
\rmk_U
\subset 
\pbd\lmk\caA_{(\hlm 1(t))^c\cap \hu}\rmk'\cap\fbd
\end{split}
\end{align}
for any $t\ge 0$, $t_1,t_1'\ge t_0(t)$.
Similarly, we have
\begin{align}\label{kanagawa}
S_2^{(t_2,t_2')}\in \lmk
\Tbd\sigma{\lm {r2}}{\Vrl\sigma{\lm {r2}(t_2)}},
\Tbd{\sigma'}{\lm {r2}'}{\Vrl{\sigma'}{\lm {r2}'(t_2')}}
\rmk_U\subset \pbd\lmk\caA_{(\hlm {r2}(t))^c\cap \hu}\rmk'\cap\fbd,
\end{align}
for any $t\ge 0$, $ t_2,t_2'\ge\tilde t_0(t)$.
As in the proof of Lemma \ref{lem82} (\ref{ibaraki}),
by Assumption \ref{assum80}
there exists $T_0\ge 0$ such that for any $t\ge T_0$
we have
\begin{align}
\begin{split}
&\pbd\lmk \caA_{\lmk \hlm {r2}(t)\rmk^c\cap\hu}\rmk'
\subset
\Ad\lmk W_t\rmk\lmk
\pbd\lmk
\caA_{\lmk \lmk \hlm{r2}\rmk_\varepsilon\cap\hu\rmk+\frac t 2\bm e_1}
\rmk''
\rmk,
\end{split}
\end{align}
with some $W_t\in\caU(\fbd)$ such that $\lim_{t\to\infty}\lV
W_t-\unit
\rV=0$.
Note that 
 $\lmk \lmk \hlm{r2}\rmk_\varepsilon\cap\hu\rmk+\frac t 2\bm e_1\subset (\hlm 1(t))^c\cap \hu$
 for any $t\ge 0$.
 Therefore, from (\ref{chiba}), we have
 \begin{align}
 \begin{split}
& \lV
 \left[S_1^{(t_1,t_1')}, S_2^{(t_2,t_2')}\right]
 \rV
\le 4\lV W_t-\unit\rV 
 \end{split}
 \end{align}
 for any $t\ge 0$, $t_1,t_1'\ge t_0(t)$ and $ t_2,t_2'\ge\tilde t_0(t)$.
 By Lemma \ref{lem35} and (\ref{chiba}), we have
 \begin{align}
 \begin{split}
 \lim_{t_1,t_1',t_2\to\infty} \lV \lmk\Tbdv\sigma{\lm {r2}(t_2)}-\id\rmk\lmk S_1^{(t_1,t_1')}\rmk\rV
 =0.
 \end{split}
 \end{align}
By Lemma \ref{lem82} and (\ref{kanagawa}), we have
\begin{align}
\begin{split}\lim_{t_2,t_2',t_1\to\infty}
\lV
\lmk \Tbd{\rho}{\lm {r1}}{\Vrl \rho{\lm 1(t_1)}}-\id\rmk
\lmk S_2^{(t_2,t_2')}\rmk
\rV=0.
\end{split}
\end{align}
Combining them we obtain
\begin{align}
\begin{split}
&\lim_{t_1,t_1',t_2,t_2'\to\infty}
\lV
S_1^{(t_1,t_1')}\otimes S_2^{(t_2,t_2')}-
S_2^{(t_2,t_2')}\otimes S_1^{(t_1,t_1')}
\rV\\
&=\lim_{t_1,t_1',t_2,t_2'\to\infty}
\lV S_1^{(t_1,t_1')} 
 \Tbd{\rho}{\lm {r1}}{\Vrl \rho{\lm 1(t_1)}}\lmk S_2^{(t_2,t_2')}\rmk
 -S_2^{(t_2,t_2')}\Tbdv\sigma{\lm {r2}(t_2)}\lmk S_1^{(t_1,t_1')} \rmk
\rV\\
&=\lim_{t_1,t_1',t_2,t_2'\to\infty} \lV
 \left[S_1^{(t_1,t_1')}, S_2^{(t_2,t_2')}\right]
 \rV=0.
\end{split}
\end{align}
\end{proof}
%

\change{The condition of cones in the previous Lemma can be relaxed.}
\begin{lem}\label{lem39}
The statement of Lemma \ref{lem38} holds even when
we replace the condition $\{\lm 1,\lm 1'\}\leftarrow_r \{\lm {r2},\lm {r2}'\}$,
by the condition $\lm 1\leftarrow_r \lm {r2}$, $\lm 1'\leftarrow_r \lm {r2}'$.
\end{lem}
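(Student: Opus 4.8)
The plan is to reduce the claim to Lemma \ref{lem38} by transporting the $\rho$-objects and the $\sigma$-objects to two auxiliary cones that are separated from all the cones in sight, chosen so that each separation hypothesis we end up invoking involves only a \emph{diagonal} pair $(\lm 1,\lm{r2})$ or $(\lm 1',\lm{r2}')$, for which a separation is assumed, and never an off-diagonal pair $(\lm 1,\lm{r2}')$ or $(\lm 1',\lm{r2})$, for which none is. First I would fix these auxiliary cones. Every cone in $\Crbd$ has argument interval of the form $(0,2\varphi)$ with $0<\varphi<\frac\pi2$, and every cone in $\CUbk$ has argument interval with strictly positive infimum; hence I can choose a thin $\hlm{r2}\in\Crbd$ (argument interval $(0,2\hat\varphi)$ with $\hat\varphi$ small) and a cone $\hlm 1\in\CUbk$ with argument interval close to $\pi$ so that
\[
\hlm 1\leftarrow_r\lm{r2},\quad \hlm 1\leftarrow_r\lm{r2}',\quad \hlm 1\leftarrow_r\hlm{r2},\quad \hlm{r2}\leftarrow_r\lm 1,\quad \hlm{r2}\leftarrow_r\lm 1'
\]
all hold, in addition to the hypotheses $\lm 1\leftarrow_r\lm{r2}$ and $\lm 1'\leftarrow_r\lm{r2}'$. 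Since all these relations constrain only argument intervals, they persist for the shifted cones $\hlm 1(s),\hlm{r2}(s'),\lm 1(t_1),\dots$.

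Next I would decompose $S_1:=S_1^{(t_1,t_1')}$ and $S_2:=S_2^{(t_2,t_2')}$ through these auxiliary cones. For $s,s'\ge 0$, choose transporters $\Vrl\rho{\hlm 1(s)}\in\Vbu\rho{\hlm 1(s)}$, $\Vrl{\rho'}{\hlm 1(s')}\in\Vbu{\rho'}{\hlm 1(s')}$, $\Vrl\sigma{\hlm{r2}(s)}\in\VUbd\sigma{\hlm{r2}(s)}$, $\Vrl{\sigma'}{\hlm{r2}(s')}\in\VUbd{\sigma'}{\hlm{r2}(s')}$ (these exist since $\rho,\rho'\in\Obu$, $\sigma,\sigma'\in\OrUbd$) and set the unitaries
\[
u_1:=\Vrl\rho{\hlm 1(s)}\Vrl\rho{\lm 1(t_1)}^*,\qquad u_2:=\Vrl{\rho'}{\lm 1'(t_1')}\Vrl{\rho'}{\hlm 1(s')}^*,
\]
\[
v_1:=\Vrl\sigma{\hlm{r2}(s)}\Vrl\sigma{\lm{r2}(t_2)}^*,\qquad v_2:=\Vrl{\sigma'}{\lm{r2}'(t_2')}\Vrl{\sigma'}{\hlm{r2}(s')}^*,
\]
all in $\fbd$, together with $\hat S_1:=u_2^*S_1u_1^*$ and $\hat S_2:=v_2^*S_2v_1^*$. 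By the intertwining relation recalled after Lemma \ref{lem20}, $u_1,u_2$ are intertwiners between the relevant extensions of $\rho,\rho'$, $v_1,v_2$ between those of $\sigma,\sigma'$, while $\hat S_1$ intertwines the extensions localized at $\hlm 1(s),\hlm 1(s')$ and $\hat S_2$ those at $\hlm{r2}(s),\hlm{r2}(s')$; moreover $S_1=u_2\hat S_1u_1$ and $S_2=v_2\hat S_2v_1$. Applying the functoriality of Lemma \ref{lem24}(ii) to these triple compositions gives
\[
S_1\otimes S_2=(u_2\otimes v_2)(\hat S_1\otimes\hat S_2)(u_1\otimes v_1),\qquad
S_2\otimes S_1=(v_2\otimes u_2)(\hat S_2\otimes\hat S_1)(v_1\otimes u_1).
\]

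I would then invoke Lemma \ref{lem38} three times. For the pair $(u_1,v_1)$ the relevant cones are $\{\lm 1,\hlm 1\}$ and $\{\lm{r2},\hlm{r2}\}$, and $\{\lm 1,\hlm 1\}\leftarrow_r\{\lm{r2},\hlm{r2}\}$ holds by the four relations $\lm 1\leftarrow_r\lm{r2}$, $\lm 1\leftarrow_r\hlm{r2}$, $\hlm 1\leftarrow_r\lm{r2}$, $\hlm 1\leftarrow_r\hlm{r2}$; for $(\hat S_1,\hat S_2)$ the cones are $\hlm 1$ and $\hlm{r2}$ with $\hlm 1\leftarrow_r\hlm{r2}$; and for $(u_2,v_2)$ the cones are $\{\hlm 1,\lm 1'\}$ and $\{\hlm{r2},\lm{r2}'\}$, with $\{\hlm 1,\lm 1'\}\leftarrow_r\{\hlm{r2},\lm{r2}'\}$ from $\hlm 1\leftarrow_r\hlm{r2}$, $\hlm 1\leftarrow_r\lm{r2}'$, $\lm 1'\leftarrow_r\hlm{r2}$, $\lm 1'\leftarrow_r\lm{r2}'$. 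In every case only diagonal pairs $(\lm 1,\lm{r2})$, $(\lm 1',\lm{r2}')$, or pairs involving the fully separated auxiliary cones, appear, so no unavailable off-diagonal separation is required. Lemma \ref{lem38} therefore yields that $\lV u_1\otimes v_1-v_1\otimes u_1\rV$, $\lV\hat S_1\otimes\hat S_2-\hat S_2\otimes\hat S_1\rV$ and $\lV u_2\otimes v_2-v_2\otimes u_2\rV$ all tend to $0$ as $t_1,t_1',t_2,t_2',s,s'\to\infty$.

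Finally, since every factor above is a product of the unitaries $u_i,v_i$ and the contractions $\hat S_i$ under the unital $*$-homomorphisms of Lemma \ref{lem20}, all six operators $u_i\otimes v_i$, $v_i\otimes u_i$, $\hat S_1\otimes\hat S_2$, $\hat S_2\otimes\hat S_1$ have norm at most $1$; a three-term triangle inequality applied to the two displayed products bounds $\lV S_1\otimes S_2-S_2\otimes S_1\rV$ by the sum of the three quantities above. As this left-hand side is independent of the auxiliary parameters $s,s'$, given $\eta>0$ I would choose $N$ so that the three quantities are each $<\eta/3$ whenever all six parameters exceed $N$, and then specialize $s=s'=N$ to conclude $\lV S_1\otimes S_2-S_2\otimes S_1\rV<\eta$ for all $t_1,t_1',t_2,t_2'\ge N$, giving the claim. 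The main obstacle is precisely the geometric bookkeeping of the first two paragraphs: arranging a single pair of auxiliary cones, and a decomposition through them, so that the three applications of Lemma \ref{lem38} never call on the two off-diagonal separations that the relaxed hypothesis does not provide.
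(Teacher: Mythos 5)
Your proof is correct, and it runs on the same engine as the paper's — repeated use of Lemma \ref{lem38} on charge transporters and on transported intertwiners, assembled with the functoriality of $\otimes$ from Lemma \ref{lem24} and a triangle inequality — but the cone bookkeeping is genuinely different. The paper imitates Lemma 4.9/4.10 of \cite{MTC}: it builds chains $\lm 1=\lm 1^{(0)},\dots,\lm 1^{(3)}=\Lambda_{\bm 0,\frac\pi 2,\frac\pi{64}}$ and $\lm{r2}=\lm{r2}^{(0)},\dots,\lm{r2}^{(3)}=\lr{0}{\frac{\pi}{64}}$ (plus primed chains), in which only \emph{consecutive} pairs are jointly separated, and transports $S_1,S_2$ step by step to these \emph{fixed} reference cones; a chain is forced in that scheme, since a fixed reference pair cannot be jointly separated in one step from an arbitrary admissible configuration (a wide $\lm{r2}$ overlaps $\Lambda_{\bm 0,\frac\pi 2,\frac\pi{64}}$ in argument). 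You instead take an auxiliary pair $(\hlm 1,\hlm{r2})$ \emph{adapted} to the four given cones, and your observation that such a pair always exists — every $\Crbd$-cone has argument $(0,2\varphi)$ with $2\varphi<\pi$, every $\CUbk$-cone has argument bounded away from $0$, so one can squeeze $\hlm 1$ above $\arg\lm{r2}\cup\arg\lm{r2}'$ and a thin $\hlm{r2}$ below $\arg\lm 1\cup\arg\lm 1'\cup\arg\hlm 1$ — is exactly what collapses the paper's seven applications of Lemma \ref{lem38} (three per chain plus one at the reference cones) to your three, at the cost that the auxiliary cones depend on the data, whereas the paper's reference cones do not. Two points in your write-up need the care you implicitly give them: (a) in the middle application, $\hat S_1,\hat S_2$ depend on all six parameters while the limit is taken in $s,s'$ only; this is legitimate because Lemma \ref{lem38}, as proved (all its estimates are uniform over intertwiners of norm at most $1$), is equivalent to the statement that the supremum of the commutator norm over the unit balls of the relevant intertwiner spaces tends to $0$, which is precisely the uniformity your $\eta/3$ argument uses; (b) writing $\hlm{r2}\leftarrow_r\lm 1$ and $\hlm{r2}\leftarrow_r\lm 1'$ reverses the order fixed in the paper's notation (the $\CUbk$-cone is written first), but since $\leftarrow_r$ is just disjointness of $\varepsilon$-fattened argument intervals, this is purely notational.
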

\begin{proof}
Note that analogous to Lemma 4.9 of \cite{MTC}, there are sequences of cones $\{\lm 1^{(i)}\}_{i=0}^3\subset\CUbk$,
$\{\lm {r2}^{(i)}\}_{i=0}^3\subset \Crbd$ such that
$\{\lm 1^{(i)},\lm 1^{(i+1)}\}\leftarrow_r \{ \lm {r2}^{(i)}, \lm {r2}^{(i+1)}\}$, $i=0,\ldots,2$
and $\lm 1^{(0)}=\lm 1$, $\lm {r2}^{(0)}=\lm {2r}$,
$\lm 1^{(3)}=\Lambda_{\bm 0, \frac\pi 2,\frac\pi{64}}$,
$\lm {r2}^{(3)}=\lr 0 {\frac\pi{64}}$.
Then the rest is the same as Lemma 4.10 of \cite{MTC}.
\kakunin{
Fix some $\tlm {r1}\in \Crbd$ including all $\lm 1^{(i)}$.
With $\Vrl{\rho}{\lm 1^{(i)}(t_1^{(i)})}\in \Vbu{\rho}{\lm 1^{(i)}(t_1^{(i)})}$,
$\Vrl{\sigma}{\lm {r2}^{(i)}(t_2^{(i)})}\in \VUbd \sigma {\lm {r2}^{(i)}}$,
we set
\begin{align}
W_{\rho}^{(i)}\lmk t_1^{(i)}, t_1^{(i+1)}\rmk:=
\Vrl{\rho}{\lm 1^{(i+1)}(t_1^{(i+1)})}\lmk \Vrl{\rho}{\lm 1^{(i)}(t_1^{(i)})}\rmk^*
\in \lmk
\Tbd\rho {\tlm {r1}}{\Vrl{\rho}{\lm 1^{(i)}(t_1^{(i)})}},
\Tbd\rho {\tlm {r1}}{\Vrl{\rho}{\lm 1^{(i+1)}(t_1^{(i+1)})}}
\rmk_U,
\end{align}
and
\begin{align}
\tilde W_{\sigma}^{(i)}\lmk t_2^{(i)}, t_2^{(i+1)}\rmk:=
\Vrl{\sigma}{\lm {r2}^{(i+1)}(t_2^{(i+1)})}\lmk \Vrl{\sigma}{\lm {r2}^{(i)}(t_2^{(i)})}\rmk^*
\in \lmk
\Tbdv\sigma {{\lm {r2}^{(i)}(t_2^{(i)})}},
\Tbdv\sigma {{\lm {r2}^{(i+1)}(t_2^{(i+1)})}}
\rmk_U.
\end{align}
Then by Lemma \ref{lem38}, we have
\begin{align}\label{thailand}
\begin{split}
\lim_{t_1^{(i)}, t_2^{(i)}, t_1^{(i+1)}, t_2^{(i+1)}\to\infty}
\lV
W_{\rho}^{(i)}\lmk t_1^{(i)}, t_1^{(i+1)}\rmk\otimes
\tilde W_{\sigma}^{(i)}\lmk t_2^{(i)}, t_2^{(i+1)}\rmk
-\tilde W_{\sigma}^{(i)}\lmk t_2^{(i)}, t_2^{(i+1)}\rmk\otimes W_{\rho}^{(i)}\lmk t_1^{(i)}, t_1^{(i+1)}\rmk
\rV=0.
\end{split}
\end{align}
We carry out exactly the same argument to $\lm 1'$, $\lm {r2}'$, and obtain the $'$-version of the objects above.

We set 
\begin{align}\label{tokyo}
\begin{split}
&R_{\bm t}:=
{W'}_{\rho'}^{(2)}\lmk {t_1'}^{(2)}, {t_1'}^{(3)}\rmk\cdots{W'}_{\rho'}^{(0)}\lmk {t_1'}^{(0)}, {t_1'}^{(1)}\rmk
S_1^{(t_1^{(0)},{t_1'}^{(0)})}
\lmk W_{\rho}^{(0)}\lmk t_1^{(0)}, t_1^{(1)}\rmk \rmk^*\cdots \lmk W_{\rho}^{(2)}\lmk t_1^{(2)}, t_1^{(3)}\rmk\rmk^*\\
&\in
\lmk \Tbd\rho{\tlm {r1}}{\Vrl \rho {\Lambda_{\bm 0, \frac\pi 2,\frac\pi{64}}(t_1^{(3)})}}, \Tbd{\rho'}{\tlm {r1}'}{\Vrl {\rho'} {\Lambda_{\bm 0, \frac\pi 2,\frac\pi{64}}({t_1'}^{(3)})}}\rmk,\\
&\tilde R_{\bm t}:=
{\tilde {W'}}_{\sigma'}^{(2)}\lmk {t_2'}^{(2)}, {t_2'}^{(3)}\rmk\cdots{\tilde{W'}}_{\sigma'}^{(0)}\lmk {t_2'}^{(0)}, {t_2'}^{(1)}\rmk
S_2^{(t_2^{(0)},{t_2'}^{(0)})}
\lmk W_{\sigma}^{(0)}\lmk t_2^{(0)}, t_2^{(1)}\rmk \rmk^*\cdots \lmk W_{\sigma}^{(2)}\lmk t_2^{(2)}, t_2^{(3)}\rmk\rmk^*\\
&\in
\lmk \Tbdv\sigma{{\lr 0 {\frac\pi{64}}(t_2^{(3)})}}, \Tbdv{\sigma'}{{\lr 0 {\frac\pi{64}}({t_2'}^{(3)})}}\rmk_U
.
\end{split}
\end{align}
Then by Lemma \ref{lem38}, we have
\begin{align}
\lim_{\bm t\to\infty}
\lV
R_{\bm t}\otimes \tilde R_{\bm t}-\tilde R_{\bm t}\otimes R_{\bm t}
\rV=0.
\end{align}
As in Lemma 4.10 of \cite{MTC}, combining this and (\ref{thailand}),
we obtain $\lV
S_1^{(t_1,t_1')}\otimes S_2^{(t_2,t_2')}-
S_2^{(t_2,t_2')}\otimes S_1^{(t_1,t_1')}
\rV\to 0$.
}
\end{proof}
Noe we are ready to introduce a half-braiding.
\begin{prop}\label{lem40}
Consider the setting in subsection \ref{setting2} and assume Assumption \ref{assum80}, \ref{assum80l}.
Let $(\lz,\lzr)\in \pc$, and  $\rho\in \Obul$, $\sigma\in \OrUbdl$.
Then the norm limit
\begin{align}\label{halfdef}
\begin{split}
\iota^{(\lz,\lzr)}\lmk\rho: \sigma\rmk
:=\lim_{t_1,t_2\to\infty} 
\lmk \Vrl\sigma{\lm {r2}(t_2)}\otimes \Vrl\rho{\lm 1(t_1)}\rmk^*
 \lmk \Vrl\rho{\lm 1(t_1)}\otimes  \Vrl\sigma{\lm {r2}(t_2)}\rmk\in\fbd
\end{split}
\end{align}
exists for any 
$\lm 1\in \CUbk$, $\lm {r2}\in \Crbd$
with $\lm 1\leftarrow_r \lm{r2}$, 
and $\Vrl\rho{\lm 1(t_1)}\in \Vbu\rho{\lm 1(t_1)}$, $t_1\ge 0$,
$\Vrl \sigma{\lm {r2}(t_2)}\in \VUbd\sigma{\lm {r2}(t_2)}$, $t_2\ge 0$,
and its value is independent of the choice of
$\lm 1\in \CUbk$, $\lm {r2}\in \Crbd$
with $\lm 1\leftarrow_r \lm{r2}$, 
and $\Vrl\rho{\lm 1(t_1)}\in \Vbu\rho{\lm 1(t_1)}$, $t_1\ge 0$,
$\Vrl \sigma{\lm {r2}(t_2)}\in \VUbd\sigma{\lm {r2}(t_2)}$, $t_2\ge 0$.
Furthermore, we have
\begin{align}\label{niigata}
\iota^{(\lz,\lzr)}\lmk\rho: \sigma\rmk=
\lim_{t_2\to\infty}\lmk \Vrl \sigma{\lm {r2}(t_2)}\rmk^*\Tbd\rho{\lzr}\unit
\lmk\Vrl \sigma{\lm {r2}(t_2)}\rmk.
\end{align}
for any 
$\lm {r2}\in \Crbd$ with $\lz\leftarrow_r \lm {r2}$
and
$\Vrl \sigma{\lm {r2}(t_2)}\in \VUbd\sigma{\lm {r2}(t_2)}$, $t_2\ge 0$.
\end{prop}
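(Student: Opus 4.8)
The plan is to treat $\Vrl\rho{\lm 1(t_1)}$ and $\Vrl\sigma{\lm {r2}(t_2)}$ as intertwiners --- the morphisms $\rho\to\rho_{t_1}:=\Ad\lmk\Vrl\rho{\lm 1(t_1)}\rmk\rho$ and $\sigma\to\sigma_{t_2}:=\Ad\lmk\Vrl\sigma{\lm {r2}(t_2)}\rmk\sigma$ --- and to run the standard DHR braiding argument adapted to the boundary. First I would record that, once $t_1,t_2$ are large enough that $\lm 1(t_1)$ and $\lm {r2}(t_2)$ are disjoint, Lemma \ref{lem81} together with (\ref{soto}) of Lemma \ref{lem20} forces $\rho_{t_1}\circ_{\lzr}\sigma_{t_2}=\sigma_{t_2}\circ_{\lzr}\rho_{t_1}$, since the support properties make the two extended endomorphisms commute. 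Hence the expression in (\ref{halfdef}) genuinely lies in $(\rho\otimes\sigma,\sigma\otimes\rho)_U\subset\fbd$, and expanding the tensor product of intertwiners by Lemma \ref{lem24} gives the abbreviation $A(t_1,t_2)$ for
\begin{align*}
\lmk\Vrl\sigma{\lm {r2}(t_2)}\otimes\Vrl\rho{\lm 1(t_1)}\rmk^*\lmk\Vrl\rho{\lm 1(t_1)}\otimes\Vrl\sigma{\lm {r2}(t_2)}\rmk
=\Tbd\sigma\lzr\unit\lmk\Vrl\rho{\lm 1(t_1)}\rmk^*\,\Vrl\sigma{\lm {r2}(t_2)}^*\,\Vrl\rho{\lm 1(t_1)}\,\Tbd\rho\lzr\unit\lmk\Vrl\sigma{\lm {r2}(t_2)}\rmk.
\end{align*}

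For existence of the double limit I would show that $A$ is norm-Cauchy. Writing $\Vrl\rho{\lm 1(t_1')}=W_\rho\,\Vrl\rho{\lm 1(t_1)}$ and $\Vrl\sigma{\lm {r2}(t_2')}=W_\sigma\,\Vrl\sigma{\lm {r2}(t_2)}$, with $W_\rho,W_\sigma$ the unitary ``move further out'' intertwiners between the translated objects, bifunctoriality (Lemma \ref{lem24}(ii)) gives $A(t_1',t_2')=\lmk\Vrl\sigma{\lm {r2}(t_2)}\otimes\Vrl\rho{\lm 1(t_1)}\rmk^*\lmk W_\sigma\otimes W_\rho\rmk^*\lmk W_\rho\otimes W_\sigma\rmk\lmk\Vrl\rho{\lm 1(t_1)}\otimes\Vrl\sigma{\lm {r2}(t_2)}\rmk$. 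Since $W_\rho,W_\sigma$ are intertwiners between translated endomorphisms and $\lm 1\leftarrow_r\lm {r2}$, Lemma \ref{lem39} yields $\lV W_\rho\otimes W_\sigma-W_\sigma\otimes W_\rho\rV\to0$, whence $\lmk W_\sigma\otimes W_\rho\rmk^*\lmk W_\rho\otimes W_\sigma\rmk\to\unit$ and $\lV A(t_1',t_2')-A(t_1,t_2)\rV\to0$ as all parameters tend to $\infty$; completeness of $\fbd$ then gives the limit. Independence of the choices of $\Vrl\rho{\lm 1(t_1)},\Vrl\sigma{\lm {r2}(t_2)}$ follows by the same device (a change of transporter inserts a $W$ between translated objects, killed by Lemma \ref{lem39}), and independence of the cones $\lm 1,\lm {r2}$ is obtained by passing to common enclosing cones exactly as in the proof of Lemma \ref{lem39}.

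For the formula (\ref{niigata}) I would evaluate the now-existing double limit as the iterated limit $\lim_{t_2\to\infty}\lim_{t_1\to\infty}$. Using $\Ad\lmk\Vrl\sigma{\lm {r2}(t_2)}\rmk\Tbd\sigma\lzr\unit=\Tbdv\sigma{\lm {r2}(t_2)}$ I would rewrite $A(t_1,t_2)=\Vrl\sigma{\lm {r2}(t_2)}^*\,\Tbdv\sigma{\lm {r2}(t_2)}\lmk\Vrl\rho{\lm 1(t_1)}\rmk^*\,\Vrl\rho{\lm 1(t_1)}\,\Tbd\rho\lzr\unit\lmk\Vrl\sigma{\lm {r2}(t_2)}\rmk$, so that everything reduces to the claim that the $\Vrl\rho{\lm 1(t_1)}$ factors cancel, i.e. $\Tbdv\sigma{\lm {r2}(t_2)}\lmk\Vrl\rho{\lm 1(t_1)}\rmk\to\Vrl\rho{\lm 1(t_1)}$ as $t_1\to\infty$ with $t_2$ fixed. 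Using the independence just proved, I would select $\Vrl\rho{\lm 1(t_1)}$ --- via the approximate Haag duality and the transporter support furnished by Lemma \ref{lem81} --- inside a cone extending in the $\bm e_{\lm 1}$ direction which, because $\lz\leftarrow_r\lm {r2}$, is disjoint from the fixed region $\lm {r2}(t_2)$; the tail estimates of Lemma \ref{lem35} and Lemma \ref{lem82} then force $\Tbdv\sigma{\lm {r2}(t_2)}$ to act as the identity on it up to an error tending to $0$. This gives $\lim_{t_1\to\infty}A(t_1,t_2)=\Vrl\sigma{\lm {r2}(t_2)}^*\,\Tbd\rho\lzr\unit\lmk\Vrl\sigma{\lm {r2}(t_2)}\rmk$, and letting $t_2\to\infty$ yields (\ref{niigata}).

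The hard part will be the last step: making the cancellation $\Tbdv\sigma{\lm {r2}(t_2)}\lmk\Vrl\rho{\lm 1(t_1)}\rmk\to\Vrl\rho{\lm 1(t_1)}$ rigorous. The charge transporter $\Vrl\rho{\lm 1(t_1)}$ is \emph{not} localized in $\lm 1(t_1)$ but spread along a ``string'' joining $\lz$ to $\lm 1(t_1)$, so the delicate point is to relocalize it, by the approximate Haag duality, into a cone disjoint from $\lm {r2}(t_2)$ while keeping the relevant error uniform and vanishing, controlling simultaneously the string near the apex and the tail of Lemma \ref{lem35}/Lemma \ref{lem82}. A secondary technical difficulty is to ensure that the Cauchy estimate of the second step is genuinely uniform in all four parameters and that the enclosing-cone hypotheses of Lemma \ref{lem39} are satisfied throughout.
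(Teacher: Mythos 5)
Your treatment of the first assertion (existence of the double limit, independence of cones and transporters, membership in $\fbd$) is correct and is essentially the paper's own argument: write the change of transporters as unitaries $W_\rho,W_\sigma$ between the translated objects, pull them through the tensor product with the interchange law of Lemma \ref{lem24}, and kill $\lmk W_\sigma\otimes W_\rho\rmk^{*}\lmk W_\rho\otimes W_\sigma\rmk-\unit$ with Lemma \ref{lem39}; note that membership in $\fbd$ needs no intertwiner property, only that all four factors lie in $\fbd$ (Lemma \ref{lem20}) and that $\fbd$ is norm closed. But your opening claim is false: disjointness of $\lm 1(t_1)$ and $\lm{r2}(t_2)$ does \emph{not} give the exact identity $\rho_{t_1}\circ_{\lzr}\sigma_{t_2}=\sigma_{t_2}\circ_{\lzr}\rho_{t_1}$. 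For $A$ localized near $\lm{r2}(t_2)$ one only knows $\sigma_{t_2}(A)\in\pbd\lmk\caA_{\lm{r2}(t_2)^{c}\cap\hu}\rmk'$, a commutant, whereas Lemma \ref{lem81} and (\ref{soto}) make the extension of $\rho_{t_1}$ act as the identity only on algebras of the form $\pbd\lmk\caA_{\cdot}\rmk''$ (intersected with $\fbd$). Identifying the commutant with a double commutant is Haag duality, which is exactly what is \emph{not} assumed here; this failure is the entire reason the paper proves only the ``tailed'' statements (Lemmas \ref{lem35}, \ref{lem82}, \ref{lem38}). The error is harmless for the first assertion (the intertwining property is Lemma \ref{lem41}, not part of this proposition), but the same misconception undermines your second step.

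The genuine gap is in your derivation of (\ref{niigata}). You fix $t_2$ and claim $\Tbdv\sigma{\lm{r2}(t_2)}\lmk\Vrl\rho{\lm 1(t_1)}\rmk\to\Vrl\rho{\lm 1(t_1)}$ as $t_1\to\infty$, by relocalizing the transporter into a cone stretching out in the $\bm e_{\lm 1}$ direction. This cannot be made to work: $\Vrl\rho{\lm 1(t_1)}$ is a string joining $\lz$ to $\lm 1(t_1)$, so one end stays anchored at $\lz$ for every $t_1$, and the error of any approximate-Haag-duality relocalization is governed by the separation between that anchored end and $\lm{r2}(t_2)$, which is controlled by $t_2$ alone. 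Concretely, in Lemma \ref{lem35} (equivalently in Assumptions \ref{assum80}, \ref{assum80l}) the vanishing error $f(s)$ is available only for $s$ up to roughly $t_2$ minus a constant, so at fixed $t_2$ you are left with a fixed positive error, uniform in $t_1$; nothing tends to zero as $t_1\to\infty$. The paper's resolution is the opposite maneuver, and it is where the independence statement earns its keep: take $\lm 1:=\lz$. Since $\lz(t_1)\subset\lz$, \emph{every} transporter $\Vrl\rho{\lz(t_1)}$ then lies in $\pbd\lmk\caA_{\lz^{c}\cap\hu}\rmk'\cap\fbd$ --- no string to relocalize and no $t_1$-dependence in the estimate --- and Lemma \ref{lem35}, applied with a left cone $\lm l\supset\lz\cap\hu$ satisfying $\lm l\leftarrow_r\lm{r2}$, gives $\lV\Tbdv\sigma{\lm{r2}(t_2)}\lmk\Vrl\rho{\lz(t_1)}^{*}\rmk\Vrl\rho{\lz(t_1)}-\unit\rV\to0$ as $t_2\to\infty$ \emph{uniformly in} $t_1$. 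Hence your $A(t_1,t_2)$ differs in norm from $\Vrl\sigma{\lm{r2}(t_2)}^{*}\Tbd\rho\lzr\unit\lmk\Vrl\sigma{\lm{r2}(t_2)}\rmk$ by a quantity vanishing as $t_2\to\infty$ uniformly in $t_1$, and this together with the already-established existence of the double limit yields (\ref{niigata}). In short: the correct order is a uniform-in-$t_1$ estimate followed by $t_2\to\infty$, not the iterated limit $\lim_{t_2}\lim_{t_1}$ you propose (whose inner limit, moreover, you never show to exist).
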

\begin{rem}\label{mike}
From (\ref{niigata}), what we are doing here can be understood as follows.
Recall that $\Vrl \sigma{\lm {r2}(t_2)}$ sends $\sigma$ particle from
$\lzr$ to ``right infinity " $\lm {r2}(t_2)$, as $t_2\to \infty$.
(\change{For example, recall in Toric code \cite{wa} (Example \ref{toric}),
a transporter $V_{\pi_{\gamma'}^Z\ld_{r2}(t_2)}$ is given
as a weak-limit of operators acting along the paths $\gamma_n$ depicted in Figure
\ref{36}.})
When we do so, our $\sigma$ particle may cross the string creating the $\rho$-particle.
What we are measuring with $\iota^{(\lz,\lzr)}\lmk\rho: \sigma\rmk$
is the braiding which is caused at the cross.
\change{Note that the limit was not needed in \cite{wa} to define the half-braiding.
This is because in Toric code, the Haag duality holds, and
$\lmk \Vrl \sigma{\lm {r2}(t_2)}\rmk^*\Tbd\rho{\lzr}\unit
\lmk\Vrl \sigma{\lm {r2}(t_2)}\rmk$ does not depend on $t_2$ if it is large enough.
}

\end{rem}
\begin{proof}
\kakunin{
For any $\lm 1,\lm 1'\in \CUbk$, $\lm {r2},\lm {r2}'\in \Crbd$
with $\lm 1\leftarrow_r \lm{r2}$,  $\lm 1'\leftarrow_r \lm{r2}'$.
Let $(\lm 1,\lm{r1}), (\lm 1',\lm{r1}')\in\pc$.
Let $\Vrl\rho{\lm 1(t_1)}\in \Vbu\rho{\lm 1(t_1)}$,
$\Vrl \sigma{\lm {r2}(t_2)}\in \VUbd\sigma{\lm {r2}(t_2)}$.
Then we have
\begin{align}\begin{split}
&\lmk \Vrl\sigma{\lm {r2}'(t_2')}\otimes \Vrl\rho{\lm 1'(t_1')}\rmk^*
 \lmk \Vrl\rho{\lm 1'(t_1')}\otimes  \Vrl\sigma{\lm {r2}'(t_2')}\rmk\\
& =\lmk \Vrl\sigma{\lm {r2}(t_2)}\otimes \Vrl\rho{\lm 1(t_1)}\rmk^*\\
&\lmk
 \Vrl\sigma{\lm {r2}'(t_2')}\lmk \Vrl\sigma{\lm {r2}(t_2)}\rmk^*
\otimes 
 \Vrl\rho{\lm 1'(t_1')}\Vrl\rho{\lm 1(t_1)}^*
\rmk^*\\
& \lmk
  \Vrl\rho{\lm 1'(t_1')}\lmk \Vrl\rho{\lm 1(t_1)}\rmk^*\otimes
  \Vrl\sigma{\lm {r2}'(t_2')}\lmk \Vrl\sigma{\lm {r2}(t_2)}\rmk^*\rmk
 \\
& \lmk \Vrl\rho{\lm 1(t_1)}\otimes  \Vrl\sigma{\lm {r2}(t_2)}\rmk
 \end{split}
\end{align}
By Lemma \ref{lem39}, we have
\begin{align}
\begin{split}
&\lV
\lmk
 \Vrl\sigma{\lm {r2}'(t_2')}\lmk \Vrl\sigma{\lm {r2}(t_2)}\rmk^*
\otimes 
 \Vrl\rho{\lm 1'(t_1')}\Vrl\rho{\lm 1(t_1)}^*
\rmk^*\lmk
  \Vrl\rho{\lm 1'(t_1')}\lmk \Vrl\rho{\lm 1(t_1)}\rmk^*\otimes
  \Vrl\sigma{\lm {r2}'(t_2')}\lmk \Vrl\sigma{\lm {r2}(t_2)}\rmk^*\rmk
-\unit
\rV\\
&\quad\to 0,\quad\quad\quad t_1,t_1',t_2,t_2'\to\infty.
\end{split}
\end{align}
Hence the norm limit (\ref{halfdef}) exists and it is independent of the choice of
$\lm 1\in \CUbk$ etc.}
The first part is identical to the proof of Lemma 4.11 \cite{MTC}.
Because $ \Vrl\sigma{\lm {r2}(t_2)}$ etc belong to $\fbd$, by Lemma \ref{lem20},
$\iota^{(\lz,\lzr)}\lmk\rho: \sigma\rmk$ belongs to $\fbd$.

For any 
$\lm {r2}\in \Crbd$  with $\lz\leftarrow_r \lm {r2}$
and
$\Vrl \sigma{\lm {r2}(t_2)}\in \VUbd\sigma{\lm {r2}(t_2)}$, $t_2\ge 0$,
set $\lm 1:=\lz \in \CUbk$, and take any $\Vrl\rho{\lm 1(t_1)}\in \Vbu\rho{\lm 1(t_1)}$, $t_1\ge 0$.
Then we have
\begin{align}
\begin{split}
&\lmk \Vrl\sigma{\lm {r2}(t_2)}\otimes \Vrl\rho{\lm 1(t_1)}\rmk^*
 \lmk \Vrl\rho{\lm 1(t_1)}\otimes  \Vrl\sigma{\lm {r2}(t_2)}\rmk\\
& =\lmk \Vrl\sigma{\lm {r2}(t_2)}\rmk^*\Tbdv \sigma {\lm {r2}(t_2)}\lmk \Vrl\rho{\lm 1(t_1)}^*\rmk
\Vrl\rho{\lm 1(t_1)}\Tbd\rho{\lz}\unit\lmk \Vrl\sigma{\lm {r2}(t_2)}\rmk.
\end{split}
\end{align}
Because $\lm 1:=\lz $, we have
$ \Vrl\rho{\lm 1(t_1)}\in
  \pbd\lmk \caA_{\lz^c\cap\hu }\rmk'\cap \fbd$.
 Because of $\lz\leftarrow_r \lm {r2}$, from Lemma \ref{lem35}, we have
 $\Tbdv \sigma {\lm {r2}(t_2)}\lmk \Vrl\rho{\lm 1(t_1)}^*\rmk
\Vrl\rho{\lm 1(t_1)}\to \unit$ , $t_2\to 0$ in norm, uniformly in $t_1\ge 0$.
Hence we obtain (\ref{niigata}).
\end{proof}
\begin{figure}[htbp]
  \begin{minipage}[b]{0.5\linewidth}
    \centering
    \includegraphics[width=5cm]{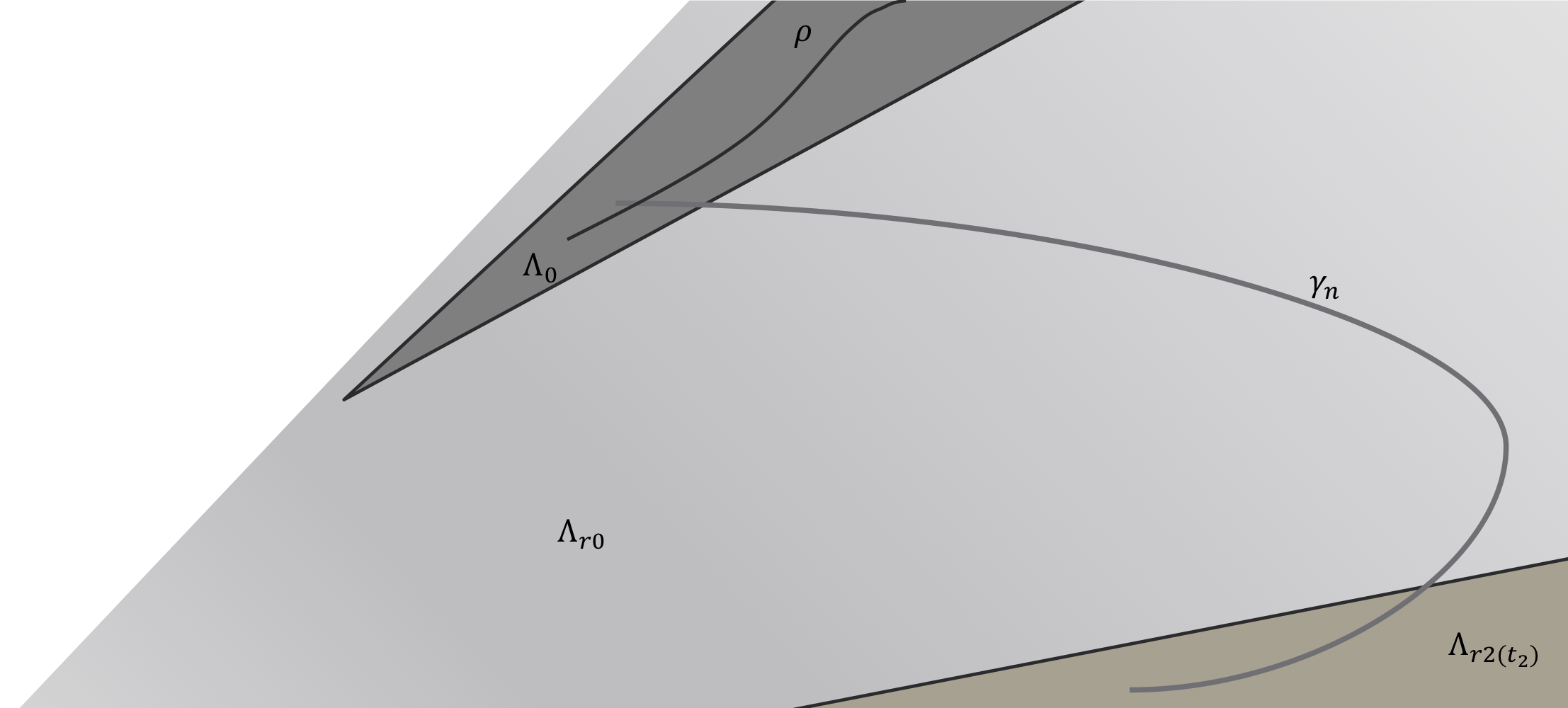}\\
\caption{}   \label{36}
   \end{minipage}
  \begin{minipage}[b]{0.5\linewidth}
    \centering
    \includegraphics[width=3cm]{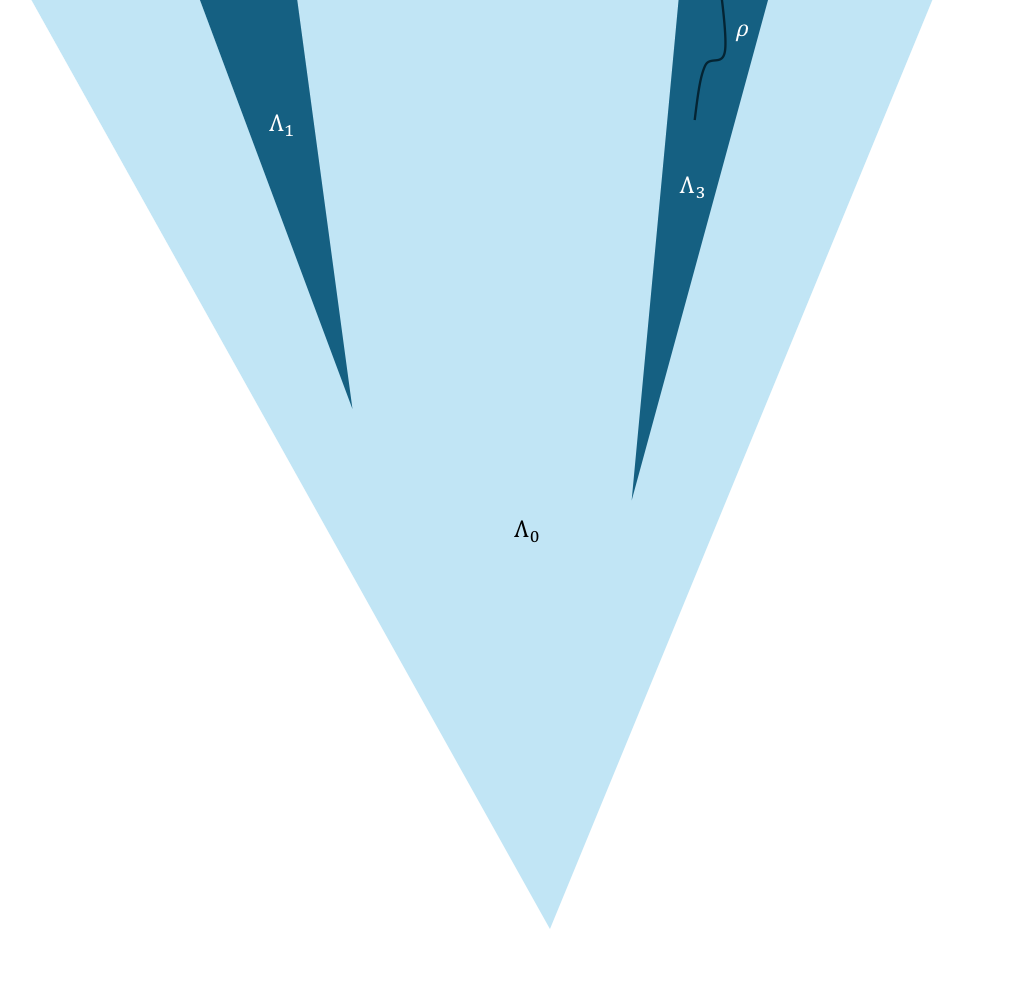}
    \caption{}\label{317}
  \end{minipage}
\end{figure}

\change{
The proof of the following Lemma, stating that $\iota^{(\lz,\lzr)}\lmk\rho: \sigma\rmk$  is a $\fbd$-valued
morphism from $\rho\otimes \sigma$ to $\sigma\otimes\rho$}
is identical to that of Lemma 4.13 \cite{MTC}.
\begin{lem}\label{lem41}
Consider the setting in subsection \ref{setting2} and assume Assumption \ref{assum80}, \ref{assum80l}.
Let $\rho\in \Obul$, $\sigma\in \OrUbdl$ and $(\lz,\lzr)\in \pc$.
Then we have$$\iota^{(\lz,\lzr)}\lmk\rho: \sigma\rmk\in \lmk \Tbd \rho{\lzr}{\unit}\Tbd\sigma{\lzr}\unit, \Tbd\sigma{\lzr}\unit \Tbd \rho{\lzr}{\unit}\rmk_U.$$
\end{lem}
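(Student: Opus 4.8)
The plan is to show that the unitary $\iota:=\hb\rho\sigma\in\fbd$ produced in Proposition \ref{lem40} intertwines the two composite endomorphisms $T_\rho T_\sigma$ and $T_\sigma T_\rho$ of $\bl$, where I abbreviate $T_\rho:=\Tbd\rho\lzr\unit$ and $T_\sigma:=\Tbd\sigma\lzr\unit$. Since $\iota\in\fbd$ already holds, membership in $\lmk T_\rho T_\sigma,\, T_\sigma T_\rho\rmk_U$ amounts to the operator identity $\iota\, T_\rho T_\sigma(x)=T_\sigma T_\rho(x)\,\iota$ for all $x\in\bl$. Both $T_\rho T_\sigma$ and $T_\sigma T_\rho$ are $\sigma$-weakly continuous on each $\pbd\lmk\caA_{\lm l}\rmk''$, $\lm l\in\Clbd$, by Lemma \ref{lem20}, and left/right multiplication by $\iota$ is norm continuous; since the von Neumann algebras $\pbd\lmk\caA_{\lm l}\rmk''$ generate $\bl$ and each is the $\sigma$-weak closure of $\pbd\lmk\alocg{\lm l}\rmk$, it suffices to verify the identity on local generators $x=\pbd(A)$ with $A\in\caA_\Lambda$, $\Lambda\Subset\hu$.

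First I would pass to the approximants of $\iota$. Write $V_\rho:=\Vrl\rho{\lm 1(t_1)}$, $V_\sigma:=\Vrl\sigma{\lm {r2}(t_2)}$ and $\iota_{t_1,t_2}:=\lmk V_\sigma\otimes V_\rho\rmk^*\lmk V_\rho\otimes V_\sigma\rmk$, so that $\iota=\lim_{t_1,t_2\to\infty}\iota_{t_1,t_2}$ in norm by Proposition \ref{lem40}. Reading $V_\rho$ and $V_\sigma$ as intertwiners of extended endomorphisms and using Lemma \ref{lem24}, $V_\rho\otimes V_\sigma$ is a unitary intertwiner from $T_\rho T_\sigma$ to $S_\rho S_\sigma$, and $V_\sigma\otimes V_\rho$ a unitary intertwiner from $T_\sigma T_\rho$ to $S_\sigma S_\rho$, where $S_\rho$ and $S_\sigma$ denote the extensions of the transported objects $\rho^{(t_1)}:=\Ad\lmk V_\rho\rmk\rho$ and $\sigma^{(t_2)}:=\Ad\lmk V_\sigma\rmk\sigma$, localized near $\lm 1(t_1)$ and $\lm {r2}(t_2)$ respectively. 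Substituting the two intertwining relations and using unitarity of $V_\rho\otimes V_\sigma$ and $V_\sigma\otimes V_\rho$ gives
\begin{align*}
\iota_{t_1,t_2}\, T_\rho T_\sigma(x)-T_\sigma T_\rho(x)\,\iota_{t_1,t_2}
=\lmk V_\sigma\otimes V_\rho\rmk^*\lmk S_\rho S_\sigma(x)-S_\sigma S_\rho(x)\rmk\lmk V_\rho\otimes V_\sigma\rmk,
\end{align*}
whose norm equals $\lV S_\rho S_\sigma(x)-S_\sigma S_\rho(x)\rV$.

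It then remains to see that this commutator vanishes for $x=\pbd(A)$ once $t_1,t_2$ are large. Because $\lm 1\leftarrow_r\lm {r2}$, for the fixed finite $\Lambda$ I can choose $t_1,t_2$ so large that $\Lambda\cap\lm 1(t_1)=\emptyset$ and $\Lambda\cap\lm {r2}(t_2)=\emptyset$, i.e. $\Lambda\subset\lmk\lm 1(t_1)\rmk^c\cap\lmk\lm {r2}(t_2)\rmk^c\cap\hu$. Since $\sigma^{(t_2)}$ is localized in $\lm {r2}(t_2)$ and $\rho^{(t_1)}$ in $\lm 1(t_1)$, the exact identity-on-the-complement properties of the extensions — equation (\ref{soto}), together with (\ref{fukushima}) of Lemma \ref{lem81}, applied to $S_\sigma$ and $S_\rho$ and using $\pbd(A)\in\fbd$ — give $S_\sigma\lmk\pbd(A)\rmk=\pbd(A)=S_\rho\lmk\pbd(A)\rmk$. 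Hence $S_\rho S_\sigma\lmk\pbd(A)\rmk=\pbd(A)=S_\sigma S_\rho\lmk\pbd(A)\rmk$, so the displayed commutator is identically zero for all sufficiently large $t_1,t_2$. Letting $t_1,t_2\to\infty$ and using $\iota_{t_1,t_2}\to\iota$ in norm, I conclude $\iota\, T_\rho T_\sigma\lmk\pbd(A)\rmk=T_\sigma T_\rho\lmk\pbd(A)\rmk\,\iota$ for every local $A$, and the continuity reduction above extends this to all of $\bl$.

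The step that needs the most care is the identification of the target endomorphisms $S_\rho,S_\sigma$ of the tensor-product intertwiners $V_\rho\otimes V_\sigma$ and $V_\sigma\otimes V_\rho$ and the verification that they are genuinely localized in $\lm 1(t_1)$ and $\lm {r2}(t_2)$, so that the exact localization identities (\ref{soto}) and (\ref{fukushima}) apply on $\fbd$; this is where the uniqueness clause of Lemma \ref{lem20} and the defining properties of $\Vbu\rho{\lm 1(t_1)}$ and $\VUbd\sigma{\lm {r2}(t_2)}$ are used. The separation hypothesis $\lm 1\leftarrow_r\lm {r2}$ is exactly what guarantees a common choice of large $t_1,t_2$ moving both localization cones off the fixed finite region $\Lambda$; this is the same geometric input already exploited in Lemmas \ref{lem38}--\ref{lem39}. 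Once it is in place the intertwining holds exactly on local elements, so — beyond the use of approximate Haag duality in Proposition \ref{lem40} to produce $\iota\in\fbd$ — no further tail estimate is required here, and the argument is indeed identical to that of Lemma 4.13 of \cite{MTC}.
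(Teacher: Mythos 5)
Your proof is correct and is essentially the paper's own argument (the one the paper attributes to Lemma 4.13 of \cite{MTC}): pass to the approximants $\iota_{t_1,t_2}$, use the tensor intertwiners $V_\rho\otimes V_\sigma$ and $V_\sigma\otimes V_\rho$ to replace $\Tbd\rho\lzr\unit\Tbd\sigma\lzr\unit$ and $\Tbd\sigma\lzr\unit\Tbd\rho\lzr\unit$ by the endomorphisms localized in $\lm 1(t_1)$ and $\lm {r2}(t_2)$, observe that these act as the identity on $\pbd(A)$ for local $A$ once $t_1,t_2$ are large, and conclude by the $\sigma$-weak continuity of the composite extensions. The only cosmetic difference is that you phrase the middle step as a vanishing commutator justified through (\ref{soto}) and Lemma \ref{lem81}, whereas the paper evaluates both composites on $\pbd(A)$ directly from the charge-transporter property of $\Vrl\rho{\lm 1(t_1)}$ and $\Vrl\sigma{\lm {r2}(t_2)}$ — the same localization fact.
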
\kakunin{
\begin{proof}
Let $(\lm 1,\lm{r1})\in\pc$.
For any $A\in \aloch$, we have
\begin{align}
\begin{split}
&\iota^{(\lz,\lzr)}\lmk\rho: \sigma\rmk \Tbd \rho{\lzr}{\unit}\Tbd\sigma{\lzr}\unit\pbd(A)\\
&=\lim_{t_1,t_2\to\infty} 
\lmk \Vrl\sigma{\lm {r2}(t_2)}\otimes \Vrl\rho{\lm 1(t_1)}\rmk^*
 \Tbd \rho{\lm {r1}}{\lm 1(t_1)}\Tbdv\sigma{\lm {r2}(t_2)}\pbd(A)
 \lmk \Vrl\rho{\lm 1(t_1)}\otimes  \Vrl\sigma{\lm {r2}(t_2)}\rmk\\
 &=\lim_{t_1,t_2\to\infty} 
\lmk \Vrl\sigma{\lm {r2}(t_2)}\otimes \Vrl\rho{\lm 1(t_1)}\rmk^*
\pbd(A)
 \lmk \Vrl\rho{\lm 1(t_1)}\otimes  \Vrl\sigma{\lm {r2}(t_2)}\rmk\\
&= \lim_{t_1,t_2\to\infty} 
\lmk \Vrl\sigma{\lm {r2}(t_2)}\otimes \Vrl\rho{\lm 1(t_1)}\rmk^*
 \Tbdv\sigma{\lm {r2}(t_2)}\Tbd \rho{\lm {r1}}{\lm 1(t_1)}\pbd(A)
 \lmk \Vrl\rho{\lm 1(t_1)}\otimes  \Vrl\sigma{\lm {r2}(t_2)}\rmk\\
 &=\Tbd\sigma{\lzr}\unit \Tbd \rho{\lzr}{\unit}\pbd(A)\iota^{(\lz,\lzr)}\lmk\rho: \sigma\rmk,
\end{split}
\end{align}
because  $A\in\caA_{(\lm 1(t_1)^c\cap (\lm {r2}(t_2))^c}$ eventually.
By the $\sigma$-weak continuity of $\Tbd \rho{\lzr}{\unit}\Tbd\sigma{\lzr}\unit=\Tbd {\rho\circ_{\lzr}\sigma}{\lzr}{\unit}$, etc
(Lemma \ref{lem20}, \ref{lem22}), this proves the claim.
\end{proof}}
The proof of the following Lemma \change{about the naturality of $\iota^{(\lz,\lzr)}$ in the second variable}
 is identical to that of Lemma 4.14 \cite{MTC},
using Lemma \ref{lem39}.

\begin{lem}\label{lem42}
Consider the setting in subsection \ref{setting2} and assume Assumption \ref{assum80}, \ref{assum80l}.
Let $(\lz,\lzr)\in \pc$ and $\rho\in \Obul$, $\sigma,\sigma'\in \OrUbdl$.
Then for any 
$S\in \lmk\Tbd\sigma{\lzr}\unit, \Tbd{\sigma'}{\lzr}\unit\rmk_U $,
we have
\begin{align}
\iota^{(\lz,\lzr)}\lmk\rho: \sigma'\rmk\lmk \unit_\rho\otimes S\rmk
=\lmk S\otimes \unit_\rho \rmk\iota^{(\lz,\lzr)}\lmk\rho: \sigma\rmk.
\end{align}
\end{lem}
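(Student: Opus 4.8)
The plan is to reduce the stated naturality to a single operator identity and then prove that identity through the charge-transporter formula (\ref{niigata}) together with the localization estimate of Lemma \ref{lem82}. First I unwind the tensor products of morphisms using Lemma \ref{lem24}: since $S\in\lmk\Tbd\sigma\lzr\unit,\Tbd{\sigma'}\lzr\unit\rmk_U$ and $\unit_\rho$ is the identity intertwiner of $\Tbd\rho\lzr\unit$, one gets $\unit_\rho\otimes S=\Tbd\rho\lzr\unit(S)$ and $S\otimes\unit_\rho=S\,\Tbd\sigma\lzr\unit(\unit)=S$, so the assertion is equivalent to
\begin{align*}
\iota^{(\lz,\lzr)}\lmk\rho:\sigma'\rmk\,\Tbd\rho\lzr\unit(S) = S\,\iota^{(\lz,\lzr)}\lmk\rho:\sigma\rmk.
\end{align*}
I then fix one $\lm {r2}\in\Crbd$ with $\lz\leftarrow_r\lm {r2}$ and express both half-braidings via (\ref{niigata}), taking $\lm 1=\lz$ and $\Vrl\rho\lz=\unit\in\Vbu\rho\lz$ (legitimate because $\rho\in\Obul$), with transporters $\Vrl\sigma{\lm {r2}(t_2)}\in\VUbd\sigma{\lm {r2}(t_2)}$ and $\Vrl{\sigma'}{\lm {r2}(t_2)}\in\VUbd{\sigma'}{\lm {r2}(t_2)}$.

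The key device is the ``relative transporter''
\begin{align*}
S_{t_2}:=\Vrl{\sigma'}{\lm {r2}(t_2)}\,S\,\lmk\Vrl\sigma{\lm {r2}(t_2)}\rmk^*.
\end{align*}
Using the intertwining relation for $S$ and the fact that $\Tbdv\sigma{\lm {r2}(t_2)}$ and $\Tbdv{\sigma'}{\lm {r2}(t_2)}$ restrict to the identity on $\pbd\lmk\caA_{(\lm {r2}(t_2))^c\cap\hu}\rmk''$ by (\ref{soto}), I first verify $S_{t_2}\in\pbd\lmk\caA_{(\lm {r2}(t_2))^c\cap\hu}\rmk'\cap\fbd$ with $\lV S_{t_2}\rV=\lV S\rV$. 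Writing $\iota^{(\lz,\lzr)}\lmk\rho:\sigma'\rmk\,\Tbd\rho\lzr\unit(S)$ through (\ref{niigata}), I substitute $\Vrl{\sigma'}{\lm {r2}(t_2)}S=S_{t_2}\Vrl\sigma{\lm {r2}(t_2)}$, use multiplicativity of the $*$-homomorphism $\Tbd\rho\lzr\unit$, and then invoke Lemma \ref{lem82}(i) (with $\lm 1=\lz$, $\lm {r1}=\lzr$, $\Vrl\rho\lz=\unit$) to replace $\Tbd\rho\lzr\unit(S_{t_2})$ by $S_{t_2}$ up to an error bounded by $\lV S\rV$ times a quantity that vanishes as $t_2\to\infty$. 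Cancelling $\lmk\Vrl{\sigma'}{\lm {r2}(t_2)}\rmk^*\Vrl{\sigma'}{\lm {r2}(t_2)}=\unit$ and pulling $S$ out to the left leaves exactly $S\,\iota^{(\lz,\lzr)}\lmk\rho:\sigma\rmk$ by another application of (\ref{niigata}).

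The main obstacle is the localization bookkeeping inside $\fbd$: everything hinges on checking that $S_{t_2}$ genuinely lands in $\pbd\lmk\caA_{(\lm {r2}(t_2))^c\cap\hu}\rmk'\cap\fbd$, which is precisely the domain on which Lemma \ref{lem82}(i) furnishes a uniform, $t_2$-decaying bound; this $\fbd$-membership (rather than mere membership in $\caB(\hbd)$) is the new point compared with the bulk argument. Existence and independence of the limits is already supplied by Proposition \ref{lem40}, which itself rests on the asymptotic commutativity of Lemma \ref{lem39}, so the remaining steps are the routine norm estimates indicated above; the structure is otherwise identical to that of Lemma 4.14 of \cite{MTC}.
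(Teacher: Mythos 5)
Your proof is correct, and it is worth noting how it differs from the paper's. The paper's proof (modeled on Lemma 4.14 of \cite{MTC}) starts from the symmetric two-transporter definition (\ref{halfdef}), keeps $\Vrl\rho{\lm 1(t_1)}$ general, and after the tensor-calculus manipulations reduces the difference to
$\unit_{\Tbd\rho{\lm {r1}}{\lm 1(t_1)}}\otimes S_{t_2}-S_{t_2}\otimes \unit_{\Tbd\rho{\lm {r1}}{\lm 1(t_1)}}$
with $S_{t_2}=\Vrl{\sigma'}{\lm {r2}(t_2)} S \lmk \Vrl\sigma{\lm {r2}(t_2)}\rmk^*$, which vanishes in the double limit $t_1,t_2\to\infty$ by the asymptotic commutativity Lemma \ref{lem39}. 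You instead exploit from the outset that $\rho\in\Obul$ is localized in $\lz$, so $\unit\in\Vbu\rho\lz$ is admissible and the one-sided formula (\ref{niigata}) applies; this collapses the argument to a single limit in $t_2$, with the whole estimate carried by one application of Lemma \ref{lem82}(i) to the same relative transporter $S_{t_2}$ (whose membership in $\pbd\lmk\caA_{(\lm {r2}(t_2))^c\cap\hu}\rmk'\cap\fbd$ you correctly identify as the crucial bookkeeping point — this is the standard superselection argument, cf.\ (\ref{kanagawa}) in the proof of Lemma \ref{lem38}). The two proofs rest on the same underlying mechanism — asymptotic triviality of the extended action of $\rho$ on operators localized far to the right — but your packaging bypasses Lemma \ref{lem39} (which itself is derived from Lemma \ref{lem38} via cone-interpolation) and is therefore shorter and more self-contained for this particular statement; the paper's version buys uniformity with the bulk treatment and does not privilege the special transporter $\unit$, so the identical computation also proves the analogous naturality in settings where $\rho$ is not localized in $\lz$. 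One cosmetic remark: your citation of (\ref{soto}) for the localization of $S_{t_2}$ is slightly indirect — what is really used is the defining property of the charge transporters ($\Ad(V)\circ\sigma=\pbd$ on $\caA_{(\lm{r2}(t_2))^c\cap\hu}$), or equivalently (\ref{soto}) for the translated cone $\lm{r2}(t_2)$ together with $\sigma$-weak continuity — but the conclusion and the bound $\lV S_{t_2}\rV=\lV S\rV$ are exactly right.
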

\begin{rem}
Note that $S$ is required to be in $\fbd$, not $\bl$.
\end{rem}\kakunin{
\begin{proof}
For any 
$\lm 1\in \CUbk$, $\lm {r2}\in \Crbd$
with $\lm 1\leftarrow_r \lm{r2}$, 
\begin{align}
\begin{split}
&\lV
\iota^{(\lz,\lzr)}\lmk\rho: \sigma'\rmk\lmk \unit_\rho\otimes S\rmk
-\lmk S\otimes \unit_\rho\rmk\iota^{(\lz,\lzr)}\lmk\rho: \sigma\rmk
\rV\\
&=\lim_{t_1,t_2\to\infty}
\lV
\begin{gathered}
\lmk \Vrl{\sigma'}{\lm {r2}(t_2)}\otimes \Vrl{\rho}{\lm 1(t_1)}\rmk^*
 \lmk \Vrl{\rho}{\lm 1(t_1)}\otimes  \Vrl{\sigma'}{\lm {r2}(t_2)}\rmk
\lmk \unit_\rho\otimes S\rmk
-\lmk S\otimes \unit_\rho\rmk
\lmk \Vrl\sigma{\lm {r2}(t_2)}\otimes \Vrl\rho{\lm 1(t_1)}\rmk^*
 \lmk \Vrl\rho{\lm 1(t_1)}\otimes  \Vrl\sigma{\lm {r2}(t_2)}\rmk
\end{gathered}
\rV\\
&=\lim_{t_1,t_2\to\infty}
\lV 
\begin{gathered}
\unit_{\Tbd\rho{\lm {r1}}{\lm 1(t_1)}}
\otimes
\Vrl{\sigma'}{\lm {r2}(t_2)} S \lmk \Vrl\sigma{\lm {r2}(t_2)}\rmk^*
-
 \Vrl{\sigma'}{\lm {r2}(t_2)}S\lmk \Vrl\sigma{\lm {r2}(t_2)}\rmk^*
 \otimes
 \unit_{\Tbd\rho{\lm {r1}}{\lm 1(t_1)}}
\end{gathered}
\rV\\
&=0,
\end{split}
\end{align}
by Lemma \ref{lem39}.
\end{proof}
}
\change{
For the first variable, the naturality holds for a wider class, i.e., $\bl$.}
\begin{lem}\label{inu}
Consider the setting in subsection \ref{setting2} and assume Assumption \ref{assum80}, \ref{assum80l}.
Let $(\lz,\lzr)\in \pc$ and $\rho,\rho'\in \Obul$, $\sigma\in \OrUbdl$.
Then for any $R\in \lmk \Tbd\rho{\lzr}\unit, \Tbd{\rho'}{\lzr}\unit\rmk_l$,
we have
\begin{align}
\iota^{(\lz,\lzr)}\lmk\rho': \sigma\rmk\lmk R\otimes \unit_\sigma\rmk
=\lmk \unit_\sigma \otimes R\rmk\iota^{(\lz,\lzr)}\lmk\rho: \sigma\rmk.
\end{align}
\end{lem}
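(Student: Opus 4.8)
The plan is to reduce the claimed naturality to an asymptotic statement governed by Lemma \ref{lem35}, exactly as for the first-variable naturality in \cite{MTC}; the essential feature here is that $R$ is allowed to lie only in $\bl$. First I would unwind the two tensor products of morphisms using the rule $R_1\otimes R_2=R_1\Tbd{\rho_1}{\lzr}{\unit}(R_2)$ from (\ref{osaka}) (Lemma \ref{lem24}). Since $\unit_\sigma$ is carried by the identity operator and $\Tbd{\rho}{\lzr}{\unit}$, $\Tbd{\sigma}{\lzr}{\unit}$ are unital, this gives $R\otimes\unit_\sigma=R$ and $\unit_\sigma\otimes R=\Tbd{\sigma}{\lzr}{\unit}(R)$. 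Hence the assertion is equivalent to
\begin{align*}
\hb{\rho'}{\sigma}\, R=\Tbd{\sigma}{\lzr}{\unit}(R)\,\hb{\rho}{\sigma}.
\end{align*}

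Next I would insert the explicit formula (\ref{niigata}) for the half-braiding. Fix $\lm {r2}\in\Crbd$ with $\lz\leftarrow_r\lm {r2}$ and $\Vrl{\sigma}{\lm {r2}(t_2)}\in\VUbd{\sigma}{\lm {r2}(t_2)}\subset\fbd$, so that $\hb{\rho}{\sigma}=\lim_{t_2\to\infty}(\Vrl{\sigma}{\lm {r2}(t_2)})^*\Tbd{\rho}{\lzr}{\unit}(\Vrl{\sigma}{\lm {r2}(t_2)})$, and likewise for $\rho'$. Because $\Vrl{\sigma}{\lm {r2}(t_2)}\in\fbd\subset\bl$ and $R\in(\Tbd{\rho}{\lzr}{\unit},\Tbd{\rho'}{\lzr}{\unit})_l$, the intertwining relation $\Tbd{\rho'}{\lzr}{\unit}(x)R=R\,\Tbd{\rho}{\lzr}{\unit}(x)$ applies with $x=\Vrl{\sigma}{\lm {r2}(t_2)}$, turning the left-hand side into $\lim_{t_2\to\infty}(\Vrl{\sigma}{\lm {r2}(t_2)})^*R\,\Tbd{\rho}{\lzr}{\unit}(\Vrl{\sigma}{\lm {r2}(t_2)})$. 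Comparing this with the right-hand side, and using that $\Tbd{\rho}{\lzr}{\unit}(\Vrl{\sigma}{\lm {r2}(t_2)})$ is unitary, the two limits agree once I establish
\begin{align*}
\lim_{t_2\to\infty}\lV (\Vrl{\sigma}{\lm {r2}(t_2)})^*R-\Tbd{\sigma}{\lzr}{\unit}(R)(\Vrl{\sigma}{\lm {r2}(t_2)})^*\rV=0.
\end{align*}
Multiplying on the left by the unitary $\Vrl{\sigma}{\lm {r2}(t_2)}$ and recalling $\Tbdv{\sigma}{\lm {r2}(t_2)}=\Ad(\Vrl{\sigma}{\lm {r2}(t_2)})\Tbd{\sigma}{\lzr}{\unit}$ from Lemma \ref{lem20}, this is the same as $\lV R-\Tbdv{\sigma}{\lm {r2}(t_2)}(R)\rV\to 0$.

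The remaining, and only genuinely geometric, step is to see that $R$ is supported away from the region into which $\sigma$ is transported. Since $\rho,\rho'\in\Obul$ both restrict to $\pbd$ on $\caA_{\lz^c\cap\hu}$, the relation $R\rho(A)=\rho'(A)R$ for $A\in\caA_{\lz^c\cap\hu}$ forces $R\in\pbd(\caA_{\lz^c\cap\hu})'\cap\bl$. I would then pick a left cone $\lm {l1}\in\Clbd$ with $\lz\cap\hu\subset\lm {l1}$; then $(\lm {l1})^c\cap\hu\subset\lz^c\cap\hu$, so $R\in\pbd(\caA_{(\lm {l1})^c\cap\hu})'$. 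Because $\arg\lz$ is a compact subinterval of $(0,\pi)$, such an $\lm {l1}$ may be chosen with $\arg\lm {l1}$ still bounded away from $0$, so that a thin right cone $\lm {r2}\in\Crbd$ exists with $\lm {l1}\leftarrow_r\lm {r2}$ (and then automatically $\lz\leftarrow_r\lm {r2}$, consistently with the use of (\ref{niigata})). With $\lm {l1}\leftarrow_r\lm {r2}$ and $R\in\pbd(\caA_{(\lm {l1})^c\cap\hu})'$, Lemma \ref{lem35} gives $\lV(\Tbdv{\sigma}{\lm {r2}(t_2)}-\id)(R)\rV\to 0$, which closes the argument. I expect the only delicate point to be the bookkeeping in this geometric selection of $\lm {l1}$ and $\lm {r2}$ (arranging $\lz\cap\hu\subset\lm {l1}$ and $\lm {l1}\leftarrow_r\lm {r2}$ at once); this rests on the standard cone-geometry facts already used in \cite{MTC}, and the surrounding algebra is then routine.
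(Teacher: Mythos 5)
Your proposal is correct and follows essentially the same route as the paper: both arguments reduce the naturality to the asymptotic statement of Lemma \ref{lem35}, applied to an intertwiner that commutes with $\pbd\lmk\caA_{(\lm{l1})^c\cap\hu}\rmk$ for a suitable left cone $\lm{l1}$ with $\lm{l1}\leftarrow_r\lm{r2}$. The only difference is cosmetic: the paper starts from the two-parameter definition (\ref{halfdef}) and applies Lemma \ref{lem35} to the conjugated element $\Vrl{\rho'}{\lm 1(t_1)}R\lmk\Vrl{\rho}{\lm 1(t_1)}\rmk^*\in\pbd\lmk\caA_{\lm 1(t_1)^c\cap\hu}\rmk'\cap\bl$, whereas you start from the one-parameter formula (\ref{niigata}) and apply it to $R\in\pbd\lmk\caA_{\lz^c\cap\hu}\rmk'\cap\bl$ directly, the required cone bookkeeping being the standard geometry the paper itself uses elsewhere (e.g., in the proofs of Lemma \ref{lem81} and Lemma \ref{lem82}).
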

\begin{proof}
For any 
$\lm 1\in \CUbk$, $\lm {r2}\in \Crbd$
with $\lm 1\leftarrow_r \lm{r2}$, 
\begin{align}
\begin{split}
&\lV
\iota^{(\lz,\lzr)}\lmk\rho': \sigma\rmk\lmk R\otimes \unit_\sigma\rmk
-\lmk \unit_\sigma\otimes R\rmk\iota^{(\lz,\lzr)}\lmk\rho: \sigma\rmk
\rV\\
&=\lim_{t_1,t_2\to\infty}
\lV
\begin{gathered}
\lmk \Vrl{\sigma}{\lm {r2}(t_2)}\otimes \Vrl{\rho'}{\lm 1(t_1)}\rmk^*
 \lmk \Vrl{\rho'}{\lm 1(t_1)}\otimes  \Vrl{\sigma}{\lm {r2}(t_2)}\rmk
\lmk R\otimes \unit_\sigma\rmk\\
-\lmk \unit_\sigma\otimes R\rmk
\lmk \Vrl\sigma{\lm {r2}(t_2)}\otimes \Vrl\rho{\lm 1(t_1)}\rmk^*
 \lmk \Vrl\rho{\lm 1(t_1)}\otimes  \Vrl\sigma{\lm {r2}(t_2)}\rmk
\end{gathered}
\rV\\
&=\lim_{t_1,t_2\to\infty}
\lV 
\begin{gathered}
\Vrl{\rho'}{\lm 1(t_1)} R\lmk \Vrl\rho{\lm 1(t_1)}\rmk^*
\otimes
\Vrl{\sigma}{\lm {r2}(t_2)} \unit_\sigma \lmk \Vrl\sigma{\lm {r2}(t_2)}\rmk^*\\
-
 \Vrl{\sigma}{\lm {r2}(t_2)}\unit_\sigma\lmk \Vrl\sigma{\lm {r2}(t_2)}\rmk^*
 \otimes
 \Vrl{\rho'}{\lm 1(t_1)} R\Vrl\rho{\lm 1(t_1)}^*
\end{gathered}
\rV\\
&=
\lim_{t_1,t_2\to\infty}
\lV 
\begin{gathered}
\Vrl{\rho'}{\lm 1(t_1)} R\lmk \Vrl\rho{\lm 1(t_1)}\rmk^*
-
  \Tbdv{\sigma}{\lm {r2}(t_2)}\lmk
  \Vrl{\rho'}{\lm 1(t_1)} R\Vrl\rho{\lm 1(t_1)}^*
  \rmk
\end{gathered}
\rV\\
&=0.
\end{split}
\end{align}
At the last equality, we used the fact
\begin{align}
\begin{split}
 \Vrl{\rho'}{\lm 1(t_1)} R\Vrl\rho{\lm 1(t_1)}^*\in 
 \pbd\lmk
 \caA_{{{\lm 1}(t_1)}^c\cap \hu}
 \rmk'\cap\bl
\end{split}
\end{align}
and Lemma \ref{lem35}.

\end{proof}

Furthermore, we have the following.
\begin{lem}\label{lem43}
Consider the setting in subsection \ref{setting2} and assume Assumption \ref{assum80}, \ref{assum80l}.
Let $(\lz,\lzr)\in \pc$ and $\rho\in \Obul$, $\sigma,\gamma\in \OrUbdl$.
Then we have
\begin{align}
\begin{split}
\hb{\rho}{\sigma\otimes\gamma}
=\lmk \unit_{\sigma}\otimes \hb\rho\gamma\rmk
\lmk \hb\rho\sigma\otimes\unit_\gamma\rmk.
\end{split}
\end{align}
\end{lem}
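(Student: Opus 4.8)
The plan is to evaluate the left-hand side with the explicit formula (\ref{niigata}) and then rearrange, just as in the composition law for the bulk braiding in \cite{MTC}. Abbreviate $T_\rho:=\Tbd\rho\lzr\unit$, $T_\sigma:=\Tbd\sigma\lzr\unit$, $T_\gamma:=\Tbd\gamma\lzr\unit$; by Lemma \ref{lem22} the extension of $\sigma\otimes\gamma=\sigma\circ_{\lzr}\gamma$ is $T_\sigma T_\gamma$, and by Lemma \ref{lem20} one has $T_\sigma\pbd=\sigma$, $T_\gamma\pbd=\gamma$. Fix $\lm {r2}\in\Crbd$ with $\lz\leftarrow_r\lm {r2}$, and for $t\ge 0$ choose $\Vrl\sigma{\lm {r2}(t)}\in\VUbd\sigma{\lm {r2}(t)}$ and $\Vrl\gamma{\lm {r2}(t)}\in\VUbd\gamma{\lm {r2}(t)}$.

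First I would verify that the tensor product of transporters $\Vrl\sigma{\lm {r2}(t)}\,T_\sigma(\Vrl\gamma{\lm {r2}(t)})$ is an admissible transporter for the composite, i.e.\ lies in $\VUbd{\sigma\otimes\gamma}{\lm {r2}(t)}$. It is a unitary in $\fbd$, since $\Vrl\sigma{\lm {r2}(t)},\Vrl\gamma{\lm {r2}(t)}\in\fbd$ and $T_\sigma(\fbd)\subset\fbd$ by Lemma \ref{lem20}; and a short computation with $T_\sigma\pbd=\sigma$, $T_\gamma\pbd=\gamma$ shows that $\Ad(\Vrl\sigma{\lm {r2}(t)}\,T_\sigma(\Vrl\gamma{\lm {r2}(t)}))(\sigma\otimes\gamma)$ agrees with $\pbd$ on $\caA_{(\lm {r2}(t))^c\cap\hu}$. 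Feeding this transporter into (\ref{niigata}) yields
\begin{align}
\hb\rho{\sigma\otimes\gamma}
=\lim_{t\to\infty}
T_\sigma\lmk \Vrl\gamma{\lm {r2}(t)}\rmk^*
\lmk \Vrl\sigma{\lm {r2}(t)}^*\,T_\rho\lmk \Vrl\sigma{\lm {r2}(t)}\rmk\rmk
T_\rho T_\sigma\lmk \Vrl\gamma{\lm {r2}(t)}\rmk.
\end{align}

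The middle factor converges in norm to $\hb\rho\sigma$ by (\ref{niigata}), and since the two outer factors are unitaries I may replace it by $\hb\rho\sigma$ with vanishing error. Then I use Lemma \ref{lem41}, namely $\hb\rho\sigma\in(T_\rho T_\sigma,T_\sigma T_\rho)_U$ (where the intertwining relation holds on all of $\bl$ by $\sigma$-weak continuity of the extensions), to move $\hb\rho\sigma$ through $T_\rho T_\sigma(\Vrl\gamma{\lm {r2}(t)})$, producing $T_\sigma T_\rho(\Vrl\gamma{\lm {r2}(t)})$; because $T_\sigma$ is a $*$-homomorphism the two outer factors then coalesce into $T_\sigma(\Vrl\gamma{\lm {r2}(t)}^*\,T_\rho(\Vrl\gamma{\lm {r2}(t)}))$. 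Applying (\ref{niigata}) once more, together with norm-continuity of $T_\sigma$, gives $T_\sigma(\hb\rho\gamma)$ in the limit, so that $\hb\rho{\sigma\otimes\gamma}=T_\sigma(\hb\rho\gamma)\,\hb\rho\sigma$. Finally I would identify the right-hand side of the claim via the definition of the tensor product of intertwiners (Theorem \ref{naha}, Lemma \ref{lem24}): $\unit_\sigma\otimes\hb\rho\gamma=T_\sigma(\hb\rho\gamma)$ and $\hb\rho\sigma\otimes\unit_\gamma=\hb\rho\sigma\,T_\rho T_\sigma(\unit_\gamma)=\hb\rho\sigma$, whence $(\unit_\sigma\otimes\hb\rho\gamma)(\hb\rho\sigma\otimes\unit_\gamma)=T_\sigma(\hb\rho\gamma)\,\hb\rho\sigma$, matching the limit above.

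The main obstacle is the first step — checking that the product of two right-transporters is itself an admissible right-transporter for $\sigma\otimes\gamma$ staying inside $\fbd$; everything afterward is the same bookkeeping as for the bulk hexagon in \cite{MTC}, where the homomorphism property of $T_\sigma$, the intertwining relation of Lemma \ref{lem41}, and the norm convergence of (\ref{niigata}) do all the work.
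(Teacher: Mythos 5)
Your proof is correct, but it follows a genuinely different route from the paper's. The paper proves this lemma by repeating the bulk argument (Lemma 4.17 of \cite{MTC}): it starts from the right-hand side, expands both $\hb\rho\sigma$ and $\hb\rho\gamma$ through the double-limit definition (\ref{halfdef}) --- which forces the introduction of auxiliary transporters $\Vrl\rho{\lm 1(t_1)}$ for $\rho$ --- and then invokes Lemma \ref{lem82} to commute $\Tbdv\sigma{\lm {r2}(t_2)}$ past $\Tbd\rho{\lm {r1}}{\Vrl\rho{\lm 1(t_1)}}$ up to vanishing error, before recognizing the result as $\hb\rho{\sigma\otimes\gamma}$ via the same composite transporter $\Vrl\sigma{\lm {r2}(t_2)}\Tbd\sigma\lzr\unit\lmk \Vrl\gamma{\lm {r2}(t_2)}\rmk$ that you verify explicitly at the outset. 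You instead work from the left-hand side using the single-limit formula (\ref{niigata}), which removes the $t_1$-limit and any transporter for $\rho$ altogether; your only inputs are the admissibility check for the composite transporter (the same computation that is hidden in the proof of Lemma \ref{lem22}(iii) and in the last line of the paper's proof), the norm convergence in (\ref{niigata}), norm continuity of the $*$-homomorphism $\Tbd\sigma\lzr\unit$, and Lemma \ref{lem41} extended from $\pbd(\abd)$ to $\bl$ by $\sigma$-weak continuity --- an extension step that is legitimate and is exactly the continuity argument the paper itself uses to close the proof of Lemma \ref{lem41}. What your route buys is economy: no asymptotic commutation estimates (Lemma \ref{lem82}) are needed, and the identity is exhibited as an essentially algebraic consequence of Proposition \ref{lem40} and Lemma \ref{lem41}. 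What the paper's route buys is uniformity: it is word-for-word the bulk proof, so nothing new has to be isolated at the boundary. Both arguments are sound and non-circular, since (\ref{niigata}) and Lemma \ref{lem41} are established before this lemma.
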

\begin{proof}
The proof is the same as \cite{MTC} Lemma 4.17.
\kakunin{
Fix some $(\lm 1,\lm {r1})\in \pc$, $\lm {r2}\in \Crbd$
so that $\lm 1\leftarrow_r\lzr$,
$\lz\leftarrow_r\lm{r2}$, $\lm{r2}\subset \lzr$.
Then we have $\lm 1\leftarrow_r \lm{r2}$, and from
Proposition \ref{lem40},
\begin{align}\label{ishikawa}
\begin{split}
&\lmk \unit_{\sigma}\otimes \hb\rho\gamma\rmk
\lmk \hb\rho\sigma\otimes\unit_\gamma\rmk\\
&=\lim_{t_1,t_2\to\infty}
\Tbd{\sigma}{\lzr}\unit \lmk
\Tbd{\gamma}{\lzr}\unit \lmk\Vrl\rho{\lm 1(t_1)}^*\rmk
\Vrl\gamma{\lm {r2}(t_2)}^*\Vrl\rho{\lm 1(t_1)}
\Tbd\rho{\lzr}\unit\lmk \Vrl\gamma{\lm {r2}(t_2)}\rmk
\rmk\\
&\quad \quad \Tbd{\sigma}{\lzr}\unit \lmk\Vrl\rho{\lm 1(t_1)}^*\rmk
\Vrl\sigma{\lm {r2}(t_2)}^*\Vrl\rho{\lm 1(t_1)}
\Tbd\rho{\lzr}\unit\lmk \Vrl\sigma{\lm {r2}(t_2)}\rmk\\
&=\lim_{t_1,t_2\to\infty}
\Tbd{\sigma}{\lzr}\unit \lmk
\Tbd{\gamma}{\lzr}\unit \lmk\Vrl\rho{\lm 1(t_1)}^*\rmk
\Vrl\gamma{\lm {r2}(t_2)}^*
\Vrl\rho{\lm 1(t_1)}
\Tbd\rho{\lzr}\unit\lmk \Vrl\gamma{\lm {r2}(t_2)}\rmk
\Vrl\rho{\lm 1(t_1)}^*\rmk
\Vrl\sigma{\lm {r2}(t_2)}^*\Vrl\rho{\lm 1(t_1)}
\Tbd\rho{\lzr}\unit\lmk \Vrl\sigma{\lm {r2}(t_2)}\rmk\\
&=\lim_{t_1,t_2\to\infty}
\Tbd{\sigma}{\lzr}\unit \lmk
\Tbd{\gamma}{\lzr}\unit \lmk\Vrl\rho{\lm 1(t_1)}^*\rmk
\Vrl\gamma{\lm {r2}(t_2)}^*
\Tbd\rho{\lm{r1}}{\Vrl\rho{\lm 1(t_1)}}\lmk \Vrl\gamma{\lm {r2}(t_2)}\rmk
\rmk
\Vrl\sigma{\lm {r2}(t_2)}^*\Vrl\rho{\lm 1(t_1)}
\Tbd\rho{\lzr}\unit\lmk \Vrl\sigma{\lm {r2}(t_2)}\rmk\\
&=\lim_{t_1,t_2\to\infty}
\Tbd{\sigma}{\lzr}\unit \lmk
\Tbd{\gamma}{\lzr}\unit \lmk\Vrl\rho{\lm 1(t_1)}^*\rmk
\Vrl\gamma{\lm {r2}(t_2)}^*
\rmk
\Vrl\sigma{\lm {r2}(t_2)}^*\\
&\quad\quad\quad\Tbdv\sigma{\lm {r2}(t_2)}\Tbd\rho{\lm {r1}}{\Vrl\rho{\lm 1(t_1)}}\lmk \Vrl\gamma{\lm {r2}(t_2)}\rmk
\Vrl\rho{\lm 1(t_1)}
\Tbd\rho{\lzr}\unit\lmk \Vrl\sigma{\lm {r2}(t_2)}\rmk.
\end{split}
\end{align}
Here, because $\lm{r2}\subset \lzr$, we have 
\begin{align}
\begin{split}
\Vrl\gamma{\lm {r2}(t_2)},\;
\Tbdv\sigma{\lm {r2}(t_2)}\lmk \Vrl\gamma{\lm {r2}(t_2)}\rmk\in 
\pbd\lmk
\caA_{\lrhuz}
\rmk'\cap\fbd.
\end{split}
\end{align}
Therefore, by Lemma \ref{lem82}, we have
\begin{align}
\begin{split}
\lim_{t_1,t_2\to\infty}
\lV
\lmk
\Tbdv\sigma{\lm {r2}(t_2)}\Tbd\rho{\lm {r1}}{\Vrl\rho{\lm 1(t_1)}}
-\Tbd\rho{\lm {r1}}{\Vrl\rho{\lm 1(t_1)}}\Tbdv\sigma{\lm {r2}(t_2)}
\rmk
\lmk \Vrl\gamma{\lm {r2}(t_2)}\rmk
\rV=0.
\end{split}
\end{align}
Substituting this to (\ref{ishikawa}), we obtain
\begin{align}
\begin{split}
&\lmk \unit_{\sigma}\otimes \hb\rho\gamma\rmk
\lmk \hb\rho\sigma\otimes\unit_\gamma\rmk\\
&=\lim_{t_1,t_2\to\infty}
\Tbd{\sigma}{\lzr}\unit \lmk
\Tbd{\gamma}{\lzr}\unit \lmk\Vrl\rho{\lm 1(t_1)}^*\rmk
\Vrl\gamma{\lm {r2}(t_2)}^*
\rmk
\Vrl\sigma{\lm {r2}(t_2)}^*\\
&\quad\quad\quad\Tbd\rho{\lm{r1}}{\Vrl\rho{\lm 1(t_1)}}\Tbdv\sigma{\lm {r2}(t_2)}\lmk \Vrl\gamma{\lm {r2}(t_2)}\rmk
\Vrl\rho{\lm 1(t_1)}
\Tbd\rho{\lzr}\unit\lmk \Vrl\sigma{\lm {r2}(t_2)}\rmk\\
&=\lim_{t_1,t_2\to\infty}
\Tbd{\sigma}{\lzr}\unit \lmk
\Tbd{\gamma}{\lzr}\unit \lmk\Vrl\rho{\lm 1(t_1)}^*\rmk
\rmk
\Tbd{\sigma}{\lzr}\unit \lmk
\Vrl\gamma{\lm {r2}(t_2)}^*
\rmk
\Vrl\sigma{\lm {r2}(t_2)}^*\\
&\quad\quad\quad 
\Vrl\rho{\lm 1(t_1)}
\Tbd\rho{\lzr}\unit
\lmk
\Tbdv\sigma{\lm {r2}(t_2)}\lmk \Vrl\gamma{\lm {r2}(t_2)}\rmk
\lmk \Vrl\sigma{\lm {r2}(t_2)}\rmk
\rmk\\
&=\lim_{t_1,t_2\to\infty}
\Tbd{\sigma}{\lzr}\unit \lmk
\Tbd{\gamma}{\lzr}\unit \lmk\Vrl\rho{\lm 1(t_1)}^*\rmk
\rmk
\lmk
\lmk \Vrl\sigma{\lm {r2}(t_2)}\rmk
\Tbd\sigma{\lzr}\unit \lmk \Vrl\gamma{\lm {r2}(t_2)}\rmk
\rmk^*\\
&\quad\quad\quad 
\Vrl\rho{\lm 1(t_1)}
\Tbd\rho{\lzr}\unit
\lmk\lmk \Vrl\sigma{\lm {r2}(t_2)}\rmk
\Tbd\sigma{\lzr}\unit \lmk \Vrl\gamma{\lm {r2}(t_2)}\rmk
\rmk\\
&=\lim_{t_1,t_2\to\infty}
\lmk\lmk  \lmk \Vrl\sigma{\lm {r2}(t_2)}\rmk
\Tbd\sigma{\lzr}\unit \lmk \Vrl\gamma{\lm {r2}(t_2)}\rmk\rmk
\otimes 
\lmk\Vrl\rho{\lm 1(t_1)}
\rmk\rmk^*
\\
&\quad\quad\quad 
\Vrl\rho{\lm 1(t_1)}
\Tbd\rho{\lzr}\unit
\lmk\lmk \Vrl\sigma{\lm {r2}(t_2)}\rmk
\otimes  \lmk \Vrl\gamma{\lm {r2}(t_2)}\rmk
\rmk\\
&=\hb{\rho}{\sigma\circ_{\lzr}\gamma}.
\end{split}
\end{align}}
\end{proof}

\begin{lem}\label{panama}
Let $\llz\in\pc$.
Consider the setting in subsection \ref{setting2}, and 
assume Assumption \ref{assum80},Assumption \ref{assum80l}.
Then for any $\rho,\sigma\in \Obul$ and $\gamma\in \OrUbdl$, we have
\begin{align}
\begin{split}
&\hb{\rho\otimes\sigma}{\gamma}\\
&=\lmk
\hb { \rho}{\gamma}\otimes 
\unit_{\Tbd{\sigma}\lzr\unit}
\rmk
\lmk
\unit_{\Tbd{\rho}\lzr\unit}
\otimes 
\hb { \sigma}{\gamma}
\rmk
\end{split}
\end{align}
\end{lem}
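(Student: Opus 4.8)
The plan is to sidestep the full transporter machinery of Lemma \ref{lem43} and instead read the identity off the factorized formula (\ref{niigata}) of Proposition \ref{lem40}, which makes the first-variable hexagon purely algebraic. The crucial structural observation is that all three half-braidings occurring in the claim, namely $\hb{\rho\otimes\sigma}\gamma$, $\hb\rho\gamma$ and $\hb\sigma\gamma$, have their \emph{first} argument in $\Obul$ (recall $\rho\otimes\sigma=\rho\circ_{\lzr}\sigma\in\Obul$ by Lemma \ref{lem22}(iv), and $\Obul\subset\OrUbdl$), hence localized in $\lz\cap\hu$, while sharing the \emph{same} second argument $\gamma\in\OrUbdl$. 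Therefore I may fix a single cone $\lm {r2}\in\Crbd$ with $\lz\leftarrow_r\lm {r2}$ and a single family of transporters $\Vrl\gamma{\lm {r2}(t)}\in\VUbd\gamma{\lm {r2}(t)}$ and compute all three half-braidings through (\ref{niigata}) with this common data.

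First I would record, via Lemma \ref{lem22}(i), the factorization $\Tbd{\rho\circ_{\lzr}\sigma}\lzr\unit=\Tbd\rho\lzr\unit\,\Tbd\sigma\lzr\unit$, so that (\ref{niigata}) yields
\begin{align*}
\hb{\rho\otimes\sigma}\gamma=\lim_{t\to\infty}\lmk\Vrl\gamma{\lm {r2}(t)}\rmk^*\Tbd\rho\lzr\unit\,\Tbd\sigma\lzr\unit\lmk\Vrl\gamma{\lm {r2}(t)}\rmk.
\end{align*}
Next I would unwind the right-hand side of the claim with the definition of the tensor product of intertwiners from Lemma \ref{lem24}, namely $R_1\otimes R_2=R_1\,T(R_2)$ with $T$ the extension of the source object of $R_1$. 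Because the tensor unit is the identity operator $\unit$, this collapses to $\hb\rho\gamma\otimes\unit_{\Tbd\sigma\lzr\unit}=\hb\rho\gamma$ and $\unit_{\Tbd\rho\lzr\unit}\otimes\hb\sigma\gamma=\Tbd\rho\lzr\unit(\hb\sigma\gamma)$, so the right-hand side equals the operator product $\hb\rho\gamma\cdot\Tbd\rho\lzr\unit(\hb\sigma\gamma)$.

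Then I would combine the two limits along the common index $t$. Since $\Tbd\rho\lzr\unit$ is a $*$-homomorphism it is norm contractive, so it passes through the norm limit defining $\hb\sigma\gamma$, giving $\Tbd\rho\lzr\unit(\hb\sigma\gamma)=\lim_t\Tbd\rho\lzr\unit\lmk\Vrl\gamma{\lm {r2}(t)}\rmk^*\,\Tbd\rho\lzr\unit\,\Tbd\sigma\lzr\unit\lmk\Vrl\gamma{\lm {r2}(t)}\rmk$. Multiplying on the left by $\hb\rho\gamma=\lim_t\lmk\Vrl\gamma{\lm {r2}(t)}\rmk^*\Tbd\rho\lzr\unit\lmk\Vrl\gamma{\lm {r2}(t)}\rmk$, the inner factor $\Tbd\rho\lzr\unit\lmk\Vrl\gamma{\lm {r2}(t)}\rmk\,\Tbd\rho\lzr\unit\lmk\Vrl\gamma{\lm {r2}(t)}\rmk^*$ telescopes to $\unit$, because $\Vrl\gamma{\lm {r2}(t)}$ is unitary in $\fbd$ and $\Tbd\rho\lzr\unit$ preserves unitaries. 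What remains is precisely the displayed limit for $\hb{\rho\otimes\sigma}\gamma$, proving the identity.

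The steps needing genuine care are all bookkeeping rather than real obstacles: confirming $\rho,\sigma,\rho\circ_{\lzr}\sigma\in\Obul\subset\OrUbdl$ so that (\ref{niigata}) and the $\fbd$-valuedness of $\Tbd\rho\lzr\unit$ (Lemma \ref{lem20}) genuinely apply; checking that a single cone $\lm {r2}$ and a single transporter family legitimately serve all three half-braidings; and translating the categorical symbols $\hb\rho\gamma\otimes\unit$ and $\unit\otimes\hb\sigma\gamma$ into the concrete operator products through Lemma \ref{lem24}. Once this is done the computation is a one-line telescoping cancellation; in particular, in contrast with the second-variable hexagon of Lemma \ref{lem43}, no analogue of the asymptotic estimates of Lemma \ref{lem82} is required here.
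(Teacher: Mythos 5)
Your proof is correct and follows essentially the same route as the paper's: both evaluate $\hb{\rho}{\gamma}$, $\hb{\sigma}{\gamma}$ and $\hb{\rho\otimes\sigma}{\gamma}$ via formula (\ref{niigata}) with one common transporter family $\Vrl{\gamma}{\lm{r2}(t)}$, identify the right-hand side of the claim with $\hb{\rho}{\gamma}\cdot\Tbd{\rho}{\lzr}{\unit}\lmk \hb{\sigma}{\gamma}\rmk$, and conclude by the unitary telescoping cancellation together with $\Tbd{\rho\otimes\sigma}{\lzr}{\unit}=\Tbd{\rho}{\lzr}{\unit}\,\Tbd{\sigma}{\lzr}{\unit}$ from Lemma \ref{lem22}. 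The paper merely compresses this into a single displayed limit, whereas you spell out the same bookkeeping (Lemma \ref{lem24}, norm continuity of the extension, the choice of a single cone $\lm{r2}$) explicitly.
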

\begin{proof}
For any $\lm {r2}\in \Crbd$ with $\lz\leftarrow_r \lm{r2}$ and  $\Vrl\gamma{\lm {r2}(t)}\in \VUbd\gamma{{\lm {r2}(t)}}$,
$t\ge 0$
from Proposition \ref{lem40},
we have
\begin{align}
\begin{split}
&\lmk
\hb {\rho}{\gamma}\otimes 
\unit_{\Tbd{\sigma}\lzr\unit}
\rmk
\lmk
\unit_{\Tbd{\rho}\lzr\unit}
\otimes 
\hb {\sigma}{\gamma}
\rmk\\
&=
\lim_{t\to\infty} \Vrl\gamma{\lm {r2}(t)}^*
\Tbd{\rho}{\lzr}\unit \lmk \Vrl\gamma{\lm {r2}(t)}\rmk
{\Tbd{\rho}\lzr\unit}
\lmk
 \Vrl\gamma{\lm {r2}(t)}^*
\Tbd{\sigma}{\lzr}\unit \lmk \Vrl\gamma{\lm {r2}(t)}\rmk
\rmk\\
&=\hb{\rho\otimes\sigma}{\gamma}.
\end{split}
\end{align}
\end{proof}
Hence if we restrict our $\iota^{\llz}$ to $\Obul$, then it becomes a braiding, and
we obtain the following.
\begin{thm}\label{matsuyama}
Consider the setting in subsection \ref{setting2} and assume
Assumption \ref{assum80}, Assumption \ref{assum80l},
Assumption \ref{wakayama}.
Let $\llz\in\pc$.
Then the $C^*$-tensor category $\Cabul$ is braided with braiding
\begin{align}
\begin{split}
\epsilon_{\Cabul}(\rho:\sigma)
:=\hb\rho\sigma,\quad\rho,\sigma\in \Obul.
\end{split}
\end{align}
\end{thm}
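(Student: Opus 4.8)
The plan is to show that the restriction of the half-braiding $\iota^{(\lz,\lzr)}$ to $\Obul$ furnishes a braiding on the $C^*$-tensor category $\Cabul$ in the sense of \cite{NT}, by assembling the structural lemmas already established. First I would record the inclusion $\Obul\subset\OrUbdl$: for any $\Lambda_r\in\Crbd$ pick $\Lambda\in\CUbk$ with $(\Lambda,\Lambda_r)\in\pc$, so $\Lambda\subset\Lambda_r$ and $\Lambda^c\cap\hu\supset\Lambda_r^c\cap\hu$; restricting the intertwining relation defining $\Vbu\rho\Lambda$ to the smaller algebra then gives $\Vbu\rho\Lambda\subset\VUbd\rho{\Lambda_r}$, and together with the localization in $\lz\cap\hu$ this yields $\Obul\subset\OrUbdl$. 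Consequently $\hb\rho\sigma$ from Proposition \ref{lem40} is defined for every pair $\rho,\sigma\in\Obul$, and I set $\epsilon_{\Cabul}(\rho:\sigma):=\hb\rho\sigma$.

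Next I would check that $\epsilon_{\Cabul}(\rho:\sigma)$ is a unitary morphism $\rho\otimes\sigma\to\sigma\otimes\rho$ in $\Cabul$. By Lemma \ref{lem41}, $\hb\rho\sigma\in(\Tbd\rho{\lzr}\unit\Tbd\sigma{\lzr}\unit,\Tbd\sigma{\lzr}\unit\Tbd\rho{\lzr}\unit)_U$; restricting these endomorphism-intertwiner relations to $\pbd(\abd)$ and using $\rho\otimes\sigma=\Tbd\rho{\lzr}\unit\Tbd\sigma{\lzr}\unit\pbd$ from (\ref{iwate}) shows $\hb\rho\sigma\in(\rho\otimes\sigma,\sigma\otimes\rho)_U=\Mor_{\Cabul}(\rho\otimes\sigma,\sigma\otimes\rho)$. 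For invertibility I would note that each approximant in the defining norm limit (\ref{halfdef}) is built from the unitary charge transporters $\Vrl\rho{\lm 1(t_1)},\Vrl\sigma{\lm {r2}(t_2)}$ and the unital $*$-homomorphisms of Lemma \ref{lem20}, all of which preserve unitarity, so every approximant is a unitary in $\fbd$; since unitarity is norm-closed, the limit $\hb\rho\sigma$ is unitary. Thus $\epsilon_{\Cabul}(\rho:\sigma)$ is a unitary isomorphism of $\Cabul$.

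Then I would verify naturality. Given morphisms $R\in(\rho,\rho')_U$ and $S\in(\sigma,\sigma')_U$, the interchange law (Lemma \ref{lem24}(ii)) factors $R\otimes S=(\unit_{\rho'}\otimes S)(R\otimes\unit_\sigma)$ and $S\otimes R=(S\otimes\unit_{\rho'})(\unit_\sigma\otimes R)$. Lemma \ref{inu}, applied with $R$ viewed in $(\Tbd\rho{\lzr}\unit,\Tbd{\rho'}{\lzr}\unit)_l$ via $(\rho,\rho')_U\subset(\rho,\rho')_l$, gives $\hb{\rho'}\sigma(R\otimes\unit_\sigma)=(\unit_\sigma\otimes R)\hb\rho\sigma$, while Lemma \ref{lem42}, applied with $S\in(\Tbd\sigma{\lzr}\unit,\Tbd{\sigma'}{\lzr}\unit)_U$, gives $\hb{\rho'}{\sigma'}(\unit_{\rho'}\otimes S)=(S\otimes\unit_{\rho'})\hb{\rho'}\sigma$. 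Chaining these two identities produces $\hb{\rho'}{\sigma'}(R\otimes S)=(S\otimes R)\hb\rho\sigma$, which is exactly the naturality square for $\epsilon_{\Cabul}$.

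Finally, the two hexagon identities are precisely Lemma \ref{lem43} (for $\hb\rho{\sigma\otimes\gamma}$) and Lemma \ref{panama} (for $\hb{\rho\otimes\sigma}\gamma$), now read for $\rho,\sigma,\gamma\in\Obul\subset\OrUbdl$; since the associativity and unit constraints of $\Cabul$ are trivial by Theorem \ref{okinawa}, these are the hexagon axioms verbatim. Combining unitarity, naturality, and the hexagons with the fact that $\Cabul$ is already a $C^*$-tensor category (Theorem \ref{okinawa}) makes $\epsilon_{\Cabul}$ a braiding, proving the theorem. I expect the only genuinely delicate point to be the bookkeeping of which algebra each intertwiner inhabits: naturality in the first variable is available only in the coarser $\bl$/$\simeq_l$ sense (Lemma \ref{inu}), whereas naturality in the second variable requires the finer $\fbd$/$\simeq_U$ intertwiners (Lemma \ref{lem42}), so one must check that a genuine morphism of $\Cabul$, which by definition lies in $\fbd$, feeds correctly into both; everything else is a routine assembly of the preceding lemmas.
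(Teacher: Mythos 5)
Your proposal is correct and follows essentially the same route as the paper: the paper states Theorem \ref{matsuyama} as an immediate assembly of Proposition \ref{lem40}, Lemma \ref{lem41}, Lemma \ref{lem42}, Lemma \ref{inu}, Lemma \ref{lem43}, and Lemma \ref{panama} (``Hence if we restrict our $\iota^{\llz}$ to $\Obul$, then it becomes a braiding''), which is exactly what you carry out, including the key bookkeeping that $\Obul\subset\OrUbdl$ and that first-variable naturality only needs $\bl$-intertwiners while second-variable naturality needs $\fbd$-intertwiners. Your explicit verification of unitarity and the naturality square fills in details the paper leaves implicit, but it is the same argument.
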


Our $\hb-\rho$ satisfies the following asymptotic property.
\begin{lem}\label{neko}
Consider the setting in subsection \ref{setting2} and assume Assumption \ref{assum80}, \ref{assum80l}.
Let $(\lz,\lzr)\in \pc$ and 
$\lm 1,\lm 3\in \CUbk$ with $\lm 1\leftarrow_r \lm 3$, $\lm 1,\lm3\subset\lz$.
Then for any $\rho\in \Obul$ and $\Vrl\rho{\lm3}\in \Vbu\rho{\lm 3}$
we have
\begin{align}\label{mikan1}
\begin{split}
\lim_{t\to\infty}\sup_{\sigma\in \Obun{\lm 1(t)}}
\lV
\hb\sigma \rho -\Vrl\rho{\lm3}^* \Tbd\sigma{\lzr}\unit\lmk \Vrl\rho{\lm3}\rmk
\rV=0.
\end{split}
\end{align}
\end{lem}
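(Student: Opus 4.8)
The plan is to compute $\hb\sigma\rho$ through the alternative expression (\ref{niigata}), compare the right-cone transporter appearing there with the bulk transporter $\Vrl\rho{\lm 3}$, and reduce the whole estimate to a single bound measuring how far $\Tbd\sigma{\lzr}\unit$ is from the identity on one charge transporter. The input that makes the bound uniform in $\sigma$ is that all the analytic content sits in Lemma \ref{lem82}(ii), whose error is purely geometric.

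First I would fix the geometry. Since $\lm 1\leftarrow_r\lm 3$, the cone $\lm 3$ sits at strictly smaller argument than $\lm 1$; as $\lm 3\subset\lz$ and $(\lz,\lzr)\in\pc$, one may choose a narrow right cone $\Gamma\in\Crbd$ near the boundary with $\lz\leftarrow_r\Gamma$ and $\arg\Gamma$ strictly below $\arg\lm 3$. Also $\lm 1(t)\subset\lz$ for all $t\ge 0$ (pushing $\lm 1$ along its own axis, which points into $\lz$), so $\Obun{\lm 1(t)}\subset\Obul$ and $\hb\sigma\rho$ is defined. By Proposition \ref{lem40}, for any $\Vrl\rho{\Gamma(t_2)}\in\VUbd\rho{\Gamma(t_2)}$,
\[
\hb\sigma\rho=\lim_{t_2\to\infty}\Vrl\rho{\Gamma(t_2)}^*\,\Tbd\sigma{\lzr}\unit\lmk\Vrl\rho{\Gamma(t_2)}\rmk .
\]
Setting $W_{t_2}:=\Vrl\rho{\Gamma(t_2)}\Vrl\rho{\lm 3}^*\in\caU(\fbd)$ and writing $T:=\Tbd\sigma{\lzr}\unit$, the identity $\Vrl\rho{\Gamma(t_2)}=W_{t_2}\Vrl\rho{\lm 3}$, together with $T$ being a $*$-homomorphism and $\Vrl\rho{\lm 3}$ unitary, gives the elementary reduction
\[
\lV \Vrl\rho{\Gamma(t_2)}^*T\lmk\Vrl\rho{\Gamma(t_2)}\rmk-\Vrl\rho{\lm 3}^*T\lmk\Vrl\rho{\lm 3}\rmk\rV\le \lV (T-\id)(W_{t_2})\rV .
\]
Thus it suffices to bound $\lV (T-\id)(W_{t_2})\rV$ uniformly in $t_2$ and in $\sigma\in\Obun{\lm 1(t)}$, by a quantity vanishing as $t\to\infty$.

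The decisive step is locating $W_{t_2}$. Since $\Vrl\rho{\lm 3}\in\Vbu\rho{\lm 3}$ and $\Vrl\rho{\Gamma(t_2)}\in\VUbd\rho{\Gamma(t_2)}$ intertwine $\rho$ with $\pbd$ off $\lm 3$ and off $\Gamma(t_2)$ respectively, Lemma \ref{lem20} and Lemma \ref{lem81} yield $W_{t_2}\in\pbd\lmk\caA_{(\lm 3\cup\Gamma(t_2))^c\cap\hu}\rmk'\cap\fbd$. Because $\arg\Gamma$ lies below $\arg\lm 3$ and every translate $\Gamma(t_2)$ keeps the same argument while only moving rightward along the boundary, there is a single right cone $\lm{r2}\in\Crbd$, independent of $t_2$ (far-left apex, opening $(0,2\psi')$ with $2\psi'$ placed in the angular gap between $\arg\lm 3$ and $\arg\lm 1$), such that $\lm 3\cup\Gamma(t_2)\subset\lm{r2}$ for all $t_2\ge 0$ and still $\lm 1\leftarrow_r\lm{r2}$. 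Hence $W_{t_2}\in\pbd\lmk\caA_{\lm{r2}^c\cap\hu}\rmk'\cap\fbd$ for every $t_2$.

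Finally I would apply Lemma \ref{lem82}(ii) to $\sigma$. For $\sigma\in\Obun{\lm 1(t)}$ the trivial transporter $\unit\in\Vbu\sigma{\lm 1(t)}$ is admissible, so $\Tbd\sigma{\lzr}\unit=\Tbd\sigma{\lm{r1}}{\unit}$ plays the role of $\Tbd\rho{\lm{r1}}{\Vrl\rho{\lm 1(t)}}$ in that lemma. Its proof produces the error $4 f^{(r)\mopbd}_{\varphi,\frac\varepsilon2,\frac\varepsilon2}(s)$ and an admissible range of $t$ depending only on the cones $\lm 1$, $\lm{r2}$ (via Assumption \ref{assum80} and the localization in $\lm 1(t)$ supplied by Lemma \ref{lem81}), with no dependence on the object beyond its localization region. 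Consequently there is $g(t)\to 0$ with
\[
\sup_{\sigma\in\Obun{\lm 1(t)}}\lV \lmk\Tbd\sigma{\lzr}\unit-\id\rmk\big\vert_{\pbd\lmk\caA_{\lm{r2}^c\cap\hu}\rmk'\cap\fbd}\rV\le g(t).
\]
Combining the three displays and letting $t_2\to\infty$ gives $\sup_{\sigma}\lV\hb\sigma\rho-\Vrl\rho{\lm 3}^*\Tbd\sigma{\lzr}\unit(\Vrl\rho{\lm 3})\rV\le g(t)\to 0$. I expect the main obstacle to be the geometric bookkeeping of the third paragraph: producing one right cone $\lm{r2}$ that contains the fixed bulk cone $\lm 3$ and every right-translate $\Gamma(t_2)$ simultaneously while remaining $\leftarrow_r$-separated from $\lm 1$, and confirming that the estimate of Lemma \ref{lem82}(ii) is genuinely uniform over $\sigma\in\Obun{\lm 1(t)}$ (which it is, since that argument only sees the localization region of the object and the model-independent functions of Assumption \ref{assum80}).
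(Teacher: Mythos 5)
Your proposal is correct and takes essentially the same route as the paper: the paper first passes to $\tilde\rho:=\Ad\lmk\Vrl\rho{\lm3}\rmk\rho\in \Obun{\lm3}$ via the naturality Lemma \ref{lem42}, then shows $\sup_\sigma\lV \hb\sigma{\tilde\rho}-\unit\rV\to 0$ using (\ref{niigata}) together with the localization of $\tilde\rho$'s transporters in $\pbd\lmk\caA_{\lm{r4}^c\cap\hu}\rmk'\cap\fbd$ and the uniform Lemma~\ref{lem82}-type estimate. Your unitary $W_{t_2}=\Vrl\rho{\Gamma(t_2)}\Vrl\rho{\lm3}^*$ is exactly a charge transporter for this $\tilde\rho$, so your ``elementary reduction'' is the same computation with Lemma \ref{lem42} inlined, and your final uniform bound is the same invocation of the Lemma \ref{lem82} argument that the paper makes.
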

\begin{rem}
Note in particular 
\begin{align}
\begin{split}
\lim_{t\to\infty}\sup_{\sigma\in \Obun{\lm 1(t)}}
\lV
{\Vrl\rho{\lm3}'}^* \Tbd\sigma{\lzr}\unit\lmk {\Vrl\rho{\lm3}'}\rmk-\Vrl\rho{\lm3}^* \Tbd\sigma{\lzr}\unit\lmk \Vrl\rho{\lm3}\rmk
\rV
\end{split}
\end{align}
for any other $\Vrl\rho{\lm3}'\in \Vbu\rho{\lm 3}$.
\end{rem}
\change{
\begin{rem}\label{murasaki}See Figure \ref{317}.
If $\rho\in \Obun{\lm3}$, we may take $\Vrl\rho{\lm3}=\unit$
hence
$\Vrl\rho{\lm3}^* \Tbd\sigma{\lzr}\unit\lmk \Vrl\rho{\lm3}\rmk=\unit$,
and (\ref{mikan1}) says
we have \[
\lim_{t\to\infty}\sup_{\sigma\in \Obun{\lm 1(t)}}
\lV
\hb\sigma \rho -\unit 
\rV=0
\]
In our definition of $\hb\sigma \rho$, (roughly) we are moving the $\rho$ particle
to right infinity, and measuring the braiding at the cross with $\sigma$'s string operator (recall Remark \ref{mike}).
If $\rho$ is on the right of $\sigma$ from the beginning (for example , if $\rho\in \Obun{\lm3}$ and
$\lm 3$ is on the right of $\lm 1$), then there will be no
crossing. Therefore $\hb\sigma \rho\approx \unit$ (tail occurs because of the approximate Haag duality). That is what this Lemma is telling us.
\end{rem} 
}
\begin{proof}
For $\rho\in \Obul$ and $\Vrl\rho{\lm3}\in \Vbu\rho{\lm 3}$, we set
\begin{align}
\tilde\rho:=\Ad\lmk \Vrl\rho{\lm3}\rmk\rho\in \Obun{\lm3}\subset\Obul.
\end{align}
Then, because $\Vrl\rho{\lm3}\in (\rho,\tilde\rho)_U$,
by Lemma \ref{lem42},
we have
\begin{align}\label{usagi}
\begin{split}
\hb \sigma{\tilde\rho}\lmk \unit_\sigma\otimes \Vrl\rho{\lm3} \rmk
=\lmk \Vrl\rho{\lm3}\otimes \unit_\sigma\rmk
\hb \sigma\rho,
\end{split}
\end{align}
for each $\sigma\in \Obun{\lm 1(t)}$.
Let $\lm{r2},\lm{r4}\in \Crbd$ with
$\lz\leftarrow_r \lm{r2}$,
$\lm 1\leftarrow \lm{r4}$,
$\lm 3\subset \lm{r4}$.
Then for any $\Vrl{\tilde\rho}{\lm {r2}(t_2)}\in \VUbd{\tilde\rho}{\lm {r2}(t_2)}$,
$t_2\ge 0$, we have
$\Vrl{\tilde\rho}{\lm {r2}(t_2)}\in \pbd\lmk \caA_{\lm {r4}^c\cap\hu}\rmk'\cap\fbd$
for $t_2\ge 0$ large enough.
Hence we  have
\begin{align}
\begin{split}
&\sup_{\sigma\in \Obun{\lm 1(t)}}
\lV 
\hb \sigma {\tilde\rho}-\unit
\rV
=
\sup_{\sigma\in \Obun{\lm 1(t)}}
\lim_{t_2\to \infty}
\lV
\Vrl{\tilde\rho}{\lm {r2}(t_2)}^*
\lmk
\Tbd\sigma{\lm {r0}}\unit-\id
\rmk\lmk \Vrl{\tilde\rho}{\lm {r2}(t_2)}\rmk
\rV\\
&\le
\sup_{\sigma\in \Obun{\lm 1(t)}}
\lV
\lmk\left.
\Tbd\sigma{\lm {r0}}\unit-\id
\rmk\right\vert_{\pbd\lmk \caA_{\lm {r4}^c\cap\hu}\rmk'\cap\fbd}
\rV.
\end{split}
\end{align}
By the same argument as Lemma \ref{lem82}, the last line
goes to $0$ as $t\to\infty$.
Combining this and (\ref{usagi}), we obtain the result.
\end{proof}
\change{This property of $\hb\sigma \rho$ will be relevant later.}
\section{A bulk-to-boundary map}
In this section, we construct a map from bulk to boundary under the Assumption \ref{assump7}.
Recall from Lemma 2.13 of \cite{MTC}, for any
$\theta\in\bbR$, $\varphi\in (0,\pi)$, $\rho\in \Obk$, $\lz\in C_{(\theta,\varphi)}$,
$\Vrl{\rho}{\lz}\in \Vbk{\rho}{\lz}$, 
we may define an endomorphism $T_{\rho}^{(\theta,\varphi)\lz \Vrl{\rho}{\lz}}$
on $\caB_{(\theta,\varphi)}$.
We use this symbol for the rest of this paper.
Assumption \ref{assump7} gives a $*$-isomorphism between $\fbk$ and $\fbd$.
\begin{lem}\label{lem10}
Consider the setting in subsection \ref{setting2}. Assume Assumption \ref{assump7}.
Then there exists a $*$-isomorphism 
$\tau: \fbk\to \fbd$ such that
\begin{align}
\left.\tau\pbk\right\vert_{\abd}=\pbd.
\end{align}
\end{lem}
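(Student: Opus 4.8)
The plan is to construct $\tau$ by gluing the local $*$-isomorphisms furnished by the quasi-equivalences in Assumption \ref{assump7}. Fix $\Lambda\in\CUbk$. By Assumption \ref{assump7}, $\pbk\vert_{\caA_{\lhu}}$ and $\pbd\vert_{\caA_{\lhu}}$ are quasi-equivalent, so by definition the assignment $\pbk(A)\mapsto\pbd(A)$ ($A\in\caA_{\lhu}$) extends to a normal $*$-isomorphism
\begin{align*}
\tau_{\Lambda}\colon \pbk\lmk\caA_{\lhu}\rmk''\to \pbd\lmk\caA_{\lhu}\rmk'',
\qquad
\tau_{\Lambda}\circ\pbk\vert_{\caA_{\lhu}}=\pbd\vert_{\caA_{\lhu}}.
\end{align*}
(Recall that every $*$-isomorphism of von Neumann algebras is automatically $\sigma$-weakly continuous.)

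First I would record that $\tau_{\Lambda}$ is \emph{unique}: since $\pbk(\caA_{\lhu})$ is $\sigma$-weakly dense in $\pbk(\caA_{\lhu})''$ by the bicommutant theorem, any normal $*$-homomorphism carrying $\pbk(A)$ to $\pbd(A)$ is determined on the whole of $\pbk(\caA_{\lhu})''$. Next I would prove compatibility of the family $\{\tau_\Lambda\}_{\Lambda\in\CUbk}$: given $\Lambda_1\subset\Lambda_2$ in $\CUbk$, normality of $\tau_{\Lambda_2}$ and the fact that $\tau_{\Lambda_2}(\pbk(A))=\pbd(A)\in\pbd(\caA_{\Lambda_1\cap\hu})''$ for $A\in\caA_{\Lambda_1\cap\hu}$ force $\tau_{\Lambda_2}\lmk\pbk(\caA_{\Lambda_1\cap\hu})''\rmk\subset\pbd(\caA_{\Lambda_1\cap\hu})''$; the uniqueness just noted then gives $\tau_{\Lambda_2}\vert_{\pbk(\caA_{\Lambda_1\cap\hu})''}=\tau_{\Lambda_1}$.

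Since $\CUbk$ is upward-filtering, the union $\caF_{\mopbk}^{U(0)}=\cup_{\Lambda}\pbk(\caA_{\lhu})''$ is a $*$-algebra, and the compatible family defines a single isometric $*$-homomorphism $\tau^{(0)}\colon\caF_{\mopbk}^{U(0)}\to\caF_{\mopbd}^{U(0)}$ by $\tau^{(0)}(x):=\tau_{\Lambda}(x)$ whenever $x\in\pbk(\caA_{\lhu})''$; it is isometric because each $\tau_\Lambda$ is. Passing to norm closures I would extend $\tau^{(0)}$ to an isometric $*$-homomorphism $\tau\colon\fbk\to\fbd$. Surjectivity is then immediate: $\tau(\caF_{\mopbk}^{U(0)})=\caF_{\mopbd}^{U(0)}$ is norm-dense in $\fbd$, and an isometry has closed range, so $\tau$ is onto; hence $\tau$ is a $*$-isomorphism. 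Finally, for $A\in\aloch$ I would pick $\Lambda\in\CUbk$ with $A\in\caA_{\lhu}$, so that $\tau(\pbk(A))=\tau_{\Lambda}(\pbk(A))=\pbd(A)$; density of $\aloch$ in $\abd$ together with the continuity of $\tau\pbk$ and of $\pbd$ then yields $\tau\pbk\vert_{\abd}=\pbd$.

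The hard part is the compatibility of the $\tau_\Lambda$, i.e. the second step: it is exactly there that normality of von Neumann algebra isomorphisms and the $\sigma$-weak density of $\pbk(\caA_{\lhu})$ in $\pbk(\caA_{\lhu})''$ are essential, and it is what makes the local isomorphisms patch into a global one. Once this is secured, the gluing over the directed set $\CUbk$, the isometry, and the surjectivity are routine, and the intertwining relation $\tau\pbk\vert_{\abd}=\pbd$ holds by construction on local elements.
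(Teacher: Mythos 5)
Your proposal is correct and follows essentially the same route as the paper's proof: local $*$-isomorphisms $\tau_\Lambda$ from the quasi-equivalences, compatibility via normality ($\sigma$-weak continuity), gluing over the upward-filtering set $\CUbk$, isometric extension to the norm closure, surjectivity from density plus closed range, and verification of the intertwining relation on local elements. The only difference is that you spell out the uniqueness/compatibility step in more detail than the paper does, which is a welcome but inessential elaboration.
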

\begin{proof}
By Assumption \ref{assump7}, for each $\Lambda\in\CUbk$, there exists
a $*$-isomorphism $\tau_\Lambda : \pbk\lmk {\caA_{\lhu}}\rmk''\to \pbd\lmk {\caA_{\lhu}}\rmk''$
such that 
\begin{align}\label{hokkaido}
\tau_\Lambda\pbk(A)=\pbd(A),\quad\text{ for all} \;A\in {\caA_{\lhu}}.
\end{align}
From (\ref{hokkaido}) and $\sigma$-weak continuity of $\tau_\Lambda$s, for any $\Lambda_1,\Lambda_2\in\CUbk$
with $\Lambda_1\subset\Lambda_2$, we have 
\[
\tau_{\Lambda_2}\vert_{\pbk\lmk {\caA_{\Lambda_1\cap\hu}}\rmk''}
=\tau_{\Lambda_1}.
\]
 By the upward-filtering property of $\CUbk$, this means
that we can define an isometric $*$-homomorphism 
$\tau_0$ from $\caF_{\mopbk}^{U(0)}$ onto $\caF_{\mopbd}^{U(0)}$
by
\begin{align}
\tau_0(x)=\tau_\Lambda(x), \quad \text{if}\; \;x\in \pbk\lmk {\caA_{\lhu}}\rmk'',\quad \Lambda\in\CUbk.
\end{align}
Because $\tau_0$ is an isometry, we can extend it to an isometric $*$-homomorphism
$\tau: \fbk\to \fbd$.
It is onto because $\tau(\fbk)\supset \tau_0(\caF_{\mopbk}^{U(0)})=\caF_{\mopbd}^{U(0)}$
and $\tau(\fbk)$ is closed.
Hence $\tau$ is a $*$-isomorphism.
For any $A\in \aloch$, there exists $\Lambda\in \CUbk$ such that $A\in \caA_{\lhu}$.
Hence we have
\begin{align}
\tau\pbk(A)=\tau_0\pbk\lmk A\rmk=\pbd(A).
\end{align}
As this holds for any $A\in \aloch$, the same is true for any $\abd$.
\end{proof}
\change{In a word, this $\tau$ will give a copy of the bulk theory at the boundary.
In order to carry it out, we would like to restrict our bulk endomorphisms to
$\fbk$. The following Lemma will be used for that purpose.}
\begin{lem}\label{lem13}
Consider the setting in subsection \ref{setting2}.
We have
\begin{align}\label{hyogo}
\gu\cap \pbk\lmk \caA_{\hd}\rmk'\subset\fbk.
\end{align}
\end{lem}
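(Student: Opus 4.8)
The plan is to approximate an element of $\gu\cap\pbk(\caA_{\hd})'$ by elements of $\pbk(\caA_\Lambda)''$ with $\Lambda\in\CUbk$, and then to average away the part of $\caA_\Lambda$ sitting below the boundary, which is finite-dimensional.

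\textbf{Reduction.} Fix $x\in\gu\cap\pbk(\caA_{\hd})'$ and $\varepsilon>0$. Since $\gu$ is by definition the norm closure of $\cup_{\Lambda\in\CUbk}\pbk(\caA_\Lambda)''$, I would choose $\Lambda\in\CUbk$ and $y\in\pbk(\caA_\Lambda)''$ with $\lV x-y\rV<\varepsilon$. The geometric input is that every cone in $\CUbk$ opens only into directions $\beta\in(0,\pi)$, so each of its rays strictly increases the second coordinate; consequently $F:=\Lambda\cap\hd$ meets $\bbZ^2$ in a finite set (the same fact used in Lemma \ref{daidai}). Hence $\caA_\Lambda=\caA_{\Lambda\cap\hu}\otimes\caA_F$ with $\caA_F$ a finite-dimensional full matrix algebra; since $\caA_F$ is simple and $\pbk$ is unital, $\pbk(\caA_F)$ is a type I factor in $\caB(\hbk)$.

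\textbf{Averaging out the lower part.} I would introduce the Haar average $E(T):=\int_{\caU(\caA_F)}\pbk(u)\,T\,\pbk(u)^*\,du$ over the compact group $\caU(\caA_F)$; this is a unital, norm-one, completely positive map, namely the normal conditional expectation of $\caB(\hbk)$ onto the fixed-point algebra $\pbk(\caA_F)'$. Because $\pbk(u)\in\pbk(\caA_\Lambda)''$ for $u\in\caU(\caA_F)$, the average $E(y)$ again lies in $\pbk(\caA_\Lambda)''$, and by left-invariance of Haar measure $E(y)$ commutes with $\pbk(\caA_F)$, so $E(y)\in\pbk(\caA_\Lambda)''\cap\pbk(\caA_F)'$.

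\textbf{The relative commutant and conclusion.} The crux is the identity $\pbk(\caA_\Lambda)''\cap\pbk(\caA_F)'=\pbk(\caA_{\Lambda\cap\hu})''$, which I would prove using that $\pbk(\caA_F)$ is type I: writing $\hbk\cong\bbC^{k}\otimes\caK$ so that $\pbk(\caA_F)=\caB(\bbC^k)\otimes\bbC\unit$ and $\pbk(\caA_F)'=\bbC\unit\otimes\caB(\caK)$, the algebra $\pbk(\caA_{\Lambda\cap\hu})''$ commutes with $\pbk(\caA_F)$ and hence equals $\bbC\unit\otimes N$ for some $N\subset\caB(\caK)$; then $\pbk(\caA_\Lambda)''=\caB(\bbC^k)\otimes N$, and intersecting with $\bbC\unit\otimes\caB(\caK)$ returns $\bbC\unit\otimes N=\pbk(\caA_{\Lambda\cap\hu})''\subset\fbk$. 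Thus $E(y)\in\fbk$. Finally, since $F\subset\hd$ we have $x\in\pbk(\caA_{\hd})'\subset\pbk(\caA_F)'$, so $\pbk(u)x\pbk(u)^*=x$ for all $u\in\caU(\caA_F)$ and therefore $E(x)=x$. Consequently $\lV x-E(y)\rV=\lV E(x-y)\rV\le\lV x-y\rV<\varepsilon$, and since $\fbk$ is norm-closed and $\varepsilon$ is arbitrary, $x\in\fbk$. The only genuinely non-formal step is the relative-commutant identity, and that is precisely where the finiteness of $\Lambda\cap\hd$ (equivalently, the type I nature of $\pbk(\caA_F)$) is essential; the rest is bookkeeping.
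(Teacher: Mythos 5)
Your proof is correct, and its skeleton coincides with the paper's: approximate $x$ in norm by $y\in\pbk(\caA_{\Lambda})''$ with $\Lambda\in\CUbk$, observe that $F=\Lambda\cap\hd$ is finite, and apply a norm-one projection onto $\pbk(\caA_{F})'$ which fixes $x$ (since $x\in\pbk(\caA_{\hd})'\subset\pbk(\caA_{F})'$) and carries $\pbk(\caA_{\Lambda})''$ into $\pbk(\caA_{\Lambda\cap\hu})''\subset\fbk$; norm-closedness of $\fbk$ then finishes the argument. Where you genuinely differ is in the construction of that projection and the verification of its key mapping property. The paper (Lemma \ref{lem12}) uses the matrix-unit slice map $\caE_{S}(T)=\sum_{I}\pi(e_{II_0})T\pi(e_{I_0I})$, and proves $\caE_{S}\lmk\pi(\caA_{\Gamma})''\rmk\subset\pi(\caA_{\Gamma\setminus S})''$ by evaluating on local elements $A=\sum_{IJ}e_{IJ}a_{IJ}$ (which gives $\pi(a_{I_0I_0})$) and then invoking $\sigma$-weak continuity of $\caE_S$. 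You instead take the Haar average over $\caU(\caA_{F})$ and obtain the inclusion from the relative-commutant identity $\pbk(\caA_{\Lambda})''\cap\pbk(\caA_{F})'=\pbk(\caA_{\Lambda\cap\hu})''$, proved via the type I decomposition $\hbk\cong\bbC^{k}\otimes\caK$. Note that the two expectations are actually different maps (yours acts as $\tfrac1k\Tr_k\otimes\id$ on the finite tensor factor, the paper's is a corner compression), but both are norm-one projections onto $\pbk(\caA_{F})'$ with the required restriction property, so either works. Your route is self-contained and conceptually transparent, since it isolates the structural reason the lemma holds (the relative commutant collapses because $\pbk(\caA_F)$ is a finite type I factor); the paper's formulation via Lemma \ref{lem12} is slightly more flexible --- it is stated for arbitrary $\Gamma$ and representations and is reused elsewhere, e.g.\ in the proof of Lemma \ref{lem14}, where the averaged algebra is not split off as a tensor factor of the whole algebra in such a clean way.
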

\begin{proof}
For any $x\in \gu\cap \pbk\lmk \caA_{\hd}\rmk'$, by the definition of $\gu$,
there exists a sequence of $\Lambda_n\in \CUbk$ and $x_n\in \pbk\lmk\caA_{\Lambda_n}\rmk''$, $n\in\bbN$
such that $\lim_{n\to\infty}\lV x-x_n\rV=0$.
Set $S_n:=\Lambda_n\cap\hd$. Note that $S_n$ is a finite set, because $\Lambda_n\in \CUbk$.
We apply Lemma \ref{lem12} replacing
$\Gamma$, $S$, $(\caH,\pi)$ with $\Lambda_n$, $S_n$, $(\hbk,\pbk\vert_{\caA_{\Lambda_n}})$.
Then we obtain projections of norm $1$ $\caE_{S_n}: \caB(\hbk)\to \pbk(\caA_{S_n})'$
such that $\caE_{S_n}\lmk \pbk\lmk {\caA_{\Lambda_n}} \rmk''\rmk=
\pbk \lmk {\caA_{\Lambda_n\setminus S_n}} \rmk''=\pbk \lmk {\caA_{\Lambda_n\cap H_U}} \rmk''
\subset \fbk$.
From this, for $x_n\in \pbk\lmk\caA_{\Lambda_n}\rmk''$, we have
$\caE_{S_n}(x_n)\in \fbk$.
Because $x\in  \pbk\lmk \caA_{\hd}\rmk'\subset \pbk \lmk \caA_{S_n} \rmk'$,
we have $\caE_{S_n}(x)=x$.
Then, 
\begin{align}
\lV
x-\caE_{S_n}(x_n)
\rV=\lV \caE_{S_n}\lmk x-x_n\rmk\rV\le \lV x-x_n\rV\to 0.
\end{align}
Because $\caE_{S_n}(x_n)\in \fbk$, this means $x\in \fbk$.
\end{proof}
\change{This Lemma guarantees that the restriction of $ \Tbkv\rho{\Lambda_0}$ below to $\fbk$
gives an endomorphism on $\fbk$.}
\begin{lem}\label{lem15}
 Consider the setting in subsection \ref{setting2}.
 Assume Assumption \ref{assum3}.
 Then for any $\Lambda_0\in \CUbk$ with $\Lambda_0\subset H_U$,
 $\rho\in \Obk$, $\Vrl\rho{\Lambda_0}\in\Vbk \rho {\Lambda_0}$,
 we have 
 \begin{align}
 \Tbkv\rho{\Lambda_0}\lmk\fbk\rmk\subset \fbk.
 \end{align}
\end{lem}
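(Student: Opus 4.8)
The plan is to reduce the statement to the characterization $\fbk=\gu\cap\pbk\lmk\caA_{\hd}\rmk'$ and then verify the two defining conditions separately. First I would record that this characterization holds: for any cone $\Lambda$ the region $\Lambda\cap\hu$ is disjoint from $\hd$, so $\pbk\lmk\caA_{\Lambda\cap\hu}\rmk$ commutes with $\pbk\lmk\caA_{\hd}\rmk$; passing to double commutants and then the norm closure gives $\fbk\subset\pbk\lmk\caA_{\hd}\rmk'$, while $\Lambda\cap\hu\subset\Lambda$ gives $\fbk\subset\gu$. Hence $\fbk\subset\gu\cap\pbk\lmk\caA_{\hd}\rmk'$, and combined with Lemma \ref{lem13} this yields the equality $\fbk=\gu\cap\pbk\lmk\caA_{\hd}\rmk'$. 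It then suffices to show that for $x\in\fbk$ the element $\Tbkv\rho{\Lambda_0}(x)$ lies in both $\gu$ and $\pbk\lmk\caA_{\hd}\rmk'$.

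The first membership is immediate: since $\gu=\caB_{(\frac{3\pi}2,\frac\pi 2)}$ and, by Lemma 2.13 of \cite{MTC}, $\Tbkv\rho{\Lambda_0}$ is an endomorphism of $\caB_{(\frac{3\pi}2,\frac\pi 2)}$, we have $\Tbkv\rho{\Lambda_0}(\gu)\subset\gu$, so $\Tbkv\rho{\Lambda_0}(x)\in\gu$ for every $x\in\fbk\subset\gu$.

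For the second membership I would exploit that $\Tbkv\rho{\Lambda_0}$ is localized in $\Lambda_0$ together with the hypothesis $\Lambda_0\subset\hu$. Since $\Lambda_0\subset\hu$ we have $\hd\subset\Lambda_0^c$, so for every $A\in\caA_{\hd}\subset\caA_{\Lambda_0^c}$ the defining relation $\Tbkv\rho{\Lambda_0}\pbk=\Ad\lmk\Vrl\rho{\Lambda_0}\rmk\rho$ from Lemma 2.13 of \cite{MTC} and the property of $\Vrl\rho{\Lambda_0}\in\Vbk\rho{\Lambda_0}$ give
\begin{align*}
\Tbkv\rho{\Lambda_0}\lmk\pbk(A)\rmk=\Ad\lmk\Vrl\rho{\Lambda_0}\rmk\rho(A)=\pbk(A),
\end{align*}
where $\pbk(A)\in\gu$ because $\pbk\lmk\abk\rmk\subset\gu$. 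Now fix $x\in\fbk$ and $A\in\caA_{\hd}$. Because $x\in\fbk\subset\pbk\lmk\caA_{\hd}\rmk'$ we have $x\pbk(A)=\pbk(A)x$, and since $\Tbkv\rho{\Lambda_0}$ is a homomorphism on the algebra $\gu$, which contains $x$, $\pbk(A)$ and their product, I would compute
\begin{align*}
\Tbkv\rho{\Lambda_0}(x)\,\pbk(A)=\Tbkv\rho{\Lambda_0}\lmk x\pbk(A)\rmk=\Tbkv\rho{\Lambda_0}\lmk\pbk(A)x\rmk=\pbk(A)\,\Tbkv\rho{\Lambda_0}(x).
\end{align*}
As $A\in\caA_{\hd}$ was arbitrary, $\Tbkv\rho{\Lambda_0}(x)\in\pbk\lmk\caA_{\hd}\rmk'$, and combining with the first membership gives $\Tbkv\rho{\Lambda_0}(x)\in\gu\cap\pbk\lmk\caA_{\hd}\rmk'=\fbk$.

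The whole argument is short once the characterization $\fbk=\gu\cap\pbk\lmk\caA_{\hd}\rmk'$ is available, so I do not anticipate a genuine obstacle; the only points demanding care are the correct invocation of Lemma \ref{lem13} and the verification that $\Tbkv\rho{\Lambda_0}$ fixes $\pbk\lmk\caA_{\hd}\rmk$ pointwise, which is exactly where the hypothesis $\Lambda_0\subset\hu$ enters.
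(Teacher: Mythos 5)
Your proof is correct and shares the overall skeleton of the paper's: both reduce the statement to landing in $\gu\cap\pbk\lmk\caA_{\hd}\rmk'$ and invoke Lemma \ref{lem13}, and your commutation computation for membership in $\pbk\lmk\caA_{\hd}\rmk'$ is exactly the paper's equation (\ref{shika}), only applied to all of $\fbk$ at once rather than to the norm-dense union $\caF_{\mopbk}^{U(0)}$ followed by continuity. The genuine divergence is in the $\gu$-membership. The paper does not cite the endomorphism property of $\Tbkv\rho{\Lambda_0}$ on $\gu$; instead, for $x\in\pbk\lmk\caA_{\lhu}\rmk''$ it shows $\Tbkv\rho{\Lambda_0}(x)\in\pbk\lmk\caA_{\Lambda_1^c}\rmk'$ for a common enlargement $\Lambda_1\in\CUbk$ of $\Lambda$ and $\Lambda_0$ (the same commutation trick again), and then applies Lemma \ref{lem11}, i.e.\ the approximate Haag duality, to conclude $\pbk\lmk\caA_{\Lambda_1^c}\rmk'\subset\gu$. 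You outsource precisely this step to Lemma 2.13 of \cite{MTC}. That is defensible — the paper's own preamble to this section asserts that Lemma 2.13 of \cite{MTC} yields an endomorphism \emph{on} $\caB_{(\frac{3\pi}2,\frac\pi 2)}=\gu$, and that result holds under the approximate Haag duality assumed here — but be aware that this is exactly where Assumption \ref{assum3} enters your argument, invisibly. In the bulk, unlike the boundary case of Lemma \ref{lem20}, the transporters in $\Vbk\rho\Lambda$ are merely unitaries on $\hbk$, not elements of $\gu$, so the charge transporters $V_{\rho\Lambda}V_{\rho K}^*$ used to build the endomorphism lie a priori only in $\pbk\lmk\caA_{(\Lambda\cup K)^c}\rmk'$; it takes the approximate Haag duality (Lemma \ref{lem11}) to place them, and hence the range of $\Tbkv\rho{\Lambda_0}$, inside $\gu$. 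If Lemma 2.13 of \cite{MTC} were read in the weaker form "a $*$-homomorphism $\gu\to\caB(\hbk)$ with properties (i),(ii)," your second step would have a gap, and filling it would reproduce the paper's Lemma \ref{lem11} argument. In short: same decomposition, same use of Lemma \ref{lem13}, same commutation computation; you buy the $\gu$-invariance from \cite{MTC}, while the paper re-derives it so that the dependence on Assumption \ref{assum3} is explicit.
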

\begin{proof}
We first claim $\Tbkv\rho{\Lambda_0}\lmk\pbk\lmk\caA_{\lhu}\rmk''\rmk\subset \pbk\lmk\caA_{\hd}\rmk'$ 
for any $\Lambda\in \CUbk$.
To see this, note that for any $A\in\caA_{\hd}\subset\caA_{\Lambda_0^c}$, we have
$\Tbkv\rho{\Lambda_0}\lmk \pbk(A)\rmk=\pbk(A)$ by Lemma 2.13 of \cite{MTC}.
Therefore, for any
 $x\in \pbk\lmk\caA_{\lhu}\rmk''$, and $A\in \caA_{\hd}$ we have
 \begin{align}\label{shika}
 \left[
 \pbk(A), \Tbkv\rho{\Lambda_0}(x)
 \right]
 = \Tbkv\rho{\Lambda_0}
 \lmk
\left[
\pbk(A),x
\right]
\rmk=0,
 \end{align}
 proving the claim.
 
 Next, we claim that for any $\Lambda\in\CUbk$, 
$
\Tbkv\rho{\Lambda_0}\lmk \pbk\lmk \caA_{\lhu}\rmk''\rmk\subset\gu$.
To see this, note from the upward-filtering property of $\CUbk$,
for $\Lambda,\Lambda_0\in\CUbk$, there exists a $\Lambda_1\in \CUbk$
such that $\Lambda,\Lambda_0\subset \Lambda_1$.
 Then for any
 $A\in \caA_{(\Lambda_1)^c}\subset \caA_{(\Lambda_0)^c}$,
 we have $\Tbkv\rho{\Lambda_0}\lmk \pbk(A)\rmk=\pbk(A)$,
 and the same argument as in (\ref{shika}) implies 
 $\Tbkv\rho\lz\lmk \pbk(\caA_{\lhu})''\rmk\subset \pbk\lmk\caA_{\Lambda_1^c}\rmk'\subset \gu$, proving the claim.
  Here we used Lemma \ref{lem11}.
  
  Combining these with Lemma \ref{lem13}, for any $\Lambda\in\CUbk$, 
 we obtain
  \begin{align}
  \Tbkv\rho{\Lambda_0}\lmk \pbk\lmk \caA_{\lhu}\rmk''\rmk\subset\gu\cap \pbk\lmk\caA_{\hd}\rmk'
\subset\fbk.
  \end{align}
  By the definition of $\fbk$, we obtain $ \Tbkv\rho{\Lambda_0}\lmk \fbk\rmk\subset\fbk$.
\end{proof}
\change{Intertwiners between such endomorphisms are also in $\fbk$.}
\begin{lem}\label{lem16}
 Consider the setting in subsection \ref{setting2}.
 Assume Assumption \ref{assum3}.Then for any $\rho,\sigma\in \Obk$,
 $\lm1,\lm2\in\CUbk$ with $\lm1,\lm2\subset H_U$, we have
 $\lmk\Tbkv\rho{\lm 1},\Tbkv\sigma{\lm 2}\rmk\subset\fbk$.
\end{lem}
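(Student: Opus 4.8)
The plan is to show that any $R\in\lmk \Tbkv\rho{\lm 1},\Tbkv\sigma{\lm 2}\rmk$ lies in both $\gu$ and $\pbk\lmk\caA_{\hd}\rmk'$, and then to invoke Lemma \ref{lem13} to conclude $R\in\fbk$. The entire argument rests on one observation: since $\rho,\sigma\in\Obk$, the endomorphisms $\Tbkv\rho{\lm 1}$, $\Tbkv\sigma{\lm 2}$ are well defined (as in Lemma 2.13 of \cite{MTC}, with $\caB_{(\frac{3\pi}2,\frac\pi 2)}=\gu$), and by construction $\Tbkv\rho{\lm 1}\pbk=\Ad\lmk \Vrl\rho{\lm 1}\rmk\rho$ with $\Vrl\rho{\lm 1}\in\Vbk\rho{\lm 1}$, so that $\Tbkv\rho{\lm 1}\pbk(A)=\pbk(A)$ for every $A\in\caA_{(\lm 1)^c}$, and likewise for $\sigma$ and $\lm 2$. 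Because both localization cones sit inside $\hu$, this ``trivial region'' is large enough to control $R$ on the complement of a single big cone and on the whole lower half-plane simultaneously.

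The tool throughout is the intertwining relation $R\,\Tbkv\rho{\lm 1}(x)=\Tbkv\sigma{\lm 2}(x)\,R$ evaluated at $x=\pbk(A)\in\gu$, which is available for all $A\in\abk$ since $\pbk(\abk)\subset\gu$. For the $\gu$-membership I would use that $\CUbk$ is upward-filtering to pick $\Lambda\in\CUbk$ with $\lm 1,\lm 2\subset\Lambda$. Then $\caA_{\Lambda^c}\subset\caA_{(\lm 1)^c}\cap\caA_{(\lm 2)^c}$, so both endomorphisms fix $\pbk(A)$ for $A\in\caA_{\Lambda^c}$ and the intertwining relation collapses to $R\pbk(A)=\pbk(A)R$; hence $R\in\pbk\lmk\caA_{\Lambda^c}\rmk'$, and Lemma \ref{lem11} (under Assumption \ref{assum3}) gives $\pbk\lmk\caA_{\Lambda^c}\rmk'\subset\gu$.

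For the $\pbk\lmk\caA_{\hd}\rmk'$-membership I would run the identical computation with the lower half-plane in place of $\Lambda^c$: since $\lm 1,\lm 2\subset\hu$ we have $\hd\subset(\lm 1)^c\cap(\lm 2)^c$, i.e.\ $\caA_{\hd}\subset\caA_{(\lm 1)^c}\cap\caA_{(\lm 2)^c}$, so for $A\in\caA_{\hd}$ the same collapse yields $R\pbk(A)=\pbk(A)R$, that is $R\in\pbk\lmk\caA_{\hd}\rmk'$. Combining the two conclusions, $R\in\gu\cap\pbk\lmk\caA_{\hd}\rmk'\subset\fbk$ by Lemma \ref{lem13}, which is exactly the assertion.

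I do not expect a genuine obstacle; this is an assembly of Lemma \ref{lem11} and Lemma \ref{lem13} together with the localization identity. The only point needing a moment of care is that the identity $\Tbkv\rho{\lm 1}\pbk(A)=\pbk(A)$ must hold on the \emph{unbounded} regions $\Lambda^c$ and $\hd$. This causes no trouble, because it is an identity of maps precomposed with $\pbk$ — valid for every $A\in\abk$ localized outside $\lm 1$ — and $\pbk(\abk)$ already lies in the domain $\gu$ of the endomorphisms, so no additional $\sigma$-weak closure argument is required.
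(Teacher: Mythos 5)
Your proof is correct and follows essentially the same route as the paper's: both arguments observe that an intertwiner must commute with $\pbk$ of the algebra of the region where the two endomorphisms act as the identity, and then use the upward-filtering property of $\CUbk$ together with Lemma \ref{lem13} (with Lemma \ref{lem11} supplying the $\gu$-membership) to land in $\fbk$. The only cosmetic difference is that you split the commutant condition into the two pieces $\pbk\lmk\caA_{\Lambda^c}\rmk'$ and $\pbk\lmk\caA_{\hd}\rmk'$, whereas the paper works with $\pbk\lmk\caA_{(\Lambda_1\cup\Lambda_2)^c}\rmk'$ at once and leaves these details implicit.
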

\begin{proof}
Because $\Tbkv\rho{\lm1}$ and $\Tbkv\rho{\lm2}$ are identity on $\pbk\lmk\caA_{(\Lambda_1\cup\Lambda_2)^c}\rmk$
 we have $\lmk\Tbkv\rho{\lm1},\Tbkv\sigma{\lm2}\rmk\subset \pbk\lmk\caA_{(\Lambda_1\cup\Lambda_2)^c}\rmk'$.
From the upward filtering property of $\CUbk$ and Lemma \ref{lem13},
we obtain $\pbk\lmk\caA_{(\Lambda_1\cup\Lambda_2)^c}\rmk'\subset \fbk$.

\end{proof}
\change{
Now, we translate these bulk objects to boundary objects using the $*$-isomorphism
$\tau$.
}
\begin{lem}\label{lem17}
Consider the setting in subsection \ref{setting2}.
 Assume Assumption \ref{assum3} and Assumption \ref{assump7}.
 Let $(\lz,\lzr)\in \pc$, and $\tau : \fbk\to \fbd$ the $*$-isomorphism given in
 Lemma \ref{lem10}.
 Then the following hold.
 \begin{description}
 \item[(i)]
 For any $\rho\in \Obkl$, 
 \begin{align}\label{aomori}
  F_0^{\llz}(\rho)=\tau\;\Tbk\rho\lz\unit\;\tau^{-1}\pbd
 \end{align}
 belongs to ${\Obul}$
 and defines a map $ F_0^{\llz}: \Obkl\to {\Obul}$.
 \item[(ii)]For any $\rho\in \Obkl$, $(\Lambda,\Lambda_r)\in \pc$ and $\Vrl\rho\Lambda\in \Vbk\rho\Lambda$,
 $\tau(\Vrl\rho\Lambda)$ belongs to $\Vbu{F_0^{\llz}(\rho)}{\Lambda}$.
 \item[(iii)]
 For any $\rho,\sigma\in \Obkl$ and $R\in (\rho,\sigma)$,
 \begin{align}\label{kyoto}
 \begin{split}
 F_1^{\llz}(R):=\tau(R)\in \lmk F_0^{\llz}(\rho), F_0^{\llz}(\sigma)\rmk_U
 \end{split}
 \end{align} 
  and defines a map $ F_1^{\llz}: (\rho,\sigma)\to \lmk F_0^{\llz}(\rho), F_0^{\llz}(\sigma)\rmk_U$.
 \item[(iv)]
  For any $\rho\in \Obkl$, $(\lm 1,\lm {r1})\in \pc$ and
  $\Vrl\rho{\lm 1}\in \Vbk\rho{\lm {1}}$,
 \begin{align}\label{akita}
 \left. \Tbd{ F_0^{\llz}(\rho)}{\lm {r1}}{\tau(\Vrl\rho{\lm 1})}\right\vert_{\fbd}
 =\left.\tau\;\Tbkv\rho{\lm 1}\;\tau^{-1}\right\vert_{\fbd}.
 \end{align}

 \end{description}
\end{lem}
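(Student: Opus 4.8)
The plan is to prove all four items by transporting the bulk objects through the $*$-isomorphism $\tau\colon\fbk\to\fbd$ of Lemma \ref{lem10}, the only two analytic inputs being that $\tau\pbk\vert_{\abd}=\pbd$ and that the bulk extension $\Tbk\rho\lz\unit$ restricts to an endomorphism of $\fbk$ (Lemma \ref{lem15}, applicable since $\lz\subset\lzr\subset\hu$ for $(\lz,\lzr)\in\pc$). First note $F_0^{\llz}(\rho)$ is a genuine representation of $\abd$ on $\hbd$: one has $\pbd(\abd)\subset\fbd$, $\tau^{-1}$ carries $\fbd$ onto $\fbk$, $\Tbk\rho\lz\unit$ preserves $\fbk$, and $\tau$ maps back into $\fbd$. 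The recurring device, which I will use for (ii) and (iii), is that the relevant unitaries and intertwiners actually lie in $\fbk$, so that $\tau$ may be applied to them as a $*$-homomorphism; this is where Lemmas \ref{lem11} and \ref{lem13} enter.

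For (ii), with $\Vrl\rho\Lambda\in\Vbk\rho\Lambda$ and $(\Lambda,\Lambda_r)\in\pc$ so that $\Lambda\subset\hu$, the relations $\Ad(\Vrl\rho\Lambda)\rho=\pbk$ on $\caA_{\Lambda^c}$ and $\rho=\pbk$ on $\caA_{\lz^c}$ (as $\rho\in\Obkl$) give $\Vrl\rho\Lambda\in\pbk\lmk\caA_{(\lz\cup\Lambda)^c}\rmk'$. Taking $\tilde\Lambda\in\CUbk$ with $\lz\cup\Lambda\subset\tilde\Lambda$ yields $\Vrl\rho\Lambda\in\pbk\lmk\caA_{\tilde\Lambda^c}\rmk'\subset\gu$ by Lemma \ref{lem11}, while $\hd\subset(\lz\cup\Lambda)^c$ gives $\Vrl\rho\Lambda\in\pbk\lmk\caA_{\hd}\rmk'$; hence $\Vrl\rho\Lambda\in\gu\cap\pbk\lmk\caA_{\hd}\rmk'\subset\fbk$ by Lemma \ref{lem13}, so $\tau(\Vrl\rho\Lambda)\in\caU(\fbd)$. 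The transporter property is then a pointwise computation: for $A\in\caA_{\Lambda^c\cap\hu}\subset\abd$ one has $\tau^{-1}\pbd(A)=\pbk(A)$ and $\Tbk\rho\lz\unit\pbk(A)=\rho(A)$, so $\Ad(\tau(\Vrl\rho\Lambda))F_0^{\llz}(\rho)(A)=\tau\lmk\Ad(\Vrl\rho\Lambda)\rho(A)\rmk=\tau(\pbk(A))=\pbd(A)$. Item (iii) is identical in spirit: $R\in(\rho,\sigma)$ with $\rho,\sigma\in\Obkl$ forces $R\in\pbk\lmk\caA_{\lz^c}\rmk'\subset\gu\cap\pbk\lmk\caA_{\hd}\rmk'\subset\fbk$, and $\tau(R)F_0^{\llz}(\rho)(A)=\tau(R\rho(A))=\tau(\sigma(A)R)=F_0^{\llz}(\sigma)(A)\tau(R)$ for $A\in\abd$, using $R\rho(A)=\sigma(A)R$.

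For (i), localization is the pointwise computation of (ii) with $\Vrl\rho\Lambda$ replaced by $\unit$: for $A\in\caA_{\lhucz}$ one gets $\Tbk\rho\lz\unit\pbk(A)=\rho(A)=\pbk(A)$, so $F_0^{\llz}(\rho)(A)=\pbd(A)$. For the transporters I must produce a nonempty $\Vbu{F_0^{\llz}(\rho)}{\Lambda}$ for \emph{every} $\Lambda\in\CUbk$; since $\CUbk$ cones need not lie in $\hu$, I choose (by a routine geometric argument) a cone $\Lambda''\in\CUbk$ with $\Lambda''\subset\Lambda\cap\hu$ and then $\Lambda_r\in\Crbd$ with $\Lambda''\subset\Lambda_r$, so $(\Lambda'',\Lambda_r)\in\pc$. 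As $\rho\in\Obk$ admits a transporter $\Vrl\rho{\Lambda''}$, item (ii) gives $\tau(\Vrl\rho{\Lambda''})\in\Vbu{F_0^{\llz}(\rho)}{\Lambda''}\subset\Vbu{F_0^{\llz}(\rho)}{\Lambda}$, the inclusion holding because $\Lambda''\subset\Lambda$ makes the $\Lambda''$-localization condition the stronger one. Thus $F_0^{\llz}(\rho)\in\Obul$.

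Finally (iv), which carries the only real difficulty. Both $\Tbd{F_0^{\llz}(\rho)}{\lm{r1}}{\tau(\Vrl\rho{\lm1})}$ and $\Psi:=\tau\,\Tbkv\rho{\lm1}\,\tau^{-1}$ are $*$-endomorphisms of $\fbd$; the former is defined via Lemma \ref{lem20} once one checks $\tau(\Vrl\rho{\lm1})\in\VUbd{F_0^{\llz}(\rho)}{\lm{r1}}$, which follows from (ii) and $\Vbu{F_0^{\llz}(\rho)}{\lm1}\subset\VUbd{F_0^{\llz}(\rho)}{\lm{r1}}$. Using $\Tbkv\rho{\lm1}=\Ad(\Vrl\rho{\lm1})\Tbk\rho\lz\unit$ (bulk analogue of Lemma 2.14 of \cite{MTC}), Lemma \ref{lem20}(i), and $\tau\pbk\vert_{\abd}=\pbd$, the two maps agree on $\pbd(\abd)$. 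The obstacle is to upgrade this to all of $\fbd$: because $\pbd(\abd)$ is only $\sigma$-weakly (not norm) dense in the generating algebras $\pbd\lmk\caA_{\Lambda\cap\hu}\rmk''$, I must establish $\sigma$-weak continuity of both maps on each such algebra and then use norm density of $\cup_{\Lambda}\pbd\lmk\caA_{\Lambda\cap\hu}\rmk''$ in $\fbd$. For the left-hand side this is Lemma \ref{lem20}(ii). For $\Psi$ the key observation is that on a local algebra $\pbk\lmk\caA_{\Lambda}\rmk''$, $\Lambda\in\CUbk$, the bulk endomorphism $\Tbkv\rho{\lm1}$ coincides with $\Ad(U)$ for a single unitary $U\in\fbk$ --- a product of charge transporters to cones contained in $\hu$, each lying in $\fbk$ exactly by the argument of (ii) --- so that $\Psi$ restricted to $\pbd\lmk\caA_{\Lambda\cap\hu}\rmk''$ equals $\Ad(\tau(U))$, which is manifestly $\sigma$-weakly continuous. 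Granting this local-innerness, which comes from the explicit construction underlying Lemma \ref{lem20} and Lemma 2.13 of \cite{MTC}, the two $\sigma$-weakly continuous maps agree on the $\sigma$-weakly dense $\pbd(\caA_{\Lambda\cap\hu})$, hence on $\pbd\lmk\caA_{\Lambda\cap\hu}\rmk''$, and finally on $\fbd$, proving (iv).
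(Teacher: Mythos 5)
Your proof is correct and follows essentially the same route as the paper: items (i)--(iii) are the same transport-through-$\tau$ computations, with your direct commutant argument (via Lemma \ref{lem11} and Lemma \ref{lem13}) simply unfolding what the paper obtains by citing Lemma 2.14 of \cite{MTC} together with Lemma \ref{lem16}, and (iv) is the same agreement-on-$\pbd(\abd)$ plus $\sigma$-weak-continuity-and-density argument. The only notable difference is in justifying the continuity step of (iv): you stay on the $\fbd$ side and derive $\sigma$-weak continuity of $\tau\,\Tbkv\rho{\lm 1}\,\tau^{-1}$ from local innerness of the bulk extension (conjugation by a product of charge transporters lying in $\fbk$), whereas the paper conjugates the boundary extension by $\tau$ and appeals to the $\sigma$-weak continuity argument of Lemma 2.14 of \cite{MTC} on the bulk side; these are interchangeable justifications of the same step.
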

\begin{proof}
 For any $\rho\in \Obkl$,
the composition (\ref{aomori}) makes sense because 
$\pbk(\abd)\subset \fbk$, Lemma \ref{lem10}, Lemma \ref{lem15}.
For any $\rho\in \Obkl$, $(\Lambda,\Lambda_r)\in \pc$ and $\Vrl\rho\Lambda\in \Vbk\rho\Lambda$, by Lemma 2.14 of \cite{MTC} and Lemma \ref{lem16} we have $\Vrl\rho\Lambda\in 
\lmk\Tbk\rho\lz\unit,\Tbkv\rho\Lambda\rmk\subset\fbk$.
Hence $\tau(\Vrl\rho\Lambda)\in\fbd$ is well-defined, and
by definition, 
\begin{align}
\begin{split}
&\left.\Ad\lmk \tau(\Vrl\rho\Lambda)\rmk F_0^{\llz}(\rho)\right\vert_{\caA_{\lhuc}}
=\left.\Ad\lmk \tau(\Vrl\rho\Lambda)\rmk  \tau\;\Tbk\rho\lz\unit\;\tau^{-1}\pbd\right\vert_{\caA_{\lhuc}}\\
&=\left.\Ad\lmk \tau(\Vrl\rho\Lambda)\rmk  \tau\;\Tbk\rho\lz\unit\;\pbk\right\vert_{\caA_{\lhuc}}
=\left.\Ad\lmk \tau(\Vrl\rho\Lambda)\rmk  \tau\;\rho\right\vert_{\caA_{\lhuc}}
=\left.\tau\;\pbk\right\vert_{\caA_{\lhuc}}=\left.\pbd\right\vert_{\caA_{\lhuc}}.
\end{split}
\end{align}
Hence $\tau(\Vrl\rho\Lambda)\in \Vbu{F_0^{\llz}(\rho)}{\Lambda}$
for any $\rho\in \Obkl$, $(\Lambda,\Lambda_r)\in \pc$ and $\Vrl\rho\Lambda\in \Vbk\rho\Lambda$.
In particular, if we take $\Lambda=\lz$ and $\Vrl\rho\lz=\unit$,
we see that $\left.F_0^{\llz}(\rho)\right\vert_{\caA_{\lhucz}}=\left.\pbd\right\vert_{\caA_{\lhucz}}$.
This implies  (i) and (ii).
(Note that for any $\ld_1\in \CUbk$, there exists a $(\Lambda,\Lambda_r)\in \pc$ such that
$\ld\subset \ld_1\cap\hu$.)
For any $\rho,\sigma\in \Obkl$ and $R\in (\rho,\sigma)$,
$R\in \lmk\Tbk\rho\lz\unit, \Tbk\sigma\lz\unit\rmk\subset \fbk$ by Lemma \ref{lem16}
and $\tau(R)\in \fbd$ is well-defined.
It is then straight forward to check that $\tau(R)\in \lmk F_0^{\llz}(\rho), F_0^{\llz}(\sigma)\rmk_U$,
proving (iii).
To show (iv), let  $\rho\in \Obkl$, $(\lm 1,\lm {r1})\in \pc$ and
  $\Vrl\rho{\lm 1}\in \Vbk\rho{\lm {1}}$.
Note that $ \left.  \Tbd{ F_0^{\llz}(\rho)}{\lm {r1}}{\tau(\Vrl\rho{\lm 1})}\right\vert_{\fbd}$.
is $\sigma$-weak continuous on $\pbd\lmk\caA_{\lhu}\rmk''$ for any $\Lambda\in\CUbk$,
because  for any $\Lambda\in\CUbk$, there is some $\Lambda_l\in \Clbd$ such that $\lhu\subset \Lambda_l$.
(Recall Lemma \ref{lem20}.)
Therefore, $\tau^{-1}\; \left. \Tbd{ F_0^{\llz}(\rho)}{\lm {r1}}{\tau(\Vrl\rho{\lm 1})}\right\vert_{\fbd}\;\tau$ is $\sigma$-weak continuous on
$\pbk\lmk\caA_{\lhu}\rmk''$ for any $\Lambda\in\CUbk$.
On the other hand, for any $A\in \abd$,
\begin{align}\begin{split}
\tau^{-1}\; \left. \Tbd{ F_0^{\llz}(\rho)}{\lm {r1}}{\tau(\Vrl\rho{\lm 1})}\right\vert_{\fbd}\;\tau\pbk(A)=
\tau^{-1}\; \left. \Tbd{ F_0^{\llz}(\rho)}{\lm {r1}}{\tau(\Vrl\rho{\lm 1})}\right\vert_{\fbd}\;\pbd(A)=
\tau^{-1}\;\Ad\lmk \tau(\Vrl\rho{\lm 1})\rmk 
F_0^{\llz}(\rho)(A)\\
=\tau^{-1}\;\Ad\lmk \tau(\Vrl\rho{\lm 1})\rmk 
\tau\;\Tbk\rho\lz\unit\;\tau^{-1}\pbd(A)
=\;\Ad\lmk \Vrl\rho{\lm 1}\rmk 
\;\Tbk\rho\lz\unit\;\pbk(A)
=
\Tbkv\rho{\lm 1}\;\pbk(A).
\end{split}
\end{align}
This proves (iv), as in the proof of Lemma 2.14 \cite{MTC}, via the $\sigma$weak continuity above.

\end{proof}
\change{Braidings are translated accordingly.}
\begin{lem}\label{lem47}
Consider the setting in subsection \ref{setting2}.
 Assume Assumption \ref{assum3}, Assumption \ref{assum80}, Assumption \ref{assum80l} and Assumption \ref{assump7}.
 Let $(\lz,\lzr)\in\pc$.
 Let $\epsilon_-^{(\lz)}(\rho:\sigma)$ be the braiding defined in Definition 4.12 \cite{MTC} for
 $\rho,\sigma\in \Obkl$.
 Then we have 
 \[
 F_1^{\llz}\lmk\epsilon_-^{(\lz)}(\rho:\sigma)\rmk=\hb{F_0^{\llz}(\rho)}{F_0^{\llz}(\sigma)}.
 \]
\end{lem}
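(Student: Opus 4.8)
The plan is to exploit that $F_1^{\llz}$ is, by (\ref{kyoto}), nothing but the restriction of the norm-continuous $*$-isomorphism $\tau$ of Lemma \ref{lem10}, and to push $\tau$ through the limit defining the bulk braiding. Recall from \cite{MTC} Definition 4.12 (together with the single-limit form of its Lemma 4.11, the bulk analogue of our Proposition \ref{lem40}) that for $\rho,\sigma\in\Obkl$ we may write
\begin{align*}
\epsilon_-^{(\lz)}(\rho:\sigma)
=\lim_{t\to\infty}\lmk \Vrl\sigma{\lm {r2}(t)}\rmk^{*}\;\Tbk\rho\lz\unit\lmk \Vrl\sigma{\lm {r2}(t)}\rmk,
\end{align*}
for a cone $\lm {r2}\in\Crbd$ with $\lz\leftarrow_r\lm {r2}$ and bulk charge transporters $\Vrl\sigma{\lm {r2}(t)}\in\Vbk\sigma{\lm {r2}(t)}$, $t\ge0$. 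The orientation selected as ``$-$'' in \cite{MTC} is precisely the one that transports the auxiliary charge $\sigma$ to the right, i.e. along the same cone family $\lm {r2}(t)$ that appears in (\ref{niigata}); this is why $\epsilon_-$, and not $\epsilon_+$, occurs in the statement.

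First I would guarantee that every factor in this limit lies in $\fbk$, so that $\tau$ applies and $\tau^{-1}\tau=\id$ may be used. Since $\lm {r2}(t)\subset\hu$, the transporter $\Vrl\sigma{\lm {r2}(t)}$ is, by \cite{MTC} Lemma 2.14, an intertwiner in $\lmk\Tbk\sigma\lz\unit,\Tbkv\sigma{\lm {r2}(t)}\rmk$, hence lies in $\fbk$ by Lemma \ref{lem16}; and $\Tbk\rho\lz\unit$ preserves $\fbk$ by Lemma \ref{lem15}. Next I would verify that the transported operator is an admissible boundary transporter, namely $\tau\lmk \Vrl\sigma{\lm {r2}(t)}\rmk\in\VUbd{F_0^{\llz}(\sigma)}{\lm {r2}(t)}$. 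This is a direct computation: for $A\in\caA_{(\lm {r2}(t))^{c}\cap\hu}\subset\abd$ one has $F_0^{\llz}(\sigma)(A)=\tau\lmk\sigma(A)\rmk$ (by the defining formula (\ref{aomori}) and $\tau^{-1}\pbd=\pbk$ on $\abd$) and $\Ad\lmk \Vrl\sigma{\lm {r2}(t)}\rmk\sigma(A)=\pbk(A)$, whence
\begin{align*}
\Ad\lmk\tau\lmk \Vrl\sigma{\lm {r2}(t)}\rmk\rmk F_0^{\llz}(\sigma)(A)
=\tau\lmk \Ad\lmk \Vrl\sigma{\lm {r2}(t)}\rmk\sigma(A)\rmk
=\tau\pbk(A)=\pbd(A),
\end{align*}
using $\pbk(\abd)\subset\fbk$ and $\tau\pbk\vert_{\abd}=\pbd$; together with unitarity of $\tau\lmk \Vrl\sigma{\lm {r2}(t)}\rmk$ in $\fbd$ this gives the claim.

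The core step is to commute $\tau$ past the bulk endomorphism. By Lemma \ref{lem15} the element $\Tbk\rho\lz\unit\lmk \Vrl\sigma{\lm {r2}(t)}\rmk$ lies in $\fbk$, so
\begin{align*}
\tau\lmk\Tbk\rho\lz\unit\lmk \Vrl\sigma{\lm {r2}(t)}\rmk\rmk
=\lmk\tau\,\Tbk\rho\lz\unit\,\tau^{-1}\rmk\lmk\tau\lmk \Vrl\sigma{\lm {r2}(t)}\rmk\rmk
=\Tbd{F_0^{\llz}(\rho)}{\lzr}\unit\lmk\tau\lmk \Vrl\sigma{\lm {r2}(t)}\rmk\rmk,
\end{align*}
the last equality being Lemma \ref{lem17}(iv) specialized to $\lm 1=\lz$, $\Vrl\rho{\lm 1}=\unit$ (admissible since $\rho\in\Obkl$ and $(\lz,\lzr)\in\pc$), which yields $\tau\,\Tbk\rho\lz\unit\,\tau^{-1}\vert_{\fbd}=\Tbd{F_0^{\llz}(\rho)}{\lzr}\unit\vert_{\fbd}$. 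Setting $W_t:=\tau\lmk \Vrl\sigma{\lm {r2}(t)}\rmk\in\VUbd{F_0^{\llz}(\sigma)}{\lm {r2}(t)}$ and using that $\tau$ is a norm-continuous $*$-isomorphism, so that it commutes with the limit, with products and with adjoints, I obtain
\begin{align*}
F_1^{\llz}\lmk\epsilon_-^{(\lz)}(\rho:\sigma)\rmk
=\tau\lmk\epsilon_-^{(\lz)}(\rho:\sigma)\rmk
=\lim_{t\to\infty} W_t^{*}\;\Tbd{F_0^{\llz}(\rho)}{\lzr}\unit(W_t)
=\hb{F_0^{\llz}(\rho)}{F_0^{\llz}(\sigma)},
\end{align*}
where the final equality is (\ref{niigata}) applied to $F_0^{\llz}(\rho)\in\Obul$ and $F_0^{\llz}(\sigma)\in\OrUbdl$ with the cone $\lm {r2}$, legitimate because $\lz\leftarrow_r\lm {r2}$.

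The main obstacle I anticipate is the interface bookkeeping: I must check that the single-limit representation of the bulk $\epsilon_-^{(\lz)}$ borrowed from \cite{MTC} is set up with exactly the right-going cone family $\lm {r2}(t)$ used in (\ref{niigata}) (so that the ``$-$'' orientation, and not its opposite, is the one that matches), and that every intermediate operator stays inside $\fbk$ so that $\tau$, $\tau^{-1}$ and Lemma \ref{lem17}(iv) are all applicable. Keeping the transporters and the extended endomorphism localized in $\hu$-cones is the delicate point, but it is controlled exactly by Lemma \ref{lem15}, Lemma \ref{lem16}, and the half-plane inclusion $\pbk(\abd)\subset\fbk$.
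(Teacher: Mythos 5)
Your overall strategy --- apply the $*$-isomorphism $\tau$ of Lemma \ref{lem10} to a limit formula for the bulk braiding, use Lemma \ref{lem17}(iv) to convert $\tau\,\Tbk\rho\lz\unit\,\tau^{-1}$ into a boundary endomorphism, and recognize the result as the boundary half-braiding --- is the same as the paper's. But two steps in your execution fail as written, and both failures stem from your choice of localization cones. You take the bulk transporters along the family $\lm{r2}(t)$ with $\lm{r2}\in\Crbd$, and claim $\Vrl\sigma{\lm{r2}(t)}\in\fbk$ ``by \cite{MTC} Lemma 2.14 and Lemma \ref{lem16}.'' Lemma \ref{lem16} applies only to cones in $\CUbk$ contained in $\hu$, and a cone in $\Crbd$ is \emph{not} in $\CUbk$: its argument interval is $(0,2\varphi)$, so the defining condition $0<\theta-\varphi$ fails (equivalently, $\overline{\arg\lm{r2}(t)}$ meets $[\pi,2\pi]$, so $\lm{r2}(t)\notin\caC_{(\frac{3\pi}2,\frac\pi2)}=\CUbk$, and the extended endomorphism $\Tbkv\sigma{\lm{r2}(t)}$ you implicitly invoke is not even provided by Lemma 2.13 of \cite{MTC} for this direction). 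This is not cosmetic: $\fbk$ is generated by $\CUbk$-cone algebras, no $\CUbk$ cone contains a $\Crbd$ cone, and a general element of $\Vbk\sigma{\lm{r2}(t)}$ is merely a unitary on $\hbk$; in the toric-code picture it can be a weak limit of string operators hugging the $x$-axis, exactly the kind of operator the paper takes pains to keep out of $\fbd$ (Example \ref{toric}). So $\tau$ cannot be applied to your transporters. Separately, the ``single-limit form'' of $\epsilon_-^{(\lz)}(\rho:\sigma)$ along $\Crbd$-cones is asserted, not available: Definition 4.12 of \cite{MTC} is a double limit over transporters of both $\rho$ and $\sigma$, and the single-limit identity (\ref{niigata}) is a boundary statement proved in Proposition \ref{lem40} from Assumptions \ref{assum80} and \ref{assum80l}; you cannot simply quote its bulk counterpart for this cone family.

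Both gaps are closed precisely by the device the paper uses and your proof skips. The paper keeps the double-limit form of Definition 4.12 of \cite{MTC}, with transporters $\Vrl\rho{\lm1(t_1)}$, $\Vrl\sigma{\lm2(t_2)}$ along cones $\lm1,\lm2\in\CUbk$ with $\lm1,\lm2\subset\hu$ (so Lemma \ref{lem16} genuinely applies and everything lies in $\fbk$), applies $\tau$ and Lemma \ref{lem17}, and only afterwards passes to $\lm{r2}\in\Crbd$ by nesting: by Lemma A.2 of \cite{MTC} there is $t_0(t)\to\infty$ with $\lm2(t_2)\subset\lm{r2}(t)$ for $t_2\ge t_0(t)$, whence $\tau\lmk\Vrl\sigma{\lm2(t_2)}\rmk\in\Vbu{F_0^{\llz}(\sigma)}{\lm2(t_2)}\subset\VUbd{F_0^{\llz}(\sigma)}{\lm{r2}(t)}$, and the transformed double limit is then literally the definition (\ref{halfdef}) of $\hb{F_0^{\llz}(\rho)}{F_0^{\llz}(\sigma)}$. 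Your verification that $\tau(V)$ is a boundary transporter and your use of Lemma \ref{lem17}(iv) are fine; what must change is that the bulk transporters be localized in $\CUbk$ cones inside $\hu$, with the $\Crbd$ cone entering only through this containment argument.
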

\begin{proof}
Let $\lm 1,\lm 2\in \CUbk=\caC_{(\frac{3\pi}2,\frac \pi 2)}$ with $\lm 1,\lm 2\subset \hu$
such that $\lm 1\leftarrow_{(\frac{3\pi}2,\frac\pi 2)}\lm 2$,
$\lm 1\perp_{(\frac{3\pi}2,\frac\pi 2)}\lm 2$ (see Definition 2.6 Definition 4.8 of \cite{MTC}).
Let $\Vrl\rho{\lm 1(t_1)}\in \Vbk\rho{\lm 1(t_1)}$, $t_1\ge 0$,
  $\Vrl\sigma{\lm 2(t_2)}\in \Vbk\sigma{\lm 2(t_2)}$, $t_2\ge 0$.
  By Lemma \ref{lem16}, they belong to $\fbk$.
By the Definition 4.12 \cite{MTC}, 
we have
\begin{align}
\epsilon_-^{(\lz)}(\rho:\sigma)
={\mathop{\mathrm {norm}}}-\lim_{t_1,t_2\to\infty}
\lmk\Vrl\sigma{\lm 2(t_2)}\Tbk\sigma\lz\unit\lmk \Vrl\rho{\lm 1(t_1)}\rmk  \rmk^*
\Vrl\rho{\lm 1(t_1)} \Tbk\rho\lz\unit\lmk \Vrl\sigma{\lm 2(t_2)}\rmk.
\end{align}
We have $\epsilon_-^{(\lz)}(\rho:\sigma)\in \fbk$ because of Lemma \ref{lem15}.
Applying $\tau$ to this equation, from Lemma \ref{lem17}, we have
\begin{align}\begin{split}
&\tau\lmk \epsilon_-^{(\lz)}(\rho:\sigma)\rmk\\
&={\mathop{\mathrm {norm}}}-\lim_{t_1,t_2\to\infty}
\lmk\tau\lmk \Vrl\sigma{\lm 2(t_2)}\rmk \Tbd{F_0^{\llz}(\sigma)}\lzr\unit
\lmk \tau\lmk \Vrl\rho{\lm 1(t_1)}\rmk \rmk  \rmk^*
\tau\lmk \Vrl\rho{\lm 1(t_1)} \rmk\Tbd{F_0^{\llz}(\rho)}\lzr\unit\lmk \tau\lmk \Vrl\sigma{\lm 2(t_2)}\rmk\rmk.
\end{split}
\end{align}
Let $\lm {r2}\in \Crbd$ with $\lm 1\leftarrow_r \lm {r2}$, $\arg\lm 2\subset \arg\lm{r2}$,
and let $\lm {r1}\in \Crbd$ with $(\lm 1,\lm {r1})\in\pc$.
Then by Lemma A.2 of \cite{MTC}, for any $t\ge 0$, there exists $t_0(t)\ge 0$
such that
$\lm 2(t_2)\subset \lm {r2}(t)$ for all $t_2\ge t_0(t)$, with $\lim_{t\to\infty} t_0(t)=\infty$.
From Lemma \ref{lem17}, for such $t_2\ge t_0(t)$,
\[
\tau\lmk \Vrl\sigma{\lm 2(t_2)}\rmk\in \Vbu{F_0^{\llz}(\sigma)}{\lm {2}(t_2)}
\subset\VUbd{F_0^{\llz}(\sigma)}{\lm{r2}(t)}.\]
Hence we have
\begin{align}\begin{split}
&\tau\lmk \epsilon_-^{(\lz)}(\rho:\sigma)\rmk\\
&={\mathop{\mathrm {norm}}}-\lim_{t_1,t\to\infty}
\lmk\tau\lmk \Vrl\sigma{\lm 2(t_0(t))}\rmk \Tbd{F_0^{\llz}(\sigma)}\lzr\unit
\lmk \tau\lmk \Vrl\rho{\lm 1(t_1)}\rmk \rmk  \rmk^*
\tau\lmk \Vrl\rho{\lm 1(t_1)} \rmk\Tbd{F_0^{\llz}(\rho)}\lzr\unit\lmk \tau\lmk \Vrl\sigma{\lm 2(t_0(t))}\rmk\rmk\\
&=\hb{F_0^{\llz}(\rho)}{F_0^{\llz}(\sigma)}.
\end{split}
\end{align}

\end{proof}

In \cite{MTC}, we showed that $\Obkl$ with their morphisms $(\rho,\sigma)$,
$\rho,\sigma\in \Obkl$
forms a braided $C^*$-tensor category $\Cabkl$.
It is now easy to see we have a functor from this bulk theory to our boundary theory.
\begin{thm}\label{lem74}
Let $\llz\in\pc$.
Consider the setting in subsection \ref{setting2}, and 
assume Assumption \ref{assum3}, Assumption \ref{assum80}, Assumption \ref{assum80l},
Assumption \ref{assump7}, Assumption \ref{aichi}.
(Note from Lemma \ref{daidai}, Assumption \ref{wakayama} automatically holds.)
Then $F^{\llz} : \Cabkl\to \Cabul$ given by
$F^{\llz}_0$ (\ref{aomori}), $F^{\llz}_1$ (\ref{kyoto})
gives a fully faithful unitary braided tensor functor.
\end{thm}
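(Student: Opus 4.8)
The plan is to check the defining properties of a fully faithful unitary braided tensor functor one at a time, leaning entirely on the fact that $\tau$ is a $*$-isomorphism together with the compatibility statements already assembled in Lemma \ref{lem17} and Lemma \ref{lem47}. That $F_0^{\llz}$ and $F_1^{\llz}$ are well defined with the stated targets is exactly parts (i) and (iii) of Lemma \ref{lem17}. Functoriality and $*$-compatibility are then immediate from $\tau$ being a unital multiplicative $*$-map: for $R\in(\rho,\sigma)$, $S\in(\sigma,\gamma)$ one has $F_1^{\llz}(SR)=\tau(SR)=\tau(S)\tau(R)=F_1^{\llz}(S)F_1^{\llz}(R)$, $F_1^{\llz}(\unit_\rho)=\unit_{F_0^{\llz}(\rho)}$, and $F_1^{\llz}(R^*)=\tau(R)^*=F_1^{\llz}(R)^*$, so $F^{\llz}$ is a $*$-functor.

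Next I would show $F^{\llz}$ is strictly tensor, so that all coherence isomorphisms may be taken to be $\unit$ (hence unitary). On objects, using the multiplicativity $\Tbk{\rho\circ_{\lz}\sigma}\lz\unit=\Tbk\rho\lz\unit\,\Tbk\sigma\lz\unit$ of the bulk extension \cite{MTC}, the identity $\tau^{-1}\pbd=\pbk$ on $\abd$, and part (iv) of Lemma \ref{lem17} applied with $\Vrl\rho\lz=\unit$ (which gives $\Tbd{F_0^{\llz}(\rho)}\lzr\unit|_{\fbd}=\tau\,\Tbk\rho\lz\unit\,\tau^{-1}|_{\fbd}$), a short computation yields
\begin{align*}
F_0^{\llz}(\rho)\otimes F_0^{\llz}(\sigma)
=\Tbd{F_0^{\llz}(\rho)}\lzr\unit\,\Tbd{F_0^{\llz}(\sigma)}\lzr\unit\,\pbd
=\tau\,\Tbk\rho\lz\unit\,\Tbk\sigma\lz\unit\,\tau^{-1}\pbd
=F_0^{\llz}(\rho\otimes\sigma).
\end{align*}
Since $S\in(\sigma,\sigma')\subset\fbk$ by Lemma \ref{lem16}, the same input gives $F_1^{\llz}(R\otimes S)=\tau(R)\,\Tbd{F_0^{\llz}(\rho)}\lzr\unit(\tau(S))=F_1^{\llz}(R)\otimes F_1^{\llz}(S)$. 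The tensor unit is preserved because $\Tbk\pbk\lz\unit=\id$ forces $F_0^{\llz}(\pbk)=\tau\,\tau^{-1}\pbd=\pbd$. Finally the braiding is preserved by Lemma \ref{lem47}, which identifies $F_1^{\llz}\lmk\epsilon_-^{(\lz)}(\rho:\sigma)\rmk$ with $\hb{F_0^{\llz}(\rho)}{F_0^{\llz}(\sigma)}$; hence $F^{\llz}$ is a unitary braided tensor functor.

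It remains to prove full faithfulness. Injectivity of $F_1^{\llz}=\tau|_{(\rho,\sigma)}$ is clear since $\tau$ is injective. For surjectivity, given $T\in\lmk F_0^{\llz}(\rho),F_0^{\llz}(\sigma)\rmk_U\subset\fbd$, set $R:=\tau^{-1}(T)\in\fbk$; applying $\tau^{-1}$ to the intertwining relation and using $\tau^{-1}\pbd=\pbk$ on $\abd$ together with $\Tbk\rho\lz\unit\,\pbk=\rho$ shows $R\rho(A)=\sigma(A)R$ for all $A\in\abd=\caA_{\hu}$. The remaining, and genuinely delicate, point is to upgrade this intertwining from $\abd$ to the whole $\abk$: here I would use that $\rho,\sigma\in\Obkl$ agree with $\pbk$ on $\caA_{\lz^c}\supset\caA_{\hd}$ (because $\lz\subset\hu$), combined with the inclusion $\fbk\subset\pbk(\caA_{\hd})'$, which holds because each generating algebra $\pbk(\caA_{\Lambda\cap\hu})''$ commutes with $\pbk(\caA_{\hd})$. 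For $B\in\caA_{\hu}$ and $C\in\caA_{\hd}$ these facts give $R\rho(BC)=\sigma(B)R\pbk(C)=\sigma(B)\pbk(C)R=\sigma(BC)R$, and linearity and density yield $R\in(\rho,\sigma)$ with $T=F_1^{\llz}(R)$. I expect this extension-of-support step to be the main obstacle, since it is the one place where the bulk/boundary dichotomy—intertwiners tested on all of $\abk$ versus only on $\abd$—must be reconciled, and it is where the localization of the objects in $\lz\subset\hu$ and the commutation $\fbk\subset\pbk(\caA_{\hd})'$ are both essential.
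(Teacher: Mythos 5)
Your proposal is correct and follows essentially the same route as the paper: well-definedness and the tensor structure via Lemma \ref{lem17} (in particular part (iv)) together with the multiplicativity of the bulk extensions, preservation of the braiding via Lemma \ref{lem47}, and full faithfulness by transporting intertwiners back with $\tau^{-1}$. Your explicit extension of the intertwining relation from $\abd$ to all of $\abk$ — using that objects of $\Obkl$ restrict to $\pbk$ on $\caA_{\hd}$ and that $\fbk\subset\pbk\lmk\caA_{\hd}\rmk'$ — makes precise a step the paper's own surjectivity computation leaves implicit (it verifies the relation only on $\abd$ before concluding membership in $(\rho,\sigma)$), so your treatment is, if anything, more complete at that point.
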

\change{
\begin{ex}\label{ebToric}
Let us consider the map $F^{\llz}_0$
for the Toric code.
Let $\gamma'$ be an infinite path in $\ld_0$ on the lattice and 
consider endomorphism $\rho_{\gamma'}^Z$ in Example \ref{toric}.
Then we have
\begin{align}
\begin{split}
F^{\llz}_0\lmk\pbk\rho_{\gamma'}^Z\rmk
=\tau\;\Tbk{\pbk\rho_{\gamma'}^Z}\lz\unit\;\tau^{-1}\pbd
=\tau\;\Tbk{\pbk\rho_{\gamma'}^Z}\lz\unit\;\pbk
=\tau \pbk\rho_{\gamma'}^Z
=\pbd \rho_{\gamma'}^Z,
\end{split}
\end{align}
matching with \cite{wa}.
\end{ex}
}

\kakunin{
\begin{proof}
The map $F^{\llz}$ is a functor
because any $\rho,\sigma,\gamma\in \OUbkl$, $R\in (\rho,\sigma)$, $S\in (\sigma,\gamma)$,
we clearly have
$F^{\llz}_1(SR)=F^{\llz}_1(S)F^{\llz}_1(R)$ and $F^{\llz}_1(\unit)=\unit$.
It is linear on morphism.

Note that $F^{\llz}_0(\pbk)=\pbd$, because $\Tbk{\pbk}{\lz}\unit=\id$.
Then $\unit : \pbd\to F^{\llz}_0(\pbk)=\pbd$ gives an isomorphism.

We also see that
$\unit : F^{\llz}_0(\rho)\otimes F^{\llz}_0(\sigma)\to F^{\llz}_0(\rho\otimes\sigma)$
give natural isomorphisms.
In fact, note that from (\ref{akita}), Lemma \ref{lem20}, and Lemma 3.6 of \cite{MTC},
\begin{align}\label{hirosaki}
\begin{split}
&F^{\llz}_0(\rho)\otimes F^{\llz}_0(\sigma)
=\Tbd{F^{\llz}_0(\rho)}{\lzr}{\unit}\Tbd{F^{\llz}_0(\sigma)}{\lzr}{\unit}\pbd
=\tau\Tbk\rho{\lz}\unit \Tbk\sigma{\lz}\unit\tau^{-1}\pbd\\
&=\tau\Tbk{\rho\otimes \sigma}{\lz}\unit \tau^{-1}\pbd
=F^{\llz}_0(\rho\otimes\sigma).
\end{split}
\end{align}
Furthermore,  let $\rho,\rho',\sigma,\sigma'\in \OUbkl$
and $R\in (\rho,\rho')$, $S\in (\sigma,\sigma')$.
Then, from (\ref{akita}),
\begin{align}
\begin{split}
&F^{\llz}_1(R)\otimes F^{\llz}_1(S)
=F^{\llz}_1(R)\Tbd{F^{\llz}_0(\rho)}{\lz}\unit \lmk  F^{\llz}_1(S)\rmk
=\tau(R)\Tbd{F^{\llz}_0(\rho)}{\lz}\unit\lmk  \tau (S)\rmk\\
&=\tau(R)\tau\Tbk\rho{\lz}\unit\tau^{-1}\tau(S)
=\tau\lmk R\Tbk\rho{\lz}\unit\lmk S\rmk\rmk
=F^{\llz}_1\lmk
R\otimes S
\rmk,
\end{split}
\end{align}
proving the naturality.
These isomorphisms trivially satisfies the diagrams given in Definition 2.1.3 \cite{NT}.
Hence $F^{\llz}$ is a tensor functor.
Clearly, it is unitary.
It is a central functor from Lemma \ref{lem17}, Lemma \ref{lem42},  Lemma \ref{lem43}, Lemma \ref{lem47},
Lemma \ref{panama}.
\end{proof}}
%
%

\section{A boundary to bulk map}
In this section, we construct an inverse of the map constructed in the previous section.
\begin{lem}\label{lem86}
Consider the setting in subsection \ref{setting2}, and assume Assumption \ref{assump7}, Assumption \ref{assum3},
and Assumption \ref{oo}.
Then for any $\rho\in\OrUbd$, there exists a unique $*$-homomorphism
$G(\rho) : \abk\to \caB(\hbk)$ such that
\begin{align}
G(\rho)(A)
=\tau^{-1}\rho(A),\quad A\in\abd,\label{nagano}\\
G(\rho)(B)=\pbk(B),\quad B\in \caA_{\hd}\label{gifu}.
\end{align}
If furthermore $\rho\in \Obu$, then
$G(\rho)\in \OUbk=\Obk$.
In particular, for $\llz\in\pc$, $\rho\in \Obul$, we have
$G(\rho)\in \OUbkl=\Obkl$, and $F_0^{{\llz}}\lmk G(\rho)\rmk=\rho$.
For $\hat\rho\in \OUbkl=\Obkl$, we have
$G(F_0^{\llz}(\hat\rho))=\hat\rho$.
The map preserves tensor i.e., $G(\rho\otimes\sigma)=G(\rho)\otimes G(\sigma)$
for any $\rho,\sigma\in \Obul$.
\end{lem}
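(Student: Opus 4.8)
The plan is to build $G(\rho)$ by gluing the representation $\tau^{-1}\rho$ of $\caA_{\hu}=\abd$ on the upper half-plane to $\pbk$ on the lower half-plane, and then to read each assertion off by transporting it through the isomorphism $\tau$ of Lemma \ref{lem10} to a fact already available for the bulk or boundary category. Two well-definedness points come first. For $\rho\in\OrUbd$ one has $\rho(\abd)\subset\fbd$: given a local $A$ supported in a finite set $F\subset\hu$, choose $\Lambda_r\in\Crbd$ with $F\subset\Lambda_r^c\cap\hu$ (the cones of $\Crbd$ can be placed to avoid any bounded set) and $V\in\VUbd{\rho}{\Lambda_r}\subset\caU(\fbd)$; then $\rho(A)=\Ad(V^*)\pbd(A)\in\fbd$ since $\pbd(\abd)\subset\fbd$, and norm density extends this to all of $\abd$. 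Moreover $\fbk\subset\pbk(\caA_{\hd})'$, because each generating algebra $\pbk(\caA_{\Lambda\cap\hu})''$ of $\fbk$ commutes with $\pbk(\caA_{\hd})$ (disjoint supports) and $\pbk(\caA_{\hd})'$ is weakly closed. Hence $\tau^{-1}\rho$ is a representation of $\caA_{\hu}$ into $\fbk\subset\pbk(\caA_{\hd})'$, so its range commutes with $\pbk(\caA_{\hd})$; as $\abk=\caA_{\hu}\otimes\caA_{\hd}$, the commuting pair $(\tau^{-1}\rho,\pbk\vert_{\caA_{\hd}})$ determines a unique $*$-homomorphism $G(\rho):\abk\to\caB(\hbk)$ satisfying (\ref{nagano})--(\ref{gifu}), uniqueness being clear since $\caA_{\hu}\cup\caA_{\hd}$ generates $\abk$.

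Next I would check membership. For $\rho\in\Obu$, fix $\Lambda\in\CUbk$ and $V\in\Vbu{\rho}{\Lambda}\subset\caU(\fbd)$ and put $W:=\tau^{-1}(V)\in\caU(\fbk)$. On $\caA_{\Lambda^c\cap\hd}$ we get $\Ad(W)G(\rho)=\Ad(W)\pbk=\pbk$ (as $W\in\fbk\subset\pbk(\caA_{\hd})'$), and on $\caA_{\Lambda^c\cap\hu}$, $\Ad(W)G(\rho)(A)=\tau^{-1}(\Ad(V)\rho(A))=\tau^{-1}\pbd(A)=\pbk(A)$ using $\tau\pbk=\pbd$; thus $W\in\Vbk{G(\rho)}{\Lambda}$, giving $G(\rho)\in\OUbk=\Obk$ by Assumption \ref{oo}. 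For $\rho\in\Obul$ the same splitting of $\caA_{\lz^c}$, together with $\rho\vert_{\caA_{\lhucz}}=\pbd\vert_{\caA_{\lhucz}}$ and $\lz\subset\lzr\subset\hu$, yields $G(\rho)\vert_{\caA_{\lz^c}}=\pbk\vert_{\caA_{\lz^c}}$, so $G(\rho)\in\Obkl$.

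The inversion identities and tensoriality are then bookkeeping through $\tau$ and the extension identity $\Tbk{\hat\rho}{\lz}\unit\,\pbk=\hat\rho$. For $A\in\abd$ one has $\tau^{-1}\pbd(A)=\pbk(A)$, so $F_0^{\llz}(G(\rho))(A)=\tau\,\Tbk{G(\rho)}{\lz}\unit(\pbk(A))=\tau\,G(\rho)(A)=\rho(A)$, i.e. $F_0^{\llz}\circ G=\id$; conversely $G(F_0^{\llz}(\hat\rho))$ equals $\pbk$ on $\caA_{\hd}\subset\caA_{\lz^c}$ and, on $\abd$, equals $\tau^{-1}(F_0^{\llz}(\hat\rho)(A))=\tau^{-1}(\tau\,\hat\rho(A))=\hat\rho(A)$ — here $\hat\rho(A)=\Tbk{\hat\rho}{\lz}\unit(\pbk(A))\in\fbk$ by Lemma \ref{lem15} — so $G\circ F_0^{\llz}=\id$ by uniqueness. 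Finally $F_0^{\llz}$ is strictly tensorial, $F_0^{\llz}(\hat\rho)\otimes F_0^{\llz}(\hat\sigma)=F_0^{\llz}(\hat\rho\otimes\hat\sigma)$, which follows from the conjugation identity (\ref{akita}) of Lemma \ref{lem17} together with Lemma \ref{lem20}; combined with $G=(F_0^{\llz})^{-1}$ this gives $G(\rho\otimes\sigma)=G(\rho)\otimes G(\sigma)$. I expect the only real obstacle to be the well-definedness package — that $\rho(\abd)\subset\fbd$ and $\fbk\subset\pbk(\caA_{\hd})'$ — since this is exactly what makes the naive ``$\tau^{-1}\rho$ on $\hu$, $\pbk$ on $\hd$'' prescription an honest representation of $\abk$; everything afterwards reduces cleanly through $\tau$.
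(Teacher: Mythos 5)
Your proof is correct and follows essentially the same route as the paper's: glue $\tau^{-1}\rho$ on $\abd$ to $\pbk$ on $\caA_{\hd}$ (the paper makes the gluing explicit by invoking nuclearity of the spin algebras, via Takesaki, which is what your ``commuting pair determines a unique $*$-homomorphism'' silently uses), push cone transporters through $\tau^{-1}$ to verify $G(\rho)\in\OUbk=\Obk$ and $G(\rho)\in\Obkl$, and obtain the inverse identities and tensoriality by computing through $\tau$ and the tensoriality of $F_0^{\llz}$ combined with $F_0^{\llz}\circ G=\id$ and $G\circ F_0^{\llz}=\id$. The only cosmetic difference is your argument that $\rho(\abd)\subset\fbd$ by moving the localization cone off the support of a local element, whereas the paper writes $\rho=\Ad\lmk \Vrl{\rho}{\lzr}^{*}\rmk\Tbdv{\rho}{\lzr}\pbd$ and uses $\Tbdv{\rho}{\lzr}(\fbd)\subset\fbd$ from Lemma \ref{lem20} --- both are valid.
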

\begin{proof}
Let $\llz\in\pc$ and $\rho\in\OrUbd$, $\Vrl\rho{\lzr}\in \VUbd\rho{\lzr}$.
Note that for any $A\in\abd$,
$$\rho(A)=\Ad\lmk \Vrl{\rho}\lzr^*\rmk\Tbdv\rho{\lzr}\pbd(A)\in\fbd.$$
 Therefore, $\tau^{-1}\rho(A)$ is well-defined.
 Hence there is a representation
 $G_1(\rho)$( resp.$G_2(\rho)$) of $\abd$ (resp $\caA_{\hd}$)
 on $\hbk$, given by the right hand side of
 (\ref{nagano})( resp. (\ref{gifu})). Because $G_1(\rho)(\abd)\subset\fbk$
 and $\fbk$ commutes with $\pbk(\caA_{\hd})$
 by definition, the range of $G_1(\rho)$ and $G_2(\rho)$ commutes.
Because $\abd$, $\caA_{\hd}$ are nuclear,
there exists a unique representation $G(\rho) : \abk \to \caB(\hbk)$
 such that $G(\rho)\vert_{\abd}=G_1(\rho)$, $G(\rho)\vert_{\caA_{\hd}}=G_2(\rho)$.
 (Proposition 4.7 IV of \cite{takesaki}.)
 
Now suppose that $\rho\in \Obu$. 
For any $\lm {}\in \CUbk$, take $\Vrl{\rho}{\lm{}}\in \Vbu\rho{\lm {}}\subset\fbd$.
 Then for any $A\in \caA_{\lhuc}$,
 \begin{align}
 \begin{split}
&\Ad\lmk\tau^{-1}\lmk \Vrl{\rho}{\lm{}}\rmk\rmk G(\rho)(A)
 =\Ad\lmk\tau^{-1}\lmk \Vrl{\rho}{\lm{}}\rmk\rmk\lmk \tau^{-1}\rho(A)\rmk\\
& =\tau^{-1}\lmk \Ad\lmk\lmk \Vrl{\rho}{\lm{}}\rmk\lmk \rho(A)\rmk \rmk\rmk
 =\tau^{-1}\lmk\pbd(A)\rmk=\pbk(A).
 \end{split}
 \end{align}
 For any $B\in \caA_{\hd}$, we have
 \begin{align}
\Ad\lmk\tau^{-1}\lmk \Vrl{\rho}{\lm{}}\rmk\rmk G(\rho)(B)
=\Ad\lmk\tau^{-1}\lmk \Vrl{\rho}{\lm{}}\rmk\rmk\lmk \pbk(B)\rmk
=\pbk(B),
 \end{align}
 because $ \pbk(B)\in{\fbk}'$.
 Combining these, we obtain
 \begin{align}
 \left.\Ad\lmk\tau^{-1}\lmk \Vrl{\rho}{\lm{}}\rmk\rmk G(\rho)\right\vert_{\caA_{\Lambda^c}}
 =\left.\pbk\right\vert_{\caA_{\Lambda^c}}
 \end{align}
for any $\Lambda\in\CUbk$, proving $G(\rho)\in \OUbk$.

If $(\lz,\lzr)\in \pc$, $\rho\in \Obul$, then we may take $ \Vrl{\rho}{\lz}=\unit$ above.
Hence $G(\rho)\in \OUbkl=\Obkl$.
For any $A\in\hu$, we have
\begin{align}
\begin{split}
F_0^{\llz}\lmk G(\rho)\rmk(A)
=\tau \Tbk{G(\rho)}\lz\unit\tau^{-1}\pbd(A)
=\tau G(\rho)(A)
=\rho(A).
\end{split}
\end{align}
Hence we have $F_0^{\llz}\lmk G(\rho)\rmk=\rho$.

For $\hat\rho\in \OUbkl=\Obkl$, we have
\begin{align}
\begin{split}
&G(F_0^{\llz}(\hat\rho))
\lmk
A
\rmk
=\tau^{-1}F_0^{\llz}\lmk \hat\rho\rmk(A)
=\tau^{-1}\tau\Tbk{\hat\rho}\lz\unit \tau^{-1}\pbd(A)\\
&=\Tbk{\hat\rho}\lz\unit \pbk(A)
=\hat\rho(A),\quad A\in \abd,\\
&G(F_0^{\llz}(\hat\rho))
(B)
=\pbk(B)=\hat\rho(B),\quad B\in\caA_{\hd}.
\end{split}
\end{align}
Hence
 we have
$G(F_0^{\llz}(\hat\rho))=\hat\rho$.

For any $\rho,\sigma\in \Obul$,
from above and Theorem \ref{lem74}, we have
\begin{align}
\begin{split}
F_0^{\llz}G(\rho\otimes\sigma)
=\rho\otimes\sigma
=F_0^{\llz}G(\rho)\otimes F_0^{\llz}G(\sigma)
=F_0^{\llz}\lmk G(\rho)\otimes G(\sigma)\rmk.
\end{split}
\end{align}
Applying $G$ to this, we have
\begin{align}
\begin{split}
G(\rho\otimes\sigma)
=GF_0^{\llz}\lmk G(\rho)\otimes G(\sigma)\rmk
=G(\rho)\otimes G(\sigma).
\end{split}
\end{align}
This proves the last statement.

\end{proof}
\kakunin{
The information of $\rho\in\Obul$ is not enough
to recover the full theory of the bulk.
Namely, it cannot guarantee in general if $G(\rho)$ satisfies the superselection criterion
of the bulk.
However, if we combine it with other half boundary information, we can do that.
To do so, note that we can re-do all the things we have done for upper half-plane
to lower half-plane, just reflect things with respect to the $x$-axis.
We denote such lower half-plane version(reflected with respect to the $x$-axis version), by adding $(-)$ symbol.
We prepare $\obdm$ a pure state on $\caA_{\hd}$ and
its GNS representation $(\hbdm,\pbdm)$.
The set of cones $\CUbk$ is reflected to $\CUbkm$
algebra $\fbd$, $\fbk$ is reflected to $\fbdm$, $\fbkm$
the equivalence relation $\simeq_U$
 is reflected to $\simeq_{U(-)}$, and so on.

Assumption \ref{assum80}$(-)$, Assumption \ref{assum80l}$(-)$,
Assumption \ref{wakayama}$(-)$, etc, are
Assumption \ref{assum80}, Assumption \ref{assum80l},
Assumption \ref{wakayama} reflected with respect to
$x$-axis.
And assuming such $(-)$-versions of assumptions, 
we obtain braided $C^*$-tensor category $\Obulm$,
boundary to bulk map $G_{(-)}$, and $\tau_{(-)}$, corresponding to
$\tau$ in Lemma \ref{lem17}.

\begin{defn}\label{singapore}
Consider the setting in subsection \ref{setting2}, and 
assume Assumption \ref{assum3}, Assumption \ref{assum80}, Assumption \ref{assum80l},
Assumption \ref{assump7}, Assumption \ref{aichi}, Assumption \ref{wakayama}, and their $(-)$-versions.
Let $\lz\in\CUbk$, $\lzm\in \CUbkm$.
We denote by
$\Ototl$ the set of all $\rho\in \Obul$
which allows some $\rho_-\in \Obulm$ such that
$G(\rho)\simeq G_{(-)}(\rho_-)$.
\end{defn}
For the rest of this section, we consider the setting of Definition \ref{singapore}
and fix some $\varphi_i\in (0,\frac\pi 2)$ such that
$\lz,\lzm\in C_{(0,\varphi_1)}$.
The $\rho_-$ is determined essentially uniquely.
\begin{lem}
For each $\rho\in\Ototl$, $\rho_-\in \Obulm$
satisfying $G(\rho)\simeq G_{(-)}(\rho_-)$ are determined uniquely up to
$\simeq_{U(-)}$.
\end{lem}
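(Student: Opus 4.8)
The plan is to reduce the uniqueness to the fact that the lower boundary-to-bulk map $G_{(-)}$ is injective up to $\simeq_{U(-)}$, via the $*$-isomorphism $\tau_{(-)}$. Suppose $\rho_-,\rho_-'\in\Obulm$ both satisfy $G(\rho)\simeq G_{(-)}(\rho_-)$ and $G(\rho)\simeq G_{(-)}(\rho_-')$. First I would use symmetry and transitivity of bulk unitary equivalence to get $G_{(-)}(\rho_-)\simeq G_{(-)}(\rho_-')$, so that there is a unitary $U\in\caB(\hbk)$ with $G_{(-)}(\rho_-')=\Ad(U)G_{(-)}(\rho_-)$; equivalently $U\in\lmk G_{(-)}(\rho_-),G_{(-)}(\rho_-')\rmk$.

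The first key step is to locate this intertwiner inside $\fbkm$. By the $(-)$-version of Lemma \ref{lem86}, both $G_{(-)}(\rho_-)$ and $G_{(-)}(\rho_-')$ belong to the bulk category and, by the reflected analogue of the defining relation (\ref{gifu}), agree with $\pbk$ on $\caA_{H_U}$; hence they are localized in a cone contained in $\hd$. Identifying representation intertwiners with intertwiners of the associated extended endomorphisms (Lemma 2.14 of \cite{MTC}) and invoking the $(-)$-version of Lemma \ref{lem16}, every element of $\lmk G_{(-)}(\rho_-),G_{(-)}(\rho_-')\rmk$ automatically lies in $\fbkm$; in particular $U\in\fbkm$.

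The second step is to transport $U$ to the boundary through the $*$-isomorphism $\tau_{(-)}:\fbkm\to\fbdm$ of the $(-)$-version of Lemma \ref{lem10} and to verify the equivalence directly. Since $\rho_-\in\Obulm$, the reflected computation in the proof of Lemma \ref{lem86} gives $\rho_-(A)\in\fbdm$ for every $A\in\caA_{\hd}$, so $\tau_{(-)}^{-1}\rho_-(A)$ is defined and $G_{(-)}(\rho_-)(A)=\tau_{(-)}^{-1}\rho_-(A)$ by the reflected form of (\ref{nagano}). Evaluating $G_{(-)}(\rho_-')=\Ad(U)G_{(-)}(\rho_-)$ on $A\in\caA_{\hd}$, and using that $\tau_{(-)}$ is a $*$-isomorphism that conjugates the two $\Ad$-actions, I obtain
\begin{align*}
\rho_-'(A)=\tau_{(-)}\lmk \Ad(U)\,\tau_{(-)}^{-1}\rho_-(A)\rmk=\Ad\lmk\tau_{(-)}(U)\rmk\rho_-(A),\quad A\in\caA_{\hd}.
\end{align*}
As $\tau_{(-)}(U)\in\fbdm$ is unitary and $\caA_{\hd}$ generates the lower boundary system, this is exactly $\rho_-\simeq_{U(-)}\rho_-'$. (Equivalently, the second step can be phrased as applying the fully faithful functor $F_{0,(-)}=G_{(-)}^{-1}$ to the isomorphism $U$, with $F_{1,(-)}=\tau_{(-)}$ from the $(-)$-version of Lemma \ref{lem17}(iii).)

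The main obstacle is the first step: one must guarantee that the intertwining unitary, a priori merely an element of $\caB(\hbk)$, can be taken inside $\fbkm$. This is precisely where the lower half-plane localization of $G_{(-)}(\rho_-),G_{(-)}(\rho_-')$ together with the $(-)$-analogue of Lemma \ref{lem16} (hence the boundary approximate Haag duality) is essential; without confining $U$ to $\fbkm$ the isomorphism $\tau_{(-)}$ cannot be applied and one could not upgrade the conclusion to the sharp relation $\simeq_{U(-)}$. Everything else is a routine consequence of the reflected versions of the lemmas already established.
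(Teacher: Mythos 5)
Your proof is correct and follows essentially the same route as the paper: the paper's terse proof simply cites the $(-)$-versions of Lemma \ref{lem17} and Lemma \ref{lem87} (full faithfulness of the bulk-to-boundary functor), whose content is exactly what you unpack — the intertwining unitary between the cone-localized representations $G_{(-)}(\rho_-)$ and $G_{(-)}(\rho_-')$ lies in $\fbkm$ by the $(-)$-versions of Lemmas \ref{lem13}/\ref{lem16}, and transporting it through $\tau_{(-)}$ produces a unitary in $\fbdm$ intertwining $\rho_-$ and $\rho_-'$, i.e.\ $\rho_-\simeq_{U(-)}\rho_-'$. No gaps.
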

\begin{proof}
If $G(\rho)\simeq G_{(-)}(\rho_-)\simeq G_{(-)}(\rho'_-)$, then 
$(-)$-version of Lemma \ref{lem17}, Lemma \ref{lem87} imply
$\rho_-\simeq_{U(-)}\rho_-'$
\end{proof}
Elements in $\Ototl$ gives bulk elements in $\Obkl$.
\begin{lem}\label{sapporo}
In the setting of Definition \ref{singapore}, for any
$\rho\in \Ototl$, we have $G(\rho)\in\Obkl$. 
\end{lem}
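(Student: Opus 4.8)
The plan is to show that the only thing missing from $G(\rho)\in\Obkl$, namely transportability into cones pointing into the \emph{lower} half plane, is supplied precisely by the auxiliary object $\rho_-$ built into the definition of $\Ototl$. Recall first that the construction of $G(\rho)$ and the membership $G(\rho)\in\OUbkl$ are provided by Lemma \ref{lem86}, and that this part of the argument does not invoke Assumption \ref{oo}; in particular $G(\rho)\vert_{\caA_{\lz^c}}=\pbk\vert_{\caA_{\lz^c}}$, so $G(\rho)$ is localized in $\lz$, and $\Vbk{G(\rho)}{\Lambda}\neq\emptyset$ for every upward cone $\Lambda\in\CUbk$. Reflecting this in the $x$-axis, the $(-)$-version of Lemma \ref{lem86} applied to $\rho_-\in\Obulm$ gives $G_{(-)}(\rho_-)\in\OUbkm$, a representation of $\abk$ on the same space $\hbk$ with the same vacuum $\pbk$, satisfying $\Vbk{G_{(-)}(\rho_-)}{\Lambda}\neq\emptyset$ for every downward cone $\Lambda\in\CUbkm$.

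Next I would transport these downward transporters onto $G(\rho)$ using the equivalence. By definition of $\Ototl$ there is a unitary $W\in\caU(\hbk)$ with $G(\rho)=\Ad(W)\circ G_{(-)}(\rho_-)$. For any $\Lambda\in\CUbkm$ and any $V\in\Vbk{G_{(-)}(\rho_-)}{\Lambda}$ one computes $\Ad(VW^*)\circ G(\rho)\vert_{\caA_{\Lambda^c}}=\Ad(V)\circ G_{(-)}(\rho_-)\vert_{\caA_{\Lambda^c}}=\pbk\vert_{\caA_{\Lambda^c}}$, so $VW^*\in\Vbk{G(\rho)}{\Lambda}$. As both transporter sets are referred to the common vacuum $\pbk$, this yields $\Vbk{G(\rho)}{\Lambda}\neq\emptyset$ for every downward cone as well.

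Finally I would promote localizability in upward and downward cones to localizability in \emph{all} cones by an elementary covering argument. Any $\Lambda=\Lambda_{\bm a,\theta,\varphi}\in\Cbk$ has angular range $I=(\theta-\varphi,\theta+\varphi)$, a nonempty open arc of length $2\varphi<2\pi$. Since the complement of $(0,\pi)\cup(\pi,2\pi)$ in the circle is the two-point set $\{0,\pi\}$, the arc $I$ contains a proper open subinterval $J$ lying entirely in $(0,\pi)$ or entirely in $(\pi,2\pi)$. The cone $\Lambda'$ with apex $\bm a$ and angular range $J$ then satisfies $\Lambda'\subset\Lambda$ and $\Lambda'\in\CUbk$ (resp. $\Lambda'\in\CUbkm$), so by the first two steps $\Vbk{G(\rho)}{\Lambda'}\neq\emptyset$. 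Because $\Lambda'\subset\Lambda$ gives $\caA_{\Lambda^c}\subset\caA_{(\Lambda')^c}$ and hence $\Vbk{G(\rho)}{\Lambda'}\subset\Vbk{G(\rho)}{\Lambda}$, we get $\Vbk{G(\rho)}{\Lambda}\neq\emptyset$. As $\Lambda\in\Cbk$ was arbitrary, $G(\rho)\in\Obk$; combined with the localization $G(\rho)\vert_{\caA_{\lz^c}}=\pbk\vert_{\caA_{\lz^c}}$ from the first step, this is exactly $G(\rho)\in\Obkl$.

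The routine ingredients are the construction of $G(\rho)$ together with its $\OUbk$-membership (already contained in Lemma \ref{lem86}), the one-line transport identity, and the geometric covering. The conceptual crux — and the sole place where the stronger hypothesis $\rho\in\Ototl$ rather than merely $\rho\in\Obul$ is needed — is the availability of downward-cone transporters, which cannot come from $\OUbk$ alone and is exactly what the reflected partner $\rho_-$ furnishes. The point requiring care is the bookkeeping of the $(-)$-framework: one must confirm that $G_{(-)}(\rho_-)$ is a representation of $\abk$ on $\hbk$ with the same vacuum $\pbk$, so that both the equivalence $G(\rho)\simeq G_{(-)}(\rho_-)$ and the transport of transporters are well-posed.
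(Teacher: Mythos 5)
Your proposal is correct and follows essentially the same route as the paper's own proof: upward cones are handled by the transporters $\tau^{-1}(\Vrl\rho{\lm 1})$ coming from Lemma \ref{lem86}, downward cones by conjugating the reflected partner's transporters through the unitary $W_\rho$ implementing $G(\rho)=\Ad(W_\rho)G_{(-)}(\rho_-)$, and the two cases are glued by the observation that every cone in $\Cbk$ contains a subcone lying in $\CUbk$ or $\CUbkm$. The only differences are that you spell out the geometric dichotomy and the independence of the relevant part of Lemma \ref{lem86} from Assumption \ref{oo}, both of which the paper leaves implicit.
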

\begin{rem}
From this property, we may regard $\rho_-$
as dual object of $\rho$.
\end{rem}
\begin{proof}
By the definition, there are unitaries
$W_\rho$ on $\hbk$ and
$\rho_-\in \Obulm$
such that 
$
\Ad(W_\rho)G_{(-)}(\rho_-)=G(\rho)
$.
For any cone $\ld$ in $\bbZ^2$,
either (i) $\lm 1\in\CUbk$ with
$\lm 1\subset\ld$
or (ii) $\lm 2\in \CUbkm$ with
$\lm 2\subset\ld$ exists.
If (i) occurs,
choose $\Vrl\rho{\lm 1}\in \Vbu\rho{\lm 1}\subset\fbd$
and we have
\begin{align}
\begin{split}
\left.\Ad\lmk \tau^{-1}\lmk \Vrl\rho{\lm 1}\rmk\rmk G(\rho)\right\vert_{\caA_{\ld^c}}=
\left. \pbk\right\vert_{\caA_{\ld^c}}.
\end{split}
\end{align}
If (ii) occures, choose $\Vrl{\rho_-}{\lm 2}\in \Vbum{\rho_-}{\lm 2}\subset\fbdm$
and we have
\begin{align}
\begin{split}
\left.\Ad\lmk \tau_{(-)}^{-1}\lmk \Vrl{\rho_-}{\lm 2}\rmk W_\rho^*\rmk G(\rho)\right\vert_{\caA_{\ld^c}}=
\left. \pbk\right\vert_{\caA_{\ld^c}}.
\end{split}
\end{align}
Hence we have $G(\rho)\in\Obkl$.
\end{proof}
First, $\Otot$ is closed under tensor.
\begin{lem}
In the setting of Definition \ref{singapore}, for any
$\rho,\sigma\in\Ototl$, we have
$\rho\otimes\sigma\in\Ototl$.
\end{lem}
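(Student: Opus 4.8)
The plan is to exhibit $\rho_-\otimes\sigma_-$ as a lower-boundary partner of $\rho\otimes\sigma$. By assumption $\rho,\sigma\in\Ototl$, so there are $\rho_-,\sigma_-\in\Obulm$ with $G(\rho)\simeq G_{(-)}(\rho_-)$ and $G(\sigma)\simeq G_{(-)}(\sigma_-)$. First I would record the two closure statements: by Lemma \ref{lem22}(iv) we have $\rho\otimes\sigma\in\Obul$, and by the $(-)$-version of the same lemma $\rho_-\otimes\sigma_-\in\Obulm$. Thus $\rho_-\otimes\sigma_-$ is a legitimate candidate partner, and it remains only to verify the defining condition of $\Ototl$, namely $G(\rho\otimes\sigma)\simeq G_{(-)}(\rho_-\otimes\sigma_-)$.

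Next I would reduce this to an equivalence of bulk tensor products. Since $G$ preserves tensor (Lemma \ref{lem86}) and $G_{(-)}$ preserves tensor (its $(-)$-version), we have $G(\rho\otimes\sigma)=G(\rho)\otimes G(\sigma)$ and $G_{(-)}(\rho_-\otimes\sigma_-)=G_{(-)}(\rho_-)\otimes G_{(-)}(\sigma_-)$, where in each identity $\otimes$ denotes the bulk tensor product of representations of $\abk$ on $\hbk$. Hence it suffices to prove $G(\rho)\otimes G(\sigma)\simeq G_{(-)}(\rho_-)\otimes G_{(-)}(\sigma_-)$; combined with Definition \ref{singapore} this gives $\rho\otimes\sigma\in\Ototl$.

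For this last equivalence I would use that $\simeq$ is unitary equivalence by a unitary on $\hbk$. As in the proof of Lemma \ref{sapporo}, choose unitaries $W_\rho,W_\sigma\in\caU(\hbk)$ with $\Ad(W_\rho)G_{(-)}(\rho_-)=G(\rho)$ and $\Ad(W_\sigma)G_{(-)}(\sigma_-)=G(\sigma)$, i.e.\ $W_\rho$ and $W_\sigma$ are unitary intertwiners in the bulk braided $C^*$-tensor category $\Cabkl$. Functoriality of the bulk tensor (the bulk analogue of Lemma \ref{lem24}, established in \cite{MTC}) then yields a unitary $W_\rho\otimes W_\sigma$ intertwining $G_{(-)}(\rho_-)\otimes G_{(-)}(\sigma_-)$ with $G(\rho)\otimes G(\sigma)$, which is exactly the desired $\simeq$.

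I expect the main obstacle to be this last step. The objects $G(\rho),G(\sigma)$ are localized in $\lz\subset\hu$, whereas $G_{(-)}(\rho_-),G_{(-)}(\sigma_-)$ are localized in $\lzm\subset\hd$, so the intertwiners $W_\rho,W_\sigma$ connect objects localized in different cones and one cannot directly invoke the composition formula for $\circ_{\lz}$. The point to verify is that the bulk tensor product is functorial with respect to unitary equivalence independently of the localization region — equivalently, that the bulk category structure is stable under transport of the localization cone — so that $W_\rho\otimes W_\sigma$ is a well-defined unitary morphism. Granting this, the chain $G(\rho\otimes\sigma)=G(\rho)\otimes G(\sigma)\simeq G_{(-)}(\rho_-)\otimes G_{(-)}(\sigma_-)=G_{(-)}(\rho_-\otimes\sigma_-)$ closes the argument.
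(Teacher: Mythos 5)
Your overall reduction coincides with the paper's: pick unitaries $W_\rho,W_\sigma\in\caU(\caB(\hbk))$ with $\Ad(W_\rho)G_{(-)}(\rho_-)=G(\rho)$ and $\Ad(W_\sigma)G_{(-)}(\sigma_-)=G(\sigma)$, use that $G$ and $G_{(-)}$ preserve tensor, and produce a unitary intertwining $G(\rho)\otimes G(\sigma)$ with $G_{(-)}(\rho_-)\otimes G_{(-)}(\sigma_-)$. The gap is precisely the step you ``grant''. The bulk tensor of morphisms is $R\otimes S=R\,\Tbk{\rho}{\lz}{\unit}(S)$, and the extension $\Tbk{\rho}{\lz}{\unit}$ is defined only on $\caG=\caB_{(\frac{3\pi}2,\frac\pi2)}$, the algebra generated by cone algebras of cones avoiding the downward directions. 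The unitary $W_\sigma$ intertwines $G(\sigma)$ (localized in $\lz\subset\hu$) with $G_{(-)}(\sigma_-)$ (localized in $\lzm\subset\hd$), so all one knows is $W_\sigma\in\pbk\lmk\caA_{(\lz\cup\lzm)^c}\rmk'$. The Haag-duality mechanism that places intertwiners inside $\caG$ (Lemma \ref{lem11}) applies only to cones in $\CUbk$, and no cone in $\CUbk$ contains $\lz\cup\lzm$, because any cone containing $\lzm$ has argument meeting the forbidden downward interval. So there is no reason that $W_\sigma\in\caG$; the expression $\Tbk{G_{(-)}(\rho_-)}{\lz}{\unit}(W_\sigma)$ is undefined, and ``$W_\rho\otimes W_\sigma$'' cannot be formed in $\Cabkl$ at all (indeed $G_{(-)}(\rho_-),G_{(-)}(\sigma_-)$ are not even objects of $\Cabkl$, being localized in $\lzm$ rather than $\lz$). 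Your suggested repair --- that the bulk tensor is functorial ``independently of the localization region'' --- is not a routine verification; as stated it is exactly the assertion that fails, and filling it is the substance of the proof.

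The paper's proof supplies the missing mechanism: fix $\varphi_1\in(0,\frac\pi2)$ with $\lz,\lzm\in\caC_{(0,\varphi_1)}$, i.e.\ a reference direction avoided by \emph{both} localization cones. Then $W_\rho, W_\sigma\in\pbk\lmk\caA_{(\lz\cup\lzm)^c}\rmk'\subset\caB_{(0,\varphi_1)}$, which is the domain of the $(0,\varphi_1)$-extensions. Using Lemma 2.14 of \cite{MTC} and the inclusion $\caB_{\lmk\frac{3\pi+\varphi_1}{2},\frac{\varphi_1+\pi}2\rmk}\subset\caG\cap\caB_{(0,\varphi_1)}$, one checks that the tensor product of $G(\rho),G(\sigma)\in\Obkl$ can be recomputed with those extensions, $G(\rho)\otimes G(\sigma)=T_{G(\rho)}^{(0,\varphi_1)\lz\unit}T_{G(\sigma)}^{(0,\varphi_1)\lz\unit}\pbk$, and, by $\sigma$-weak continuity, that $T_{G(\rho)}^{(0,\varphi_1)\lz\unit}=\Ad(W_\rho)T_{G_{(-)}(\rho_-)}^{(0,\varphi_1)\lzm\unit}$ (similarly for $\sigma$). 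Only then can the composition be rearranged to yield the intertwining unitary $W_\rho\,T_{G_{(-)}(\rho_-)}^{(0,\varphi_1)\lzm\unit}(W_\sigma)$, which is your ``$W_\rho\otimes W_\sigma$'' but formed in the $(0,\varphi_1)$-tensor structure rather than the original $(\frac{3\pi}2,\frac\pi2)$ one. Without this change of reference direction, your chain $G(\rho\otimes\sigma)=G(\rho)\otimes G(\sigma)\simeq G_{(-)}(\rho_-)\otimes G_{(-)}(\sigma_-)=G_{(-)}(\rho_-\otimes\sigma_-)$ does not close, so the proposal as written has a genuine gap.
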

\begin{proof}
Let $\rho,\sigma\in\Otot$.
By the definition, there are unitaries
$W_\rho,W_\sigma$ on $\hbk$ and
$\rho_-,\sigma_-\in \Obulm$
such that 
\begin{align}\label{morioka}
\begin{split}
\Ad(W_\rho)G_{(-)}(\rho_-)=G(\rho),\quad
\Ad(W_\sigma)G_{(-)}(\sigma_-)=G(\sigma)
\end{split}
\end{align}
Note that $W_\sigma\in \pbk\lmk\caA_{\lmk \lz\cup\lzm\rmk^c}\rmk'\subset \caB_{(0,\varphi_1)}$,
because $G(\sigma)$ and $G_{(-)}(\sigma_-)$ are $\pbk$ on $\caA_{\lmk \lz\cup\lzm\rmk^c}$.

From Lemma \ref{sapporo} and its $(-)$ version, 
$G(\rho), G(\sigma), G_{(-)}(\rho_-), G_{(-)}(\sigma_-)\in\Obkl$.
By Lemma 2.14 of \cite{MTC}
and
${\caB_{\lmk \frac{3\pi+\varphi_1}{2},\frac{\varphi_1+\pi}2\rmk}}
\subset \caB_{(\frac{3\pi}2,\frac\pi 2)}\cap\caB_{(0,\varphi_1)}$,
we have
\begin{align}
\begin{split}
T_{\hat\rho}^{(\frac{3\pi+\varphi_1}{2},\frac{\varphi_1+\pi}2)\lz\unit}\
=\left. T_{\hat\rho}^{(0,\varphi_1)\lz\unit}\right\vert_{\caB_{\lmk \frac{3\pi+\varphi_1}{2},\frac{\varphi_1+\pi}2\rmk}}
=\left.\Tbk{\hat\rho}\lz\unit\right\vert_{\caB_{\lmk \frac{3\pi+\varphi_1}{2},\frac{\varphi_1+\pi}2\rmk}},
\end{split}
\end{align}
for any $\hat\rho\in\Obkl$.
Hence for $G(\rho),G(\sigma)\in \Obkl$, we have
\begin{align}
\begin{split}
G(\rho)\otimes G(\sigma)
:=\Tbk{G(\rho)}\lz\unit\Tbk{G(\sigma)}\lz\unit\pbk
=T_{G(\rho)}^{(\frac{3\pi+\varphi_1}{2},\frac{\varphi_1+\pi}2)\lz\unit}
T_{G(\rho)}^{(\frac{3\pi+\varphi_1}{2},\frac{\varphi_1+\pi}2)\lz\unit}\pbk
=T_{G(\rho)}^{(0,\varphi_1)\lz\unit}T_{G(\sigma)}^{(0,\varphi_1)\lz\unit}\pbk.
\end{split}
\end{align}
Furthermore,
because
\begin{align}
\begin{split}
T_{G(\rho)}^{(0,\varphi_1)\lz\unit}\pbk
=G(\rho)
=\Ad(W_\rho)G_{(-)}(\rho_-)
=\Ad(W_\rho)T_{G_{(-)}(\rho_-)}^{(0,\varphi_1)\lzm\unit}\pbk,
\end{split}
\end{align}
and $T_{G(\rho)}^{(0,\varphi_1)\lz\unit}$,
$\Ad(W_\rho)T_{G_{(-)}(\rho_-)}^{(0,\varphi_1)\lzm\unit}$
are $\sigma$-weak continuous 
on $\pbk(\caA_{\Lambda})''$ for $\Lambda\in C_{(0,\varphi_1)}$,
we have $T_{G(\rho)}^{(0,\varphi_1)\lz\unit}=\Ad(W_\rho)T_{G_{(-)}(\rho_-)}^{(0,\varphi_1)\lzm\unit}$.
Similarly we have  $T_{G(\sigma)}^{(0,\varphi_1)\lz\unit}=\Ad(W_\sigma)T_{G_{(-)}(\sigma_-)}^{(0,\varphi_1)\lzm\unit}$.

From those facts and  Lemma \ref{lem86} and their $(-)$ version, we have
\begin{align}
\begin{split}
&G(\rho\otimes\sigma)=G(\rho)\otimes G(\sigma)
=T_{G(\rho)}^{(0,\varphi_1)\lz\unit}T_{G(\sigma)}^{(0,\varphi_1)\lz\unit}\pbk
=\Ad(W_\rho)T_{G_{(-)}(\rho_-)}^{(0,\varphi_1)\lzm\unit}
\Ad(W_\sigma)T_{G_{(-)}(\sigma_-)}^{(0,\varphi_1)\lzm\unit}\pbk\\
&=
\Ad\lmk W_\rho T_{G_{(-)}(\rho_-)}^{(0,\varphi_1)\lzm\unit}\lmk W_\sigma \rmk \rmk
T_{G_{(-)}(\rho_-)}^{(0,\varphi_1)\lzm\unit}
T_{G_{(-)}(\sigma_-)}^{(0,\varphi_1)\lzm\unit}\pbk
=\Ad\lmk W_\rho T_{G_{(-)}(\rho_-)}^{(0,\varphi_1)\lzm\unit}\lmk W_\sigma \rmk \rmk
G_{(-)}(\rho_-)\otimes G_{(-)}(\sigma_-)\\
&=\Ad\lmk W_\rho T_{G_{(-)}(\rho_-)}^{(0,\varphi_1)\lzm\unit}\lmk W_\sigma \rmk \rmk
G_{(-)}(\rho_-\otimes \sigma_-).
\end{split}
\end{align}
Hence we have $\rho\otimes\sigma\in\Ototl$.
\end{proof}

Next $\Otot$ is closed under direct sum
\begin{lem}
In the setting of Definition \ref{singapore}, for any
$\rho,\sigma\in\Ototl$, 
$\gamma\in\Obul$, $u\in (\rho,\gamma)_U$, $v\in (\sigma,\gamma)_U$
isometry with $uu^*+vv^*=\unit$, we have
$\gamma\in\Ototl$.
\end{lem}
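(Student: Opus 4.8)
The plan is to build a direct sum $\gamma_-$ of $\rho_-$ and $\sigma_-$ inside the lower half-plane category $\Obulm$, and then to transport the intertwining unitaries to produce an explicit unitary realizing $G(\gamma)\simeq G_{(-)}(\gamma_-)$. First I would unpack the hypothesis $\rho,\sigma\in\Ototl$: by definition there are $\rho_-,\sigma_-\in\Obulm$ and unitaries $W_\rho,W_\sigma$ on $\hbk$ with $\Ad(W_\rho)G_{(-)}(\rho_-)=G(\rho)$ and $\Ad(W_\sigma)G_{(-)}(\sigma_-)=G(\sigma)$. Applying the $(-)$-version of Lemma \ref{lem61} (valid under the $(-)$-versions of the assumptions, in particular Assumption \ref{wakayama}$(-)$) to $\rho_-,\sigma_-\in\Obulm$ yields $\gamma_-\in\Obulm$ together with isometries $u_-,v_-\in\fbdm$ satisfying $u_-u_-^*+v_-v_-^*=\unit$ and $\gamma_-=\Ad u_-\circ\rho_-+\Ad v_-\circ\sigma_-$. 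This $\gamma_-$ is the intended witness for $\gamma\in\Ototl$.

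The next step is to observe that both $G$ and $G_{(-)}$ carry direct sums to direct sums. Since $u\in(\rho,\gamma)_U$ and $v\in(\sigma,\gamma)_U$ lie in $\fbd$, I set $\hat u:=\tau^{-1}(u)$ and $\hat v:=\tau^{-1}(v)$, which are isometries in $\fbk$ with $\hat u\hat u^*+\hat v\hat v^*=\unit$ because $\tau^{-1}$ is a $*$-isomorphism (Lemma \ref{lem10}). Using the defining relations for $G$ in Lemma \ref{lem86}, I would verify $G(\gamma)=\Ad\hat u\circ G(\rho)+\Ad\hat v\circ G(\sigma)$: on $\abd$ this is immediate from $G(\gamma)(A)=\tau^{-1}\gamma(A)$ and $\gamma=\Ad u\circ\rho+\Ad v\circ\sigma$; on $\caA_{\hd}$ both sides equal $\pbk$, since $\pbk(\caA_{\hd})\subset\fbk'$ commutes with $\hat u,\hat v$ and $\hat u\hat u^*+\hat v\hat v^*=\unit$. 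The identical computation with $\tau_{(-)}^{-1}$ gives $G_{(-)}(\gamma_-)=\Ad\hat u_-\circ G_{(-)}(\rho_-)+\Ad\hat v_-\circ G_{(-)}(\sigma_-)$, where $\hat u_-:=\tau_{(-)}^{-1}(u_-)$ and $\hat v_-:=\tau_{(-)}^{-1}(v_-)$ lie in $\fbkm$.

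Finally I would assemble the intertwiner. Putting $W_\gamma:=\hat u\,W_\rho\,\hat u_-^*+\hat v\,W_\sigma\,\hat v_-^*$, the orthogonality relations $\hat u^*\hat v=0$ and $\hat u_-^*\hat v_-=0$ (automatic from the isometry and partition-of-unity identities) give $W_\gamma^*W_\gamma=W_\gamma W_\gamma^*=\unit$, so $W_\gamma$ is unitary. A short computation using $\Ad(W_\rho)G_{(-)}(\rho_-)=G(\rho)$ and $\Ad(W_\sigma)G_{(-)}(\sigma_-)=G(\sigma)$ then shows $\Ad(W_\gamma)G_{(-)}(\gamma_-)=G(\gamma)$, i.e. $G(\gamma)\simeq G_{(-)}(\gamma_-)$, so $\gamma\in\Ototl$.

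I do not anticipate a real obstacle: this is the standard ``a functor respects direct sums'' argument together with the uniqueness-up-to-unitary of direct sums. The only points demanding care are bookkeeping — checking that the transported intertwiners land in $\fbk$ and $\fbkm$, and that $\pbk(\caA_{\hd})$ commutes with $\fbk$ so that the lower-half-plane part of $G(\gamma)$ is left untouched — and the structural remark that the existence of $\gamma_-$ requires the $(-)$-versions of the assumptions.
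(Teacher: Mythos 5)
Your proof is correct and follows essentially the same route as the paper's: unpack the witnesses $W_\rho,W_\sigma,\rho_-,\sigma_-$, form the direct sum $\gamma_-$ of $\rho_-,\sigma_-$ in $\Obulm$ via the $(-)$-version of Lemma \ref{lem61}, check that $G$ and $G_{(-)}$ respect direct sums, and glue with $W_\gamma=\tau^{-1}(u)W_\rho\tau_{(-)}^{-1}(u_-^*)+\tau^{-1}(v)W_\sigma\tau_{(-)}^{-1}(v_-^*)$ — exactly the intertwiner the paper uses. The extra details you supply (unitarity of $W_\gamma$ from $u^*v=0$, and the commutation of $\fbk$ with $\pbk(\caA_{\hd})$ on the lower half-plane) are precisely the steps the paper labels ``straightforward to check.''
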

\begin{proof}
Let $W_\rho,W_\sigma$ unitaries on $\hbk$ and 
$\rho_-,\sigma_-\in \Obulm$ satisfying (\ref{morioka}).
For $\rho_-,\sigma_-\in\Obulm$, there are
$\gamma_-\in\Obulm$, $u_-\in (\rho,\gamma)_{U(-1)}$, $v_-\in (\sigma,\gamma_{U(-)}$
isometry with $u_-u_-^*+v_-v_-^*=\unit$.
It is straightforward to check
\begin{align}
\begin{split}
G(\gamma)=\Ad\tau^{-1}(u) G(\rho)+\Ad\tau^{-1}(v) G(\sigma),\quad
G_{(-)}(\gamma_-)=\Ad\tau_{(-)}^{-1}(u_-) G_{(-)}(\rho_-)+\Ad\tau_{(-)}^{-1}(v_-) G_{(-)}(\sigma_-)
\end{split}
\end{align}
Setting
\begin{align}
\begin{split}
W_\gamma:=
\tau^{-1}(u)\cdot  W_\rho\cdot  \tau_{(-)}^{-1}(u_-^*)
+\tau^{-1}(v)\cdot  W_\sigma\cdot  \tau_{(-)}^{-1}(v_-^*)\in\caU(\hbk),
\end{split}
\end{align}
we have
\begin{align}
\begin{split}
G(\gamma)
=\Ad(W_\gamma)G_{(-)}(\gamma_-).
\end{split}
\end{align}
Hence we obtain $\gamma\in \Ototl$.
\end{proof}
Finally we show that $\Ototl$ is closed under taking subobject.
\begin{lem}In the setting of Definition \ref{singapore},
let $\rho\in\Ototl$, $p\in (\rho,\rho)_U$ a projection,
$\gamma\in\Obul$, $v\in (\gamma,\rho)_U$ isometry with $vv^*=p$.
Then we have $\gamma\in \Ototl$
\end{lem}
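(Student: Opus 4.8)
The plan is to reflect the subobject construction to the lower boundary and then splice the two half-plane witnesses into a single unitary, in the spirit of the preceding direct sum argument. By Definition \ref{singapore}, fix $W_\rho\in\caU(\hbk)$ and $\rho_-\in\Obulm$ with $\Ad(W_\rho)G_{(-)}(\rho_-)=G(\rho)$. Since $p\in(\rho,\rho)_U$ and $v\in(\gamma,\rho)_U$ lie in $\fbd$, the elements $\tau^{-1}(p),\tau^{-1}(v)\in\fbk$ are defined, and using (\ref{nagano}), (\ref{gifu}) together with the fact that $\fbk$ commutes with $\pbk(\caA_{\hd})$, one checks directly that $\tau^{-1}(p)\in(G(\rho),G(\rho))$ is a projection, $\tau^{-1}(v)\in(G(\gamma),G(\rho))$ is an isometry with $\tau^{-1}(v)\tau^{-1}(v)^*=\tau^{-1}(p)$, and $G(\gamma)=\Ad\lmk\tau^{-1}(v)^*\rmk G(\rho)$.

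Transporting through $W_\rho$, put $q:=W_\rho^*\,\tau^{-1}(p)\,W_\rho$, a projection in $(G_{(-)}(\rho_-),G_{(-)}(\rho_-))$. The key step is to show $q\in\fbkm$. Recall that $G_{(-)}(\rho_-)$ is localized in $\lzm\cap\hd$, i.e. it agrees with $\pbk$ on $\caA_{(\lzm\cap\hd)^c}$; since $q$ intertwines $G_{(-)}(\rho_-)$ with itself, this forces $q\in\pbk\lmk\caA_{(\lzm\cap\hd)^c}\rmk'$, hence both $q\in\pbk(\caA_{\hu})'$ and $q\in\pbk(\caA_{\lzm^c})'$. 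Because $\lzm\in\CUbkm$, the reflected version of Lemma \ref{lem11} gives $q\in\gu_{(-)}$, and then the reflected version of Lemma \ref{lem13} yields $q\in\gu_{(-)}\cap\pbk(\caA_{\hu})'\subset\fbkm$.

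With $q\in\fbkm$, set $p_-:=\tau_{(-)}(q)\in\fbdm$. Applying $\tau_{(-)}$ to the identity $q\,G_{(-)}(\rho_-)(A)=G_{(-)}(\rho_-)(A)\,q$ for $A\in\caA_{\hd}$ (where $G_{(-)}(\rho_-)(A)=\tau_{(-)}^{-1}\rho_-(A)$) shows $p_-\rho_-(A)=\rho_-(A)p_-$, so $p_-\in(\rho_-,\rho_-)_{U(-)}$ is a projection. The reflected version of Lemma \ref{lem70} then provides $\gamma_-\in\Obulm$ and an isometry $v_-\in(\gamma_-,\rho_-)_{U(-)}$ with $v_-v_-^*=p_-$; in particular $\tau_{(-)}^{-1}(v_-)\tau_{(-)}^{-1}(v_-)^*=q$ and $G_{(-)}(\gamma_-)=\Ad\lmk\tau_{(-)}^{-1}(v_-)^*\rmk G_{(-)}(\rho_-)$.

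Finally, define $W_\gamma:=\tau^{-1}(v)^*\,W_\rho\,\tau_{(-)}^{-1}(v_-)$. Using $W_\rho q W_\rho^*=\tau^{-1}(p)=\tau^{-1}(v)\tau^{-1}(v)^*$ and $\tau_{(-)}^{-1}(v_-)\tau_{(-)}^{-1}(v_-)^*=q$, a short computation gives $W_\gamma^*W_\gamma=W_\gamma W_\gamma^*=\unit$, so $W_\gamma\in\caU(\hbk)$; and combining $G(\gamma)=\Ad(\tau^{-1}(v)^*)G(\rho)$, $\Ad(W_\rho)G_{(-)}(\rho_-)=G(\rho)$, $G_{(-)}(\gamma_-)=\Ad(\tau_{(-)}^{-1}(v_-)^*)G_{(-)}(\rho_-)$, and $\tau^{-1}(p)\tau^{-1}(v)=\tau^{-1}(v)$ yields $\Ad(W_\gamma)G_{(-)}(\gamma_-)=G(\gamma)$. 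Hence $\gamma\in\Ototl$. I expect the membership $q\in\fbkm$ to be the only real obstacle: the rest is formal transport of data through $\tau$, $\tau_{(-)}$ and $W_\rho$, whereas $q\in\fbkm$ is precisely what upgrades $q$ to a genuine lower-boundary morphism, and it rests on correctly identifying the localization region $\lzm\cap\hd$ of $G_{(-)}(\rho_-)$ and invoking the reflected forms of Lemma \ref{lem11} and Lemma \ref{lem13}.
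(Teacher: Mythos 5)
Your proof is correct and follows essentially the same route as the paper's: transport $p$ through $W_\rho$ to the projection $q=\Ad W_\rho^*\lmk\tau^{-1}(p)\rmk$ commuting with $G_{(-)}(\rho_-)(\abk)$, show $q\in\fbkm$ from the localization of $G_{(-)}(\rho_-)$, pull it to $p_-=\tau_{(-)}(q)\in(\rho_-,\rho_-)_{U(-)}$, take the reflected subobject $(\gamma_-,v_-)$, and splice with $W_\gamma=\tau^{-1}(v^*)W_\rho\tau_{(-)}^{-1}(v_-)$. The only difference is cosmetic: where the paper simply asserts $\pbk\lmk\caA_{\lzm^c}\rmk'\subset\fbkm$, you spell out the justification via the reflected versions of Lemma \ref{lem11} and Lemma \ref{lem13}, which is exactly the reasoning behind that inclusion.
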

\begin{proof}
It is straightforward to check $G(\gamma)=\Ad\lmk\tau^{-1}(v^*)\rmk G(\rho)$.
 There are unitaries
$W_\rho$ on $\hbk$ and
$\rho_-\in \Obulm$
such that 
$
\Ad(W_\rho)G_{(-)}(\rho_-)=G(\rho)
$.
We claim $\Ad W_\rho^*\lmk \tau^{-1}(p)\rmk\in G_{(-)}(\rho_-)(\abk)'$.
In fact for any $A\in\abk$, we have
\begin{align}
\begin{split}
&\Ad W_\rho^*\lmk \tau^{-1}(p)\rmk\cdot G_{(-)}(\rho_-)(A)
=W_\rho^*\tau^{-1}(p) W_\rho \cdot G_{(-)}(\rho_-)(A) \cdot W_\rho^* W_\rho
=W_\rho^*\tau^{-1}(p) G(\rho)(A) W_\rho\\
&=W_\rho^* G(\rho)(A)\tau^{-1}(p)  W_\rho
=G_{(-)}(\rho_-)(A) W_\rho^*\tau^{-1}(p)W_\rho.
\end{split}
\end{align}
Because $G_{(-)}(\rho_-)\vert_{\caA_{\lzm^c}}=\pbk\vert_{\caA_{\lzm^c}}$,
we have
\begin{align}
\begin{split}
\Ad W_\rho^*\lmk \tau^{-1}(p)\rmk\in
G_{(-)}(\rho_-)(\abk)'\subset G_{(-)}(\rho_-)\lmk \caA_{\lzm^c}\rmk'
=\pbk\lmk {\caA_{\lzm^c}}\rmk'\subset \fbkm,
\end{split}
\end{align}
and $\tau_{(-)}\lmk \Ad W_\rho^*\lmk \tau^{-1}(p)\rmk\rmk$
is a well-defined projection in $\fbdm$
and $\tau_{(-)}\lmk \Ad W_\rho^*\lmk \tau^{-1}(p)\rmk\rmk\in (\rho_-,\rho_-)_{U(-)}$.
For this projection, because $\rho_-\in\Obulm$, there exist
$\gamma_-\in \Obulm$, isometry $v_-\in (\gamma_-,\rho_-)_{U(-)}$
such that $v_-v_-^*=\tau_{(-)}\lmk \Ad W_\rho^*\lmk \tau^{-1}(p)\rmk\rmk$.

Set
$
W_\gamma:=
\tau^{-1}(v^*) W_\rho \tau_{(-)}^{-1}(v_-)
$.
It is a unitary on $\hbk$, and we have
\begin{align}
\begin{split}
&G(\gamma)
=\Ad\lmk\tau^{-1}(v^*)\rmk G(\rho)
=\Ad\lmk\tau^{-1}(v^*)\rmk\Ad(W_\rho)G_{(-)}(\rho_-)
=\Ad\lmk \tau^{-1}(v^*) W_\rho \tau_{(-)}^{-1}(v_-v_-^*)\rmk G_{(-)}(\rho_-)\\
&=\Ad\lmk \tau^{-1}(v^*) W_\rho \tau_{(-)}^{-1}(v_-)\rmk
\Ad\lmk\tau_{(-)}^{-1}(v_-^*) \rmk G_{(-)}(\rho_-)
=\Ad(W_\gamma) G_{(-)}(\gamma_-).
\end{split}
\end{align}
Hence we obtain $\gamma\in\Ototl$.
\end{proof}}
Hence we obtain the following.
\begin{cor}\label{kanazawa}
Consider the setting in subsection \ref{setting2}, and 
assume Assumption \ref{assum3}, Assumption \ref{assum80}, Assumption \ref{assum80l},
Assumption \ref{assump7}, Assumption \ref{aichi} and Assumption \ref{oo}.
The functor $F^{\llz} $ in Theorem \ref{lem74} is a braided monoidal equivalence
between braided $C^*$-tensor categories.
\end{cor}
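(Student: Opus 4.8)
The plan is to deduce the corollary from the two main structural results already established, namely Theorem \ref{lem74} and Lemma \ref{lem86}, together with the standard categorical criterion that a fully faithful and essentially surjective functor is an equivalence. The heavy analytic lifting having been done in those results, what remains is a short bookkeeping argument that assembles them, paying attention to which hypotheses feed which result.

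First I would check that all the hypotheses needed for Theorem \ref{lem74} are in force. Under Assumption \ref{aichi} and Assumption \ref{assump7}, Lemma \ref{daidai} supplies Assumption \ref{wakayama} for free, so Theorem \ref{lem74} applies verbatim and tells us that $F^{\llz}:\Cabkl\to\Cabul$ is a fully faithful unitary braided tensor functor. It therefore remains only to promote this to an equivalence, i.e. to verify essential surjectivity on objects.

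Essential surjectivity is exactly where Assumption \ref{oo} enters, and this is the step I would treat most carefully. Given any object $\rho\in\Obul$ of $\Cabul$, I would apply Lemma \ref{lem86}: since Assumption \ref{oo} yields $\OUbk=\Obk$, equivalently $\OUbkl=\Obkl$, the boundary-to-bulk map produces an honest bulk object $G(\rho)\in\Obkl$ satisfying $F_0^{\llz}(G(\rho))=\rho$. Hence every object of $\Cabul$ lies literally in the image of $F_0^{\llz}$, so $F^{\llz}$ is essentially surjective; in fact the last assertions of Lemma \ref{lem86} give that $G$ and $F_0^{\llz}$ are mutually inverse bijections on objects, yielding $G$ as an explicit quasi-inverse.

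Combining these, $F^{\llz}$ is a fully faithful, essentially surjective, braided monoidal (and unitary) functor, hence a braided monoidal equivalence. I do not expect a genuine obstacle at this final assembly step, because the essential content has already been isolated. The one point that deserves emphasis, rather than difficulty, is the indispensable role of the homogeneity Assumption \ref{oo}: without it, $G(\rho)$ would in general land only in $\OUbk$ rather than in $\Obk$, so it would fail to be an object of the bulk braided $C^*$-tensor category $\Cabkl$, and essential surjectivity would break down. Thus the single hypothesis driving the upgrade from "fully faithful braided functor" (Theorem \ref{lem74}) to "braided monoidal equivalence" is precisely the requirement $G(\rho)\in\Obkl$, which Lemma \ref{lem86} guarantees under Assumption \ref{oo}.
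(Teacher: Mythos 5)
Your proof is correct and follows essentially the same route as the paper: Theorem \ref{lem74} (with Assumption \ref{wakayama} supplied by Lemma \ref{daidai}) gives the fully faithful braided tensor functor, and Lemma \ref{lem86} under Assumption \ref{oo} gives essential surjectivity via $F_0^{\llz}(G(\rho))=\rho$ with $G(\rho)\in\Obkl$. Your added emphasis on why Assumption \ref{oo} is indispensable matches the paper's intent exactly.
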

\begin{proof}
%
The functor $F^{\llz}$ is also essentially surjective from Lemma \ref{lem86}.
Hence it is an equivalence.
\end{proof}
\change{In a word, there is a copy of the bulk theory at the boundary.}
\change{
\begin{rem}\label{bToric}
From this Corollary and the fact that the vacuum and the $m$-particle are not equivalent
in the bulk, we conclude that the $\pbd$ and $\pi_\gamma^X$ in Example
\ref{toric} are not equivalent with respect to
$\simeq_{\caU}$. 
\end{rem}
}

\kakunin{
Recall the definition of center $\caZ(\caC)$
of a strict tensor category $\caC$ (Definition XIII.4.1 of \cite{Kassel}).
Here, we introduce a category $\caZ(\Cabul)$
which requires the asymptotic behavior (\ref{okayama}).
\begin{defn}\label{hiroshima}
Consider the setting in subsection \ref{setting2}, and 
assume Assumption \ref{assum3}, Assumption \ref{assum80}, Assumption \ref{assum80l},
Assumption \ref{assump7}, Assumption \ref{aichi}, Assumption \ref{wakayama}.
Let $\lz\in\CUbk$.
An object of $\caZ(\Cabul)$ is a pair
$(\rho, \hat\iota(\rho : -))$ where $\rho\in \Obul$ 
and $\hat\iota(\rho : -)$ is a family of natural isomorphisms
\begin{align}
\hat\iota(\rho : \sigma) :
\rho\otimes\sigma \to\sigma\otimes\rho,
\end{align}
defined for all objects $\sigma\in\Obul$ such that
\begin{align}\label{shimane}
\lim_{t\to\infty}
\sup_{\sigma\in O^{rU}_{\mathop{\mathrm{bd},\lm {r2}(t)}}}
\lV
\hat\iota(\rho :\sigma)-\unit
\rV=0,
\end{align}for any $\lm {r2}\in\Crbd$ with $\lz\leftarrow_r\lm {r2}$, $\lm {r2}\subset \lzr$
and for all objects
$\sigma,\gamma\in \Obul$ we have 
\begin{align}\label{kouchi}
\hat\iota\lmk {\rho}: {\sigma\otimes \gamma}\rmk
=\lmk \unit_{\sigma}\otimes\hat\iota\lmk \rho: \gamma\rmk\rmk\cdot
\lmk \hat\iota\lmk \rho :\sigma\rmk\otimes\unit_\gamma\rmk.
\end{align}
The morphism from $(\rho, \hat\iota(\rho : -))$ to $(\sigma, \hat\iota(\sigma : -))$
is a morphism $R: \rho\to\sigma$ in $\Cabul$
such that for each object $\gamma \in \Cabul$, we have
\begin{align}
\begin{split}
\hat\iota(\sigma: \gamma)
\lmk R\otimes \unit_\gamma\rmk
=\lmk \unit_\gamma\otimes R\rmk\hat\iota(\rho:\gamma)
\end{split}
\end{align}
\end{defn}

By Lemma \ref{lem83}, we have $\hat\iota(\rho:\sigma)=\hb{\rho}\sigma$.
From this and the naturality of $\hb{\rho}\sigma$, Lemma \ref{lem42},
the set of all morphisms in  $\caZ(\Cabul)$ from $(\rho, \hat\iota(\rho : -))$ to $(\sigma, \hat\iota(\sigma : -))$
is $(\rho,\sigma)_U$.

As in Theorem XIII.4.2 of \cite{Kassel}, $\caZ(\Cabul)$
is a braided $C^*$-tensor category with
the tensor product
\begin{align}
(\rho, \hat\iota(\rho : -))\otimes(\sigma, \hat\iota(\sigma : -))
:=\lmk
\rho\otimes\sigma, \hat\iota(\rho\otimes\sigma: -)
\rmk.
\end{align}
Following Theorem XIII.4.2 of \cite{Kassel}, we define the morphism $ \hat\iota(\rho\otimes\sigma: -)$
by
\begin{align}\label{yamaguchi}
\begin{split}
 \hat\iota(\rho\otimes\sigma: \gamma)
 :=\lmk\hat\iota(\rho : \gamma)\otimes\unit_\sigma\rmk
 \lmk \unit_\rho\otimes \hat\iota(\sigma,\gamma)\rmk,
\end{split}
\end{align}
but because of Lemma \ref{lem83}, it turns out
this is equal to $\hb{\rho\otimes\sigma}{-}$,
the unique family of morphism
satisfying the condition in Definition \ref{hiroshima}
for $\rho\otimes\sigma$.
For morphism $R$ from
$(\rho, \hat\iota(\rho : -))$ to $(\rho', \hat\iota(\rho' : -))$
and  morphism $S$ from
$(\sigma, \hat\iota(\sigma : -))$ to $(\sigma', \hat\iota(\sigma' : -))$,
the tensor product of $R$ and $S$ in $\caZ(\Cabul)$
by the tensor product $R\otimes S$ in $\Cabul$.
By Lemma \ref{lem83} and Lemma \ref{lem42}, 
this $R\otimes S$ is in fact a morphism
from
$(\rho, \hat\iota(\rho : -))\otimes (\sigma, \hat\iota(\sigma : -))$ to 
$(\rho', \hat\iota(\rho' : -))\otimes (\sigma', \hat\iota(\sigma' : -))$.
With $\hat\iota(\pbd : \sigma)=\unit$, $\sigma\in \Obul$,
$(\pbd,\hat\iota (\pbd,-))$ is the tensor unit 
of $\Cabul$ and 
$\unit :  (\pbd, \hat\iota(\pbd : -))\otimes (\rho, \hat\iota(\rho : -))\to (\rho, \hat\iota(\rho:-))$,
$\unit :  (\rho, \hat\iota(\rho : -))\otimes (\pbd, \hat\iota(\pbd : -))\to (\rho, \hat\iota(\rho:-))$
give left and right multiplication of the unit.
Note that the tensor unit is irreducible.
Furthermore, $\unit : \lmk (\rho, \hat\iota(\rho : -))\otimes (\sigma, \hat\iota(\sigma : -)\rmk
\otimes
(\gamma, \hat\iota(\gamma : -))
\to (\rho, \hat\iota(\rho : -))\otimes 
\lmk (\sigma, \hat\iota(\sigma : -))\otimes
(\gamma, \hat\iota(\gamma : -))\rmk$ is 
the associativity isomorphism.

Next we consider direct sum.
For objects $(\rho, \hat\iota(\rho : -)), (\sigma, \hat\iota(\sigma : -))$
of $\caZ\lmk\Cabul\rmk$, 
let $\gamma\in \Obul$, $u\in (\rho,\gamma)_U$, $v\in (\sigma,\gamma)_U$ isometry
with $uu^*+vv^*=\unit$.
Then, for any $\zeta\in \Obul$, set
\begin{align}
\hat\iota(\gamma:\zeta):=
\Tbd\zeta{\lzr}\unit(u)\hat\iota(\rho,\zeta) u^*
+\Tbd\zeta{\lzr}\unit(v)\hat\iota(\sigma,\zeta) v^*
\in \caU(\fbd).
\end{align}
Then for any $A\in\abd$, we have
\begin{align}
\begin{split}
&\hat\iota(\gamma,\zeta)\cdot \gamma\otimes\zeta(A)
=\lmk
\Tbd\zeta{\lzr}\unit(u)\hat\iota(\rho,\zeta) u^*
+\Tbd\zeta{\lzr}\unit(v)\hat\iota(\sigma,\zeta) v^*
\rmk \Tbd\gamma{\lzr}\unit\Tbd\zeta{\lzr}\unit\pbd(A)\\
&=
\Tbd\zeta{\lzr}\unit(u)\hat\iota(\rho,\zeta)   \lmk \Tbd\rho{\lzr}\unit\Tbd\zeta{\lzr}\unit\pbd(A)\rmk
u^*
+\Tbd\zeta{\lzr}\unit(v)\hat\iota(\sigma,\zeta) \lmk \Tbd\sigma{\lzr}\unit\Tbd\zeta{\lzr}\unit\pbd(A)\rmk
v^*\\
&=\Tbd\zeta{\lzr}\unit(u) \cdot \Tbd\zeta{\lzr}\unit\Tbd\rho{\lzr}\unit\pbd(A)
\hat\iota(\rho,\zeta)  
u^*
+\Tbd\zeta{\lzr}\unit(v)\cdot \Tbd\zeta{\lzr}\unit\Tbd\sigma{\lzr}\unit\pbd(A)
\hat\iota(\sigma,\zeta) 
v^*\\
&=\Tbd\zeta{\lzr}\unit\lmk u\Tbd\rho{\lzr}\unit \pbd(A)\rmk
\hat\iota(\rho,\zeta)  
u^*
+\Tbd\zeta{\lzr}\unit\lmk v\Tbd\sigma{\lzr}\unit \pbd(A)\rmk
\hat\iota(\sigma,\zeta) 
v^*\\
&=\Tbd\zeta{\lzr}\unit\lmk \Tbd\gamma{\lzr}\unit \pbd(A)\cdot u\rmk
\hat\iota(\rho,\zeta)  
u^*
+\Tbd\zeta{\lzr}\unit\lmk \Tbd\gamma{\lzr}\unit \pbd(A)\cdot v\rmk
\hat\iota(\sigma,\zeta) 
v^*\\
&=\lmk \zeta\otimes\gamma(A) \rmk\hat\iota(\gamma,\zeta).
\end{split}
\end{align}
Hence we have $\hat\iota(\gamma,\zeta)\in (\gamma\otimes\zeta,\zeta\otimes\gamma)$.
It is also natural. In fact, for any $\zeta,\zeta'\in \Obul$ and $S\in (\zeta,\zeta')_U$,
we have
\begin{align}
\begin{split}
&(S\otimes\unit_\gamma)\hat\iota(\gamma,\zeta)
=S
\lmk
\Tbd\zeta{\lzr}\unit(u)\hat\iota(\rho,\zeta) u^*
+\Tbd\zeta{\lzr}\unit(v)\hat\iota(\sigma,\zeta) v^*
\rmk\\
&=\Tbd{\zeta'}{\lzr}\unit(u) S\hat\iota(\rho,\zeta) u^*
+\Tbd{\zeta'}{\lzr}\unit(v) S\hat\iota(\sigma,\zeta) v^*
=\Tbd{\zeta'}{\lzr}\unit(u) \hat\iota(\rho,\zeta') \Tbd\rho{\lzr}\unit(S)u^*
+\Tbd{\zeta'}{\lzr}\unit(v) \hat\iota(\sigma,\zeta') \Tbd\sigma{\lzr}\unit(S)v^*\\
&=\Tbd{\zeta'}{\lzr}\unit(u) \hat\iota(\rho,\zeta') u^*\Tbd\gamma{\lzr}\unit(S)
+\Tbd{\zeta'}{\lzr}\unit(v) \hat\iota(\sigma,\zeta')v^*\Tbd\gamma{\lzr}\unit(S)\\
&=\hat\iota(\gamma,\zeta')\Tbd\gamma{\lzr}\unit(S)
=\hat\iota(\gamma,\zeta')\lmk\unit_\gamma\otimes S\rmk.
\end{split}
\end{align}
Furthermore, for any $\zeta,\zeta'\in \Obul$,
\begin{align}
\begin{split}
&\lmk\unit_\zeta\otimes \hat\iota(\gamma,\zeta')\rmk
\lmk\hat\iota(\gamma,\zeta)\otimes\unit_{\zeta'}\rmk\\
&=\Tbd{\zeta}{\lzr}\unit
\lmk
\Tbd{\zeta'}{\lzr}\unit(u)\hat\iota(\rho,\zeta') u^*
+\Tbd{\zeta'}{\lzr}\unit(v)\hat\iota(\sigma,\zeta') v^*
\rmk
\lmk
\Tbd\zeta{\lzr}\unit(u)\hat\iota(\rho,\zeta) u^*
+\Tbd\zeta{\lzr}\unit(v)\hat\iota(\sigma,\zeta) v^*
\rmk\\
&=
\Tbd{\zeta}{\lzr}\unit\lmk \Tbd{\zeta'}{\lzr}\unit(u)\hat\iota(\rho,\zeta') u^* u\rmk\cdot \hat\iota(\rho,\zeta) u^*
+\Tbd{\zeta}{\lzr}\unit\lmk \Tbd{\zeta'}{\lzr}\unit(v)\hat\iota(\sigma,\zeta') v^* v\rmk\cdot \hat\iota(\sigma,\zeta) v^*\\
&=
\Tbd{\zeta}{\lzr}\unit\Tbd{\zeta'}{\lzr}\unit(u)\cdot
\Tbd{\zeta}{\lzr}\unit \lmk \hat\iota(\rho,\zeta')\rmk\cdot \hat\iota(\rho,\zeta) u^*
+\Tbd{\zeta}{\lzr}\unit\Tbd{\zeta'}{\lzr}\unit(v)\cdot
\Tbd{\zeta}{\lzr}\unit\lmk \hat\iota(\sigma,\zeta'\rmk) \cdot \hat\iota(\sigma,\zeta) v^*\\
&=
\Tbd{\zeta}{\lzr}\unit\Tbd{\zeta'}{\lzr}\unit(u)\cdot
\lmk
\unit_{\zeta}\otimes \hat\iota(\rho,\zeta')
\rmk
\lmk  \hat\iota(\rho,\zeta) \otimes \unit_{\zeta'}\rmk\cdot u^*
+\Tbd{\zeta}{\lzr}\unit\Tbd{\zeta'}{\lzr}\unit(v)\cdot
\lmk
\unit_\zeta\otimes \hat\iota(\sigma,\zeta')
\rmk
\lmk
\hat\iota(\sigma,\zeta)\otimes \unit_{\zeta'}
\rmk v^*\\
&=\Tbd{\zeta\otimes\zeta'}{\lzr}\unit (u)
\hat\iota\lmk \rho : \zeta\otimes\zeta'\rmk u^*
+\Tbd{\zeta\otimes\zeta'}{\lzr}\unit(v)
\hat\iota\lmk\sigma:\zeta\otimes\zeta'\rmk v^*\\
&=\hat\iota\lmk \gamma: \zeta\otimes\zeta'\rmk.
\end{split}
\end{align}
To check the asymptotic behavior (\ref{shimane}),
let $\lm {r2}\in\Crbd$ with $\lz\leftarrow_r\lm {r2}$, $\lm {r2}\subset \lzr$.
Note that $u,v\in \pbd\lmk\caA_{\lhucz}\rmk'\cap\fbd$.
Therefore, Lemma \ref{lem35} implies 
\begin{align}
\begin{split}
\lim_{t\to \infty}\sup_{\sigma\in O^{rU}_{\mathop{\mathrm{bd},\lm {r2}(t)}}}
\lV \Tbd\zeta{\lzr}\unit(u)-u\rV=0,\quad
\lim_{t\to \infty}\sup_{\sigma\in O^{rU}_{\mathop{\mathrm{bd},\lm {r2}(t)}}}
\lV \Tbd\zeta{\lzr}\unit(v)-v\rV=0.
\end{split}
\end{align}
Combining this with 
\begin{align}
\begin{split}
\lim_{t\to\infty}
\sup_{\zeta\in O^{rU}_{\mathop{\mathrm{bd},\lm {r2}(t)}}}
\lV
\hat\iota(\rho :\zeta)-\unit
\rV=
\lim_{t\to\infty}
\sup_{\zeta\in O^{rU}_{\mathop{\mathrm{bd},\lm {r2}(t)}}}
\lV
\hat\iota(\sigma :\zeta)-\unit
\rV
=
0,
\end{split}
\end{align}we obtain
\begin{align}
\begin{split}
\lim_{t\to\infty}
\sup_{\zeta\in O^{rU}_{\mathop{\mathrm{bd},\lm {r2}(t)}}}
\lV
\hat\iota(\gamma :\zeta)-\unit
\rV=0.
\end{split}
\end{align}
Hence we conclude
$(\gamma,\hat\iota(\gamma : -))$ is an object of $\caZ\lmk\Cabul\rmk$,
with a morphism $u$ from $(\rho,\hat\iota(\rho : -))$ to $(\gamma,\hat\iota(\gamma : -))$
and a morphism $v$ from $(\sigma,\hat\iota(\sigma : -))$ to $(\gamma,\hat\iota(\gamma : -))$
such that $uu^*+vv^*=\unit$.

Next we consider subobjects.
Let $(\rho,\hat\iota(\rho : -))$ be an object of $\caZ\lmk\Cabul\rmk$
and $p$ a projection in the endomorphism of $(\rho,\hat\iota(\rho : -))$.
Let $\gamma\in\Obul$ and $v\in (\gamma,\rho)_U$ isometry
such that $vv^*=p$.
We then set $\hat\iota\lmk\gamma : -\rmk$ as
\begin{align}
\hat\iota\lmk\gamma : \zeta\rmk:=
\Tbd\zeta{\lzr}\unit(v^*)\hat\iota(\rho :\zeta) v.
\end{align}
This is in fact a half-braiding.
Using the naturality of $\hat\iota(\rho :\zeta)$, we have
\begin{align}
\begin{split}
&\hat\iota(\gamma:\zeta)\hat\iota(\gamma:\zeta)^*
=\Tbd\zeta{\lzr}\unit(v^*)\hat\iota(\rho :\zeta) v
v^*\hat\iota(\rho :\zeta)^*\Tbd\zeta{\lzr}\unit(v)\\
&=\Tbd\zeta{\lzr}\unit(v^*)\hat\iota(\rho :\zeta) p \hat\iota(\rho :\zeta)^*\Tbd\zeta{\lzr}\unit(v)
=\Tbd\zeta{\lzr}\unit(v^*)\Tbd\zeta\lzr\unit (p)\hat\iota(\rho :\zeta) \hat\iota(\rho :\zeta)^*\Tbd\zeta{\lzr}\unit(v)=\unit
\end{split}
\end{align}
and
\begin{align}
\begin{split}
&\hat\iota(\gamma:\zeta)^*\hat\iota(\gamma:\zeta)
=v^*\hat\iota(\rho :\zeta)^*\Tbd\zeta{\lzr}\unit(v)\Tbd\zeta{\lzr}\unit(v^*)\hat\iota(\rho :\zeta) v\\
&=v^* \hat\iota(\rho :\zeta)^*\Tbd\zeta{\lzr}\unit(p)\hat\iota(\rho :\zeta) v
=v^* \hat\iota(\rho :\zeta)^*\hat\iota(\rho :\zeta) pv=\unit.
\end{split}
\end{align}
Hence we have $\hat\iota\lmk
\gamma : \zeta
\rmk\in\caU(\fbd)$.
For any $A\in\abd$,
we have
\begin{align}
\begin{split}
&\hat\iota\lmk\gamma : \zeta\rmk\lmk\gamma\otimes \zeta\rmk(A)
=\Tbd\zeta{\lzr}\unit(v^*)\hat\iota(\rho :\zeta) v\Tbd\gamma\lzr\unit\Tbd\zeta{\lzr}\unit\pbd(A)\\
&=\Tbd\zeta{\lzr}\unit(v^*)\hat\iota(\rho :\zeta) \Tbd\rho\lzr\unit\Tbd\zeta{\lzr}\unit\pbd(A) v
=\Tbd\zeta{\lzr}\unit(v^*)\Tbd\zeta{\lzr}\unit \Tbd\rho\lzr\unit\pbd(A)\hat\iota(\rho :\zeta)  v\\
&=\Tbd\zeta{\lzr}\unit \lmk \Tbd\gamma\lzr\unit\pbd(A) v^*\rmk\hat\iota(\rho :\zeta)  v\\
&=(\zeta\otimes\gamma)(A) \hat\iota\lmk\gamma : \zeta\rmk.
\end{split}
\end{align}
For any $\zeta,\zeta'\in \Obul$, we have
\begin{align}
\begin{split}
&\lmk\unit_\zeta\otimes\hat\iota(\gamma:\zeta')\rmk
\lmk \hat\iota(\gamma: \zeta)\otimes\unit_{\zeta'}\rmk
=\Tbd\zeta\lzr\unit\lmk  \Tbd{\zeta'}{\lzr}\unit(v^*)\hat\iota(\rho :\zeta') v\rmk
\Tbd\zeta{\lzr}\unit(v^*)\hat\iota(\rho :\zeta) v\\
&=\Tbd\zeta\lzr\unit\lmk  \Tbd{\zeta'}{\lzr}\unit(v^*)\hat\iota(\rho :\zeta') p\rmk
=\Tbd\zeta\lzr\unit\lmk  \Tbd{\zeta'}{\lzr}\unit(v^* p)\hat\iota(\rho :\zeta') \rmk
\hat\iota(\rho :\zeta) v\\
&
=\Tbd{\zeta\otimes\zeta'}\lzr\unit(v^*)
\hat\iota\lmk\rho:\zeta\otimes \zeta'\rmk v
=\hat\iota(\rho:\zeta\otimes\zeta').
\end{split}
\end{align}
Furthermore, as in the direct sum case,
we have $\lim_{t\to\infty}
\sup_{\zeta\in O^{rU}_{\mathop{\mathrm{bd},\lm {r2}(t)}}}
\lV
\hat\iota(\gamma :\zeta)-\unit
\rV=0$.

Setting 
\begin{align}
\begin{split}
\epsilon_{\zc}\lmk
(\rho,\hat\iota(\rho:-)), (\sigma,\hat\iota(\sigma:-))
\rmk
:=\hat\iota(\rho:\sigma),
\end{split}
\end{align}
from (\ref{kouchi}) and (\ref{yamaguchi}),
this defines a braiding of $\caZ(\Cabul)$.
\begin{thm}\label{lem87}
Let $\llz\in\pc$.
Consider the setting in subsection \ref{setting2}, and assume Assumption \ref{assump7},
Assumption \ref{aichi}, Assumption \ref{wakayama}, Assumption \ref{assum80}, Assumption \ref{assum80l}.
Let $\hfc : \CaUbkl\to \caZ\lmk\Cabul\rmk$ be a functor
defined by
\begin{align}
\begin{split}
\hfc(\rho):=
\lmk
\ffc_0(\rho), \hb{ \ffc_0(\rho)} {-}
\rmk
\end{split}
\end{align}
for $\rho\in \OUbkl$,
and
\begin{align}
\begin{split}
\hfc(R):=\ffc_1(R)=\tau(R)\in
\lmk\hfc(\rho),\hfc(\sigma)\rmk
\end{split}
\end{align}
for any $\rho,\sigma\in \OUbkl$ and $R\in (\rho,\sigma)$.
\end{thm}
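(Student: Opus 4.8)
The plan is to deduce every assertion from the already–constructed bulk–to–boundary functor $\ffc:\Cabkl\to\Cabul$ of Theorem \ref{lem74} together with the center axioms of Definition \ref{hiroshima}, so that essentially no new computation is needed. First I would verify that $\hfc$ lands in $\zc$ on objects. For $\rho\in\OUbkl$ we have $\ffc_0(\rho)\in\Obul$ by Lemma \ref{lem17} (i). The family $\hb{\ffc_0(\rho)}{-}$ consists of natural isomorphisms into $\bigl(\ffc_0(\rho)\otimes\sigma,\sigma\otimes\ffc_0(\rho)\bigr)_U$ by Lemma \ref{lem41} and Lemma \ref{lem42}, it satisfies the half–braiding identity (\ref{kouchi}) by Lemma \ref{lem43}, and it satisfies the asymptotic constraint (\ref{shimane}) by Lemma \ref{neko}. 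Hence $\hfc(\rho)=\bigl(\ffc_0(\rho),\hb{\ffc_0(\rho)}{-}\bigr)$ is an object of $\zc$, and by Lemma \ref{lem83} it carries the unique half–braiding compatible with (\ref{shimane}).

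Second, on morphisms I would use Lemma \ref{lem17} (iii): for $R\in(\rho,\sigma)$ the element $\ffc_1(R)=\tau(R)$ lies in $(\ffc_0(\rho),\ffc_0(\sigma))_U$, and Lemma \ref{inu} (naturality of the half–braiding in the first variable) shows that $\tau(R)$ intertwines $\hb{\ffc_0(\rho)}{-}$ and $\hb{\ffc_0(\sigma)}{-}$, so it is a morphism of $\zc$. Functoriality is then automatic: composition and the unit are preserved because $\ffc_1$ is a functor (Theorem \ref{lem74}), and because composition in $\zc$ is just composition in $\Cabul$, the half–braiding compatibility being automatic for every element of $(\rho,\sigma)_U$.

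Third, I would transport the braided monoidal structure. Since the tensor product in $\zc$ is $(\rho,\hi\rho-)\otimes(\sigma,\hi\sigma-)=(\rho\otimes\sigma,\hi{\rho\otimes\sigma}-)$ and, by Lemma \ref{lem83} together with (\ref{yamaguchi}), $\hi{\rho\otimes\sigma}-$ equals $\hb{\rho\otimes\sigma}-$, the underlying objects, morphisms and tensor of $\zc$ coincide with those of $\Cabul$. Therefore the tensoriality isomorphisms $\unit:\hfc(\rho)\otimes\hfc(\sigma)\to\hfc(\rho\otimes\sigma)$ and $\unit:\pbd\to\hfc(\pbk)$ are exactly those of $\ffc$ recorded in Theorem \ref{lem74}; one only checks via Lemma \ref{lem42} that they are center morphisms. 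For the braiding, the braiding of $\zc$ on image objects is $\hb{\ffc_0(\rho)}{\ffc_0(\sigma)}$, and Lemma \ref{lem47} states precisely $\ffc_1(\epsilon_-^{(\lz)}(\rho:\sigma))=\hb{\ffc_0(\rho)}{\ffc_0(\sigma)}$, so $\hfc$ intertwines the bulk braiding with the center braiding; unitarity is inherited from $\ffc$. Full faithfulness is immediate, since the morphism space of $\zc$ from $\hfc(\rho)$ to $\hfc(\sigma)$ is $(\ffc_0(\rho),\ffc_0(\sigma))_U$ and $\ffc_1=\tau|_{\fbk}$ is a bijection onto it by Theorem \ref{lem74}. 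If the statement is meant as an equivalence, essential surjectivity follows as in Corollary \ref{kanazawa}: by Lemma \ref{lem83} every object of $\zc$ has the form $(\sigma,\hb\sigma-)$ with $\sigma\in\Obul$, and under Assumption \ref{oo} the boundary–to–bulk map $G$ of Lemma \ref{lem86} supplies a preimage $G(\sigma)\in\Obkl$ with $\ffc_0(G(\sigma))=\sigma$.

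The hard part is conceptual rather than computational: it is the step pinning down the half–braiding of an arbitrary center object as $\hb{}-$ (Lemma \ref{lem83}), which is exactly what prevents $\zc$ from harbouring exotic half–braidings and is the reason the asymptotic constraint (\ref{shimane}) was imposed. Once Lemma \ref{neko} guarantees that the bulk half–braidings satisfy (\ref{shimane}) and Lemma \ref{lem47} identifies the transported braiding, everything else is bookkeeping riding on Theorem \ref{lem74}. The only genuine risk is a convention mismatch in the naturality squares—first versus second variable, Lemma \ref{inu} against Lemma \ref{lem42}—which must be matched carefully against the precise definition of morphisms in Definition \ref{hiroshima}.
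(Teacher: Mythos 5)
Your overall route is exactly the paper's: objects of $\zc$ are verified via Lemma \ref{lem17}, Proposition \ref{lem40} and Lemmas \ref{lem41}, \ref{lem42}, \ref{lem43}; morphisms via Lemma \ref{lem17}(iii) together with the identification of the $\zc$-morphism spaces with $(\cdot,\cdot)_U$ (your appeal to Lemma \ref{inu} for the first-variable intertwining condition is the right one); the tensor structure and unitarity are transported through Theorem \ref{lem74}; the braiding is matched by Lemma \ref{lem47}; full faithfulness comes from bijectivity of $\tau$ (you quote it from Theorem \ref{lem74}, the paper re-derives surjectivity of $\tau$ onto $\lmk\ffc_0(\rho),\ffc_0(\sigma)\rmk_U$ explicitly -- both are fine); and essential surjectivity uses Lemma \ref{lem86}, noting as you do that this is where Assumption \ref{oo} (and Assumption \ref{assum3}) enters.

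There is, however, one concrete error in the object-verification step: you justify the asymptotic constraint (\ref{shimane}) by Lemma \ref{neko}, but that lemma proves a different statement. Lemma \ref{neko} controls $\hb{\sigma}{\rho}$ -- with $\rho$ in the \emph{second} slot of the half-braiding -- uniformly over $\sigma\in\Obun{\lm 1(t)}$, i.e.\ over bulk-type cone representations pushed to the right; it is the property that motivates the asymptotic constraint in Definition \ref{saru}(iv) for the center of $\tilde\caM$, not the one in Definition \ref{hiroshima}. The constraint (\ref{shimane}) instead requires $\lV \hat\iota(\rho:\sigma)-\unit\rV\to 0$ uniformly over $\sigma\in O^{rU}_{\mathop{\mathrm{bd},\lm {r2}(t)}}$, with $\rho$ in the \emph{first} slot and the supremum taken over right-boundary-localized representations; neither statement implies the other. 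The statement you actually need is Lemma \ref{lem84}, which is what the paper cites at this point (its proof rests on formula (\ref{niigata}) and Lemma \ref{lem82}, not on Lemma \ref{neko}). Once that citation is replaced, your argument closes and coincides with the paper's proof.
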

\begin{proof}
In fact $\rho\in \OUbkl$,
$\hfc(\rho)$ is an object of $\zc$because of
Lemma \ref{lem17}, \ref{lem41},\ref{lem42}, \ref{lem43}, \ref{lem84}, Proposition \ref{lem40}.
For any $\rho,\sigma\in \OUbkl$ and $R\in (\rho,\sigma)$,
$\hfc(R)=\tau(R)\in (\ffc_0(\rho),\ffc_0(\sigma))_U$
is in the morphism of $\zc$ between objects
$\lmk
\ffc_0(\rho), \hb{ \ffc_0(\rho)} {-}
\rmk $ 
and 
$\lmk
\ffc_0(\sigma), \hb{ \ffc_0(\sigma)} {-}
\rmk $.
Clearly $\hfc$ is a functor which is linear on morphisms.
Note that 
\begin{align}
\begin{split}
\unit : (\pbd,\hat\iota(\pbd: -))\to 
&\hfc(\pbk)=\lmk
\ffc_0(\pbk),
\hb{\ffc_0(\pbk)}{-}
\rmk\\
&=\lmk\pbd,\hat\iota(\pbd :-)\rmk
\end{split}
\end{align}
is an isomorphism.
Furthermore, by Lemma \ref{lem74},
\begin{align}
\begin{split}
&\hfc(\rho)\otimes\hfc(\sigma)
=\lmk \ffc_0(\rho)\otimes\ffc_0(\sigma), 
\hi{\ffc_0(\rho)\otimes\ffc_0(\sigma)}{-}
\rmk\\
&=\lmk \ffc_0(\rho\otimes\sigma), 
\hi{\ffc_0(\rho\otimes \sigma)}{-}
\rmk
= \hfc(\rho\otimes\sigma),
\end{split}
\end{align}
and its endomorphisms are $(\ffc_0(\rho)\otimes \ffc(\sigma),\ffc(\rho\otimes\sigma))_U$
and $\unit : \hfc(\rho)\otimes\hfc(\sigma)\to \hfc(\rho\otimes\sigma)$
is natural, from the proof of Lemma \ref{lem74}.
Hence $\hfc$ is a unitary tensor functor.

Next we show that $\hfc$ is equivalence.
In order to do so, it suffices to show that $\hfc$ is fully faithful and surjective.
First we show that $\hfc$ is fully faithful  i.e.,
for any $\rho,\sigma\in\OUbkl$, 
\begin{align}
\begin{split}
\hfc :(\rho,\sigma)\ni R\mapsto \tau(R)\in \lmk \hfc(\rho),\hfc(\sigma)\rmk
=\lmk\ffc_0(\rho), \ffc_0(\sigma)\rmk_U 
\end{split}
\end{align}
is an isomorphism.
Note that it is injective and linear.
To show that it is also surjective, let $\tilde R$
be an arbitrary element from $\lmk\ffc_0(\rho), \ffc_0(\sigma)\rmk_U \subset\fbd$.
Then $\tau^{-1}\lmk \tilde R\rmk$
satisfies 
\begin{align}
\begin{split}
&\tau^{-1}\lmk \tilde R\rmk\rho(A)
=\tau^{-1}\lmk \tilde R \tau\Tbk\rho\lz\unit\tau^{-1}\pbd(A)  \rmk 
=\tau^{-1}\lmk \tilde R\lmk \ffc_0(\rho)\pbd(A)\rmk  \rmk \\
&=\tau^{-1}\lmk\lmk\ffc_0(\sigma)\pbd(A)  \rmk\tilde R \rmk
=\Tbk\sigma\lz\unit\pbk(A)\tau^{-1}(\tilde R)
=\sigma(A)\tau^{-1}(\tilde R),
\end{split}
\end{align}
for any $A\in \abd$.
Hence $\tau^{-1}(\tilde R)\in (\rho,\sigma)$, and we have
\begin{align}
\hfc\lmk \tau^{-1}(\tilde R)\rmk
=\tau\tau^{-1}(\tilde R))=\tilde R.
\end{align}
This proves the surjectivity.

Next we show that $\hfc$ is surjective.
Let  $(\rho,\hi \rho -)$ be an arbitrary object in $\zc$.
By Lemma \ref{lem86}, there exists 
there exists a $G(\rho)\in \OUbk$ such that
$\ffc_0\lmk {G(\rho)}\rmk =\rho$.
Then we have 
$
\hfc(G(\rho))
=\lmk
\rho,\hb\rho -
\rmk
$.

Finally, $\hfc$ is braided because from Lemma \ref{lem47},
\begin{align}
\begin{split}
&\epsilon_{\zc}\lmk\hfc(\rho), \hfc(\sigma)\rmk\\
&=
\epsilon\lmk
\lmk \ffc_0(\rho),\hi{\ffc_0(\rho)}\rmk), \lmk\ffc_0(\sigma),\hi{\ffc_0(\sigma}-\rmk
\rmk\\
&=\ffc_1\lmk \epsilon_+(\rho,\sigma)\rmk
=\hfc\lmk \epsilon_+(\rho,\sigma)\rmk.
\end{split}
\end{align}

\end{proof}
}

\section{Bulk-boundary correspondence}\label{drinfeld}
In this section, we construct a tensor category $\caM$
by adding to $\Cabul$ morphisms from $\bl$.
By taking the idempotent completion of $\caM$,
we obtain a $C^*$-tensor category $\tilde \caM$.
We introduce {\it the Drinfeld center with an asymptotic constraint}
$\caZ_a(\tilde \caM)$ of $\tilde \caM$.
This is the same as the usual Drinfeld center\cite{Kassel}, except that the half-braidings
are required to satisfy the asymptotic property we saw in Lemma \ref{neko}.

Note that in Corollary \ref{kanazawa}, we did not assume
the gap condition on the boundary.
In this section we introduce a condition that we expect to be related to the gap condition:
the  
Assumption \ref{raichi}.
It can be interpreted as a kind of non-triviality of the braiding. 
Under this condition,
$\caZ_a(\tilde \caM)$ gets a braided $C^*$-tensor category that is 
equivalent to the bulk category $\Cabk$.
\begin{defn}
Consider the setting in subsection \ref{setting2}.
Let $\llz\in\pc$.
Let $\caM$ be a category with object $\Obj\caM:=\Obul$, 
and morphisms between objects $\rho,\sigma\in \Obj\caM:=\Obul$
defined by
 $\Mor_{\caM}(\rho,\sigma):=(\rho,\sigma)_l$.
Then $\caM$ is a strict tensor category with respect to the tensor $\otimes$
of $\Cabul$.
\end{defn}
That  $\caM$ is a strict tensor category
 follows from the fact that $\Cabul$ is closed under tensor
 and $\Cabul$ is a subcategory of $\Carbdl$.
 The tensor category $\caM$ is not necessarily a $C^*$-tensor category \change{in the sense that there may not be subobjects}.
 By taking the idempotent completion, we obtain a $C^*$-tensor category
 $\tilde\caM$.
 \change{Idempotent completion is a well-known procedure. See, for example the Appendix of 
 \cite{longo1996theory}.
 We include the proof for the reader's convenience.
 }
 \begin{prop}\label{mtcat}
 Let $\llz\in\pc$.
Consider the setting in subsection \ref{setting2}, and 
assume
Assumption \ref{wakayama}.
Let $\tilde \caM$ be a category with objects
\begin{align}
\begin{split}
\Obj\tilde\caM:=
\left\{
(\rho,p)\mid
\rho\in \Obj\caM=\Obul,\quad p\in (\rho,\rho)_l\;\; \text{projection}
\right\},
\end{split}
\end{align}
and
morphisms 
\begin{align}
\begin{split}
\Mor_{\tilde\caM}\lmk (\rho,p), (\sigma,q)\rmk:=
\left\{
R\in (\rho,\sigma)_l\mid
qR=R=Rp
\right\}\quad
\text{for} \quad (\rho,p),(\sigma,q)\in \Obj\tilde\caM.
\end{split}
\end{align}
The category $\tilde\caM$ is a $C^*$-category with
identity
\begin{align}\label{suika}
\begin{split}
\id_{(\rho,p)}:=p,\quad \text{for each}\quad (\rho,p)\in \Obj\tilde\caM
\end{split}
\end{align}
composition of morphisms 
\begin{align}\label{kaki}
\begin{split}
 \Mor_{\tilde\caM}\lmk (\sigma,q), (\gamma,r)\rmk\times
 \Mor_{\tilde\caM}\lmk (\rho,p), (\sigma,q)\rmk\ni (S,R)
 \mapsto  SR\in  \Mor_{\tilde\caM}\lmk (\rho,p), (\gamma,r)\rmk,
\end{split}
\end{align}
and an antilinear contravariant functor $* : \tilde\caM\to\tilde\caM$
given by the adjoint $*$ of $\caB(\hbd)$
for morphisms
\begin{align}\label{melon}
\begin{split}
\Mor_{\tilde\caM}\lmk (\rho,p), (\sigma,q)\rmk\ni T\to
T^*\in \Mor_{\tilde\caM}\lmk (\sigma,q),(\rho,p)\rmk.
\end{split}
\end{align}
Furthermore, $\tilde\caM$ is a $C^*$-tensor category
with the tensor
\begin{align}\label{ringo}
\begin{split}
(\rho,p)\otimes_{\tilde M}(\sigma,q):=
(\rho\otimes\sigma,p\otimes q),\quad
\text{for}\quad(\rho,p),(\sigma,q)\in \Obj\tilde\caM
\end{split}
\end{align}
and
\begin{align}\label{nashi}
\begin{split}
&R\otimes_{\tilde M}S
:=R\otimes S\in \Mor_{\tilde\caM}\lmk (\rho,p)\otimes_{\tilde M}(\sigma,q), 
(\rho',p')\otimes_{\tilde M}(\sigma',q')\rmk\\
&
\text{for}\quad
R\in \Mor_{\tilde\caM}\lmk (\rho,p),  (\rho',p')\rmk,\quad
 S\in \Mor_{\tilde\caM}\lmk (\sigma,q), (\sigma',q')\rmk.
\end{split}
\end{align}
The tensor unit is $(\pbd,\unit)$.
 \end{prop}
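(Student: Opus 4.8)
The plan is to recognize $\tilde\caM$ as the idempotent completion (Karoubi envelope) of the strict tensor category $\caM$ and to verify the axioms of a $C^*$-tensor category by transporting the needed structure from $\caB(\hbd)$, from the tensor structure already established on $\Cabul$ (Theorem \ref{okinawa}), and from the calculus of intertwiners (Lemma \ref{lem24}). First I would check that $\tilde\caM$ is a $C^*$-category. Composition (\ref{kaki}) is well defined: for $S\in\Mor_{\tilde\caM}((\sigma,q),(\gamma,r))$ and $R\in\Mor_{\tilde\caM}((\rho,p),(\sigma,q))$ one has $r(SR)=(rS)R=SR$ and $(SR)p=S(Rp)=SR$, so $SR\in\Mor_{\tilde\caM}((\rho,p),(\gamma,r))$. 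The element (\ref{suika}) is a two-sided identity because the defining relations $qR=R=Rp$ make $p$ and $q$ act as units. The adjoint (\ref{melon}) lands in the correct space, since taking adjoints in $\caB(\hbd)$ turns $qT=T=Tp$ into $T^*q=T^*=pT^*$. Each morphism space is a norm-closed subspace of the Banach space $(\rho,\sigma)_l$ (it is cut out by the norm-continuous conditions $qR=R$ and $Rp=R$), hence Banach; the $C^*$-identity $\|T^*T\|=\|T\|^2$ and positivity of $T^*T$ are inherited from $\caB(\hbd)$, noting that the endomorphism space $\Mor_{\tilde\caM}((\rho,p),(\rho,p))=p(\rho,\rho)_l p$ is a corner of the $C^*$-algebra $(\rho,\rho)_l$, hence itself a $C^*$-algebra in which $T^*T\ge 0$.

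Next I would install the tensor structure. The key point is that $p\otimes q$ is again a projection in $(\rho\otimes\sigma,\rho\otimes\sigma)_l$: membership follows from Lemma \ref{lem24}, self-adjointness from part (i) there ($(p\otimes q)^*=p^*\otimes q^*=p\otimes q$), and idempotency from the multiplicativity in part (ii) ($(p\otimes q)(p\otimes q)=(pp)\otimes(qq)=p\otimes q$). The same two identities show that (\ref{nashi}) is well defined, since $(p'\otimes q')(R\otimes S)=(p'R)\otimes(q'S)=R\otimes S$ and symmetrically on the right, and that $\otimes_{\tilde M}$ is a bifunctor. Strictness of the associativity and unit constraints is inherited from $\caM$: the underlying associator is the identity, so on $\tilde\caM$ the objects $((\rho,p)\otimes(\sigma,q))\otimes(\gamma,r)$ and $(\rho,p)\otimes((\sigma,q)\otimes(\gamma,r))$ coincide (using $(p\otimes q)\otimes r=p\otimes(q\otimes r)$) and the associator is realized by the identity projection. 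For the unit $(\pbd,\unit)$, irreducibility of $\pbd$ gives $(\pbd,\pbd)_l=\bbC\unit$, so $\unit$ is a projection there, $(\pbd,\unit)$ is simple, and the left and right unit isomorphisms are again identities.

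It then remains to produce subobjects and direct sums, which is precisely what the idempotent completion supplies. Subobjects are free: given $(\rho,p)$ and a projection $P\in p(\rho,\rho)_l p$, the pair $(\rho,P)$ is a subobject via the isometry $P\in\Mor_{\tilde\caM}((\rho,P),(\rho,p))$ (one checks $pP=P=PP$ and $P^*P=P=\id_{(\rho,P)}$). Direct sums are obtained by lifting those already available in $\caM$: for $(\rho,p),(\sigma,q)$, take a direct sum $\gamma\in\Obul$ of $\rho,\sigma$ with isometries $u,v$ and $uu^*+vv^*=\unit$ as in Lemma \ref{lem61} (so $u\in(\rho,\gamma)_U\subset(\rho,\gamma)_l$, etc.), and set $P:=upu^*+vqv^*$. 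Orthogonality of $uu^*$ and $vv^*$ gives $u^*v=0$, so $P$ is a projection in $(\gamma,\gamma)_l$, and $(\gamma,P)$ with the morphisms $up$, $vq$ is the desired biproduct.

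I expect no conceptual obstacle, as the statement is the standard Karoubi completion; the labour is bookkeeping. The only steps requiring genuine attention are those resting on the nonstandard tensor product of intertwiners $R\otimes S=R\,\Tbd\rho{\lzr}\unit(S)$ rather than an ordinary operator product: the fact that $p\otimes q$ is a projection and that $\otimes_{\tilde M}$ is functorial must be routed entirely through Lemma \ref{lem24}, and closedness of the morphism spaces must use that $(\rho,\sigma)_l$ is norm-closed in $\bl$.
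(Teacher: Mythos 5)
Your proposal is correct and follows essentially the same route as the paper's proof: direct verification of the $C^*$-category axioms, Lemma \ref{lem24} to make $p\otimes q$ a projection and $\otimes_{\tilde M}$ functorial, Lemma \ref{lem61} lifted to the completion for direct sums (with the same $upu^*+vqv^*$ construction and morphisms $up$, $vq$), and the corner-projection realization of subobjects. The only cosmetic difference is that you phrase the associator and unit constraints as "identity projections" inherited from strictness of $\caM$, while the paper writes them explicitly as $p\otimes q\otimes r$ and $p$ — these are the same morphisms, since the identity of $(\rho,p)$ is $p$ rather than $\unit$.
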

 \begin{proof}
 Clearly, each $\Mor_{\tilde\caM}\lmk (\rho,p), (\sigma,q)\rmk$ is a Banach space
 with the norm inherited from $\caB(\hbd)$.
 In (\ref{kaki}), we have $SR\in  \Mor_{\tilde\caM}\lmk (\rho,p), (\gamma,r)\rmk$
 because $SR\in (\rho,\gamma)_l$ and $rSR=SR=SRp$.
 The map (\ref{kaki}) is bilinear and we have $\lV SR\rV\le \lV S\rV \lV R\rV$.
 The associativity of the composition follows from that of $\caB(\hbd)$.
 In (\ref{melon}), we have $T^*\in \Mor_{\tilde\caM}\lmk (\sigma,q),(\rho,p)\rmk$
 because $T^*\in (\sigma,\rho)_l$ and $pT^*=T^*=T^*q$.
 That $T^{**}=T$, $\lV T^*T\rV=\lV T\rV^2$,
 $T^*T\ge 0$ follows from the corresponding properties of $\caB(\hbd)$.
 The formula (\ref{suika}) gives the identity morphism :
 clearly, for each object $(\rho,p)\in \Obj\tilde\caM$, we have
 $\id_{(\rho,p)}:=p\in \Mor_{\tilde\caM}\lmk (\rho,p),  (\rho,p)\rmk$
 because $p\in (\rho,\rho)_l$ and $pp=p=pp$.
 Furthermore, for any
 $R\in \Mor_{\tilde\caM}\lmk (\rho,p), (\sigma,q)\rmk$
 we have $R\circ\id_{(\rho,p)}=Rp=R=qR=\id_{(\sigma,q)}\circ R$.
 Hence, $\tilde\caM$ is a $C^*$-category.
 
 The formula  in (\ref{ringo})
 gives an object of $\tilde\caM$  because
 $\rho\otimes\sigma\in \Obj\caM=\Obul$
 and $p\otimes q\in (\rho\otimes\sigma,\rho\otimes\sigma)_l$
 with
 \begin{align}
 \begin{split}
 (p\otimes q)^*=p^*\otimes q^*=p\otimes q,\quad
 (p\otimes q)^2=p^2\otimes q^2=p\otimes q,
 \end{split}
 \end{align}
 by Lemma \ref{lem24}.
 The formula in (\ref{nashi}) is well-defined because
 $R\otimes_{\tilde M}S
:=R\otimes S\in (\rho\otimes\sigma,\rho'\otimes \sigma')_l$
and
\begin{align}
\begin{split}
(p'\otimes q')(R\otimes S)
=p'R\otimes q'S=R\otimes S
=Rp\otimes Sq=(R\otimes S)(p\otimes q),
\end{split}
\end{align}
 by Lemma \ref{lem24}.
If we further have
$R'\in \Mor_{\tilde\caM}\lmk (\rho',p'),  (\rho'',p'')\rmk$,
$S'\in \Mor_{\tilde\caM}\lmk (\sigma',q'), (\sigma'',q'')\rmk$,
then we have
\begin{align}\label{aka}
\begin{split}
&\lmk R'\otimes_{\tilde M}S'\rmk\circ\lmk R\otimes_{\tilde M}S\rmk
=\lmk R'\otimes S'\rmk\lmk R\otimes S\rmk
=R'R\otimes S'S
=R'R\otimes_{\tilde M} S'S,\\
&\lmk R\otimes_{\tilde\caM} S\rmk^*
=\lmk R\otimes S\rmk^*=R^*\otimes S^*=R^*\otimes_{\tilde\caM} S^*,
\end{split}
\end{align}
 by Lemma \ref{lem24}.
We also have
\begin{align}\label{ao}
\begin{split}
\id_{(\rho,p)\otimes_{\tilde M}(\sigma,q)}=\id_{(\rho\otimes\sigma,p\otimes q)}
=p\otimes q=
p\otimes_{\tilde \caM} q=\id_{(\rho,p)}\otimes_{\tilde \caM}\id_{(\sigma,\rho)}.
\end{split}
\end{align}
As the associativity morphism, we take
\begin{align}
\begin{split}
&\alpha_{(\rho,p), (\sigma,q),(\gamma,r)}
:=(p\otimes q)\otimes r\\
&\in 
\Mor_{\tilde \caM} \lmk
\lmk (\rho,p)\otimes_{\tilde\caM}(\sigma,q)\rmk\otimes_{\tilde\caM}
(\gamma,r),
(\rho,p)\otimes_{\tilde\caM}
\lmk (\sigma,q)\otimes_{\tilde\caM}
(\gamma,r)\rmk\rmk\\
&\quad=
\Mor_{\tilde \caM} \lmk
\lmk \rho\otimes\sigma\otimes\gamma, p\otimes q\otimes r\rmk, \lmk \rho\otimes\sigma\otimes\gamma, p\otimes q\otimes r\rmk
\rmk.
\end{split}
\end{align}
In fact, we have 
\begin{align}
\begin{split}
\alpha_{(\rho,p), (\sigma,q),(\gamma,r)}
:=p\otimes q\otimes r
\in (\rho\otimes\sigma\otimes\gamma, \rho\otimes\sigma\otimes\gamma)_l
\end{split}
\end{align}
and 
\begin{align}
\begin{split}
&\lmk p\otimes q\otimes r\rmk \alpha_{(\rho,p), (\sigma,q),(\gamma,r)}
=\lmk p\otimes q\otimes r\rmk^2
=\lmk p\otimes q\otimes r\rmk=\alpha_{(\rho,p), (\sigma,q),(\gamma,r)},\\
&\alpha_{(\rho,p), (\sigma,q),(\gamma,r)}\lmk p\otimes q\otimes r\rmk
=\lmk p\otimes q\otimes r\rmk^2
=\lmk p\otimes q\otimes r\rmk=\alpha_{(\rho,p), (\sigma,q),(\gamma,r)}.
\end{split}
\end{align}
 In fact it is an isomorphism because $\id_{\lmk \rho\otimes\sigma\otimes\gamma, p\otimes q\otimes r\rmk}=p\otimes q\otimes r$.
 The naturality of $\alpha$ and the Pentagon axiom follows from
 the definition and Lemma \ref{lem24}.
 The object $(\pbd,\unit)$ is the tensor unit with
 left/right unit constraints
 \begin{align}
 \begin{split}
 &l_{(\rho,p)}:=p \in
 \Mor_{\tilde\caM}\lmk (\pbd,\unit)\otimes_{\tilde\caM} (\rho,p),
  (\rho,p)\rmk= \Mor_{\tilde\caM}\lmk(\rho, p),
  (\rho,p)\rmk
  ,\\
 &r_{(\rho,p)}:=p \in\Mor_{\tilde\caM}\lmk 
 (\rho,p) \otimes_{\tilde\caM} (\pbd,\unit), (\rho, p)\rmk= \Mor_{\tilde\caM}\lmk(\rho, p),
  (\rho,p)\rmk.
 \end{split}
 \end{align}
 Their naturalities follow from the definition of morphisms and identity morphisms.The triangle axiom
 follows by
 \begin{align}
 \begin{split}
 \lmk \id_{(\rho,p)}\otimes_{\tilde\caM}l_{(\sigma,q)}\rmk
 \alpha_{(p,\rho), (\pbd,\unit),(\sigma,q)}
=\lmk p\otimes q\rmk\lmk p\otimes\unit\otimes q\rmk
=\lmk p\otimes q\rmk
=r_{(\rho,p)}\otimes_{\tilde\caM} \id_{(\sigma,q)}.
 \end{split}
 \end{align}
 Now we construct a direct sum for 
 $(\rho,p), (\sigma,q)\in \Obj\tilde\caM$.
 By Lemma \ref {lem61}, there exist $\gamma\in \Obj\caM=\Obul$
 and isometries $u\in (\rho,\gamma)_U$, $v\in (\sigma,\gamma)_U$
 such that $uu^*+vv^*=\unit$.
 We set $r:=upu^*+vqv^*$, which is a projection in $(\gamma,\gamma)_l$.
 Hence we have $(\gamma, r)\in \Obj\tilde\caM$.
%
 We have $up\in \Mor_{\tilde\caM} \lmk (\rho, p), (\gamma,r)\rmk$
 because $up\in (\rho,\gamma)_l$ and $rup=up=upp$.
 It is an isometry because $pu^*up=p=\id_{(\rho,p)}$.
Similarly, we have  $vq\in \Mor_{\tilde\caM} \lmk (\sigma, q), (\gamma,r)\rmk$
and it is an isometry.
We also have $upu^*+vqv^*
=r=\id_{(\gamma,r)}$.
Hence we obtain the direct sum.

Next we introduce a subobject of $(\rho,p)\in \Obj\tilde\caM$
with respect to a projection $q\in \mm\lmk (\rho,p), (\rho,p)\rmk$.
We have $(\rho,q)\in \om$ because $\rho\in \Obj(\caM)$,
$q\in (\rho,\rho)_l$ a projection.
Because  $q\in \mm\lmk (\rho,p), (\rho,p)\rmk$, we have
$pq=q=qp$.
Note also that $q\in \mm\lmk (\rho,q), (\rho,p)\rmk$
because $p q=q=qq$.
It is an isometry because $q^*q=q=\id_{(\rho,q)}$.
We also have $qq^*=q$. Hence $(\rho,q)\in \om$ and 
 $q\in \mm\lmk (\rho,q), (\rho,p)\rmk$
gives the subobject.
Finally, we have $\mm\lmk (\pbd,\unit), (\pbd,\unit)\rmk=(\pbd,\pbd)_l=\bbC\unit$
and complete the proof.
 \end{proof}
 \change{We regard this $\tilde \caM$ our boundary theory.}
 \change{
 \begin{rem}\label{module}
 Note that $\Cabul$ acts on $\tilde \caM$ by
 $(\rho,\unit)\otimes_{\tilde \caM}$.
 By Corollary \ref{kanazawa}, we may identify  $\Cabul$
 and the bulk theory $\Cabkl$.
 Hence the boundary theory $\tilde \caM$ is a left module category
 over the bulk theory $\Cabkl$.
 
 Let $(\rho, p)\in \om$ with a non-trivial projection $p\in\bl$.
 Suppose that no $\gamma\in \Obul$ can ``detect'' the projection $p\in \bl$, i.e.,
 \begin{align}\label{sora}
 \Tbd{\gamma}\lzr\unit(p)=p,\quad \text{for all}\quad \gamma\in \Obul.
 \end{align}
 Then for any $\gamma \in \Obul$, we have
 \begin{align}
 \begin{split}
 (\gamma,\unit)\otimes_{\tilde \caM} (\rho, p)
 =(\gamma\otimes\rho, \unit_\gamma\otimes p)
 =(\gamma\otimes\rho,  p).
 \end{split}
 \end{align}
 Hence, the space of $\om$ given by elements of the form
 $(-,p)\in\om$ is invariant under the action of $\Cabkl$.
 If stable gapped boundary is given by an irreducible left module \cite{kitaev2012models}, this should not happen.
 From this point of view, we expect that for a stable gapped boundary,
 (\ref{sora}) should not happen. Namely, 
 any non-trivial projection $p\in \bl$ should be detected by some anyons in $\Obul$.
 
 \end{rem}
 }
 \change{
 \begin{rem}Physically,
 considering the idempotent completion means 
 including all the condensation descendants of $\caM$.
 See section 3.3 of \cite{kong2020classification}.
 More operator algebraically, $(\rho,p)\in\om$ is a descendant of
 $\rho\in \caM$ in the sense that
 \[
 \tilde\rho(A):=\rho(A) p,\quad A\in \abd
 \]
 is a sub-representation of $\rho$.
 \end{rem}
 
 }

Next we introduce the half-braiding with an asymptotic constraint on $\tilde\caM$.
Note that the only difference between the following and the usual definition 
of half-braiding is the asymptotic constraint (iv).
Recall that this asymptotic property (iv)
is observed in Lemma \ref{neko}.
\begin{defn}\label{saru}
Let $\llz\in\pc$.
Consider the setting in subsection \ref{setting2}, and 
assume 
Assumption \ref{wakayama}.
For each $(\rho, p)\in \om$, we say
$C_{(\rho,p)}$ is a half-braiding for $(\rho,p)$
 with asymptotic constraint if it 
is a map $C_{(\rho,p)}: \om\to \Mor(\tilde\caM)$
satisfying the following.
\begin{description}
\item[(i)]
For any $(\sigma,q)\in \om$, 
$C_{(\rho,p)}\lmk (\sigma,q)\rmk\in \mm\lmk (\sigma,q)\omt(\rho,p),(\rho,p)\omt(\sigma,q)\rmk$.
\item[(ii)]
For any $(\sigma,q),(\sigma',q')\in \om$ and $R\in \mm\lmk (\sigma,q),(\sigma',q')\rmk$, we have
\begin{align}
\begin{split}
C_{(\rho,p)}\lmk (\sigma',q')\rmk \lmk R\omt \unit_{(\rho,p)}\rmk
=\lmk \unit_{(\rho,p)}\omt R\rmk C_{(\rho,p)}\lmk (\sigma,q)\rmk.
\end{split}
\end{align}
\item[(iii)]
For any $(\sigma,q),( \gamma,r)\in \om$, we have
\begin{align}
\begin{split}
C_{(\rho,p)}\lmk(\sigma,q)\omt(\gamma,r)\rmk
=\lmk C_{(\rho,p)}\lmk (\sigma,q)\rmk \omt\id_{(\gamma,r)}\rmk
\lmk
\id_{(\sigma,q)}\omt C_{(\rho,p)}\lmk (\gamma,r)\rmk
\rmk.
\end{split}
\end{align}
\item[(iv)]
For any 
$\lm 1,\lm 3\in \CUbk$ with $\lm 1\leftarrow_r \lm 3$, $\lm 1,\lm3\subset\lz$
and $\Vrl\rho{\lm3}\in \Vbu\rho{\lm 3}$
we have
\begin{align}\label{kasa}
\begin{split}
\lim_{t\to\infty}\sup_{\sigma\in \Obun{\lm 1(t)}
(\sigma,q)\in \om}
\lV
C_{(\rho,p)}\lmk (\sigma,q)\rmk-
p \Tbd\rho{\lzr}\unit (q)\cdot \Vrl\rho{\lm3}^* \Tbd\sigma{\lzr}\unit\lmk \Vrl\rho{\lm3}\rmk
q\Tbd\sigma \lzr\unit (p) 
\rV=0.
\end{split}
\end{align}
\change{We call this condition the asymptotic constraint.}
\item[(v)]
For each $(\sigma,q)\in \om$, the morphism
$C_{(\rho,p)}\lmk (\sigma,q)\rmk$
in (i) is an isomorphism.
\end{description}
\end{defn}
\change{
\begin{rem}
Let us consider the physical meaning of the asymptotic constraint (iv) for the case $p=\unit$.
Let $\lm 1,\lm 3\in \CUbk$ with $\lm 1\leftarrow_r \lm 3$, $\lm 1,\lm3\subset\lz$
(see Figure \ref{317})
and $\Vrl\rho{\lm3}\in \Vbu\rho{\lm 3}$, and set
\begin{align}
\begin{split}
\tilde\rho(A):=\Ad \Vrl\rho{\lm3}\rho(A),\quad A\in \abd.
\end{split}
\end{align}
By this definition, we have
\begin{align}
\begin{split}
\tilde\rho\in \Obun{\lm3}\subset \Obul,
\end{split}
\end{align}
and $(\tilde\rho,\unit)\in \om$.
For this object $(\tilde\rho,\unit)\in \om$,
because we may set $ \Vrl{\tilde\rho}{\lm3}=\unit$, 
the asymptotic constraint (iv) reduces to
\begin{align}\label{megane}
\begin{split}
\lim_{t\to\infty}\sup_{\sigma\in \Obun{\lm 1(t)}
(\sigma,q)\in \om}
\lV
C_{(\tilde \rho,\unit)}\lmk (\sigma,q)\rmk-
(\unit_{\tilde\rho}\otimes q)(q\otimes \unit_{\tilde\rho})
\rV=0.
\end{split}
\end{align}
Let us regard half-braiding $C_{(\tilde \rho,\unit)}(\sigma,q)$
as the crossing between $\tilde\rho$ and $\sigma$,
when we move the particle $\tilde \rho$ to right \cite{konginvitation}.
Then, because $\tilde \rho\in \Obun{\lm3}$ is already at the right of $\sigma\in \Obun{\lm 1(t)}$
for $t$ large enough
(Figure \ref{317}), there is no crossing, when we move $\tilde \rho$ to the right.
This is what (\ref{megane}) tells us.
(Note that $\unit_{\tilde\rho}\otimes q$, $q\otimes \unit_{\tilde\rho}$
are identity morphisms of $(\tilde \rho,\unit)\otimes_{\tilde\caM} (\sigma,q)$, 
$(\sigma,q)\otimes_{\tilde\caM} (\tilde \rho,\unit)$ respectively.
In particular, if $q=\unit$, then $(\unit_{\tilde\rho}\otimes q)(q\otimes \unit_{\tilde\rho})=\unit_{\caH}$.)
If we suppose that $C_{(\rho,\unit)}$ is natural with respect to
$\rho$ for morphisms in $\fbk$, then we have
\begin{align}
 \Vrl{\rho}{\lm3} C_{(\rho,\unit)}\lmk(\sigma,q)\rmk
 =C_{(\tilde \rho,\unit)}\lmk(\sigma,q)\rmk \Tbd\sigma{\lzr}\unit\lmk \Vrl\rho{\lm3}\rmk.
\end{align}
Substituting this to (\ref{megane}), we get (\ref{kasa}) for $p=\unit$.
The general (iv) is a truncation of this.
\end{rem}
}

\begin{lem}\label{panda}
Let $\llz\in\pc$.
Consider the setting in subsection \ref{setting2}, and 
assume Assumption \ref{assum80}, Assumption \ref{assum80l},
Assumption \ref{wakayama}.
For each $(\rho,p), (\sigma,q)\in \om$, set
\begin{align}\label{cup}
\begin{split}
\ti \sigma q\rho p :=\lmk p\otimes q\rmk\cdot \hb \sigma \rho \cdot \lmk q\otimes p\rmk.
\end{split}
\end{align}
Then for each $(\rho,p)\in \om$, $\ti --\rho p$ satisfies
 properties (i), (ii), (iv) of Definition \ref{saru}.
 Conversely, if $C_{(\rho,p)} : \om\to \caB(\hbd)$
 satisfies  (i), (ii), (iv) of Definition \ref{saru} for $(\rho,p)\in \om$,
 then we have $C_{(\rho,p)}=\ti --\rho p$.
\end{lem}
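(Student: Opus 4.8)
The plan is to establish the two halves of the statement separately. First I would verify directly that the concrete formula $\ti\sigma q\rho p=\lmk p\otimes q\rmk\hb\sigma\rho\lmk q\otimes p\rmk$ satisfies (i), (ii) and (iv) of Definition \ref{saru}; then I would show that these three conditions already determine the map uniquely, forcing any $C_{(\rho,p)}$ satisfying them to agree with $\ti{-}{-}\rho p$.

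For the forward direction, (i) and (iv) are short. Condition (i) follows because $\hb\sigma\rho\in(\sigma\otimes\rho,\rho\otimes\sigma)_U\subset(\sigma\otimes\rho,\rho\otimes\sigma)_l$ by Lemma \ref{lem41} and Lemma \ref{lem22}(i), while $p\otimes q$ and $q\otimes p$ are projections in the respective intertwiner spaces by Lemma \ref{lem24}; idempotency then gives $\lmk p\otimes q\rmk\ti\sigma q\rho p=\ti\sigma q\rho p=\ti\sigma q\rho p\lmk q\otimes p\rmk$. Condition (iv) is immediate from Lemma \ref{neko}: writing $p\otimes q=p\Tbd\rho\lzr\unit(q)$ and $q\otimes p=q\Tbd\sigma\lzr\unit(p)$, which are products of projections of norm at most one, the quantity inside the limit in (iv) is bounded in norm by $\lV\hb\sigma\rho-\Vrl\rho{\lm 3}^*\Tbd\sigma\lzr\unit\lmk\Vrl\rho{\lm 3}\rmk\rV$, and this tends to zero uniformly over $\sigma\in\Obun{\lm 1(t)}$ by Lemma \ref{neko}. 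For (ii), the key input is the first-variable naturality of the half braiding, Lemma \ref{inu}, which holds for $\simeq_l$ intertwiners. Given $R\in\mm\lmk(\sigma,q),(\sigma',q')\rmk$, so $q'R=R=Rq$, I would use $\lmk q'\otimes p\rmk\lmk R\otimes p\rmk=R\otimes p=\lmk R\otimes\unit_\rho\rmk\lmk q\otimes p\rmk$ together with the rewriting $\hb{\sigma'}\rho\lmk R\otimes\unit_\rho\rmk=\lmk\unit_\rho\otimes R\rmk\hb\sigma\rho$ from Lemma \ref{inu}; the projection identities then collapse both sides of (ii) to $\lmk p\otimes R\rmk\hb\sigma\rho\lmk q\otimes p\rmk$.

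For the converse I would transport $\sigma$ far to the left and use (iv) there. Fix $(\sigma,q)\in\om$. Since $\sigma\in\Obul\subset\Obu$, for each $t\ge0$ pick $\Vrl\sigma{\lm 1(t)}\in\Vbu\sigma{\lm 1(t)}$ and set $\sigma_t:=\Ad\lmk\Vrl\sigma{\lm 1(t)}\rmk\sigma$ and $q_t:=\Ad\lmk\Vrl\sigma{\lm 1(t)}\rmk(q)$. Because $\lm 1\subset\lz$ and $\lm 1(t)=\lm 1+t\bm e_{\lm 1}$ translates $\lm 1$ along the axis direction of $\lz$, we have $\lm 1(t)\subset\lz$, hence $\sigma_t\in\Obun{\lm 1(t)}\subset\Obul$ and $(\sigma_t,q_t)\in\om$. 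The unitary $\Vrl\sigma{\lm 1(t)}$ is not itself a morphism of $\tilde\caM$ when $q\neq\unit$, so instead I would use $W:=\Vrl\sigma{\lm 1(t)}q=q_t\Vrl\sigma{\lm 1(t)}$, which satisfies $W^*W=q=\id_{(\sigma,q)}$ and $WW^*=q_t=\id_{(\sigma_t,q_t)}$ and is therefore an isomorphism $(\sigma,q)\to(\sigma_t,q_t)$ in $\tilde\caM$. Applying the naturality (ii) to $C_{(\rho,p)}$ and to $\ti{-}{-}\rho p$ (the latter satisfies (ii) by the forward direction) and inverting the tensored isomorphisms gives
\begin{align*}
C_{(\rho,p)}\lmk(\sigma,q)\rmk-\ti\sigma q\rho p
=\lmk\unit_{(\rho,p)}\omt W^*\rmk\lmk C_{(\rho,p)}\lmk(\sigma_t,q_t)\rmk-\ti{\sigma_t}{q_t}\rho p\rmk\lmk W\omt\unit_{(\rho,p)}\rmk.
\end{align*}

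Taking norms and bounding the outer factors by one yields $\lV C_{(\rho,p)}\lmk(\sigma,q)\rmk-\ti\sigma q\rho p\rV\le\lV C_{(\rho,p)}\lmk(\sigma_t,q_t)\rmk-\ti{\sigma_t}{q_t}\rho p\rV$ for every $t$. Since $(\sigma_t,q_t)\in\om$ with $\sigma_t\in\Obun{\lm 1(t)}$, condition (iv) applied to both $C_{(\rho,p)}$ and $\ti{-}{-}\rho p$ shows that each of these is, as $t\to\infty$, asymptotically equal to the same explicit expression, so the right-hand side tends to zero. As the left-hand side does not depend on $t$, it must vanish, giving $C_{(\rho,p)}\lmk(\sigma,q)\rmk=\ti\sigma q\rho p$. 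I expect the converse to be the main obstacle: the delicate points are replacing the unitary transporter by the partial isometry $W$ so that it becomes a genuine $\tilde\caM$-isomorphism respecting the projections, confirming $(\sigma_t,q_t)\in\om$ via $\lm 1(t)\subset\lz$, and checking that the supremum in (iv) legitimately covers the transported sequence $(\sigma_t,q_t)$.
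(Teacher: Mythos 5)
Your proof is correct and takes essentially the same approach as the paper's: the forward direction verifies (i), (ii), (iv) from Lemma \ref{lem24}, Lemma \ref{inu} and Lemma \ref{neko} exactly as the paper does, and your converse—transporting $(\sigma,q)$ to the pair $(\sigma_t,q_t)$ localized in $\lm 1(t)$ via the $\tilde\caM$-isomorphism $W=q_t\Vrl\sigma{\lm 1(t)}q$, applying the naturality (ii), and then invoking the asymptotic constraint (iv) for both $C_{(\rho,p)}$ and $\ti - - \rho p$—is precisely the paper's argument. The only difference is presentational: the paper records the conclusion as the explicit limit formula (\ref{budou}) satisfied by any map obeying (i), (ii), (iv), while you subtract the two candidates and let the common asymptotic expression kill the difference by a triangle inequality, which is the same mechanism.
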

\begin{proof}
(i) By the definition, we have $\ti\sigma q \rho p \in \lmk \sigma\otimes \rho,
\rho\otimes\sigma\rmk_l$,
and 
\begin{align}
\begin{split}
\lmk p\otimes q\rmk \cdot \ti\sigma q \rho p 
=\ti\sigma q \rho p=\ti\sigma q \rho p \cdot \lmk q\otimes p\rmk.
\end{split}
\end{align}
(ii)
For any $(\sigma,q),(\sigma',q')\in \om$ and $R\in \mm\lmk (\sigma,q),(\sigma',q')\rmk$, we have
\begin{align}
\begin{split}
&\ti {\sigma'}{q'}{\rho}p\lmk R\omt \unit_{(\rho,p)}\rmk
=(p\otimes q') \hb{\sigma'}{\rho} (q'\otimes p)(R\otimes p)\\
&=(p\otimes q')\hb{\sigma'}{\rho} (R\otimes \unit_\rho)(q\otimes p)
=(p\otimes q')\lmk\unit_\rho\otimes R \rmk
\hb{\sigma}{\rho} (q\otimes p)\\
&=\lmk \unit_{(\rho,p)}\omt R\rmk \ti \sigma q \rho p,
\end{split}
\end{align}
by Lemma \ref{inu}.\\
(iv) For any 
$\lm 1,\lm 3\in \CUbk$ with $\lm 1\leftarrow_r \lm 3$, $\lm 1,\lm3\subset\lz$
and $\Vrl\rho{\lm3}\in \Vbu\rho{\lm 3}$
we have
\begin{align}
\begin{split}
&\lim_{t\to\infty}\sup_{\sigma\in \Obun{\lm 1(t)}
(\sigma,q)\in \om}
\lV\ti \sigma q \rho p
-
p \Tbd\rho{\lzr}\unit (q)\cdot \Vrl\rho{\lm3}^* \Tbd\sigma{\lzr}\unit\lmk \Vrl\rho{\lm3}\rmk
q\Tbd\sigma \lzr\unit (p) 
\rV\\
&=\lim_{t\to\infty}\sup_{\sigma\in \Obun{\lm 1(t)}
(\sigma,q)\in \om}
\lV
(p\otimes q)
\hb\sigma \rho(q\otimes p)
-(p\otimes q)\cdot \Vrl\rho{\lm3}^* \Tbd\sigma{\lzr}\unit\lmk \Vrl\rho{\lm3}\rmk
(q\otimes p)
\rV\\
&=0,
\end{split}
\end{align}
by Lemma \ref{neko}.\\
Suppose now that  $C_{(\rho,p)} : \om\to \caB(\hbd)$
 satisfies  (i), (ii), (iv) of Definition \ref{saru} for $(\rho,p)\in \om$.
 Consider $\lm 1,\lm 3$ in (iv).
For each $t\ge 0$ and $(\sigma,q)\in \om$,
we fix $\Vrl\sigma{\lm 1(t)}\in \Vbu\sigma{\lm 1(t)}$ and set
\begin{align}
\begin{split}
\sigma(t):=\Ad\Vrl\sigma{\lm 1(t)}\circ\sigma\in \Obun{\lm 1(t)}\subset \Obul=\Obj\caM,\quad
q(t):=\Ad\Vrl\sigma{\lm 1(t)}(q).
\end{split}
\end{align}
Then we have $(\sigma(t),q(t))\in \om$, and
$q(t) \Vrl\sigma{\lm 1(t)} q\in \mm\lmk(\sigma,q), (\sigma(t), q(t)) \rmk$.
Because $C_{(\rho,p)}$ satisfies (ii) of Definition \ref{saru},
we have
\begin{align}
\begin{split}
C_{(\rho,p)}\lmk (\sigma(t),q(t)) \rmk
\lmk q(t) \Vrl\sigma{\lm 1(t)} q\omt\id_{(\rho,p)}\rmk 
=\lmk \id_{(\rho,p)}\omt  q(t) \Vrl\sigma{\lm 1(t)} q\rmk
C_{(\rho,p)}\lmk (\sigma,q) \rmk.
\end{split}
\end{align}
Multiplying $\lmk \id_{(\rho,p)}\omt  q{\Vrl\sigma{\lm 1(t)}}^* q(t)\rmk$
from the left, we have
\begin{align}
\begin{split}
\lmk \id_{(\rho,p)}\omt  q{\Vrl\sigma{\lm 1(t)}}^* q(t)\rmk
C_{(\rho,p)}\lmk (\sigma(t),q(t)) \rmk
\lmk q(t) \Vrl\sigma{\lm 1(t)} q\omt\id_{(\rho,p)}\rmk 
=\lmk p\omt  q\rmk
C_{(\rho,p)}\lmk (\sigma,q) \rmk=C_{(\rho,p)}\lmk (\sigma,q) \rmk.
\end{split}
\end{align}
In the last equation we used (i) of Definition \ref{saru}.
Taking $t\to\infty$ limit and using (iv) of Definition \ref{saru},
we get
\begin{align}\label{budou}
\begin{split}
&C_{(\rho,p)}\lmk (\sigma,q) \rmk\\
&=\lim_{t\to\infty} 
\lmk \id_{(\rho,p)}\omt  q{\Vrl\sigma{\lm 1(t)}}^* q(t)\rmk
p \Tbd\rho{\lzr}\unit (q(t))\cdot \Vrl\rho{\lm3}^* \Tbd{\sigma(t)}{\lzr}\unit\lmk \Vrl\rho{\lm3}\rmk
q(t)\Tbd{\sigma(t)} \lzr\unit (p) 
\lmk q(t) \Vrl\sigma{\lm 1(t)} q\omt\id_{(\rho,p)}\rmk.
\end{split}
\end{align}
In particular, because  $\ti --\rho p$ satisfies
 properties (i), (ii), (iv) of Definition \ref{saru},
 it is given by the right hand side of (\ref{budou}).
 Therefore, we get $C_{(\rho,p)}=\ti --\rho p$.
\end{proof}
\change{
We checked that 
$\ti \sigma q\rho p$ in (\ref{cup}) satisfies (i), (ii), (iv) of  Definition \ref{saru}.
For $\ti \sigma q\rho p$ to satisfy (v) of  Definition \ref{saru},
$p$ need to satisfy some condition.
}
\begin{lem}\label{mikan}
Let $\llz\in\pc$.
Consider the setting in subsection \ref{setting2}, and 
assume Assumption \ref{assum80}, Assumption \ref{assum80l},
Assumption \ref{wakayama}.
Then we have the following
\begin{enumerate}
\item If $(\rho,p)\in \om$ satisfies $p\in \fbd$, then
 $\ti --\rho p$ satisfies (iii) and (v) of Definition \ref{saru}
 and is a half-braiding for $(\rho,p)$
 with asymptotic constraint.
\item Let $(\rho,p)\in \om$.
Suppose that
$\ti --\rho p$ satisfies (v) of Definition \ref{saru}.
 Let $\lm 1,\lm 3\in \CUbk$ with $\lm 1\leftarrow_r \lm 3$, $\lm 1,\lm3\subset\lz$
and $\Vrl\rho{\lm3}\in \Vbu\rho{\lm 3}$.
Set $\tilde p:=\Vrl\rho{\lm3} p{\Vrl\rho{\lm3}}^*$.
Then we have
\begin{align}
\begin{split}
\lim_{t\to\infty}\sup_{\sigma\in \Obun{\lm 1(t)}}
\lV \tilde p -\lmk \Tbd\sigma\lzr\unit(\tilde p)\rmk\rV=0.
\end{split}
\end{align}
\end{enumerate}

\end{lem}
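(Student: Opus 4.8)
For (1), recall that Lemma \ref{panda} already supplies properties (i), (ii), (iv) of Definition \ref{saru} for $\ti{-}{-}{\rho}{p}$, so only (iii) and (v) remain, and the hypothesis $p\in\fbd$ is exactly what makes them work. First I would note that $\hb{\sigma}{\rho}$ is a unitary: by Proposition \ref{lem40} it is a norm limit of the unitaries $\lmk \Vrl\sigma{\lm {r2}(t_2)}\otimes \Vrl\rho{\lm 1(t_1)}\rmk^*\lmk \Vrl\rho{\lm 1(t_1)}\otimes \Vrl\sigma{\lm {r2}(t_2)}\rmk$. Since $p\in\fbd$ is an $\fbd$-intertwiner $\rho\to\rho$, Lemma \ref{lem42} applies in the second slot and gives $\hb{\sigma}{\rho}\lmk\unit_\sigma\otimes p\rmk=\lmk p\otimes\unit_\sigma\rmk\hb{\sigma}{\rho}$, while Lemma \ref{inu} applies to $q\in(\sigma,\sigma)_l$ in the first slot and gives $\hb{\sigma}{\rho}\lmk q\otimes\unit_\rho\rmk=\lmk\unit_\rho\otimes q\rmk\hb{\sigma}{\rho}$. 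Composing these yields $\hb{\sigma}{\rho}\lmk q\otimes p\rmk=\lmk p\otimes q\rmk\hb{\sigma}{\rho}$, hence $\ti{\sigma}{q}{\rho}{p}=\lmk p\otimes q\rmk\hb{\sigma}{\rho}$. Because $\hb{\sigma}{\rho}$ is unitary this is a partial isometry with $\ti{\sigma}{q}{\rho}{p}\lmk\ti{\sigma}{q}{\rho}{p}\rmk^*=p\otimes q$ and $\lmk\ti{\sigma}{q}{\rho}{p}\rmk^*\ti{\sigma}{q}{\rho}{p}=q\otimes p$, i.e.\ it is an isomorphism $(\sigma,q)\omt(\rho,p)\to(\rho,p)\omt(\sigma,q)$ in $\tilde\caM$ whose inverse is its adjoint, which is (v). For (iii) I would substitute the decomposition $\hb{\sigma\otimes\gamma}{\rho}=\lmk\hb{\sigma}{\rho}\otimes\unit_\gamma\rmk\lmk\unit_\sigma\otimes\hb{\gamma}{\rho}\rmk$ of Lemma \ref{panama} into $\ti{\sigma\otimes\gamma}{q\otimes r}{\rho}{p}$ and push the projections $p,q,r$ through using the interchange law (Lemma \ref{lem24}) together with the two commutation relations above; the bookkeeping is of the same shape as the hexagon identity for the bulk braiding.

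For (2), I would transport everything to the localized object. Set $\tilde\rho:=\Ad\Vrl\rho{\lm 3}\circ\rho\in\Obun{\lm 3}$. Then $\Vrl\rho{\lm 3}\in(\rho,\tilde\rho)_U$ realizes a unitary isomorphism $(\rho,p)\to(\tilde\rho,\tilde p)$ in $\tilde\caM$, since $\tilde p\Vrl\rho{\lm 3}=\Vrl\rho{\lm 3}p$; consequently $\ti{-}{-}{\tilde\rho}{\tilde p}$ inherits (v) from $\ti{-}{-}{\rho}{p}$. Taking $q=\unit$ and using $\unit_\sigma\otimes\tilde p=\Tbd{\sigma}{\lzr}{\unit}(\tilde p)$ and $\tilde p\otimes\unit_\sigma=\tilde p$, one gets $\ti{\sigma}{\unit}{\tilde\rho}{\tilde p}=\tilde p\,\hb{\sigma}{\tilde\rho}\,\Tbd{\sigma}{\lzr}{\unit}(\tilde p)$. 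Because $\tilde\rho\in\Obun{\lm 3}$ and $\lm 1\leftarrow_r\lm 3$, Remark \ref{murasaki} (the $\Vrl{\tilde\rho}{\lm 3}=\unit$ case of Lemma \ref{neko}) gives $\sup_{\sigma\in\Obun{\lm 1(t)}}\lV\hb{\sigma}{\tilde\rho}-\unit\rV\to0$ as $t\to\infty$. Hence $\ti{\sigma}{\unit}{\tilde\rho}{\tilde p}$ is, uniformly in $\sigma\in\Obun{\lm 1(t)}$, within $o(1)$ of the product $\tilde p\,\Tbd{\sigma}{\lzr}{\unit}(\tilde p)$.

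The main obstacle is the final passage from invertibility to the asymptotic equality of projections. Assumption (v) supplies only an abstract two-sided inverse of $\ti{\sigma}{\unit}{\tilde\rho}{\tilde p}$, and mere invertibility of a morphism that is close to a product of two projections $\tilde p\,\Tbd{\sigma}{\lzr}{\unit}(\tilde p)$ does \emph{not} by itself force the two projections to be close. This is precisely the point where (1) used $p\in\fbd$: there Lemma \ref{lem42} turned the analogous morphism into a genuine partial isometry, so its adjoint was the inverse, whereas for $\tilde p\in\bl$ that naturality in the $\rho$-slot is unavailable. To close the gap I would manufacture a \emph{uniformly} norm-controlled approximate inverse out of the reverse half-braiding $\hb{\tilde\rho}{\sigma}$ (itself unitary and subject to an asymptotic constraint), using the uniqueness statement of Lemma \ref{panda} to identify it asymptotically with $\Tbd{\sigma}{\lzr}{\unit}(\tilde p)\,\tilde p$; composing it with $\ti{\sigma}{\unit}{\tilde\rho}{\tilde p}$ and invoking $\hb{\sigma}{\tilde\rho}\to\unit$ once more should give $\lV\tilde p-\tilde p\,\Tbd{\sigma}{\lzr}{\unit}(\tilde p)\,\tilde p\rV\to0$ uniformly, and by the symmetric computation the same with the roles of $\tilde p$ and $\Tbd{\sigma}{\lzr}{\unit}(\tilde p)$ exchanged, whence $\lV\tilde p-\Tbd{\sigma}{\lzr}{\unit}(\tilde p)\rV\to0$. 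Obtaining a bound on the inverse that is uniform in $\sigma$, rather than valid only for each fixed $\sigma$, is the technical heart of the argument and the step I expect to be hardest.
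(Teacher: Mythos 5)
Your part 1 is correct and is essentially the paper's own argument: since $p\in\fbd$, Lemma \ref{lem42} lets $p$ pass through $\hb\sigma\rho$, Lemma \ref{inu} does the same for $q\in(\sigma,\sigma)_l$, so $\ti\sigma q\rho p=(p\otimes q)\hb\sigma\rho=\hb\sigma\rho(q\otimes p)$; unitarity of $\hb\sigma\rho$ then makes this a partial isometry with range projection $p\otimes q$ and support projection $q\otimes p$, which is (v), and (iii) follows by inserting the decomposition of Lemma \ref{panama} and commuting the projections, exactly as the paper does.

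Part 2, however, has a genuine gap, and it sits precisely at the step you yourself flag as the hardest. The paper never needs any norm control of an abstract inverse. It reads (v) the way (v) is realized throughout the paper (compare your own part 1, where the isomorphism produced is a partial isometry): the inverse of $\ti\sigma\unit\rho p=p\,\hb\sigma\rho\,\Tbd\sigma\lzr\unit(p)$ is its adjoint. This yields, for \emph{every} $\sigma\in\Obj\caM$, the exact identities (\ref{chairo}), namely $\Tbd\sigma\lzr\unit(p)=\Tbd\sigma\lzr\unit(p)\lmk\hb\sigma\rho\rmk^*p\,\hb\sigma\rho\,\Tbd\sigma\lzr\unit(p)$ and $p=p\,\hb\sigma\rho\,\Tbd\sigma\lzr\unit(p)\lmk\hb\sigma\rho\rmk^*p$. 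With these in hand, Lemma \ref{neko} replaces $\hb\sigma\rho$ by $\Vrl\rho{\lm3}^*\,\Tbd\sigma\lzr\unit\lmk\Vrl\rho{\lm3}\rmk$ uniformly over $\sigma\in\Obun{\lm 1(t)}$, and conjugating the two identities by $\Tbd\sigma\lzr\unit(\Vrl\rho{\lm3})$ and $\Vrl\rho{\lm3}$ respectively turns them into $\lV\Tbd\sigma\lzr\unit(\tilde p)\,\tilde p\,\Tbd\sigma\lzr\unit(\tilde p)-\Tbd\sigma\lzr\unit(\tilde p)\rV\to 0$ and $\lV\tilde p\,\Tbd\sigma\lzr\unit(\tilde p)\,\tilde p-\tilde p\rV\to 0$, uniformly in $\sigma$. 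The conclusion then follows from the $C^*$-identity $\lV x\rV^2=\lV x^*x\rV$ applied to $x=(\unit-\tilde p)\Tbd\sigma\lzr\unit(\tilde p)$ and to $x=\tilde p\lmk\unit-\Tbd\sigma\lzr\unit(\tilde p)\rmk$, which gives $\lV\tilde p-\Tbd\sigma\lzr\unit(\tilde p)\rV\to 0$. So no uniform bound on an abstract inverse is ever required; what is required is the adjoint-inverse reading of (v), which you declined to use. If you insist on interpreting ``isomorphism'' in Definition \ref{saru}(v) as bare algebraic invertibility, then you are disputing the intended hypothesis rather than supplying an alternative proof of the stated lemma.

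Your proposed repair does not close this gap as written. Lemma \ref{panda}'s uniqueness statement characterizes maps satisfying (i), (ii), (iv) of Definition \ref{saru} for a fixed $(\rho,p)$; it says nothing that would identify the reverse half-braiding $\hb{\tilde\rho}\sigma$ (which braids in the opposite order, with $\tilde\rho$ in the first slot) asymptotically with $\Tbd\sigma\lzr\unit(\tilde p)\,\tilde p$, and the paper establishes no relation between $\hb{\tilde\rho}\sigma$ and an inverse of $\hb\sigma{\tilde\rho}$ that could serve this purpose. (Your preliminary reduction — transporting (v) to $(\tilde\rho,\tilde p)$ via $\Vrl\rho{\lm3}\in\fbd$ and applying Lemma \ref{neko} — is fine, and is a mild variant of the paper's bookkeeping, which works directly with $p$ and introduces $\tilde p$ only at the level of operator identities; but the decisive final step is missing.)
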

\begin{proof}
{\it 1.} 
Let $(\rho,p)\in \om$.
Then for any $(\sigma,q), (\gamma,r)\in\om$, we have
\begin{align}\label{midori}
\begin{split}
&\lmk \ti \sigma q \rho p \omt\id_{(\gamma,r)}\rmk
\lmk 
\id_{(\sigma,q)}\omt\ti\gamma r \rho p
\rmk\\
&=
\lmk
\lmk (p\otimes q)\hb\sigma\rho(q\otimes p)\rmk\otimes r
\rmk
\lmk
q\otimes \lmk (p\otimes r)\hb \gamma\rho (r\otimes p)\rmk
\rmk\\
&=
\lmk p\otimes q\otimes r\rmk
\lmk
 \lmk \hb\sigma\rho\otimes \unit_\gamma\rmk  
 \lmk \unit_\sigma\otimes p\otimes \unit_\gamma\rmk
 \lmk
\unit_\sigma \otimes \hb \gamma\rho \rmk
\rmk
 \lmk q\otimes r\otimes p\rmk
\end{split}
\end{align}
 If $p\in \fbd$, 
using Lemma \ref{lem42} and Lemma \ref{panama},
we have
\begin{align}
\begin{split}
&\lmk \ti \sigma q \rho p \omt\id_{(\gamma,r)}\rmk
\lmk 
\id_{(\sigma,q)}\omt\ti\gamma r \rho p
\rmk=(\ref{midori})\\
&=\lmk p\otimes q\otimes r\rmk
\lmk
 \lmk \hb\sigma\rho\otimes \unit_\gamma\rmk  
 \lmk
\unit_\sigma \otimes \hb \gamma\rho \rmk
\rmk
\lmk q\otimes r\otimes p\rmk
\\
&=\lmk p\otimes q\otimes r\rmk
\hb {\sigma\otimes \gamma}{\rho}
\lmk q\otimes r\otimes p\rmk
=
\tilde\iota \lmk  (\sigma,q)\omt(\gamma,r), (\rho,p)\rmk.
\end{split}
\end{align}
Hence (iii) holds. Similarly, if $p\in\fbd$,
(v) can be checked by
\begin{align}
\begin{split}
\ti\sigma q \rho p
=\lmk p\otimes q\rmk
\hb\sigma \rho\lmk q\otimes p\rmk
=\lmk p\otimes q\rmk
\hb\sigma \rho
=
\hb\sigma \rho\lmk q\otimes p\rmk.
\end{split}
\end{align}
Here, we used 
Lemma \ref{lem42} and Lemma \ref{inu}.
\\
{\it 2.} 
Consider the setting of {\it 2}. 
\change{If (v) holds for $\ti--\rho p$,
then 
$\lmk \ti \sigma \unit \rho p\rmk^* \ti \sigma \unit \rho p$
should be the identity of $(\sigma,\unit)\omt(\rho,p)$, which is
$\Tbd\sigma\lzr\unit(p)$, for any $\sigma\in\Obj\caM$.
Similarly, 
$\ti \sigma \unit \rho p \lmk \ti \sigma \unit \rho p\rmk^* $
should be the identity of $(\rho,p) \omt (\sigma,\unit)$, which is
$p$, for any $\sigma\in\Obj\caM$.
Therefore,}
for any $\sigma\in\Obj\caM$, we have
\begin{align}\label{chairo}
\begin{split}
&\Tbd\sigma\lzr\unit(p)=
\lmk \ti \sigma \unit \rho p\rmk^* \ti \sigma \unit \rho p\\
&=\Tbd\sigma\lzr\unit(p)\cdot \lmk \hb\sigma\rho\rmk^* \cdot p\cdot \hb\sigma\rho\cdot \Tbd\sigma\lzr\unit(p),\\
& p=\ti \sigma \unit \rho p\cdot \lmk \ti \sigma \unit \rho p\rmk^*
=p\cdot \hb\sigma\rho\cdot \Tbd\sigma\lzr\unit(p)\cdot \lmk \hb\sigma\rho\rmk^* p.
\end{split}
\end{align}

Then from Lemma \ref{neko}, we have
\begin{align}
\begin{split}
&\lim_{t\to\infty}\sup_{\sigma\in \Obun{\lm 1(t)}}
\lV
\Tbd\sigma\lzr\unit(\tilde p)\tilde p \Tbd\sigma\lzr\unit(\tilde p)-\Tbd\sigma\lzr\unit(\tilde p)
\rV\\
&=\lim_{t\to\infty}\sup_{\sigma\in \Obun{\lm 1(t)}}
\lV
\Tbd\sigma\lzr\unit(p) \Tbd\sigma\lzr\unit(\Vrl\rho{\lm 3}^*) \Vrl\rho{\lm 3} p \Vrl\rho{\lm 3}^*
\Tbd\sigma\lzr\unit( \Vrl\rho{\lm 3})\Tbd\sigma\lzr\unit(p)-\Tbd\sigma\lzr\unit(p)
\rV=0,
\end{split}
\end{align}
and 
 \begin{align}
 \begin{split}
& \lim_{t\to\infty}\sup_{\sigma\in \Obun{\lm 1(t)}}
 \lV
 \tilde p\Tbd\sigma\lzr\unit(\tilde p)\tilde p-\tilde p
 \rV\\
 &=\lim_{t\to\infty}\sup_{\sigma\in \Obun{\lm 1(t)}}
 \lV
 p \Vrl\rho{\lm 3}^* \Tbd\sigma\lzr\unit(\Vrl\rho{\lm 3}) \Tbd\sigma\lzr\unit(p) \Tbd\sigma\lzr\unit(\Vrl\rho{\lm 3}^*)
 \Vrl\rho{\lm 3} p-p
 \rV=0.
 \end{split}
 \end{align}
 Therefore, we have 
 \begin{align}
 \begin{split}
  &\lim_{t\to\infty}\sup_{\sigma\in \Obun{\lm 1(t)}}
  \lV   \tilde p\Tbd\sigma\lzr\unit(\tilde p)- \Tbd\sigma\lzr\unit(\tilde p)\rV
  =\lim_{t\to\infty}\sup_{\sigma\in \Obun{\lm 1(t)}}
  \lV   \Tbd\sigma\lzr\unit(\tilde p)\lmk \tilde p-\unit\rmk\Tbd\sigma\lzr\unit(\tilde p)\rV^{\frac 12}=0,\\
   &\lim_{t\to\infty}\sup_{\sigma\in \Obun{\lm 1(t)}} 
   \lV   \tilde p\Tbd\sigma\lzr\unit(\tilde p)-  \tilde p\rV
   =\lim_{t\to\infty}\sup_{\sigma\in \Obun{\lm 1(t)}} 
 \lV   \Tbd\sigma\lzr\unit(\tilde p) \tilde p-  \tilde p\rV\\
&  =\lim_{t\to\infty}\sup_{\sigma\in \Obun{\lm 1(t)}} 
  \lV
  \tilde p
  \lmk \Tbd\sigma\lzr\unit(\tilde p)-\unit\rmk
  \tilde p
  \rV^{\frac 12}=0.
 \end{split}
 \end{align}
 Hence we obtain
 \begin{align}
 \begin{split}
 \lim_{t\to\infty}\sup_{\sigma\in \Obun{\lm 1(t)}}
  \lV   \tilde p- \Tbd\sigma\lzr\unit(\tilde p)\rV=0.
 \end{split}
 \end{align}

 \end{proof}
Next we introduce the Drinfeld center $\ozam$ with the asymptotic constraints.
First we prepare sets corresponding to the objects and morphisms.
\begin{defn}\label{hato}
Let $\llz\in\pc$.
Consider the setting in subsection \ref{setting2}, and 
assume 
Assumption \ref{wakayama}.
We denote by $\ozam$, the set of all pairs $\lmk (\rho,p), C_{(\rho,p)}\rmk$
where 
$(\rho,p)$ is an object of $\tilde\caM$ and
$C_{(\rho,p)}$ is a half-braiding for $(\rho,p)$ with the asymptotic constraint.
For $\lmk (\rho,p), C_{(\rho,p)}\rmk, \lmk (\sigma,q), C_{(\sigma,q)}\rmk\in \ozam$, we denote by
$\mzam\lmk \lmk (\rho,p), C_{(\rho,p)}\rmk, \lmk (\sigma,q), C_{(\sigma,q)}\rmk\rmk $
the set of all $R\in \mm\lmk (\rho,p), (\sigma,q)\rmk$
satisfying
\begin{align}
\begin{split}
\lmk R\omt \unit_{(\gamma,r)}\rmk C_{(\rho,p)}\lmk(\gamma,r)\rmk
=C_{(\sigma,q)}\lmk(\gamma,r)\rmk\lmk \unit_{(\gamma,r)}\omt R\rmk
\end{split}
\end{align}
for all $(\gamma,r)\in \om$.
\end{defn}
Note that this definition is identical to the standard one  \cite{Kassel} except for the asymptotic constraints.
From Lemma \ref{lem42}, Lemma \ref{panda},
we have the following
\begin{lem}\label{drian}
Let $\llz\in\pc$.
Consider the setting in subsection \ref{setting2}, and 
assume Assumption \ref{assum80}, Assumption \ref{assum80l},
Assumption \ref{wakayama}.
Then for any 
$\lmk (\rho,p), C_{(\rho,p)}\rmk, \lmk (\sigma,q), C_{(\sigma,q)}\rmk\in \ozam$,
 we have
\begin{align}
\begin{split}
\mm\lmk (\rho,p), (\sigma,q)\rmk\cap\fbd
\subset 
\mzam\lmk \lmk (\rho,p), C_{(\rho,p)}\rmk, \lmk (\sigma,q), C_{(\sigma,q)}\rmk\rmk
\end{split}
\end{align}
\end{lem}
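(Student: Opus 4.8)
The plan is to take an arbitrary $R\in \mm\lmk (\rho,p),(\sigma,q)\rmk\cap\fbd$ and verify directly the defining relation of $\mzam$ from Definition \ref{hato}: that for every $(\gamma,r)\in\om$ one has $\lmk R\omt\unit_{(\gamma,r)}\rmk C_{(\rho,p)}\lmk(\gamma,r)\rmk = C_{(\sigma,q)}\lmk(\gamma,r)\rmk\lmk\unit_{(\gamma,r)}\omt R\rmk$. The first reduction is to replace the abstract half-braidings by explicit formulas. Since $\lmk(\rho,p),C_{(\rho,p)}\rmk$ and $\lmk(\sigma,q),C_{(\sigma,q)}\rmk$ lie in $\ozam$, their half-braidings satisfy (i), (ii), (iv) of Definition \ref{saru}, so the uniqueness clause of Lemma \ref{panda} forces $C_{(\rho,p)}\lmk(\gamma,r)\rmk=\ti\gamma r\rho p=(p\otimes r)\hb\gamma\rho(r\otimes p)$ and $C_{(\sigma,q)}\lmk(\gamma,r)\rmk=\ti\gamma r\sigma q=(q\otimes r)\hb\gamma\sigma(r\otimes q)$. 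Recalling $\unit_{(\gamma,r)}=r$ by (\ref{suika}) and $\omt=\otimes$ on morphisms by (\ref{nashi}), the identity to be proved becomes a statement purely inside $\Carbdl$ involving $\hb\gamma\rho$, $\hb\gamma\sigma$, the projections $p,q,r$, and $R$, together with the relations $qR=R=Rp$.

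The conceptually central step is to observe that the hypothesis $R\in\fbd$ upgrades $R$ from an $\bl$-morphism to an $\fbd$-morphism. Indeed $\fbd\subset\bl$, so $R\in (\rho,\sigma)_l\cap\fbd=(\rho,\sigma)_U$; and by the $\sigma$-weak continuity of the extensions $\Tbd\rho\lzr\unit,\Tbd\sigma\lzr\unit$ on the local algebras (as in Lemma \ref{lem20} and the proof of Lemma 2.14 of \cite{MTC}) this intertwining passes to the extended endomorphisms, i.e. $R\in\lmk\Tbd\rho\lzr\unit,\Tbd\sigma\lzr\unit\rmk_U$. This is exactly the hypothesis needed to invoke the naturality of the half-braiding in its second variable, Lemma \ref{lem42}, which gives $\hb\gamma\sigma(\unit_\gamma\otimes R)=(R\otimes\unit_\gamma)\hb\gamma\rho$. (This is precisely the step that fails for morphisms lying only in $\bl$, as warned in the remark after Lemma \ref{lem42}, and it is what makes the hypothesis $\cap\,\fbd$ the right one.) Here I also use $\Obul\subset\OrUbdl$, which follows from $\Vbu\rho\ld\subset\VUbd\rho{\lm r}$, so that $\gamma,\rho,\sigma$ all lie in the ranges required by Lemma \ref{lem42}.

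With these two ingredients the verification is short. On the left, $(R\otimes r)(p\otimes r)=Rp\otimes r=R\otimes r$ by the interchange law (Lemma \ref{lem24}(ii)) and $Rp=R$, so the left-hand side collapses to $(R\otimes r)\hb\gamma\rho(r\otimes p)$. On the right, $(r\otimes q)(r\otimes R)=r\otimes qR=r\otimes R=(\unit_\gamma\otimes R)(r\otimes p)$ using $qR=R=Rp$, whence, applying Lemma \ref{lem42} and then $qR=R$,
\[
C_{(\sigma,q)}\lmk(\gamma,r)\rmk\lmk\unit_{(\gamma,r)}\omt R\rmk=(q\otimes r)\hb\gamma\sigma(\unit_\gamma\otimes R)(r\otimes p)=(q\otimes r)(R\otimes\unit_\gamma)\hb\gamma\rho(r\otimes p)=(R\otimes r)\hb\gamma\rho(r\otimes p).
\]
Thus both sides equal $(R\otimes r)\hb\gamma\rho(r\otimes p)$ for every $(\gamma,r)\in\om$, which is precisely $R\in\mzam\lmk\lmk(\rho,p),C_{(\rho,p)}\rmk,\lmk(\sigma,q),C_{(\sigma,q)}\rmk\rmk$. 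The computation itself is routine bookkeeping with the interchange law; the only point demanding care — and the step I would write out most carefully — is the justification that $R\in\fbd$ promotes $R$ to an $\fbd$-intertwiner of the extended endomorphisms, so that Lemma \ref{lem42} is legitimately applicable.
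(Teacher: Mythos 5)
Your proof is correct and is exactly the argument the paper intends: the paper states Lemma \ref{drian} with no written proof beyond the phrase ``From Lemma \ref{lem42}, Lemma \ref{panda}, we have the following,'' and those are precisely the two ingredients you use — Lemma \ref{panda} to identify $C_{(\rho,p)}$ and $C_{(\sigma,q)}$ with $\ti - - \rho p$ and $\ti - - \sigma q$, and Lemma \ref{lem42} (legitimately applicable because $R\in\fbd$ promotes $R$ to an intertwiner in $(\rho,\sigma)_U$ of the extended endomorphisms, with $\gamma\in\Obul$ and $\rho,\sigma\in\Obul\subset\OrUbdl$) to commute $R$ past the half-braiding. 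The remaining bookkeeping with $qR=R=Rp$ and the interchange law of Lemma \ref{lem24} is carried out correctly, so there is nothing to add.
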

Now, we introduce an additional assumption.
\begin{assum}\label{raichi}
Let $\llz\in\pc$.
Consider the setting in subsection \ref{setting2}.
For any
$\lm 1,\lm 3\in \CUbk$ with $\lm 1\leftarrow_r \lm 3$, $\lm 1,\lm3\subset\lz$,
we assume the following to hold:
If  $R\in (\rho,\sigma)_l$ for some $\rho,\sigma\in \Obun{\lm 3}$
and 
\begin{align}\label{mizuiro}
\begin{split}
\lim_{t\to\infty}\sup_{\gamma\in \Obun{\lm 1(t)}}
\lV R-\Tbd\gamma{\lzr}\unit(R)
\rV=0,
\end{split}
\end{align}
then 
$R\in \fbd$.
\end{assum}
\begin{rem}\label{dil}
For $\rho,\sigma,R$  and 
$\gamma\in \Obun{\lm 1(t)}$ in Assumption \ref{raichi},
 from Lemma \ref{inu},
we have
\begin{align}
\begin{split}
\hb\sigma\gamma R
=\hb\sigma\gamma\lmk R\otimes\unit_\gamma\rmk
=\lmk \unit_\gamma\otimes R \rmk\hb\rho\gamma
=\Tbd\gamma\lzr\unit(R)\hb\rho\gamma.
\end{split}
\end{align}
From this, we have
\begin{align}
\begin{split}
\lV \hb\sigma\gamma R-R\hb\rho\gamma\rV=
\lV R-\Tbd\gamma{\lzr}\unit(R)
\rV.
\end{split}
\end{align}
In particular, if $R$ is a unitary, 
(\ref{mizuiro}) holds if and only if $\hb\sigma-$
and $\hb\rho-$ are asymptotically unitarily equivalent.
From this point of view, Assumption \ref{raichi} says that 
quasi-particles which have the asymptotically unitarily equivalent 
half-braiding should be equivalent with respect to $\simeq_U$.
Under the assumptions of Corollary \ref{kanazawa}, using the equivalence from Corollary \ref{kanazawa},
it means two quasi-particles in the bulk which have the asymptotically unitarily equivalent braiding
  are equivalent. 
  This is a kind of a non-degeneracy condition of the braiding.
\end{rem}
\change{
\begin{rem}\label{gapmean}
Note that in general, $R\in (\rho,\sigma)_l$ above belongs to
$\pbd(\caA_{(\ld_3)^c})'$ but not necessarily to $\pbd(\caA_{\ld_3})''$.
As was seen for Toric code (Example \ref{toric}), there can be an element
$R\in \pbd(\caA_{(\ld_3)^c})'\setminus \pbd(\caA_{\ld_3})''$ given by a weak limit of string operators 
on the paths $\zeta_n$ depicted in Figure \ref{610},
where $a_n\to-\infty$ as $n\to\infty$.
Recall in Toric code (Example \ref{toric}), $\pi_\gamma^X$ can be created from the vacuum by
a unitary constructed in this manner.
Assumption \ref{raichi} tells us we can detect the fact that
$R$ does not belong to $\pbd(\caA_{\ld_3})''$, 
(i.e., corresponding to a {\it deconfined} particle) by a quasi-particle $\gamma$
sitting at the cone $\ld_1(t)$.

Let us consider the case that $\rho=\sigma$ and $R$ is a projection.
Recall Remark \ref{module}. There, we physically concluded that at a stable gapped boundary,
any nontrivial projection in $\bl$ should be detected by some anyons in $\Obul$.
Suppose our projection $R$ is in $\pbd(\caA_{(\ld_3)^c})'\setminus \pbd(\caA_{\ld_3})''$ and constructed 
as above as a weak limit of operators defined on the string $\zeta_n$.
Suppose, nevertheless, (\ref{mizuiro}) holds.
This (roughly) means that our projection
$R$ cannot be detected by anyons, because crossing at $\ld_1(t)$ does not change $p$.
From Remark \ref{module}, the existence of such projection physically means that
the boundary is not a stable gapped boundary.

For this reason, we would like to regard Assumption \ref{raichi}
as a stable gapped boundary condition.
\end{rem}
}
\change{
\begin{rem}
In Toric code (Example \ref{toric} ), Proposition 12.7 of \cite{wa}
and its proof proves Assumption \ref{raichi}.
\end{rem}
}
\begin{figure}[htbp]
    \includegraphics[width=5cm]{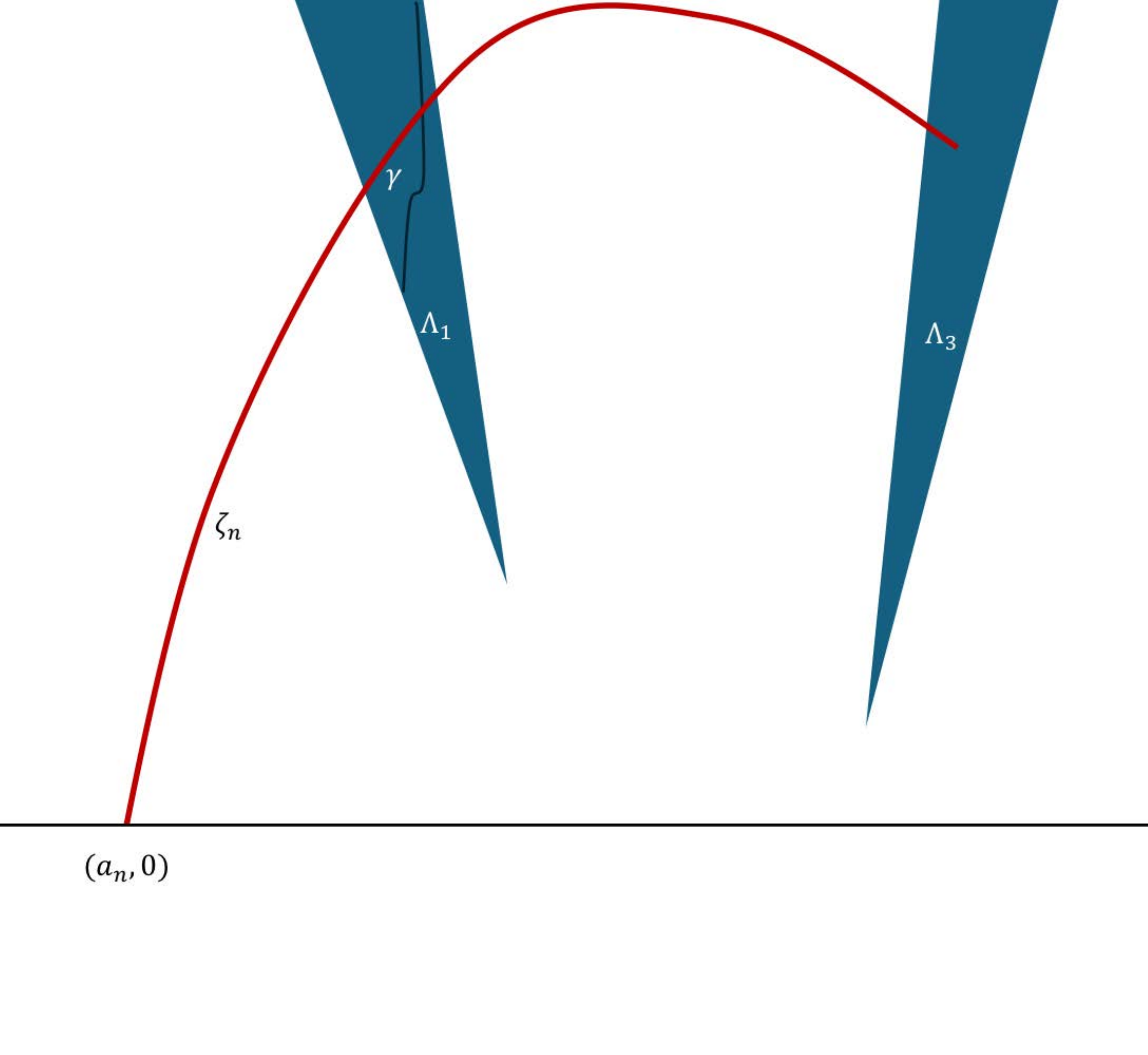}\\
\caption{}   \label{610}
   \end{figure}

Under this Assumption \ref{raichi}, we note that $p$ need to be in $\fbd$
for $(\rho,p)$ to have half-braiding with the asymptotic constraint:
\begin{lem}\label{zabon}
Let $\llz\in\pc$.
Consider the setting in subsection \ref{setting2}, and 
assume Assumption \ref{assum80}, Assumption \ref{assum80l},
Assumption \ref{wakayama}, Assumption \ref{raichi}.
Then 
\begin{enumerate}
\item$p$ belongs to $\fbd$
for any $\lmk (\rho,p), C_{(\rho,p)}\rmk\in \ozam$ and
\item 
for any
$\lmk (\rho,p), C_{(\rho,p)}\rmk, \lmk (\sigma,q), C_{(\sigma,q)}\rmk\in \ozam$,
\begin{align}
\begin{split}
\mm\lmk (\rho,p), (\sigma,q)\rmk\cap\fbd
= 
\mzam\lmk \lmk (\rho,p), C_{(\rho,p)}\rmk, \lmk (\sigma,q), C_{(\sigma,q)}\rmk\rmk.
\end{split}
\end{align}
\end{enumerate}
\end{lem}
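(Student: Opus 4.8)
The plan is to obtain part 1 as the diagonal case of the main estimate and then to split part 2 into its two inclusions, the nontrivial one being a reduction to Assumption \ref{raichi}. Throughout fix $\lmk(\rho,p),C_{(\rho,p)}\rmk,\lmk(\sigma,q),C_{(\sigma,q)}\rmk\in\ozam$. Since $C_{(\rho,p)}$ and $C_{(\sigma,q)}$ are half-braidings with the asymptotic constraint, they satisfy (i), (ii), (iv), (v) of Definition \ref{saru}, so Lemma \ref{panda} identifies them with $\ti--\rho p$ and $\ti--\sigma q$, and these satisfy (v). For part 1, I would fix $\lm 1,\lm 3\in\CUbk$ with $\lm 1\leftarrow_r\lm 3$, $\lm 1,\lm 3\subset\lz$, pick $\Vrl\rho{\lm 3}\in\Vbu\rho{\lm 3}$, and set $\hat\rho:=\Ad\lmk\Vrl\rho{\lm 3}\rmk\rho\in\Obun{\lm 3}$ and $\tilde p:=\Vrl\rho{\lm 3}\,p\,\Vrl\rho{\lm 3}^*$. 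Because $\Vrl\rho{\lm 3}\in\fbd\subset\bl$ and $p\in(\rho,\rho)_l$, one checks $\tilde p\in(\hat\rho,\hat\rho)_l$ is a projection. As $\ti--\rho p$ satisfies (v), Lemma \ref{mikan}(2) gives $\lim_{t\to\infty}\sup_{\gamma\in\Obun{\lm 1(t)}}\lV\tilde p-\Tbd\gamma\lzr\unit(\tilde p)\rV=0$, which is exactly hypothesis (\ref{mizuiro}) of Assumption \ref{raichi} for $\hat\rho=\hat\sigma\in\Obun{\lm 3}$ and $R=\tilde p$. Hence $\tilde p\in\fbd$, and therefore $p=\Vrl\rho{\lm 3}^*\tilde p\Vrl\rho{\lm 3}\in\fbd$.

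For part 2, the inclusion $\mm\lmk(\rho,p),(\sigma,q)\rmk\cap\fbd\subset\mzam(\cdots)$ is Lemma \ref{drian}, so only the reverse inclusion carries content. Take $R\in\mzam\lmk\lmk(\rho,p),C_{(\rho,p)}\rmk,\lmk(\sigma,q),C_{(\sigma,q)}\rmk\rmk$; then $R\in(\rho,\sigma)_l$ with $Rp=R=qR$, and it remains to prove $R\in\fbd$. With $\lm 1,\lm 3$ as above and $\Vrl\rho{\lm 3}\in\Vbu\rho{\lm 3}$, $\Vrl\sigma{\lm 3}\in\Vbu\sigma{\lm 3}$, I would set $\hat\rho:=\Ad\lmk\Vrl\rho{\lm 3}\rmk\rho$, $\hat\sigma:=\Ad\lmk\Vrl\sigma{\lm 3}\rmk\sigma\in\Obun{\lm 3}$, and $\tilde R:=\Vrl\sigma{\lm 3}R\Vrl\rho{\lm 3}^*$, $\tilde p:=\Vrl\rho{\lm 3}p\Vrl\rho{\lm 3}^*$, $\tilde q:=\Vrl\sigma{\lm 3}q\Vrl\sigma{\lm 3}^*$. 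A direct computation gives $\tilde R\in(\hat\rho,\hat\sigma)_l$ together with the bookkeeping identities $\tilde R\tilde p=\tilde R=\tilde q\tilde R$. The goal is to show that $\tilde R$ satisfies (\ref{mizuiro}), for then Assumption \ref{raichi} yields $\tilde R\in\fbd$ and $R=\Vrl\sigma{\lm 3}^*\tilde R\Vrl\rho{\lm 3}\in\fbd$.

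To produce (\ref{mizuiro}) for $\tilde R$, I would specialize the defining intertwining relation of $\mzam$ to the objects $(\gamma,\unit)$ with $\gamma\in\Obun{\lm 1(t)}\subset\Obul$ (the inclusion holds since $\lm 1(t)\subset\lm 1\subset\lz$, so $(\gamma,\unit)\in\om$). Using $C_{(\rho,p)}=\ti--\rho p$, $C_{(\sigma,q)}=\ti--\sigma q$, the tensor formulas of Theorem \ref{naha} (so $R\omt\unit_{(\gamma,\unit)}=R$ and $\unit_{(\gamma,\unit)}\omt R=\Tbd\gamma\lzr\unit(R)$), and $Rp=R=qR$, the relation reduces to $R\,\hb\gamma\rho\,\Tbd\gamma\lzr\unit(p)=q\,\hb\gamma\sigma\,\Tbd\gamma\lzr\unit(R)$. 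By Lemma \ref{neko} I would then replace $\hb\gamma\rho$ and $\hb\gamma\sigma$ by $\Vrl\rho{\lm 3}^*\Tbd\gamma\lzr\unit(\Vrl\rho{\lm 3})$ and $\Vrl\sigma{\lm 3}^*\Tbd\gamma\lzr\unit(\Vrl\sigma{\lm 3})$, with error vanishing uniformly over $\gamma\in\Obun{\lm 1(t)}$ as $t\to\infty$. Multiplying the resulting approximate identity on the left by $\Vrl\sigma{\lm 3}$ and on the right by the unitary $\Tbd\gamma\lzr\unit(\Vrl\rho{\lm 3})^*$, and collecting the homomorphism $\Tbd\gamma\lzr\unit$, produces $\lim_{t\to\infty}\sup_{\gamma}\lV\tilde R\,\Tbd\gamma\lzr\unit(\tilde p)-\tilde q\,\Tbd\gamma\lzr\unit(\tilde R)\rV=0$. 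Finally, inserting Lemma \ref{mikan}(2) to trade $\Tbd\gamma\lzr\unit(\tilde p)$ for $\tilde p$ and $\tilde q$ for $\Tbd\gamma\lzr\unit(\tilde q)$, and using $\tilde R\tilde p=\tilde R$ together with $\Tbd\gamma\lzr\unit(\tilde q\tilde R)=\Tbd\gamma\lzr\unit(\tilde R)$, collapses this to $\lim_{t\to\infty}\sup_{\gamma\in\Obun{\lm 1(t)}}\lV\tilde R-\Tbd\gamma\lzr\unit(\tilde R)\rV=0$, which is (\ref{mizuiro}).

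The main obstacle is this middle computation: turning the abstract intertwining identity in $\zam$, evaluated on the moving objects $(\gamma,\unit)$, into the single asymptotic intertwining statement that Assumption \ref{raichi} can consume. The care lies in tracking which operators are fixed and which depend on $\gamma$ so that every error term vanishes uniformly, in correctly conjugating by the charge transporters $\Vrl\rho{\lm 3},\Vrl\sigma{\lm 3}$ to pass from $R,p,q$ to $\tilde R,\tilde p,\tilde q$, and in verifying the identities $\tilde R\tilde p=\tilde R=\tilde q\tilde R$; once these are in place the remaining steps are formal invocations of Lemma \ref{neko}, Lemma \ref{mikan}(2), Lemma \ref{drian}, and Assumption \ref{raichi}.
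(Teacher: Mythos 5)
Your proposal is correct and follows essentially the same route as the paper: Lemma \ref{panda} to identify the half-braidings with $\tilde\iota$, Lemma \ref{mikan}(2) plus Assumption \ref{raichi} for part 1, and for part 2 the specialization of the intertwining relation to $(\gamma,\unit)$, Lemma \ref{neko}, asymptotic invariance of $\tilde p,\tilde q$, and Assumption \ref{raichi} applied to $\tilde R$. The only (harmless) deviation is that for the asymptotic invariance of $\tilde p$ and $\tilde q$ in part 2 you invoke Lemma \ref{mikan}(2) directly, whereas the paper instead uses the conclusion of part 1 ($p,q\in\fbd$) together with an argument as in Lemma \ref{lem82}; both are valid, and yours makes part 2 independent of part 1.
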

\begin{proof}
{\it 1.} For any $\lmk (\rho,p), C_{(\rho,p)}\rmk\in \ozam$,
we have $C_{(\rho,p)}=\ti--\rho p$ by Lemma \ref{panda} satisfying
(v) of Definition \ref{saru}.
Then by Lemma \ref{mikan} {\it 2.}, $\tilde p:=\Vrl\rho{\lm3} p{\Vrl\rho{\lm3}}^*\in (\tilde\rho,\tilde\rho)_l$,
with $\tilde\rho:=\Ad (\Vrl\rho{\lm3})\rho\in \Obun{\lm 3}$ in 
Lemma \ref{mikan} {\it 2.}
satisfy the condition of $R$ in Assumption \ref{raichi} (\ref{mizuiro}).
Hence from Assumption \ref{raichi} we have $\tilde p\in\fbd$ and $p\in \fbd$.\\
{\it 2.} We already know the inclusion $\subset$ from Lemma \ref{drian}.
To see the opposite inclusion, assume that
$R\in \mzam\lmk \lmk (\rho,p), C_{(\rho,p)}\rmk, \lmk (\sigma,q), C_{(\sigma,q)}\rmk\rmk$.
Then by the definition of the morphisms, and Lemma \ref{panda}, we have
\begin{align}
\begin{split}
\lmk R\omt \unit_{(\gamma,r)}\rmk \ti\gamma r \rho p
=
\ti \gamma r \sigma q
\lmk \unit_{(\gamma,r)}\omt R\rmk
\end{split}
\end{align}
for all $(\gamma,r)\in \om$.
In particular, considering $r=\unit$ case, we have
\begin{align}\label{pine}
\begin{split}
\lmk R\otimes \unit_\gamma \rmk\hb
\gamma \rho\lmk \unit_\gamma\otimes p\rmk
-(q\otimes \unit_\gamma)\hb\gamma \sigma (\unit_\gamma\otimes R)=0
\end{split}
\end{align}
for all $\gamma\in \Obj\caM$.
Let $\lm 1,\lm 3\in \CUbk$ be as in Assumption \ref{raichi} and fix
$\Vrl\sigma{\lm 3}\in \Vbu\sigma{\lm 3}$, $\Vrl\rho{\lm 3}\in \Vbu\rho{\lm 3}$, and
set $\tilde R:=\Vrl\sigma{\lm 3} R{\Vrl\rho{\lm 3}}^*$,
$\tilde p:=\Ad\lmk \Vrl\rho{\lm 3}\rmk(p)\in \pbd(\caA_{\lm 3^c})'$,
$\tilde q:=\Ad\lmk \Vrl\sigma{\lm 3}\rmk(q)\in \pbd(\caA_{\lm 3^c})'$.
Then we obtain 
\begin{align}\label{momo}
\begin{split}
&\lim_{t\to\infty}\sup_{{\gamma\in \Obun{\lm 1(t)}}}
\lV \tilde R \Tbd\gamma{\lzr}\unit(\tilde p)-\tilde q \Tbd\gamma{\lzr}\unit(\tilde R)\rV\\
&=
\lim_{t\to\infty}\sup_{{\gamma\in \Obun{\lm 1(t)}}}
\lV R
\Vrl\rho{\lm3}^* \Tbd\gamma{\lzr}\unit\lmk \Vrl\rho{\lm3}\rmk
\Tbd\gamma{\lzr}\unit(p)
-q \Vrl\sigma{\lm3}^* \Tbd\gamma{\lzr}\unit\lmk \Vrl\sigma{\lm3}\rmk
\Tbd\gamma{\lzr}\unit(R)
\rV
=0.
\end{split}
\end{align}
from Lemma \ref{neko}, using (\ref{pine}).
Because of {\it 1}, we already know that $p,q\in\fbd$.
Therefore, by the same proof as
Lemma \ref{lem82} we have
\begin{align}
\begin{split}
\lim_{t\to\infty}\sup_{{\gamma\in \Obun{\lm 1(t)}}}\lV\Tbd\gamma{\lzr}\unit(\tilde p)-\tilde p\rV,\quad
\lim_{t\to\infty}\sup_{{\gamma\in \Obun{\lm 1(t)}}}\lV \Tbd\gamma{\lzr}\unit(\tilde q)-\tilde q\rV\to 0,\quad t\to\infty.
\end{split}
\end{align}
Substituting this to (\ref{momo}), we obtain
\begin{align}
\begin{split}
\lim_{t\to\infty}\sup_{{\gamma\in \Obun{\lm 1(t)}}}
\lV \tilde R-\Tbd\gamma{\lzr}\unit(\tilde R)\rV\to 0.
\end{split}
\end{align}
Because $\tilde R\in \lmk\Ad \Vrl\rho{\lm 3}\circ\rho,
\Ad \Vrl\sigma{\lm 3}\circ\sigma
\rmk_l$
and $\Ad \Vrl\rho{\lm 3}\circ\rho, \Ad \Vrl\sigma{\lm 3}\circ\sigma
\in \Obun{\lm 3}$,
from the assumption \ref{raichi}, we get $\tilde R\in \fbd$.
This means
$R\in \fbd$.

\end{proof}

\change{Under the assumptions we consider, our $\zam$
becomes a braided $C^*$-tensor category.
The construction is pararell to the standard one \cite{Kassel}.}
\begin{prop}
Let $\llz\in\pc$.
Consider the setting in subsection \ref{setting2}, and 
assume Assumption \ref{assum80}, Assumption \ref{assum80l},
Assumption \ref{wakayama}, Assumption \ref{raichi}.
Then $\zam$ is a braided $C^*$-tensor category
with objects $\ozam$, and morphisms $\Mor\lmk \zam\rmk$
and
\begin{description}
\item[(i)] identity morphism
$\id_{\lmk (\rho,p), C_{(\rho,p)}\rmk}:=p$, for  $\lmk (\rho,p), C_{(\rho,p)}\rmk\in \ozam$,
\item[(ii)] 
composition of morphisms and an anti-linear contravariant functor $* : \zam\to\zam$
inherited from that of $\Mor(\tilde \caM)$ (\ref{kaki}), (\ref{melon}) 
\item[(iii)] the tensor product of objects
$\rpc\rho p, \rpc \sigma q\in \ozam$
\begin{align}\label{udon}
\begin{split}
\rpc\rho p\ozt\rpc\sigma q
:=\lmk
(\rho,p)\omt (\sigma,q), C_{(\rho,p)}\hat\otimes C_{(\sigma,q)}
\rmk
\end{split}
\end{align}
where
\begin{align}
\begin{split}
\lmk C_{(\rho,p)}\hat\otimes C_{(\sigma,q)}\rmk\lmk
(\gamma,r)\rmk
:=\lmk \id_{(\rho,p)}\omt C_{(\sigma,q)}\lmk(\gamma,r) \rmk\rmk\cdot
\lmk C_{(\rho,p)}\lmk(\gamma,r)\rmk\omt \id_{(\sigma,q)}\rmk,
\end{split}
\end{align}
for $(\gamma,r)\in \om$
with the
tensor unit 
$\lmk (\pbd, \unit), \ti --\pbd \unit\rmk$,
and the tensor of morphisms inherited from that 
of $\tilde\caM$, and
\item[(iv)]
the braiding
\begin{align}\label{pasta}
\begin{split}
\epsilon_{\zam}\lmk \rpc\rho p, \rpc \sigma q\rmk:=
C_{(\sigma,q)}\lmk (\rho,p)\rmk
\end{split}
\end{align}
for $\rpc \rho p, \rpc \sigma q\in \ozam$.

\end{description}
\end{prop}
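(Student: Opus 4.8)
The plan is to verify the axioms of a braided $C^*$-tensor category (Definition 2.1.1 of \cite{NT} and the center construction of \cite{Kassel}), exploiting the rigidity supplied by Lemma \ref{panda} and Assumption \ref{raichi} to collapse the non-standard verifications into bookkeeping. The central observation I would use throughout is that, by Lemma \ref{zabon}, for every object $\rpc\rho p\in\ozam$ the projection $p$ lies in $\fbd$, the half-braiding is \emph{forced} to equal the canonical $\ti --\rho p$, and the morphism space coincides with $\mm\lmk(\rho,p),(\sigma,q)\rmk\cap\fbd$. Thus $\zam$ is described entirely by $\fbd$-valued data, and passing from a $\tilde\caM$-morphism to a $\zam$-morphism amounts to intersecting with $\fbd$. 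With this in hand the $C^*$-category structure is immediate: the identity $p$, the composition and the adjoint $*$ are inherited from $\tilde\caM$ (Proposition \ref{mtcat}), and one only checks they preserve the intertwining relation of Definition \ref{hato} and remain in the $*$-algebra $\fbd$, which is clear from Lemma \ref{zabon}.

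The first genuinely non-routine step is the tensor product of objects (\ref{udon}). By Lemma \ref{lem24} the projection $p\otimes q$ lies in $\fbd$ whenever $p,q\in\fbd$, so $\lmk(\rho,p)\omt(\sigma,q)\rmk$ is again an object of $\tilde\caM$ with projection in $\fbd$, and by part 1 of Lemma \ref{mikan} the canonical $\ti --{\rho\otimes\sigma}{p\otimes q}$ is a half-braiding with the asymptotic constraint. I would then show that the map $C_{(\rho,p)}\hat\otimes C_{(\sigma,q)}$ satisfies conditions (i), (ii) and (iv) of Definition \ref{saru}; by the uniqueness half of Lemma \ref{panda} this identifies it with $\ti --{\rho\otimes\sigma}{p\otimes q}$, so it is automatically a half-braiding satisfying also (iii) and (v), and (\ref{udon}) is well defined. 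Conditions (i) and (ii) are the standard center computation; the new point, condition (iv), follows by substituting $C_{(\rho,p)}=\ti --\rho p$ and $C_{(\sigma,q)}=\ti --\sigma q$ into the tensor formula and applying Lemma \ref{neko}, exactly as in the verification of (iv) inside Lemma \ref{panda}. The tensor of morphisms stays in $\zam$ by Lemma \ref{lem24} and part 2 of Lemma \ref{zabon}, and the associativity and unit constraints, together with the pentagon and triangle identities, are inherited from $\tilde\caM$.

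For the braiding (\ref{pasta}) I would check that $\epsilon_{\zam}\lmk\rpc\rho p,\rpc\sigma q\rmk:=C_{(\sigma,q)}\lmk(\rho,p)\rmk$ is a $\zam$-morphism and is natural: the morphism property and naturality in both variables are precisely the defining relations of Definition \ref{hato} and condition (ii) of Definition \ref{saru}, while the two hexagon identities are condition (iii) of Definition \ref{saru} combined with the formula for $\hat\otimes$ in (\ref{udon}); invertibility is condition (v). To see that $\zam$ is a $C^*$-tensor category rather than merely a tensor category, I would produce direct sums and subobjects: take the direct sum $(\gamma,r)$ and the subobject in $\tilde\caM$ as in Proposition \ref{mtcat} (built from Lemma \ref{lem61} and Lemma \ref{lem70}); since the isometries and projections involved are $\fbd$-valued, $r$ again lies in $\fbd$, so by Lemma \ref{zabon} and Lemma \ref{panda} the canonical half-braiding $\ti --\gamma r$ exists and the connecting isometries, being $\fbd$-valued, are $\zam$-morphisms. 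The tensor unit $\lmk(\pbd,\unit),\ti --\pbd\unit\rmk$ is simple because its endomorphism space is $(\pbd,\pbd)_l\cap\fbd=\bbC\unit$.

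I expect the main obstacle to be the asymptotic constraint (iv): concretely, verifying it for the tensor-product half-braiding, and, more subtly, confirming that the direct sums and subobjects remain inside $\ozam$, i.e. that the projections produced stay in $\fbd$ so that Lemma \ref{zabon} and the uniqueness of Lemma \ref{panda} apply. Everything else parallels the classical center; the essential role of Assumption \ref{raichi} is to guarantee, through Lemma \ref{zabon}, that no object with a projection outside $\fbd$ can carry a half-braiding satisfying the asymptotic constraint, which is exactly what forces all half-braidings to be canonical and all morphism spaces to be the expected $\fbd$-valued ones.
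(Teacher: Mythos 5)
Your proposal is correct, and its skeleton coincides with the paper's proof: both use Lemma \ref{zabon} to force every projection into $\fbd$, to identify every admissible half-braiding with the canonical $\ti{-}{-}{\rho}{p}$ via Lemma \ref{panda}, and to identify morphism spaces with $\mm\lmk(\rho,p),(\sigma,q)\rmk\cap\fbd$; the $C^*$-structure, associators, unit constraints, direct sums and subobjects are then inherited from Proposition \ref{mtcat} (with $r=upu^*+vqv^*\in\fbd$), and the braiding axioms are read off from Definitions \ref{saru} and \ref{hato} exactly as you describe. The one genuine divergence is how the tensor-product half-braiding is identified. You propose to verify (i), (ii), (iv) of Definition \ref{saru} for $C_{(\rho,p)}\hat\otimes C_{(\sigma,q)}$ and then invoke the uniqueness clause of Lemma \ref{panda}; the paper instead proves the \emph{exact} identity
\begin{align*}
\lmk C_{(\rho,p)}\hat\otimes C_{(\sigma,q)}\rmk\lmk(\gamma,r)\rmk
=\lmk p\otimes q\otimes r\rmk\,\hb{\gamma}{\rho\otimes\sigma}\,\lmk r\otimes p\otimes q\rmk
=\ti{\gamma}{r}{\rho\otimes\sigma}{p\otimes q},
\end{align*}
using Lemma \ref{inu} to absorb the middle projection $r$ and Lemma \ref{lem43} for multiplicativity of $\hb{\gamma}{-}$, after which (iii), (iv), (v) come for free from Lemma \ref{mikan} \emph{1} since $p\otimes q\in\fbd$. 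The paper's ordering buys something concrete: the asymptotic constraint (iv) never has to be checked for the composite. Your ordering puts the burden exactly there, and the phrase ``substituting and applying Lemma \ref{neko}'' understates that step: after substitution one has a product of two asymptotic expressions in which $\Tbd{\rho}{\lzr}{\unit}\Tbd{\gamma}{\lzr}{\unit}(\cdot)$ and $\Tbd{\gamma}{\lzr}{\unit}\Tbd{\rho}{\lzr}{\unit}(\cdot)$ are interleaved, and reassembling it into the canonical form for $\rho\otimes\sigma$ with transporter $\Vrl{\rho}{\lm 3}\,\Tbd{\rho}{\lzr}{\unit}\lmk\Vrl{\sigma}{\lm 3}\rmk$ requires, beyond Lemma \ref{neko}, the intertwining property of $\hb{\gamma}{\rho}$ (Lemma \ref{lem41}) and uniformity of all estimates over $\gamma\in\Obun{\lm 1(t)}$ --- which is precisely the computation the exact identity packages. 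So your route does close, but only after this extra work; doing the algebra first, as the paper does, is the cleaner order. Two cosmetic points: for the direct-sum and subobject half-braidings the existence citation should be Lemma \ref{mikan} \emph{1} (not Lemma \ref{panda}, which only yields (i), (ii), (iv)), and simplicity of the unit is, as you and the paper both note, just irreducibility of $\pbd$.
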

\begin{proof}
Clearly, each $\mzam\lmk \lmk (\rho,p), C_{(\rho,p)}\rmk, \lmk (\sigma,q), C_{(\sigma,q)}\rmk\rmk $
is a Banach space by the definition.
Because of Lemma \ref{zabon} {\it 1}, {\it 2}, the element in (i) gives the identity morphism and the composition 
and $*$, inherited from
$\tilde\caM$ are well-defined.
For example, for $T\in \mzam\lmk\rpc\rho p, \rpc \sigma q \rmk$,
we have 
\begin{align}
\begin{split}
T^*\in\mm\lmk (\sigma, q), (\rho,p)\rmk \cap\fbd
=\mzam\lmk \rpc \sigma q, \rpc \rho p\rmk.
\end{split}
\end{align}
With these operations, $\zam$ becomes a $C^*$-category.

Next we consider the
tensor product of objects
$\rpc\rho p, \rpc \sigma q\in \zam$.
By Lemma \ref{panda}, we have
$C_{(\rho, p)}=\ti --\rho p$, $C_{(\sigma, q)}=\ti --\sigma q$.
Therefore,  any $\lmk  \gamma, r\rmk \in \om$, we have 
\begin{align}\label{somen}
\begin{split}
&\lmk C_{(\rho,p)}\hat\otimes C_{(\sigma,q)}\rmk \lmk (\gamma, r)\rmk
:=\lmk \id_{(\rho,p)}\omt \ti \gamma r \sigma q \rmk\cdot
\lmk \ti \gamma r \rho p \omt \id_{(\sigma,q)}\rmk\\
&=\lmk p\otimes q\otimes r \rmk
\lmk \unit_\rho \otimes \hb \gamma\sigma \rmk
\lmk \unit_\rho\otimes r\otimes\unit_\sigma \rmk
\lmk \hb \gamma \rho \otimes \unit_\sigma\rmk
\lmk r\otimes p \otimes q \rmk\\
&=\lmk p\otimes q\otimes r \rmk
\lmk \unit_\rho \otimes \hb \gamma\sigma \rmk
\lmk \hb \gamma \rho \otimes \unit_\sigma\rmk
\lmk r\otimes p \otimes q \rmk\\
&=\lmk p\otimes q\otimes r \rmk
 \hb \gamma{\rho\otimes \sigma} 
\lmk r\otimes p \otimes q \rmk
\\
&=\tilde\iota\lmk(\gamma, r), (\rho,p)\omt (\sigma,q)\rmk
\end{split}
\end{align}
using Lemma \ref{inu} and Lemma \ref{lem43}.
Hence we have $ C_{(\rho,p)}\hat\otimes C_{(\sigma,q)}=\tilde\iota\lmk(-, -), (\rho,p)\omt (\sigma,q)\rmk$.
By Lemma \ref{zabon} {\it 1.}, 
$p,q$ belong to $\fbd$ and so is $p\otimes q$.
Therefore,  $ C_{(\rho,p)}\hat\otimes C_{(\sigma,q)}=\tilde\iota\lmk(-, -), (\rho,p)\omt (\sigma,q)\rmk$ is
a half-braiding for $(\rho,p)\omt (\sigma,q)$ with asymptotic constraint by
Lemma \ref{mikan}.
Hence (\ref{udon}) defines an element in $\ozam$.
The tensor of morphisms inherited from that 
of $\tilde\caM$ gives a well-defined tensor product on $\mzam$:
For any $R\in \mzam\lmk \rpc \rho p, \rpc {\rho'}{p'}\rmk$ and
$S\in \mzam\lmk \rpc \sigma q\rpc{\sigma'}{q'}\rmk$ belong to $\fbd$,
we have 
\begin{align}
\begin{split}
R\ozt S:=R\omt S\in \mm\lmk (\rho,p)\omt (\sigma,q),  (\rho',p')\omt (\sigma',q')\rmk\cap \fbd.
\end{split}
\end{align}
From Lemma \ref{zabon} we have
\begin{align}
\begin{split}
&R\ozt S\\
&\in 
\mzam\lmk \rpc\rho p\ozt\rpc\sigma q,\rpc{\rho'} {p'}\ozt \rpc{\sigma'} {q'}
\rmk.
\end{split}
\end{align}
Because the morphisms and identity morphisms and tensors of morphisms
in $\zam$
are just the ones inherited from $\tilde\caM$,
properties corresponding to (\ref{aka}), (\ref{ao}) automatically holds.

The associators are also inherited from $\tilde\caM$:
\begin{align}
\begin{split}
a_{\rpc\rho p, \rpc \sigma q, \rpc \gamma r}:=
\alpha_{(\rho,p), (\sigma,q),(\gamma,r)}
=p\otimes q\otimes r.
\end{split}
\end{align}
From Lemma \ref{zabon}, this belongs to
the morphism in $\zam$ between
$$\lmk \rpc\rho p\ozt \rpc \sigma q\rmk\ozt \rpc \gamma r$$ and
$$\rpc\rho p\ozt \lmk \rpc \sigma q\ozt \rpc \gamma r\rmk.$$
This is an isomorphism, and its naturality and the
pentagon equation immediately follows from those in $\tilde \caM$ case.

Note that 
\begin{align}
\begin{split}
&\lmk (\pbd, \unit), \ti --\pbd \unit\rmk\ozt \rpc\rho p
=\rpc\rho p\\
&=\rpc\rho p \ozt \lmk (\pbd, \unit), \ti --\pbd \unit\rmk,
\end{split}
\end{align}
because of (\ref{somen}) and Lemma \ref{panda}.
Hence $\hat l_{\rpc \rho p} :=l_{(\rho,p)}=p$, 
$\hat r_{\rpc \rho p} :=r_{(\rho,p)}=p$
defines left/right constraint
using Lemma \ref{zabon} again.
Note that their naturality, as well as the triangle equality follow from
that in $\tilde\caM$.
The tensor unit is simple because $\pbd$ is irreducible.

Next we construct a direct sum for
$\rpc\rho p,\rpc \sigma q\in \ozam$.
Let $(\gamma, r)\in \om$ be the direct sum of
$(\rho,p), (\sigma, q)\in \om$, 
$up\in \mm\lmk (\rho, p), (\gamma,r)\rmk$,
$vq\in \mm\lmk (\sigma, q), (\gamma,r)\rmk$
in the proof of Proposition \ref{mtcat}.
By Lemma \ref{zabon}, we have $p,q\in\fbd$ hence
we have $r=upu^*+vqv^*\in\fbd$.
Therefore, by Lemma \ref{mikan},
$\lmk(\gamma,r),\ti --\gamma r\rmk$
is an object of $\zam$.
Because $up, vq\in \fbd$ as well,
they are morphisms in $\zam$
 from $\rpc\rho p$, $\rpc\sigma q$ to
$\lmk(\gamma,r),\ti --\gamma r\rmk$ respectively,
from Lemma \ref{zabon}.
They are isometries such that
$upu^*+vqv^*=r=\id_{(\gamma,r)}$.

To obtain a subobject of
$\rpc \rho p\in \ozam$
and a projection 
\begin{align}
\begin{split}
q\in\mzam\lmk \rpc \rho p,\rpc \rho p\rmk
=\mm\lmk(\rho,p), (\rho,p)\rmk\cap\fbd,
\end{split}
\end{align}
recall from the proof of Proposition \ref{mtcat}
that $(\rho, q)\in \om$ and
$q\in \mm\lmk(\rho,q), (\rho,p)\rmk$.
Because $q\in \fbd$,
we have
\begin{align}
\begin{split}
&\lmk (\rho, q), \ti --\rho q\rmk \in \ozam,\\
&q\in \mm\lmk (\rho,q), (\rho, p)\rmk\cap\fbd
=\mzam\lmk \lmk (\rho, q), \ti --\rho q\rmk, \rpc\rho p\rmk,
\end{split}
\end{align}
by Lemma \ref{mikan}, Lemma \ref{zabon}.
Because $q^*q=q$ and $qq^*=q$, this gives a subobject.
Hence $\zam$ is a $C^*$-tensor category.

Now we show that (\ref{pasta}) defines
a braiding on $\zam$.
For any $\rpc \rho p, \rpc \sigma q\in \ozam$,
Lemma \ref{zabon} implies $p,q\in\fbd$.
By Lemma \ref{panda} the definition (\ref{pasta})
means
\begin{align}\label{risu}
\begin{split}
&\epsilon_{\zam}\lmk \rpc\rho p, \rpc \sigma q\rmk:=
C_{(\sigma,q)}\lmk (\rho,p)\rmk=\ti\rho p \sigma q\\
&\in \mm\lmk
(\rho,p)\omt(\sigma,q),(\sigma,q)\omt (\rho,p)
\rmk\cap 
\fbd
\\
&=\mzam
\lmk
\rpc\rho p\ozt\rpc \sigma q,
\rpc\sigma q\ozt \rpc \rho p
\rmk.
\end{split}
\end{align}
Here we used $p,q\in\fbd$ and Lemma \ref{zabon}.
From this and the definition of the identity morphism
and (v) of Definition \ref{saru} for $C_{(\sigma,q)}$, it is straightforward
to see that this gives an isomorphism.
From the definition of objects and morphisms of $\zam$,
the naturality for the braiding hold.
 The definition of the tensor product (\ref{somen})
 and definition of the half-braiding Definition \ref{saru} (iii)
 guarantee the hexagon axiom.
 
 Hence $\zam$ is a braided $C^*$-tensor category.
\end{proof}
Finally, we get the following bulk-edge correspondence.
\begin{thm}\label{mainthm}
Let $\llz\in\pc$.
Consider the setting in subsection \ref{setting2}, and 
assume Assumption \ref{assum3}, Assumption \ref{assum80}, Assumption \ref{assum80l},
Assumption \ref{assump7}, Assumption \ref{aichi}, 
Assumption \ref{oo} and 
Assumption \ref{raichi}.
(Note that Assumption \ref{wakayama} automatically holds.)
Then the functor $\Psi: \Cabkl\to \zam$ given by
\begin{align}
\begin{split}
&\Psi(\rho):=
\lmk
\lmk F_0^{\llz}(\rho),\unit\rmk,
\ti --{F_0^{\llz}(\rho)}\unit
\rmk,\quad \rho\in \OUbkl,\\
&\Psi(R):=F_1^{\llz}(R),\quad
R\in (\rho,\sigma),\quad \rho,\sigma\in \OUbkl=\Obkl
\end{split}
\end{align}
is an equivalence of braided $C^*$-tensor categories.
Here $F^{\llz}$ is the functor from Theorem \ref{lem74}
$F^{\llz} : \Cabkl\to \Cabul$.
\end{thm}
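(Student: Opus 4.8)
The plan is to realise $\Psi$ as the composite of the braided monoidal equivalence $F^{\llz}\colon\Cabkl\to\Cabul$ furnished by Corollary \ref{kanazawa} with the evident embedding of $\Cabul$ into the asymptotic Drinfeld center $\zam$ that sends an object $\rho$ to $\lmk(\rho,\unit),\ti--\rho\unit\rmk$, and to argue that Assumption \ref{raichi} is exactly what collapses $\zam$ back onto the image of this embedding. First I would check that $\Psi$ takes values in $\zam$: for $\rho\in\OUbkl=\Obkl$, Theorem \ref{lem74} gives $F_0^{\llz}(\rho)\in\Obul$, so $(F_0^{\llz}(\rho),\unit)\in\om$, and because $\unit\in\fbd$, Lemma \ref{mikan}(1) guarantees that $\ti--{F_0^{\llz}(\rho)}\unit$ is a half-braiding with the asymptotic constraint, whence $\Psi(\rho)\in\ozam$. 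On morphisms, Lemma \ref{lem17}(iii) places $F_1^{\llz}(R)=\tau(R)$ in $(F_0^{\llz}(\rho),F_0^{\llz}(\sigma))_U\subset\fbd$, and since both projections are $\unit$, Lemma \ref{zabon}(2) identifies $\mzam(\Psi(\rho),\Psi(\sigma))$ with $(F_0^{\llz}(\rho),F_0^{\llz}(\sigma))_l\cap\fbd=(F_0^{\llz}(\rho),F_0^{\llz}(\sigma))_U$; so $\Psi(R)$ is a legitimate morphism of $\zam$, while functoriality and linearity come for free from $\tau$.

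Next I would install the tensor and braided structure by transporting that of $F^{\llz}$. Since $F^{\llz}$ is a tensor functor (Theorem \ref{lem74}) one has $F_0^{\llz}(\rho)\otimes F_0^{\llz}(\sigma)=F_0^{\llz}(\rho\otimes\sigma)$, while specialising the computation (\ref{somen}) to $p=q=\unit$ yields $\ti--{F_0^{\llz}(\rho)}\unit\hat\otimes\ti--{F_0^{\llz}(\sigma)}\unit=\ti--{F_0^{\llz}(\rho\otimes\sigma)}\unit$; together these give $\Psi(\rho)\ozt\Psi(\sigma)=\Psi(\rho\otimes\sigma)$, so the unit constraints of $F^{\llz}$ serve as the tensor structure of $\Psi$ and the coherence axioms are inherited. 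For the braiding I would use that, by definition (\ref{pasta}) and Lemma \ref{panda}, $\epsilon_{\zam}(\Psi(\rho),\Psi(\sigma))=\ti{F_0^{\llz}(\rho)}\unit{F_0^{\llz}(\sigma)}\unit=\hb{F_0^{\llz}(\rho)}{F_0^{\llz}(\sigma)}$ (the flanking projections being $\unit$), which by Lemma \ref{lem47} is precisely $F_1^{\llz}\lmk\epsilon_-^{(\lz)}(\rho:\sigma)\rmk=\Psi\lmk\epsilon_-^{(\lz)}(\rho:\sigma)\rmk$. This shows $\Psi$ is a braided unitary tensor functor.

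To finish I would establish that $\Psi$ is fully faithful and essentially surjective. Full faithfulness follows at once from the identification $\mzam(\Psi(\rho),\Psi(\sigma))=(F_0^{\llz}(\rho),F_0^{\llz}(\sigma))_U$ of the first step and the full faithfulness of $F^{\llz}$ (Theorem \ref{lem74}), since on morphisms $\Psi$ is just $R\mapsto\tau(R)=F_1^{\llz}(R)$. For essential surjectivity I would take an arbitrary $\lmk(\rho,p),C_{(\rho,p)}\rmk\in\ozam$; Lemma \ref{zabon}(1) forces $p\in\fbd$, so $p\in(\rho,\rho)_U$, and the subobject lemma for $\Cabul$ (Lemma \ref{lem70}) produces $\gamma\in\Obul$ and an isometry $v\in(\gamma,\rho)_U\subset\fbd$ with $vv^*=p$. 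Then $v$ is a unitary morphism of $\tilde\caM$ from $(\gamma,\unit)$ to $(\rho,p)$, and as $v\in\fbd$, Lemma \ref{zabon}(2) (together with $C_{(\rho,p)}=\ti--\rho p$ from Lemma \ref{panda}) upgrades it to an isomorphism in $\zam$ between $\lmk(\gamma,\unit),\ti--\gamma\unit\rmk$ (an object of $\zam$ by Lemma \ref{mikan}(1)) and $\lmk(\rho,p),C_{(\rho,p)}\rmk$. Since $F^{\llz}$ is essentially surjective (Lemma \ref{lem86}, via Assumption \ref{oo}), the object $\hat\gamma:=G(\gamma)\in\Obkl$ has $F_0^{\llz}(\hat\gamma)=\gamma$, so $\Psi(\hat\gamma)=\lmk(\gamma,\unit),\ti--\gamma\unit\rmk$ is isomorphic to the given object.

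The main obstacle is concentrated entirely in Lemma \ref{zabon}, which is where Assumption \ref{raichi} is consumed: it both forces every object of $\zam$ to carry an $\fbd$-valued projection and identifies the center's morphism spaces with the $\fbd$-valued morphisms of $\tilde\caM$. Without this collapse the center would be genuinely larger than the image of $\Psi$---objects built from projections lying in $\bl\setminus\fbd$, namely the undetectable ``confined'' sectors of Remark \ref{gapmean}, would obstruct both full faithfulness and essential surjectivity. Granting Lemma \ref{zabon}, everything else is a bookkeeping assembly of the already-established equivalence $F^{\llz}$ with the half-braiding identifications of Lemmas \ref{panda}, \ref{mikan} and \ref{lem47}, and no further hard analysis is required.
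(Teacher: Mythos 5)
Your proposal is correct and follows essentially the same route as the paper's own proof: well-definedness of $\Psi$ via Lemma \ref{lem17}, Lemma \ref{mikan} and Lemma \ref{zabon}, the tensor and braided structure via the computation (\ref{somen}) and Lemma \ref{lem47}, full faithfulness from Lemma \ref{zabon} together with Theorem \ref{lem74}, and essential surjectivity via Lemma \ref{zabon}, Lemma \ref{lem70} and Lemma \ref{lem86}. Your framing of $\Psi$ as $F^{\llz}$ composed with the embedding $\rho\mapsto\lmk(\rho,\unit),\ti--\rho\unit\rmk$, with Assumption \ref{raichi} (consumed entirely in Lemma \ref{zabon}) collapsing $\zam$ onto the image of that embedding, is only a repackaging of the same argument, not a genuinely different one.
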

\begin{proof}
First we check $\Psi$ is indeed a functor.
Clearly we have $\lmk F_0^{\llz}(\rho),\unit\rmk\in \om$ by Lemma \ref{lem17}
and
Lemma \ref{mikan} then implies $\Psi(\rho)\in \ozam$.
For any $\rho,\sigma\in \OUbkl$ and $R\in (\rho,\sigma)$,
we have $\Psi(R)\in \lmk F_0^{\llz}(\rho), F_0^{\llz}(\sigma)\rmk_U$
by Lemma \ref{lem17}, and by Lemma \ref{zabon},
$\lmk F_0^{\llz}(\rho), F_0^{\llz}(\sigma)\rmk_U$ 
is equal to $\mzam\lmk \Psi(\rho), \Psi(\sigma)\rmk$.
Hence $\Psi(R)\in\mzam\lmk \Psi(\rho), \Psi(\sigma)\rmk$.
For any object $\rho\in \OUbkl$, 
we have $\Psi\lmk\unit_\rho\rmk=\Psi(\unit)=\unit=\id_{\Psi(\rho)}$.
For any $R\in (\rho,\sigma)$, $S\in (\sigma,\gamma)$ for
$\rho,\sigma,\gamma\in \OUbkl$,
we have 
$\Psi(SR)=\Psi(S)\Psi(R)$.
Hence $\Psi$ is a functor. Clearly it is linear on the morphisms.

Next we see that $\Psi$ is a tensor functor.
First, 
\begin{align}
\begin{split}
\varphi_0:=\unit :
\lmk \lmk {\pbd}, \unit\rmk,
\ti --{\pbd}\unit\rmk
\to \Psi(\pbk)=
\lmk \lmk {\pbd}, \unit\rmk,
\ti --{\pbd}\unit\rmk
\end{split}
\end{align}\change{
is a morphism by Lemma \ref{zabon} {\it 2.}, hence an isomorphism.}
Note that
\begin{align}
\begin{split}
&\Psi(\rho)\ozt\Psi(\sigma)\\
&=\lmk
\lmk  F_0^{\llz}(\rho), \unit\rmk\omt
\lmk  F_0^{\llz}(\sigma), \unit\rmk,
\tilde\iota \lmk (-,-), \lmk  F_0^{\llz}(\rho), \unit\rmk\omt
\lmk  F_0^{\llz}(\sigma), \unit\rmk\rmk
\rmk\\
&=\lmk \lmk
\lmk F_0^{\llz}(\rho)\otimes F_0^{\llz}(\sigma),\unit  \rmk,
\tilde\iota \lmk (-,-), \lmk  F_0^{\llz}(\rho)\otimes
F_0^{\llz}(\sigma), \unit\rmk\rmk
\rmk\rmk
\\
&=\lmk
\lmk F_0^{\llz}(\rho\otimes\sigma),\unit \rmk,
\tilde\iota \lmk (--), \lmk F_0^{\llz}(\rho\otimes\sigma),\unit \rmk\rmk
\rmk=\Psi(\rho\otimes\sigma),
\end{split}
\end{align}
by Lemma \ref{lem17} and the calculation (\ref{somen}).
Hence 
$\varphi_2(\rho,\sigma):=\unit : \Psi(\rho)\ozt\Psi(\sigma)
\to \Psi(\rho\otimes\sigma)
$
is an isomorphism and it is natural
because the functor $F^{\llz}$ is \change{a tensor functor}:
\begin{align}
\begin{split}
\Psi(R\otimes S)=F_1^{\llz}(R\otimes S)=
F_1^{\llz}(R)\otimes F_1^{\llz}( S)
=\Psi(R)\ozt \Psi(S),
\end{split}
\end{align}
for $R\in (\rho,\rho')$, $S\in (\sigma,\sigma')$, $\rho,\rho',\sigma,\sigma'\in \Obkl$.
For any $\rho,\sigma,\gamma\in \Obkl$,
we have
\begin{align}
\begin{split}
\Psi\lmk \unit \rmk
&\varphi_2\lmk \rho\otimes \sigma, \gamma\rmk
\lmk \varphi_2\lmk\rho,\sigma \rmk\ozt \id_{\Psi(\gamma)}\rmk
=\unit
=\varphi_2\lmk \rho,\sigma\otimes\gamma\rmk
\lmk \id_{\Psi(\rho)}\ozt \varphi_2\lmk \sigma,\gamma\rmk\rmk
a_{\Psi(\rho), \Psi(\sigma),\Psi(\gamma)},\\
&\Psi(\unit)\varphi_2\lmk \pbk, \rho\rmk\lmk \varphi_0\ozt \id_{\Psi(\rho)}\rmk=\unit
=\hat l_{\Psi(\rho)}\\
&\Psi(\unit)\varphi_2\lmk\rho,  \pbk\rmk\lmk \id_{\Psi(\rho)} \ozt\varphi_0\rmk=\unit
=\hat r_{\Psi(\rho)}\\
\end{split}
\end{align}
Hence $\Psi$ is a tensor functor.

The tensor functor $\Psi$ is braided
because
\begin{align}
\begin{split}
&\Psi\lmk \epsilon_-^{(\lz)}(\rho:\sigma)\rmk=
F_1^{\llz}\lmk\epsilon_-^{(\lz)}(\rho:\sigma)\rmk
=\hb{F_0^{\llz}(\rho)}{F_0^{\llz}(\sigma)}\\
&=\ti{F_0^{\llz}(\rho)} \unit {F_0^{\llz}(\sigma)} \unit
=
\epsilon_{\zam}\lmk \Psi(\rho), \Psi(\sigma)\rmk,
\end{split}
\end{align}by
Lemma \ref{lem47}.

By Lemma \ref{lem17}, Lemma \ref{zabon} and the definition of $\Psi$,
$\Psi$ is fully faithful.
It is also essentially surjective.
To see this, let $\rpc \rho p \in \ozam$.
By Lemma \ref{zabon}, we have $p\in (\rho,\rho)_U$.
Therefore, by Lemma \ref{lem70},
there exists a $\gamma\in \Obul$ and an isometry
$v\in (\gamma,\rho)_U$ such that
$vv^*=p$.
For this $\gamma\in \Obul$, we get $G(\gamma)\in \Obkl$ by
Lemma \ref{lem86}.
Then we have
\begin{align}
\begin{split}
\Psi\lmk G(\gamma)\rmk
=\lmk (\gamma,\unit), \ti--\gamma \unit \rmk.
\end{split}
\end{align}
Note from Lemma \ref{zabon} that
\begin{align}
\begin{split}
v\in \mm\lmk (\gamma,\unit), (\rho,p)\rmk\cap \fbd
=\mzam\lmk \Psi\lmk G(\gamma)\rmk, \rpc \rho p\rmk.
\end{split}
\end{align}
Because
\begin{align}
\begin{split}
vv^*=p=\id_{\rpc\rho p},\quad v^*v=\unit,
\end{split}
\end{align}
it is an isomorphism.
Hence $\Psi$ is essentially surjective.
Hence $\Psi$ is an equivalence of braided $C^*$-tensor category.
\end{proof}

\section{Assumption \ref{assump7}}\label{gappedboundary}
In this section, we give some sufficient condition
for Assumption \ref{assump7} to hold.
We introduce a cone version of TQO (see \cite{bravyi2010topological} for original TQO):
\begin{defn}\label{def104}
Let $\Phi$ be a positive uniformly bounded finite range interaction on $\abk$
with a unique frustration-free ground state $\omega$.
Let $r\in\bbN$ be the interaction length of $\Phi$.
For each finite subset $S$ of $\bbZ^2$, we denote by
$S_r$ the set of all points $\bm x\in \bbZ^2$
whose Euclidean distance from $S$ are less than or equal to $r$.
We denote by $P_{S_r}$ the spectral projection of
$H_\Phi(S_r):=\sum_{X\subset S_r}\Phi(X)$ corresponding to eigenvalue $0$.
We say that $\Phi$ satisfies {\it strict cone TQO}
if for any $\theta_1, \theta_2$ with $0<\theta_1<\frac\pi 2<\theta_2<\pi$
and $-\infty<a\le b<\infty$,
there exists a number $m_{\theta_1,\theta_2,a, b}\in\bbN$
satisfying the following condition :
for any $c\ge r$ and $l\ge m_{\theta_1,\theta_2,a, b}$,
setting 
\begin{align}
\begin{split}
&S:=\lmk \lr{a}{\frac{\theta_2}2}\setminus\lr{b}{\frac{\theta_1}2}\rmk\cap \lmk \bbZ\times [c,c+l]\rmk,
\end{split}
\end{align}
we have
\begin{align}
\begin{split}
P_{S_r} AP_{S_r}=\omega(A)P_{S_r},\quad A\in \caA_S.
\end{split}
\end{align}
\end{defn}
Here is some sufficient condition for the strict cone TQO.
\begin{lem}
Let $\Phi$ be a positive uniformly bounded finite range  commuting  interaction
on $\abk$
with a unique frustration-free ground state $\omega$.
Let $r$ be the interaction length of $\Phi$.
Suppose that this $\omega$ satisfies the following property:
for any $\theta_1,\theta_2\in\bbR$ with
$0<\theta_1<\frac\pi 2<\theta_2<\pi$
and $-\infty<a<b<\infty$,
there exists an $m_{\theta_1,\theta_2,a,b}\in \bbN$
such that
\begin{align}
\begin{split}
s\lmk \left.\omega\right\vert_{\caA_{S_r}}\rmk
=P_{S_r},
\end{split}
\end{align} holds
for any $l\ge m_{\theta_1,\theta_2,a,b}$ and $c\ge r$.
Here we set $S:={\lmk \lr{a}{\frac{\theta_2}2}\setminus\lr{b}
{\frac{\theta_1}2}\rmk\cap\lmk\bbZ\times [c,c+l]\rmk}$, and 
$S_r$, $P_{S_r}$ are defined by the formulas in  Definition \ref{def104} 
with respect to our $l$ and $c$. 
The notation
$s\lmk \left.\omega\right\vert_{\caA_{S_r}}\rmk $ indicates the support projection of
the state $\left.\omega\right\vert_{\caA_{S_r}}$.
Then $\Phi$ satisfies the cone strict TQO.
\end{lem}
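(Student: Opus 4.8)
The plan is to deduce the operator identity $P_{S_r} A P_{S_r} = \omega(A) P_{S_r}$ of Definition \ref{def104} (for $A \in \caA_S$) by first proving it at the level of the GNS vector of $\omega$ and then using the support hypothesis to promote it to an identity of operators on $\bigotimes_{z \in S_r}\bbC^d$. Write $(\caH_\omega, \pi, \Omega)$ for the GNS triple of $\omega$. Since each $\Phi(X) \ge 0$ and $\omega$ is frustration-free, $\omega(\Phi(X)) = 0$ for every $X$; let $p_X$ be the spectral projection of $\Phi(X)$ onto the eigenvalue $0$. These $p_X$ pairwise commute (commuting interaction), satisfy $\pi(p_X)\Omega = \Omega$, and $P_{S_r} = \prod_{X \subset S_r} p_X$.

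First I would record two structural facts. (a) The set $\{\phi : \phi(\Phi(X)) = 0 \text{ for all } X\}$ is a face of the state space (by positivity of the $\Phi(X)$) and is a singleton by uniqueness, so $\omega$ is an extreme point, hence pure; thus $\pi$ is irreducible. (b) For finite $\Lambda \uparrow \bbZ^2$ the projections $Q_\Lambda := \prod_{X \subset \Lambda} \pi(p_X)$ form a decreasing net converging strongly to the projection $P_\infty$ onto $\{\psi : \pi(\Phi(X))\psi = 0 \text{ for all } X\}$; every unit vector in $\Ran P_\infty$ induces a frustration-free ground state, hence $\omega$, so by irreducibility $\Ran P_\infty = \bbC\Omega$ and $P_\infty = |\Omega\rangle\langle\Omega|$.

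The heart of the argument is the vector identity $\pi(P_{S_r})\pi(A)\Omega = \omega(A)\Omega$. Set $A_0 := A - \omega(A)\unit \in \caA_S$. The buffer property of the $r$-fattening is that every term $X$ with $X \cap S \neq \emptyset$ has $\diam X \le r$ and hence lies in $S_r$; so for all $\Lambda \supset S_r$ the family $\{X \subset \Lambda : X \cap S \neq \emptyset\}$ equals the fixed finite family $\{X : X \cap S \neq \emptyset\}$. Splitting $Q_\Lambda$ into the factors with $X \cap S \neq \emptyset$ and those with $X \cap S = \emptyset$, and noting that the latter commute with $A_0$ and fix $\Omega$, I obtain $Q_\Lambda \pi(A_0)\Omega = \big(\prod_{X \cap S \neq \emptyset}\pi(p_X)\big)\pi(A_0)\Omega =: v$, a vector independent of $\Lambda$. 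Letting $\Lambda \uparrow \bbZ^2$ and using (b) gives $v = P_\infty \pi(A_0)\Omega = \omega(A_0)\Omega = 0$. Since every $X$ meeting $S$ satisfies $X \subset S_r$, we get $\pi(P_{S_r})\pi(A_0)\Omega = \big(\prod_{X\subset S_r,\,X\cap S=\emptyset}\pi(p_X)\big)v = 0$, i.e. $\pi(P_{S_r})\pi(A)\Omega = \omega(A)\Omega$. This stabilization step, together with the identification $P_\infty = |\Omega\rangle\langle\Omega|$ coming from purity of $\omega$, is the part I expect to be the main obstacle: the commuting, finite-range structure and the $r$-buffer are exactly what make the finite product of projectors touching $S$ coincide with the tail of the net $Q_\Lambda$.

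Finally I would invoke the hypothesis to pass from the vector identity to the operator identity. Let $\rho$ be the density matrix of $\omega|_{\caA_{S_r}}$; frustration-freeness already gives $\Ran\rho \subset \Ran P_{S_r}$ and $\omega(P_{S_r}) = 1$ (so $\pi(P_{S_r})\Omega = \Omega$), while the assumed equality $s(\omega|_{\caA_{S_r}}) = P_{S_r}$ upgrades this to $\Ran\rho = \Ran P_{S_r}$, i.e. $\omega(\,\cdot\,) = \Tr(\rho\,\cdot\,)$ is faithful on $P_{S_r}\caA_{S_r}P_{S_r}$. Put $M := P_{S_r}AP_{S_r} - \omega(A)P_{S_r} \in P_{S_r}\caA_{S_r}P_{S_r}$. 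Using $\pi(P_{S_r})\Omega = \Omega$ and the vector identity, $\pi(M)\Omega = \pi(P_{S_r})\pi(A)\Omega - \omega(A)\Omega = 0$, whence $\omega(M^*M) = \|\pi(M)\Omega\|^2 = 0$. Faithfulness of $\omega$ on $P_{S_r}\caA_{S_r}P_{S_r}$ then forces $M = 0$, which is precisely the strict cone TQO identity.
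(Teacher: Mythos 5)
Your proposal is correct, and it shares the paper's two-step skeleton — first the vector identity $\pi(P_{S_r})\pi(A)\Omega=\omega(A)\Omega$, then promotion to the operator identity using the support hypothesis — but both steps are implemented by genuinely different means. For the vector identity, the paper introduces kernel projections $Q_\Gamma$ for arbitrary, possibly infinite regions $\Gamma$ as strong limits, verifies $Q_\Gamma\in\pi(\caA_\Gamma)''$, proves the identity $Q_{S_r}Q_{S^c}=Q_{\bbZ^2}$ by a two-sided inequality, and then commutes $Q_{S^c}$ past $\pi(A)$. You never leave finite volume: setting $A_0:=A-\omega(A)\unit$, you split $Q_\Lambda$ into the factors $\pi(p_X)$ with $X\cap S\neq\emptyset$ (which, for nonzero $\Phi(X)$, are $\Lambda$-independent and contained in $S_r$ by the $r$-buffer) and those with $X\cap S=\emptyset$ (which commute with $\pi(A_0)$ and fix $\Omega$), so that $Q_\Lambda\pi(A_0)\Omega$ stabilizes and converges to $Q_{\bbZ^2}\pi(A_0)\Omega=0$ (your $P_\infty$). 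This avoids the infinite-region von Neumann algebra bookkeeping, at the price of not isolating the reusable identity $Q_{S_r}Q_{S^c}=Q_{\bbZ^2}$; both routes rest on the same pillars, namely commuting kernel projections and the rank-one identification of $Q_{\bbZ^2}$ from uniqueness and purity, which you, unlike the paper, justify explicitly via the face argument. For the upgrade, the paper uses the factorization $\caH\cong\caH_{S_r}\otimes\caK$ and a Schmidt decomposition of $\Omega$, identifying the Schmidt vectors as a CONS of $P_{S_r}\caH_{S_r}$ and testing the eigenvalue equation against them; your argument is purely algebraic: $M:=P_{S_r}AP_{S_r}-\omega(A)P_{S_r}$ lies in the corner $P_{S_r}\caA_{S_r}P_{S_r}$, satisfies $\omega(M^*M)=\Vert\pi(M)\Omega\Vert^2=0$ by the vector identity, and the hypothesis $s\lmk\omega\vert_{\caA_{S_r}}\rmk=P_{S_r}$ makes $\omega$ faithful on that corner, forcing $M=0$. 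Your version of this last step is shorter and cleaner, while the paper's Schmidt argument makes the role of the support condition more concrete; either way the hypotheses enter at exactly the same points, so the two proofs are equally valid.
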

\begin{proof}
Let $(\caH,\pi, \Omega)$ be a GNS triple of $\omega$.
For any $\Gamma\subset\bbZ^2$, 
let $Q_\Gamma$ be an orthogonal projection onto
$\cap_{X\subset \Gamma}\ker\pi\lmk \Phi(X)\rmk$.
Note that $Q_S=\pi\lmk \Proj[H_{\Phi}(S)=0]\rmk\in\pi\lmk\caA_S\rmk$,
for finite $S\subset \bbZ^2$.
Here $\Proj[H_{\Phi}(S)=0]$ denotes the spectral
projection of $H_{\Phi}(S)$ corresponding to eigenvalue $0$.

For any $\Gamma\subset\bbZ^2$
and finite subsets $\Gamma_n\Subset\Gamma$ with $\Gamma_n\uparrow \Gamma$,
$Q_{\Gamma_n}$ is decreasing, hence there is a orthogonal projection
$\tilde Q$ such that
$Q_{\Gamma_n}\to \tilde Q$, $\sigma$-strongly.
In particular, we have $\tilde Q\in\pi(\caA_{\Gamma})''$.
Because $Q_\Gamma\le Q_{\Gamma_n}$ for any $n$,
we have $Q_\Gamma\le \tilde Q$. 
On the other hand, for any finite $X\subset \Gamma$, we have
\begin{align}
\pi(\Phi(X))\tilde Q=\pi(\Phi(X))Q_{\Gamma_n}\tilde Q=0
\end{align}
eventually.
Therefore, we have $\tilde Q\le Q_\Gamma$.
Hence we conclude $\tilde Q=Q_\Gamma$.
In particular, we have $Q_\Gamma\in\pi(\caA_{\Gamma})''$.

\change{Because our interaction is commuting,
for any finite $S_1,S_2\subset\bbZ^2$, $H_{\Phi}(S_1)$ and $H_{\Phi}(S_2)$ commute.
Therefore, we have
$[Q_{S_1}, Q_{S_2}]=0$.}
Therefore, for any $\Gamma_1,\Gamma_2\subset\bbZ^2$,
we have $[Q_{\Gamma_1}, Q_{\Gamma_2}]=0$.
In particular, $Q_{\Gamma_1}Q_{\Gamma_2}$ is an orthogonal projection.

For any finite $S\subset\bbZ^2$,
we have $Q_{S_r}Q_{S^c}=Q_{\bbZ^2}$.
To see this,
multiply $
Q_{S_r}\ge Q_{\bbZ^2}
$ by
$Q_{S^c}$ from left and right.
Then we obtain
\begin{align}
\begin{split}
Q_{S_r} Q_{S^c}=Q_{S^c} Q_{S_r} Q_{S^c}\ge Q_{S^c}Q_{\bbZ^2}Q_{S^c}=Q_{\bbZ^2}.
\end{split}
\end{align}
On the other hand, because $r$ is the interaction length of 
$\Phi$, for any finite subset $X$ of $\bbZ^2$ with $\Phi(X)\neq 0$,
we have $X\subset S_r$ or $X\subset S^c$.
Therefore, 
for any finite subset $X$ of $\bbZ^2$, we have 
$\Phi(X)Q_{S_r} Q_{S^c}=0$.
It means $Q_{S_r} Q_{S^c}\le Q_{\bbZ^2}$, showing the claim.

Because $\omega$ is the unique frustration-free state, it is pure and 
$\pi$ is irreducible. Therefore,
$Q_{\bbZ^2}$ is an orthogonal projection onto $\bbC\Omega$.
Therefore, for any finite $S\subset\bbZ^2$ and 
 for any $A\in\caA_S$, we have
\begin{align}
\begin{split}
\pi\lmk P_{S_r} A P_{S_r}\rmk\Omega
=Q_{S_r} \pi(A)\Omega
=Q_{S_r} \pi(A) Q_{S^c}\Omega
=Q_{S_r} Q_{S^c}\pi(A) \Omega
=Q_{\bbZ^2}\pi(A)\Omega
=\omega(A)\Omega.
\end{split}
\end{align}
In the third equality, we recalled $Q_{S^c}\in\pi(\caA_{S^c})''$.

Now we would like to remove $\Omega$ from the equality above
for $S$ described in the Lemma.
Let $\theta_1,\theta_2\in\bbR$ with
$0<\theta_1<\frac\pi 2<\theta_2<\pi$
and $-\infty<a<b<\infty$,
and let $m_{\theta_1,\theta_2,a,b}\in \bbN$ be the number
given by the assumption of the Lemma.
Let $l\ge m_{\theta_1,\theta_2,a,b}$ and $c\ge r$
and set 
$S:={\lmk \lr{a}{\frac{\theta_2}2}\setminus\lr{b}{\frac{\theta_1}2}\rmk\cap\lmk\bbZ\times [c,c+l]\rmk}$.
Let $\caK$ be a Hilbert space
and $U:\caH\to\caH_{S_r}\otimes \caK$ a unitary
such that
\begin{align}\label{kumamoto}
\begin{split}
&\Ad U\lmk\pi(A)\rmk=A\otimes\unit,\quad A\in \caA_{S_r},\\
&\Ad U\lmk \pi\lmk\caA_{S_r^c}\rmk''\rmk=\unit\otimes \caB(\caK).
\end{split}
\end{align}
Here we set $\caH_{S_r}:=\bigotimes_{z\in S_r}\bbC^d$.
Let 
\begin{align}
\begin{split}
U\Omega=\sum c_i\xi_i\otimes\eta_i\in \caH_{S_r}\otimes \caK,\quad c_i\neq 0
\end{split}
\end{align}
be a Schmidt decomposition.
By the assumption,
we have 
$
s\lmk \left.\omega\right\vert_{S_r}\rmk
=P_{S_r}$.
Note that
with the Schmidt decomposition above,
the density matrix of $\omega\vert_{S_r}$
is
\begin{align}
\sum_i|c_i|^2\ket{\xi_i}\bra{\xi_i}.
\end{align}
Therefore, $\{\xi_i\}$ is the CONS of $P_{S_r}\caH_{S_r}$.

Now using the Schmidt decomposition and (\ref{kumamoto}), we have
\begin{align}\begin{split}
&\sum c_i P_{S_r}A P_{S_r}\xi_i\otimes\eta_i
=\lmk P_{S_r}A P_{S_r}\otimes\unit\rmk U\Omega
=\Ad U \lmk \pi(P_{S_r} A P_{S_r})\rmk\cdot U\Omega\\
&=U\pi(P_{S_r} AP_{S_r})\Omega
=\omega(A) U\Omega
=\omega(A)\sum c_i\xi_i\otimes\eta_i,
\end{split}
\end{align}
for any $A\in\caA_S$.
Applying $\bra{\eta_i}$ on this, and using $c_i\neq 0$, we obtain
\begin{align}
\begin{split}
P_{S_r}AP_{S_r}\xi_i=\omega(A) \xi_i,
\end{split}
\end{align}
for any $A\in\caA_S$.
Because $\{\xi_i\}$ is a CONS of $P_{S_r}\caH_{S_r}$,
this means that
$P_{S_r}AP_{S_r}=\omega(A)P_{S_r}$,
for any $A\in\caA_S$.
This completes the proof. 
\end{proof}

\begin{prop}\label{hotaru}
Let $\Phi$ be a positive uniformly bounded finite range interaction on $\abk$
with a unique frustration-free state $\obk$, satisfying the strict cone TQO.
Let $r\in\bbN$ be the interaction length of $\Phi$.
Let $\Phi_{\mathrm {bd}}$ be 
a positive uniformly bounded finite range interaction on $\abd$
with a unique frustration-free state $\omega_{\mathrm {bd}}$.
Suppose there exists $r\le R\in\bbN$ such that
\begin{align}
\begin{split}
\Phi(X)=\Phi_{\mathrm {bd}}(X),\quad \text{for all}\quad X\Subset\bbZ\times[R,\infty).
\end{split}
\end{align}
Let $(\hbk,\pbk,\Omega_{\mathrm{bk}})$, $(\hbd,\pbd,\Omega_{\mathrm{bd}})$
be GNS triples of $\obk$, $\obd$ respectively.
Then for any $\ld\in \CUbk$,
$\pbk\vert_{\ld\cap\hu}$ and 
$\pbd\vert_{\ld\cap\hu}$ are quasi-equivalent.
\end{prop}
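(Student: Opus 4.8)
The plan is to show that $\obk$ and $\obd$ literally \emph{coincide} on the part of the cone lying far from the boundary, and then to promote this agreement to quasi-equivalence using purity and a finite-dimensional reduction. Fix $\ld\in\CUbk$ and write $\Gamma:=\ld\cap\hu$. Set $R':=R+r$ and split $\Gamma=\Gamma_{\mathrm{fin}}\sqcup\Gamma_\infty$, where $\Gamma_{\mathrm{fin}}:=\Gamma\cap\lmk\bbZ\times[0,R')\rmk$ and $\Gamma_\infty:=\Gamma\cap\lmk\bbZ\times[R',\infty)\rmk$. Since a cone in $\CUbk$ has non-horizontal boundary rays, its horizontal cross-section at each height is bounded, so $\Gamma_{\mathrm{fin}}$ is a \emph{finite} set; this is the only part of $\Gamma$ where the two interactions are allowed to differ. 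The decisive step will be to prove
\begin{align*}
\obk(A)=\obd(A),\qquad A\in\caA_{\Gamma_\infty}.
\end{align*}

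For this, I would first check that every finite $F\subset\Gamma_\infty$ is contained in an annular region admissible for strict cone TQO whose $r$-fattening stays above height $R$. Concretely, taking $\theta_2$ close to $\pi$ and $\theta_1$ small, the apex $a$ just below $F$ and the apex $b$ far to the right of $F$, and a horizontal strip $\bbZ\times[c,c+l]$ with $c\ge R+r$ and $l$ large, one arranges
\begin{align*}
F\subset S:=\lmk \lr{a}{\tfrac{\theta_2}2}\setminus\lr{b}{\tfrac{\theta_1}2}\rmk\cap\lmk\bbZ\times[c,c+l]\rmk,\qquad S_r\subset\bbZ\times[R,\infty).
\end{align*}
Strict cone TQO for $\Phi$ (Definition \ref{def104}) then gives $P_{S_r}AP_{S_r}=\obk(A)P_{S_r}$ for all $A\in\caA_S$. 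The crucial point is that $P_{S_r}$ is the ground-state projection of $H_\Phi(S_r)$, and since $S_r\subset\bbZ\times[R,\infty)$ the hypothesis $\Phi=\Phi_{\mathrm{bd}}$ there forces $H_\Phi(S_r)=H_{\Phi_{\mathrm{bd}}}(S_r)$, so $P_{S_r}\in\caA_{S_r}$ is common to both interactions. Because $\obd$ is frustration-free for $\Phi_{\mathrm{bd}}$, it annihilates $H_{\Phi_{\mathrm{bd}}}(S_r)$ and hence $\obd(P_{S_r})=1$. Evaluating the algebraic identity $P_{S_r}AP_{S_r}=\obk(A)P_{S_r}$ in the state $\obd$, and using $\obd(P_{S_r})=1$ with Cauchy--Schwarz (so that $\obd(A)=\obd(P_{S_r}AP_{S_r})$), yields $\obd(A)=\obk(A)\,\obd(P_{S_r})=\obk(A)$ for $A\in\caA_F$. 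As $F$ ranges over finite subsets of $\Gamma_\infty$, the agreement on $\caA_{\Gamma_\infty}$ follows by density.

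It then remains to pass from equality of states to quasi-equivalence of representations. Both $\obk$ and $\obd$ are pure (being unique frustration-free ground states), so for the commuting decompositions of $\abk$ into $\caA_{\Gamma_\infty}$ and $\caA_{\Gamma_\infty^c}$, and of $\abd$ into $\caA_{\Gamma_\infty}$ and $\caA_{\hu\setminus\Gamma_\infty}$, the restriction $\pbk\vert_{\caA_{\Gamma_\infty}}$ is quasi-equivalent to the GNS representation of $\obk\vert_{\caA_{\Gamma_\infty}}$, and likewise for $\obd$ (for a factor state, the restriction of its GNS representation to a subsystem algebra is quasi-equivalent to the GNS representation of the restricted state). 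Since $\obk\vert_{\caA_{\Gamma_\infty}}=\obd\vert_{\caA_{\Gamma_\infty}}$, these GNS representations agree, whence $\pbk\vert_{\caA_{\Gamma_\infty}}$ and $\pbd\vert_{\caA_{\Gamma_\infty}}$ are quasi-equivalent. Finally $\caA_\Gamma=\caA_{\Gamma_{\mathrm{fin}}}\otimes\caA_{\Gamma_\infty}$ with $\caA_{\Gamma_{\mathrm{fin}}}$ finite-dimensional; adjoining this finite full matrix factor does not change the quasi-equivalence class (the same bookkeeping as in Lemma \ref{daidai}, as $\pi(\caA_\Gamma)''=\pi(\caA_{\Gamma_{\mathrm{fin}}})''\,\bar\otimes\,\pi(\caA_{\Gamma_\infty})''$ with the first tensorand a type $\mathrm{I}_N$ factor). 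Hence $\pbk\vert_{\caA_\Gamma}$ and $\pbd\vert_{\caA_\Gamma}$ are quasi-equivalent, as required.

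The main obstacle—really the conceptual heart—is the second paragraph: recognizing that strict cone TQO for the \emph{bulk} interaction alone, with \emph{no} TQO assumed for $\Phi_{\mathrm{bd}}$, already forces $\obk=\obd$ far from the boundary. The mechanism is that the TQO relation is an algebraic identity among elements of $\caA_{S_r}$ carrying the \emph{bulk} expectation value as its scalar, and testing it against $\obd$—which sees the identical local Hamiltonian and is frustration-free—pins $\obd(A)$ to $\obk(A)$. The subsidiary technical point needing care is the geometric claim that admissible TQO annuli (inner cone pushed away, fattened region kept above height $R$) exhaust all finite subsets of $\Gamma_\infty$; everything downstream of the equality of states is standard operator-algebraic reasoning.
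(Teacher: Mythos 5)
Your proof is correct, and its heart coincides with the paper's: both arguments rest on the observation that strict cone TQO for the \emph{bulk} interaction alone pins $\obd$ to $\obk$ far from the boundary, because $S_r\subset\bbZ\times[R,\infty)$ forces $H_\Phi(S_r)=H_{\Phi_{\mathrm{bd}}}(S_r)$, frustration-freeness of $\obd$ gives $\obd(P_{S_r})=1$, and Cauchy--Schwarz turns the TQO identity $P_{S_r}AP_{S_r}=\obk(A)P_{S_r}$ into $\obd(A)=\obk(A)$. Where you diverge is the bookkeeping that converts this equality of states into quasi-equivalence of representations. The paper encloses $\ld\cap\hu$ in a single wedge $\lr{a}{\frac{\theta_2}2}\setminus\lr{b}{\frac{\theta_1}2}$, so that the TQO regions are exactly the strips appearing in Definition \ref{def104} (with $c=2R$, whence $S_r$ stays above height $R$ since $r\le R$); it proves equality of the states on the wedge above height $2R$, invokes Lemma 6.2.55 of \cite{BR2} (factor states agreeing outside a finite region are quasi-equivalent), composes three $*$-isomorphisms to intertwine $\pbk$ and $\pbd$ on the wedge algebra, and only at the very end restricts, via normality, to $\caA_{\ld\cap\hu}$. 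You instead split the cone itself at height $R+r$, build a separate admissible wedge for each finite $F\subset\Gamma_\infty$ (taking $a$ strictly to the left of $F$ makes the containment in the big cone immediate), deduce quasi-equivalence on $\caA_{\Gamma_\infty}$ directly from factoriality of $\pbk(\caA_{\Gamma_\infty})''$ and $\pbd(\caA_{\Gamma_\infty})''$ (a consequence of purity), and absorb the finite piece $\Gamma_{\mathrm{fin}}$ by the type $\mathrm{I}_N$ tensor-factor argument. Your route is more self-contained: it replaces the citation of \cite{BR2} by the elementary facts that a nonzero subrepresentation of a factor representation is quasi-equivalent to it, and that tensoring with a full matrix factor does not change quasi-equivalence classes; the cost is the per-$F$ geometric construction, which the paper's single-wedge choice renders unnecessary. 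Since those two standard facts are essentially what the proof of the cited lemma does anyway, the two arguments differ in packaging rather than substance.
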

\begin{proof}
Fix  any $\ld\in \CUbk$.
Then there exist $\theta_1, \theta_2$ with $0<\theta_1<\frac\pi 2<\theta_2<\pi$
and $-\infty<a\le b<\infty$
such that $\ld \cap\hu\subset \lmk \lr{a}{\frac{\theta_2}2}\setminus\lr{b}{\frac{\theta_1}2}\rmk$.
We have
\begin{align}\label{miyazaki}
\begin{split}
\left. \obk\right\vert_{\caA_{\lmk \lr{a}{\frac{\theta_2}2}\setminus\lr{b}{\frac{\theta_1}2}\rmk\cap \lmk \bbZ\times [2R,\infty)\rmk}}
=\left. \obd\right\vert_{\caA_{\lmk \lr{a}{\frac{\theta_2}2}\setminus\lr{b}{\frac{\theta_1}2}\rmk\cap \lmk \bbZ\times [2R,\infty)\rmk}}.
\end{split}
\end{align}
In fact, for any local $A\in \caA_{\lmk \lr{a}{\frac{\theta_2}2}\setminus\lr{b}{\frac{\theta_1}2}\rmk\cap \lmk \bbZ\times [2R,\infty)\rmk}$,
there exists $l\ge m_{\theta_1\theta_1 a b}$ (given by Definition \ref{def104}),
such that
\begin{align}
A\in \caA_{\lmk \lr{a}{\frac{\theta_2}2}\setminus\lr{b}{\frac{\theta_1}2}\rmk\cap \lmk \bbZ\times [2R,2R+l]\rmk}.
\end{align}
Setting $S:={\lmk \lr{a}{\frac{\theta_2}2}\setminus\lr{b}{\frac{\theta_1}2}\rmk\cap \lmk \bbZ\times [2R,2R+l]\rmk}$,
by strict cone TQO, we have
$P_{S_r}A P_{S_r}=\obk(A)P_{S_r}$.
Because 
\begin{align}
\begin{split}
P_{S_r}=\Proj\left[
H_\Phi(S_r)=0
\right]
=\Proj\left[
H_{\Phi_{\mathrm{bd}}}(S_r)=0
\right],
\end{split}
\end{align}
we have 
\begin{align}
\begin{split}
\obd(A)=\obd\lmk P_{S_r} A P_{S_r}\rmk
=\obk(A)\obd\lmk P_{S_r}\rmk
=\obk(A).
\end{split}
\end{align}
As this holds for any local $A\in \caA_{\lmk \lr{a}{\frac{\theta_2}2}\setminus\lr{b}{\frac{\theta_1}2}\rmk\cap \lmk \bbZ\times [2R,\infty)\rmk}$,
we obtain (\ref{miyazaki}).

Set $\Gamma:={\lmk \lr{a}{\frac{\theta_2}2}\setminus\lr{b}{\frac{\theta_1}2}\rmk\cap \lmk \bbZ\times [2R,\infty)\rmk}$.
Then 
\begin{align}
\begin{split}
\caA_{\lmk \lr{a}{\frac{\theta_2}2}\setminus\lr{b}{\frac{\theta_1}2}\rmk}
=\caA_{\lmk \lr{a}{\frac{\theta_2}2}\setminus\lr{b}{\frac{\theta_1}2}\rmk\setminus \Gamma}\otimes\caA_\Gamma,
\end{split}
\end{align}
where $\lmk \lr{a}{\frac{\theta_2}2}\setminus\lr{b}{\frac{\theta_1}2}\rmk\setminus \Gamma$
is finite.
Because (\ref{miyazaki}) holds,
by \cite{BR2} Lemma 6.2.55, states
$\obd\vert_{\lmk \lr{a}{\frac{\theta_2}2}\setminus\lr{b}{\frac{\theta_1}2}\rmk}$
and $\obk\vert_{\lmk \lr{a}{\frac{\theta_2}2}\setminus\lr{b}{\frac{\theta_1}2}\rmk}$
are quasi-equivalent.

Because $\obk$, $\obd$ are pure,
$\pbk\lmk \caA_{\lmk \lr{a}{\frac{\theta_2}2}\setminus\lr{b}{\frac{\theta_1}2}\rmk}\rmk''$ and
$\pbd\lmk \caA_{\lmk \lr{a}{\frac{\theta_2}2}\setminus\lr{b}{\frac{\theta_1}2}\rmk}\rmk''$
are factors.
Hence, with GNS representations 
$\pi_{1}$, $\pi_2$
of states $\left. \obk\right\vert_{\caA_{\lmk \lr{a}{\frac{\theta_2}2}\setminus\lr{b}{\frac{\theta_1}2}\rmk}}$ and
 $\left. \obd\right\vert_{\caA_{\lmk \lr{a}{\frac{\theta_2}2}\setminus\lr{b}{\frac{\theta_1}2}\rmk}}$
 respectively 
 on $\caA_{\lmk \lr{a}{\frac{\theta_2}2}\setminus\lr{b}{\frac{\theta_1}2}\rmk}$,
 there are $*$-isomorphisms
 \begin{align}
 \begin{split}
 \tau_1: \pbk\lmk \caA_{\lmk \lr{a}{\frac{\theta_2}2}\setminus\lr{b}{\frac{\theta_1}2}\rmk}\rmk''\to
\pi_1\lmk \caA_{\lmk \lr{a}{\frac{\theta_2}2}\setminus\lr{b}{\frac{\theta_1}2}\rmk}\rmk'',\\
\tau_2: \pbd\lmk \caA_{\lmk \lr{a}{\frac{\theta_2}2}\setminus\lr{b}{\frac{\theta_1}2}\rmk}\rmk''\to
 \pi_2\lmk \caA_{\lmk \lr{a}{\frac{\theta_2}2}\setminus\lr{b}{\frac{\theta_1}2}\rmk}\rmk'',
 \end{split}
 \end{align}
 such that
 \begin{align}
 \begin{split}
 \tau_1\pbk\vert_{{\caA_{\lmk \lr{a}{\frac{\theta_2}2}\setminus\lr{b}{\frac{\theta_1}2}\rmk}}}=\pi_1,\quad
 \tau_2\pbd\vert_{{\caA_{\lmk \lr{a}{\frac{\theta_2}2}\setminus\lr{b}{\frac{\theta_1}2}\rmk}}}=\pi_2.
 \end{split}
 \end{align}
By the quasi-equivalence of 
$\obd\vert_{\caA_{\lmk \lr{a}{\frac{\theta_2}2}\setminus\lr{b}{\frac{\theta_1}2}\rmk}}$
and $\obk\vert_{\caA_{\lmk \lr{a}{\frac{\theta_2}2}\setminus\lr{b}{\frac{\theta_1}2}\rmk}}$,
there exists a $*$-isomorphism 
\begin{align}
\begin{split}
\tau_0 :
\pi_1\lmk \caA_{\lmk \lr{a}{\frac{\theta_2}2}\setminus\lr{b}{\frac{\theta_1}2}\rmk}\rmk''
\to 
\pi_2\lmk \caA_{\lmk \lr{a}{\frac{\theta_2}2}\setminus\lr{b}{\frac{\theta_1}2}\rmk}\rmk''
\end{split}
\end{align}
such that $\tau_0\pi_1=\pi_2$.

Hence we obtain a $*$-isomorphism
\begin{align}
\begin{split}
\tau_2^{-1}\tau_0\tau_1 : \pbk\lmk \caA_{\lmk \lr{a}{\frac{\theta_2}2}\setminus\lr{b}{\frac{\theta_1}2}\rmk}\rmk''
\to \pbd\lmk \caA_{\lmk \lr{a}{\frac{\theta_2}2}\setminus\lr{b}{\frac{\theta_1}2}\rmk}\rmk''
\end{split}
\end{align}
such that
\begin{align}
\begin{split}
\left.
\tau_2^{-1}\tau_0\tau_1\pbk
\right\vert_{\caA_{\lmk \lr{a}{\frac{\theta_2}2}\setminus\lr{b}{\frac{\theta_1}2}\rmk}}
=\left.\pbd\right\vert_{\caA_{\lmk \lr{a}{\frac{\theta_2}2}\setminus\lr{b}{\frac{\theta_1}2}\rmk}}.
\end{split}
\end{align}
Restricting this to $\caA_{\ld\cap\hu}$,
$\left. \pbd\right\vert_{\caA_{\ld\cap\hu}}$
is quasi-equivalent to
$\left. \pbk\right\vert_{\caA_{\ld\cap\hu}}$.
\end{proof}

{\bf Acknowledgment.}\\
The author is grateful to Dr. Daniel Ranard for asking her this question,
and to Professor Zhenghan Wang, Professor Liang Kong and Professor Yasuyuki Kawahigashi
for helpful discussions.

\appendix
\section{Basic Lemmas}\label{basic}
For $C^*$-algebras $\caA$, $\caB$ and $\varepsilon>0$, 
we write $\caA\subset_\varepsilon\caB$ if
for any $a\in \caA$, there exists a $b\in \caB$ such that $\lV b\rV\le \lV a\rV$
and $\lV a-b\rV\le \varepsilon\lV a\rV$.
For a $C^*$-algebra $\caB$, $\caU(\caB)$ denotes unitaries in $\caB$.
\begin{lem}\label{lem62}
For any $0<\varepsilon<1$, there exists $0<\delta_1(\varepsilon)<1$ satisfying the following:
For any unital $C^*$-algebra $\caA$, projections $p,q$ in $\caA$
such that $\lV q-qp \rV<\delta_1(\varepsilon)$,
there exists a partial isometry $w\in\caA$
such that $ww^*=q$, $w^*w\le p$ and $\lV w-q\rV<\varepsilon$.
\end{lem}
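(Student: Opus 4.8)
The plan is to construct $w$ by a polar-decomposition argument carried out entirely inside the corner $q\caA q$. The starting observation is that the hypothesis forces $qpq$ to be close to $q$: indeed $\lV q-qpq\rV=\lV q(q-pq)\rV\le\lV q-pq\rV=\lV q-qp\rV<\delta$, where $\lV q-pq\rV=\lV(q-qp)^*\rV$ gives the first rewriting. Hence, viewing $b:=qpq$ as a positive element of the unital $C^*$-algebra $q\caA q$ (with unit $q$), its spectrum lies in $(1-\delta,1]$, so for $\delta<1$ the element $b$ is invertible in $q\caA q$ and I may form $b^{-1/2}\in q\caA q$ by continuous functional calculus.

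Next I would set $v:=qp$, note $vv^*=qp\,pq=qpq=b$, and define $w:=b^{-1/2}v=b^{-1/2}qp$. First I would verify $w$ is the desired partial isometry: since $b^{-1/2}\in q\caA q$ we have $qb^{-1/2}=b^{-1/2}$, so $ww^*=b^{-1/2}(vv^*)b^{-1/2}=b^{-1/2}bb^{-1/2}=q$, and therefore $qw=w$ and $ww^*w=qw=w$, i.e.\ $w$ is a partial isometry with range projection $ww^*=q$. For the source projection, $w^*w=v^*b^{-1}v=pq\,b^{-1}qp$ is flanked by copies of $p$, so $p(w^*w)p=w^*w$; since $w^*w$ is a projection, the elementary fact that a projection $e$ with $pep=e$ obeys $pe=ep=e$ yields $w^*w\le p$.

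It then remains to establish the norm bound, which I would get by the triangle inequality $\lV w-q\rV\le\lV w-v\rV+\lV v-q\rV$. The second term is $\lV qp-q\rV=\lV q-qp\rV<\delta$. For the first, $\lV w-v\rV=\lV(b^{-1/2}-q)v\rV\le\lV b^{-1/2}-q\rV\,\lV v\rV$, where $\lV v\rV=\lV qp\rV\le1$ and, by functional calculus on $\mathrm{sp}(b)\subset(1-\delta,1]$, $\lV b^{-1/2}-q\rV=\sup_{\lambda\in\mathrm{sp}(b)}|\lambda^{-1/2}-1|<(1-\delta)^{-1/2}-1$. Combining these gives $\lV w-q\rV<\delta+(1-\delta)^{-1/2}-1=:\eta(\delta)$. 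Since $\eta(\delta)\to0$ as $\delta\to0^+$, it suffices to choose $\delta_1(\varepsilon)\in(0,1)$ so small that $\eta(\delta_1(\varepsilon))<\varepsilon$; this choice depends only on $\varepsilon$, as required.

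I do not anticipate a genuine obstacle here, as the result is a standard perturbation fact for projections. The only point demanding care is the bookkeeping in the corner algebra: one must take $b$, $b^{-1}$, $b^{-1/2}$ and the associated functional calculus with respect to the unit $q$ rather than the ambient unit of $\caA$, and invoke the source-projection inequality correctly. Everything else is routine manipulation once $b=qpq$ has been identified as the relevant invertible positive element.
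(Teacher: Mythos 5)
Your proof is correct, and it is essentially the argument the paper points to: the paper's "proof" is only a citation of Lemma 2.5.2 of Lin's book, whose standard proof is exactly your construction $w=(qpq)^{-1/2}qp$ with the functional calculus carried out in the corner $q\caA q$. The bookkeeping (invertibility of $b=qpq$ from $\lV q-qpq\rV<\delta$, the partial-isometry identities, the estimate $\lV w-q\rV<\delta+(1-\delta)^{-1/2}-1$) is all sound, so nothing is missing.
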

\change{
\begin{proof}
Lemma 2.5.2 of \cite{Lin} and its proof.
\end{proof}}
\begin{lem}\label{lem63}
For any $0<\varepsilon<1$, there exists $0<\delta_2(\varepsilon)<1$ satisfying the following:
For any unital $C^*$-algebra $\caA$, projections $p,q$ in $\caA$
such that $\lV q-p \rV<\delta_2(\varepsilon)$,
there exists a unitary $u\in \caA$ such that
$p=uqu^*$ and $\lV u-\unit\rV<\varepsilon$.
\end{lem}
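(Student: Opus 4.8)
The plan is to give the standard explicit construction of a unitary from the ``difference of close projections'' element, rather than bootstrapping from Lemma \ref{lem62}. First I would set $v := qp + (\unit - q)(\unit - p)$ and record the two algebraic identities $qv = vp = qp$, which follow immediately from $p^2 = p$ and $q^2 = q$. Thus $v$ intertwines $p$ and $q$ in the sense that $qv = vp$; taking adjoints gives $v^* q = p v^*$ as well.

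Next I would estimate how close $v$ is to $\unit$. A direct expansion gives $v - \unit = (2q - \unit)(p - q)$, and since $2q - \unit$ is a self-adjoint unitary (because $(2q-\unit)^2 = \unit$) we obtain $\lV v - \unit\rV = \lV p - q\rV < \delta_2$. Choosing $\delta_2 < 1$ forces $\lV v - \unit\rV < 1$, hence $v$ is invertible and $v^*v$ is a positive invertible element. I would then take the polar decomposition $u := v (v^*v)^{-1/2}$, which is a genuine unitary precisely because $v$ is invertible. From $qv = vp$ and its adjoint $v^*q = pv^*$ one gets $v^*v p = v^* q v = p v^*v$, so $v^*v$ commutes with $p$; by continuous functional calculus $(v^*v)^{-1/2}$ commutes with $p$ as well, and therefore $up = v(v^*v)^{-1/2}p = vp(v^*v)^{-1/2} = qv(v^*v)^{-1/2} = qu$. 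This yields $q = upu^*$, equivalently $p = u^* q u$, so the unitary required by the statement is $w := u^*$, which satisfies $wqw^* = p$ and $\lV w - \unit\rV = \lV u - \unit\rV$.

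The last, and main, step is the quantitative control $\lV u - \unit\rV < \varepsilon$, which is what actually pins down the choice of $\delta_2(\varepsilon)$. Writing $\eta := \delta_2$, from $\lV v - \unit\rV \le \eta$ I would bound $\lV v^*v - \unit\rV \le \lV v^*\rV\,\lV v-\unit\rV + \lV v^* - \unit\rV \le (1+\eta)\eta + \eta = \eta(2+\eta)$, and then, provided $\eta(2+\eta) < 1$ so that the spectrum of $v^*v$ is bounded away from $0$, use continuity of $x \mapsto x^{-1/2}$ near $1$ to conclude $\lV (v^*v)^{-1/2} - \unit\rV \le g(\eta)$ for an explicit $g$ with $g(\eta) \to 0$ as $\eta \to 0$. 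Combining these via $\lV u - \unit\rV \le \lV v\rV\,\lV (v^*v)^{-1/2} - \unit\rV + \lV v - \unit\rV \le (1+\eta)g(\eta) + \eta$, I would simply take $\delta_2(\varepsilon)$ small enough (and with $\delta_2(2+\delta_2)<1$) that this final expression is $< \varepsilon$. The only real subtlety is keeping the functional-calculus estimate uniform over all unital $C^*$-algebras $\caA$, but this is automatic since the bound depends only on the spectrum of $v^*v$, hence only on $\eta$; every remaining manipulation is routine.
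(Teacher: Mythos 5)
Your proof is correct, and it is essentially the same argument as the paper's: the paper gives no argument of its own here, simply citing Lemma 2.5.1 of Lin's book, whose proof is exactly your polar decomposition $u = v(v^*v)^{-1/2}$ of the intertwiner $v = qp + (\unit-q)(\unit-p)$. The only difference is cosmetic: Lin uses the identity $v^*v = vv^* = \unit - (p-q)^2$ to get the sharp bound $\lV u-\unit\rV \le \sqrt{2}\,\lV p-q\rV$, whereas your cruder functional-calculus estimate via $g(\eta)$ is perfectly adequate, since the lemma only asks for the existence of some $\delta_2(\varepsilon)$.
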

\change{
\begin{proof}
Lemma  2.5.1 of \cite{Lin}.
\end{proof}
}
\begin{lem}\label{lem64}
For any $0<\varepsilon<1$, there exists $0<\delta_3(\varepsilon)<1$ satisfying the following:
For any Hilbert space $\caH$, a unital $C^*$-algebra $\caA$ on $\caH$, a projection
$p\in \caB(\caH)$, and a self-adjoint $x\in \caA$
such that $\lV p-x\rV<\delta_3(\varepsilon)$,
there exists a projection $q$ in $\caA$ such that
$\lV p-q\rV<\varepsilon$.
\end{lem}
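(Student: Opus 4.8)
The plan is to build $q$ purely from $x$ by continuous functional calculus, so that $q$ automatically lies in the unital $C^*$-algebra $\caA$, and to use the hypothesis $\lV p-x\rV<\delta_3(\varepsilon)$ only to pin down the spectrum $\sigma(x)$. Writing $\delta:=\delta_3(\varepsilon)$, the first step is to bound how far $x$ is from being idempotent. Since $p$ is a projection, $p^2=p$ and $\lV p\rV\le 1$, so $\lV x\rV\le 1+\delta$ and
\begin{align}
\begin{split}
\lV x^2-x\rV
&\le \lV x^2-p^2\rV+\lV p-x\rV
\le \lV x\rV\lV x-p\rV+\lV x-p\rV\lV p\rV+\lV x-p\rV\\
&\le \lmk 3+\delta\rmk\delta.
\end{split}
\end{align}

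Since $x$ is self-adjoint, $\sigma(x)\subset\bbR$, and by the spectral mapping theorem every $\lambda\in\sigma(x)$ satisfies $|\lambda^2-\lambda|=|\lambda|\,|\lambda-1|\le (3+\delta)\delta$. I would then choose $\delta_3(\varepsilon)$ small enough that $(3+\delta)\delta<\tfrac14$. This forces $\tfrac12\notin\sigma(x)$: indeed if $\lambda\le\tfrac12$ then $|\lambda-1|\ge\tfrac12$, so $|\lambda|\le 2(3+\delta)\delta$, and symmetrically $|\lambda-1|\le 2(3+\delta)\delta$ when $\lambda\ge\tfrac12$ (the cases $\lambda<0$ and $\lambda>1$ are handled the same way, giving even smaller deviations). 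Thus, with $\eta:=2(3+\delta)\delta$,
\begin{align}
\sigma(x)\subset [-\eta,\eta]\cup[1-\eta,1+\eta].
\end{align}

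Now I would let $f$ be the continuous function on $\sigma(x)$ equal to $0$ on the left interval and $1$ on the right interval — continuous precisely because the two intervals are separated — and set $q:=f(x)$. Since $f=\bar f=f^2$ on $\sigma(x)$, the element $q$ is a projection, and since $\caA$ is unital with $x\in\caA$, functional calculus gives $q\in C^*(x,\unit)\subset\caA$. This is the key structural point: although $p$ need not belong to $\caA$, the projection $q$ does, because it is manufactured from $x$ alone. Finally $|f(\lambda)-\lambda|\le\eta$ for all $\lambda\in\sigma(x)$, so $\lV q-x\rV=\lV f(x)-x\rV\le\eta$, and hence
\begin{align}
\lV q-p\rV\le \lV q-x\rV+\lV x-p\rV\le \eta+\delta=2(3+\delta)\delta+\delta.
\end{align}
It therefore suffices to fix $\delta_3(\varepsilon)\in(0,1)$ small enough that simultaneously $(3+\delta)\delta<\tfrac14$ and $2(3+\delta)\delta+\delta<\varepsilon$; both hold for all sufficiently small $\delta$ since the left-hand sides tend to $0$ as $\delta\to 0$.

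The argument involves no serious obstacle: it is the standard ``a self-adjoint near-idempotent is close to a genuine projection'' estimate. The only points requiring care are (a) tracking the constants so that the final bound is genuinely below $\varepsilon$, and (b) observing that the spectral gap at $\tfrac12$ is exactly what makes $f$ continuous on $\sigma(x)$ and hence makes $q=f(x)$ a bona fide projection that sits inside $\caA$ rather than merely in $\caB(\caH)$.
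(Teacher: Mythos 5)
Your proof is correct. The paper's own proof of this lemma is just a citation to Lemma 2.5.4 of \cite{Lin}, and your argument --- bounding $\lV x^2-x\rV$, deducing that $\sigma(x)$ clusters in small neighborhoods of $0$ and $1$ with a gap at $\frac12$, and taking $q=f(x)$ for the indicator-type continuous function $f$, which lies in $\caA$ by functional calculus --- is exactly the standard proof of that cited result, so the two approaches coincide.
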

\change{\begin{proof}
Lemma 2.5.4 of \cite{Lin}.
\end{proof}}
\section{Conditional expectations and approximations}
We frequently use the following Lemma.
\begin{lem}\label{lem12}
Let $\Gamma\subset\bbZ^2$ and $S$ a finite subset of $\Gamma$.
Let $(\caH,\pi)$ be a representation of $\caA_\Gamma$.
Let $\{e_{IJ} : I,J\in \caI\}$ be a system of matrix units of $\caA_S$ with indices $\caI$, and fix one 
$I_0\in\caI$.
Define $\caE_S:\caB(\caH)\to \caB(\caH)$ by
\begin{align}
\caE_S(x):=\sum_{I\in \caI} \pi(e_{II_0}) x\pi(e_{I_0I}),\quad x\in \caB(\caH).
\end{align}
Then $\caE_S\lmk \caB(\caH)\rmk=\pi(\caA_S)'$ and
$\caE_S:\caB(\caH)\to \pi(\caA_S)'$ is a projection of norm $1$.
Furthermore, $\caE^\Gamma_S :=\caE_S\vert_{\pi(\caA_\Gamma)''}
: \pi(\caA_\Gamma)''\to\pi(\caA_{\Gamma\setminus S})''$ is a surjection.
\end{lem}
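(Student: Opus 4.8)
The plan is to verify directly that $\caE_S$ is a norm-one idempotent with range $\pi(\caA_S)'$, and then to identify the relative commutant $\pi(\caA_\Gamma)''\cap\pi(\caA_S)'$ with $\pi(\caA_{\Gamma\setminus S})''$. First I would record that, since $S$ is finite, $\caA_S$ is a full matrix algebra, so a system of matrix units $\{e_{IJ}\}$ exists and satisfies $e_{IJ}e_{KL}=\delta_{JK}e_{IL}$, $e_{IJ}^*=e_{JI}$ and $\sum_I e_{II}=\unit$. Assuming $\pi$ is unital (as for the GNS representations in play), the images $\pi(e_{IJ})$ obey the same relations, in particular $\sum_I\pi(e_{II})=\unit$. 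Because $\caE_S$ is a finite sum of two-sided multiplications by fixed bounded operators, it is well-defined, bounded and $\sigma$-weakly continuous on all of $\caB(\caH)$.

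Next I would check the two defining properties. A short computation with the matrix-unit relations gives, for every $x$ and every $K,L$,
\[
\pi(e_{KL})\caE_S(x)=\pi(e_{KI_0})\,x\,\pi(e_{I_0L})=\caE_S(x)\pi(e_{KL}),
\]
so $\caE_S(x)$ commutes with every $\pi(e_{KL})$ and hence lies in $\pi(\caA_S)'$. Conversely, if $y\in\pi(\caA_S)'$ then $y$ commutes with each $\pi(e_{I_0I})$, so $\caE_S(y)=\sum_I\pi(e_{II_0})\pi(e_{I_0I})\,y=\sum_I\pi(e_{II})\,y=y$. Thus $\caE_S$ is an idempotent with range exactly $\pi(\caA_S)'$. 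It is unital since $\caE_S(\unit)=\sum_I\pi(e_{II})=\unit$, and positive since for $x=a^*a$ each summand equals $(a\pi(e_{I_0I}))^*(a\pi(e_{I_0I}))\ge 0$; a unital positive linear map between unital $C^*$-algebras has norm $\|\caE_S(\unit)\|=1$.

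For the final assertion I would first note that $\caA_{\Gamma\setminus S}$ commutes with $\caA_S$, whence $\pi(\caA_{\Gamma\setminus S})''\subset\pi(\caA_S)'$, and that $\pi(e_{IJ})\in\pi(\caA_S)''\subset\pi(\caA_\Gamma)''$, so $\caE_S$ maps $\pi(\caA_\Gamma)''$ into $\pi(\caA_\Gamma)''\cap\pi(\caA_S)'$. The heart of the matter is the identity
\[
\pi(\caA_\Gamma)''\cap\pi(\caA_S)'=\pi(\caA_{\Gamma\setminus S})''.
\]
Here I would use the tensor splitting available because $\pi(\caA_S)''$ is a type $\mathrm I$ factor: writing $\caH\cong\bbC^{n}\otimes\caK$ with $\pi(\caA_S)''=M_n\otimes\bbC\unit$ and $\pi(\caA_S)'=\bbC\unit\otimes\caB(\caK)$, the subalgebra $\pi(\caA_{\Gamma\setminus S})''\subset\pi(\caA_S)'$ takes the form $\bbC\unit\otimes\caN$. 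Since $\pi(\caA_\Gamma)''=\pi(\caA_S)''\vee\pi(\caA_{\Gamma\setminus S})''=M_n\,\bar\otimes\,\caN$, the relative commutant of $M_n\otimes\bbC\unit$ inside it is exactly $\bbC\unit\otimes\caN=\pi(\caA_{\Gamma\setminus S})''$, giving the displayed identity. Finally, because $\pi(\caA_{\Gamma\setminus S})''\subset\pi(\caA_S)'$ is fixed pointwise by $\caE_S$, the restriction $\caE_S^\Gamma$ is the identity on $\pi(\caA_{\Gamma\setminus S})''$ and hence surjects onto it.

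I expect the main obstacle to be this last identity: one must be careful that the splitting of $\caB(\caH)$ induced by the type $\mathrm I$ factor $\pi(\caA_S)''$ is compatible with the generation $\pi(\caA_\Gamma)''=\pi(\caA_S)''\vee\pi(\caA_{\Gamma\setminus S})''$, i.e. that $\pi(\caA_{\Gamma\setminus S})''$ really lands inside $\bbC\unit\otimes\caB(\caK)$ and that the generated von Neumann algebra is the full tensor product $M_n\,\bar\otimes\,\caN$ rather than something smaller. The standard tensor-splitting lemma for type $\mathrm I$ subfactors settles this, but it deserves explicit citation. The rest of the argument is routine bookkeeping with the matrix units.
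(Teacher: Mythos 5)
Your proof is correct, but for the key inclusion $\caE_S\lmk\pi(\caA_\Gamma)''\rmk\subset\pi(\caA_{\Gamma\setminus S})''$ you take a genuinely different route from the paper. The paper argues computationally: any local $A\in\caA_{\Gamma,\mathrm{loc}}$ is written as $A=\sum_{IJ}e_{IJ}a_{IJ}$ with $a_{IJ}\in\caA_{\Gamma\setminus S}$, a matrix-unit calculation gives $\caE_S(\pi(A))=\pi(a_{I_0I_0})\in\pi(\caA_{\Gamma\setminus S})''$, and then the $\sigma$-weak continuity of $\caE_S$ (a finite sum of two-sided multiplications) together with the $\sigma$-weak density of $\pi(\caA_{\Gamma,\mathrm{loc}})$ in $\pi(\caA_\Gamma)''$ finishes the inclusion. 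You instead prove the stronger relative-commutant identity $\pi(\caA_\Gamma)''\cap\pi(\caA_S)'=\pi(\caA_{\Gamma\setminus S})''$ via the type $\mathrm{I}$ tensor splitting $\caH\cong\bbC^n\otimes\caK$, and then observe that $\caE_S$ maps $\pi(\caA_\Gamma)''$ into this relative commutant. Both proofs share the same surjectivity mechanism (elements of $\pi(\caA_{\Gamma\setminus S})''\subset\pi(\caA_S)'$ are fixed by $\caE_S$). What your route buys is a sharper, reusable statement — you identify the full relative commutant, not merely the range of $\caE_S$ — at the cost of invoking the splitting of $\caB(\caH)$ over the finite type $\mathrm{I}$ factor $\pi(\caA_S)''$ and verifying that $\pi(\caA_\Gamma)''=M_n\,\bar\otimes\,\caN$; the obstacle you flag there is genuine but harmless, since $M_n$ is finite dimensional and the algebraic span $\sum_{ij}e_{ij}\otimes\caN$ is already $\sigma$-weakly closed. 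The paper's route is more elementary and self-contained, needing only normality of $\caE_S$ and density of local elements. Also note that both arguments (yours explicitly, the paper's implicitly) require $\pi$ to be unital, so that $\sum_I\pi(e_{II})=\unit$; this is satisfied in all uses in the paper.
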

\begin{proof}
That $\caE_S:\caB(\caH)\to \pi(\caA_S)'$ is a projection or norm $1$ is easy to check.
Because $\pi(\caA_{\Gamma\setminus S})''\subset \pi(\caA_S)'$,
we have $\pi(\caA_{\Gamma\setminus S})''=\caE_S\lmk \pi(\caA_{\Gamma\setminus S})''\rmk\subset \caE_S
\lmk\pi(\caA_{\Gamma})''\rmk$.
Any $A\in \alocg{\Gamma}$ can be written as
$A=\sum_{IJ}e_{IJ}a_{IJ}$ with some $a_{IJ}\in \caA_{\Gamma\setminus S}$.
We then have $\caE_S\lmk \pi(A)\rmk=\sum_{IJ}\caE_S(\pi(e_{IJ}))\pi(a_{IJ})=\pi(a_{I_0I_0})\in \pi\lmk \caA_{\Gamma\setminus S}\rmk''$.
Because $\caE_S$ is $\sigma$-weak-continuous, this implies 
$\caE_S\lmk \pi(\caA_{\Gamma})''\rmk\subset \pi(\caA_{\Gamma\setminus S})''$.
\end{proof}

\begin{lem}\label{lem14}
Let $\Lambda_1,\Lambda_2,\Gamma$ be non-empty subsets of $\bbZ^2$
satisfying $\Lambda_1,\Lambda_2\subset \Gamma$.
Let $(\caH,\pi)$ be a representation of $\caA_\Gamma$.
Suppose that $x\in\pi\lmk \caA_{(\Lambda_1)^c\cap\Gamma}\rmk'$,
$y\in \pi\lmk \caA_{\Lambda_2}\rmk''$,
$\varepsilon>0$ satisfy $\lV x-y\rV\le\varepsilon$.
Then there exists $z\in \pi\lmk \caA_{(\Lambda_1)^c\cap\Gamma}\rmk'\cap \pi\lmk \caA_{\Lambda_2}\rmk''$
satisfying $\lV x-z\rV\le\varepsilon$.
 If $y$ is self-adjoint, then $z$ can be taken to be self-adjoint.
\end{lem}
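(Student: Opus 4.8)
The plan is to reduce the problem to a region sitting inside $\Lambda_2$ and then produce $z$ as a $\sigma$-weak limit of conditional expectations of $y$, using Lemma \ref{lem12}. Write $B:=(\Lambda_1)^c\cap\Gamma$ and set $R:=\Lambda_2\cap B=\Lambda_2\cap(\Lambda_1)^c\cap\Gamma\subseteq\Lambda_2$. First I would record the harmless algebraic reduction: since $x\in\pi\lmk\caA_{B}\rmk'$ and $R\subseteq B$, we have $x\in\pi\lmk\caA_{R}\rmk'$, so $x$ is fixed by any conditional expectation onto $\pi\lmk\caA_{R_n}\rmk'$. Moreover, for any $z\in\pi\lmk\caA_{\Lambda_2}\rmk''$ the conditions $z\in\pi\lmk\caA_{B}\rmk'$ and $z\in\pi\lmk\caA_{R}\rmk'$ are equivalent: $B$ is the disjoint union of $R$ and $B\setminus\Lambda_2$, the sites $B\setminus\Lambda_2$ are disjoint from $\Lambda_2$ so $\pi\lmk\caA_{\Lambda_2}\rmk''$ already commutes with $\pi\lmk\caA_{B\setminus\Lambda_2}\rmk$, and $\caA_{B}$ is generated by $\caA_R$ and $\caA_{B\setminus\Lambda_2}$. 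Hence it suffices to construct $z\in\pi\lmk\caA_{R}\rmk'\cap\pi\lmk\caA_{\Lambda_2}\rmk''$ with $\lV x-z\rV\le\varepsilon$ (self-adjoint if $y$ is), and the claimed intersection membership will follow.

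Next I would choose finite subsets $R_n\Subset R$ with $R_n\uparrow R$ and apply Lemma \ref{lem12} to the representation $\lmk\caH,\pi\vert_{\caA_{\Lambda_2}}\rmk$ of $\caA_{\Lambda_2}$ with $S=R_n$ (legitimate since $R_n\subseteq R\subseteq\Lambda_2$). This yields norm-one projections $\caE_{R_n}:\caB(\caH)\to\pi\lmk\caA_{R_n}\rmk'$ satisfying $\caE_{R_n}\lmk\pi\lmk\caA_{\Lambda_2}\rmk''\rmk\subseteq\pi\lmk\caA_{\Lambda_2\setminus R_n}\rmk''\subseteq\pi\lmk\caA_{\Lambda_2}\rmk''$. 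Because $x\in\pi\lmk\caA_{R}\rmk'\subseteq\pi\lmk\caA_{R_n}\rmk'$ we have $\caE_{R_n}(x)=x$, so $\lV x-\caE_{R_n}(y)\rV=\lV\caE_{R_n}(x-y)\rV\le\lV x-y\rV\le\varepsilon$; also $\caE_{R_n}(y)\in\pi\lmk\caA_{\Lambda_2}\rmk''$, and it is self-adjoint whenever $y$ is, since $\caE_{R_n}$ is $*$-preserving.

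I would then extract the limit. By Banach--Alaoglu the $\sigma$-weakly compact ball of radius $\lV y\rV$ in $\caB(\caH)$ contains a $\sigma$-weak limit point $z$ of $\{\caE_{R_n}(y)\}$. As $\pi\lmk\caA_{\Lambda_2}\rmk''$ is $\sigma$-weakly closed, $z\in\pi\lmk\caA_{\Lambda_2}\rmk''$. For fixed $m$ and any $a\in\caA_{R_m}$, every $\caE_{R_n}(y)$ with $n\ge m$ lies in $\pi\lmk\caA_{R_n}\rmk'\subseteq\pi\lmk\caA_{R_m}\rmk'$ and hence commutes with $\pi(a)$; using separate $\sigma$-weak continuity of multiplication, $z$ commutes with $\pi(a)$, and by density $z\in\pi\lmk\caA_{R}\rmk'$. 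By $\sigma$-weak lower semicontinuity of the norm, $\lV x-z\rV\le\liminf_n\lV x-\caE_{R_n}(y)\rV\le\varepsilon$, and $z$ is self-adjoint if $y$ is, since the adjoint is $\sigma$-weakly continuous. Combined with the reduction, $z\in\pi\lmk\caA_{B}\rmk'\cap\pi\lmk\caA_{\Lambda_2}\rmk''$, as required.

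The only genuinely non-formal point — and what I expect to be the main obstacle — is that $R$ is typically infinite (this is exactly the situation in the application inside Lemma \ref{lem66}, where $R=(\Lambda_{r1})^c\cap\hu$), so no single conditional expectation of Lemma \ref{lem12} is available and one must pass to a $\sigma$-weak limit. The limiting step is where care is needed: the membership in $\pi\lmk\caA_{\Lambda_2}\rmk''$ and in $\pi\lmk\caA_{R}\rmk'$ are preserved because each is a closed/continuity condition, while the norm estimate survives only because the bound $\varepsilon$ holds uniformly in $n$ and the norm is lower semicontinuous for the $\sigma$-weak topology.
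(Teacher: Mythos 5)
Your proof is correct and follows essentially the same route as the paper's: both exhaust $(\Lambda_1)^c\cap\Gamma\cap\Lambda_2$ by an increasing sequence of finite sets, apply the conditional expectations of Lemma \ref{lem12} to $y$, extract a $\sigma$-weak limit point, and keep the bound $\varepsilon$ by lower semicontinuity of the norm, with self-adjointness preserved throughout. The only cosmetic difference is that you carry out the commutant reduction (replacing $\pi(\caA_{(\Lambda_1)^c\cap\Gamma})'$ by $\pi(\caA_{\Lambda_2\cap(\Lambda_1)^c\cap\Gamma})'$ using that $\pi(\caA_{\Lambda_2})''$ commutes with operators localized outside $\Lambda_2$) at the outset, whereas the paper derives the same inclusion at the very end.
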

\begin{proof}
If $\lm 1^c\cap\Gamma\cap\lm 2=\emptyset$,
then $z=y$ satisfies the condition.
Suppose $\lm 1^c\cap\Gamma\cap\lm 2\neq\emptyset$.

Fix an increasing sequence $\{S_n\}_{n=1}^\infty$ of finite subsets of $\Lambda_1^c\cap \Gamma\cap\lm 2$
such that $S_n\uparrow \Lambda_1^c\cap \Gamma\cap\lm 2$.
For each $S_n$, there exists a projection of norm $1$, 
$\caE_{S_n}:\caB(\caH)\to \pi\lmk\caA_{S_n}\rmk'$, by Lemma \ref{lem12}.

By the definition of $\caE_{S_n}$, we have
\begin{align}
z_n:=\caE_{S_n}(y)=\sum_{I\in \caI} \pi(e_{II_0}) y\pi(e_{I_0I})\in \pi\lmk\caA_{S_n}\rmk',
\end{align}
with a system of matrix units $\{e_{IJ}\}$ of $\caA_{S_n}$.
Because $y\in \pi\lmk \caA_{\Lambda_2}\rmk''$ and 
\begin{align}
\pi(e_{II_0})\in \pi\lmk \caA_{S_n}\rmk\subset  \pi\lmk \caA_{\Lambda_1^c\cap \Gamma\cap\lm 2}\rmk
\subset \pi\lmk \caA_{\Lambda_2}\rmk,
\end{align}
we get $z_n\in  \pi\lmk \caA_{\Lambda_2}\rmk''\cap \pi\lmk\caA_{S_n}\rmk'$.
Because
$x\in\pi\lmk \caA_{(\Lambda_1)^c\cap\Gamma}\rmk'\subset \pi\lmk\caA_{S_n}\rmk'$,
we have $\caE_{S_n}(x)=x$.
Because $\caE_{S_n}$ is of norm $1$, we have
\begin{align}
\lV x-z_n\rV
=\lV \caE_{S_n}\lmk x-y\rmk\rV
\le \lV x-y\rV\le \varepsilon,\quad
\lV z_n\rV\le \lV y\rV.
\end{align}

Because $\{z_n\}$ is a bounded sequence in $\pi\lmk \caA_{\Lambda_2}\rmk''$,
there exists a subnet $\{z_{n_\alpha}\}$
converging to some $z\in \pi\lmk \caA_{\Lambda_2}\rmk''$ 
with respect to the $\sigma$-weak topology.
Because each $z_n$ belongs to $\pi\lmk\caA_{S_n}\rmk'$ and 
$S_n\uparrow \Lambda_1^c\cap \Gamma\cap\lm 2$, we have $z\in \pi\lmk\caA_{\Lambda_1^c\cap \Gamma\cap\lm 2}\rmk'$.
Hence we get 
\begin{align}
z\in \pi\lmk \caA_{\Lambda_2}\rmk''\cap \pi\lmk\caA_{\Lambda_1^c\cap \Gamma\cap\lm 2}\rmk'
=\pi\lmk \caA_{\Lambda_2}\rmk''\cap \pi\lmk\caA_{\Lambda_1^c\cap \Gamma\cap\lm 2}\rmk'
\cap\pi\lmk\caA_{\lm 2^c}\rmk'
\subset \pi\lmk \caA_{\Lambda_2}\rmk''\cap \pi\lmk\caA_{\Lambda_1^c\cap \Gamma}\rmk'.
\end{align}

For any $\xi,\eta\in\caH$, we have
\begin{align}
\lv
\braket{\eta}{(x-z)\xi}
\rv
=\lim_{\alpha}\lv
\braket{\eta}{(x-z_{n_\alpha})\xi}
\rv\le \varepsilon \lV \eta\rV\lV\xi\rV.
\end{align}
Hence we conclude $\lV x-z\rV\le\varepsilon$.
By the construction, $z$ is self-adjoint if $y$ is self-adjoint.
\end{proof}

\section{$\Obu$ in $\OrUbd$}\label{Orcen}
In this section, we show that $\Obu$ can be regarded as a center of $\OrUbd$
in a certain sense.
Recall in Lemma \ref{lem40}  we obtained
$
\hb{\rho_{\lz}} - : \OrUbdl\to \caU(\hbd)
$
for each
${\rho_{\lz}}\in \Obul$,
satisfying the naturality Lemma \ref{lem42}.
For each $u\in \caU(\caA_{\lz^c\cap\lzr})$, 
 $\sigma_u:=\pbd\Ad u$ defines an element in $\OrUbdl$.
 For any $\lm {r2}\in\Crbd$ with $\lz\leftarrow_r\lm {r2}$, $\lm {r2}\subset \lzr$, $t\ge 0$,
 we have $\pbd(u^*)\in  \VUbd{\sigma_u}{\lm{r2}(t)}$.
 Therefore, we have $\hb{\rho_{\lz}}{\sigma_u}
 =\pbd(u)\Tbd{\rho_{\lz}}{\lzr}\unit\lmk \pbd(u^*)\rmk=\pbd(u){\rho_{\lz}}(u^*)=\unit$.
We further note the following asymptotic property
of this map.
\begin{lem}\label{lem84}
Consider the setting in subsection \ref{setting2} and assume Assumption \ref{assum80}, \ref{assum80l}.
Let $(\lz,\lzr)\in \pc$ and ${\rho_{\lz}}\in \Obul$.
Then for any $\lm {r2}\in\Crbd$ with $\lz\leftarrow_r\lm {r2}$, $\lm {r2}\subset \lzr$
we have
\begin{align}
\begin{split}
\lim_{t\to\infty}
\sup_{\sigma\in O^{rU}_{\mathop{\mathrm{bd},\lm {r2}(t)}}}
\lV
\hb{{\rho_{\lz}}}{\sigma}-\unit
\rV=0.
\end{split}
\end{align}
\end{lem}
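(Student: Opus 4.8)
The plan is to compute $\hb{\rho_\lz}\sigma$ through the explicit transporter formula (\ref{niigata}) and to feed it directly into the asymptotic triviality statement of Lemma \ref{lem82}(i). The geometric idea is exactly the one in Remark \ref{mike} and Lemma \ref{neko}: the fixed object $\rho_\lz$ sits over $\lz$, while $\sigma$ is forced to live far to the right in $\lm{r2}(t)$; pushing $\sigma$ still further right to $\lm{r2}(t_2)$ never crosses the string of $\rho_\lz$, so the measured braiding is trivial up to the tail coming from the approximate Haag duality.

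First I would fix $\lm{r2}\in\Crbd$ with $\lz\leftarrow_r\lm{r2}$ and $\lm{r2}\subset\lzr$, and record the elementary convex-cone fact that translating a cone of $\Crbd$ rightward shrinks it, so that $\lm{r2}(t_2)\subset\lm{r2}(t)\subset\lm{r2}$ whenever $t_2\ge t\ge 0$. In particular every $\sigma\in O^{rU}_{\mathop{\mathrm{bd},\lm {r2}(t)}}$ is localized in $\lm{r2}(t)\subset\lzr$, hence lies in $\OrUbdl$, so $\hb{\rho_\lz}\sigma$ is defined and, by Proposition \ref{lem40}, is given by (\ref{niigata}) with this same representative cone $\lm{r2}$.

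The key localization step is the following claim: for such $\sigma$, any $t_2\ge t$ and any $V_{\sigma,\lm {r2}(t_2)}\in\VUbd{\sigma}{\lm{r2}(t_2)}$, one has $V_{\sigma,\lm {r2}(t_2)}\in\pbd(\caA_{(\lm {r2}(t))^c\cap\hu})'\cap\fbd$. Indeed, for $A\in\caA_{(\lm {r2}(t))^c\cap\hu}$ the localization of $\sigma$ gives $\sigma(A)=\pbd(A)$, and since $\lm{r2}(t_2)\subset\lm{r2}(t)$ we have $A\in\caA_{(\lm {r2}(t_2))^c\cap\hu}$, whence $\Ad(V_{\sigma,\lm {r2}(t_2)})\circ\sigma(A)=\pbd(A)$; combining these two equalities shows that $V_{\sigma,\lm {r2}(t_2)}$ commutes with $\pbd(A)$, and membership in $\fbd$ is built into the definition of $\VUbd{\sigma}{\cdot}$. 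I then apply Lemma \ref{lem82}(i) with $(\lm 1,\lm {r1})=(\lz,\lzr)$, the same $\lm{r2}$, $\rho=\rho_\lz$ and $V_{\rho_\lz,\lz}=\unit\in\Vbu{\rho_\lz}{\lz}$ (legitimate because $\rho_\lz\in\Obul$), which yields that $g(t):=\lV(\Tbd{\rho_\lz}\lzr\unit-\id)\vert_{\pbd(\caA_{(\lm {r2}(t))^c\cap\hu})'\cap\fbd}\rV\to 0$ as $t\to\infty$, with $g$ depending on neither $\sigma$ nor $t_2$. Using the claim and $\lV V_{\sigma,\lm {r2}(t_2)}\rV=1$, for every $t_2\ge t$ and every $\sigma\in O^{rU}_{\mathop{\mathrm{bd},\lm {r2}(t)}}$,
\begin{align*}
\lV(V_{\sigma,\lm {r2}(t_2)})^*\Tbd{\rho_\lz}\lzr\unit(V_{\sigma,\lm {r2}(t_2)})-\unit\rV
=\lV\Tbd{\rho_\lz}\lzr\unit(V_{\sigma,\lm {r2}(t_2)})-V_{\sigma,\lm {r2}(t_2)}\rV\le g(t).
\end{align*}
Letting $t_2\to\infty$ and invoking (\ref{niigata}) gives $\lV\hb{\rho_\lz}\sigma-\unit\rV\le g(t)$ uniformly in $\sigma$; taking the supremum and then $t\to\infty$ finishes the proof.

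The step I expect to be the main obstacle is the localization claim, i.e.\ making precise that the transporter $V_{\sigma,\lm {r2}(t_2)}$ genuinely lands in the commutant $\pbd(\caA_{(\lm {r2}(t))^c\cap\hu})'$ of the \emph{fixed} region indexed by $t$; this requires the cone-monotonicity $\lm{r2}(t_2)\subset\lm{r2}(t)$ together with the simultaneous use of the localizations of $\sigma$ and of $\Ad(V_{\sigma,\lm {r2}(t_2)})\circ\sigma$. Once that is in place the rest is bookkeeping following Lemma \ref{neko}; I would also double-check that the iterated limits (first $t_2\to\infty$ at fixed $t$, then $t\to\infty$) are legitimate, which they are precisely because the inner limit is dominated by the single $t$-dependent bound $g(t)$.
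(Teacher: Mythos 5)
Your proof is correct and follows essentially the same route as the paper: expressing $\hb{\rho_\lz}{\sigma}$ via (\ref{niigata}), observing that $\Vrl{\sigma}{\lm{r2}(t_2)}\in \pbd\lmk \caA_{(\lm{r2}(t))^c\cap\hu}\rmk'\cap\fbd$ for $t_2\ge t$, and bounding uniformly by the restricted norm of $\Tbd{\rho_\lz}{\lzr}\unit-\id$, which vanishes by Lemma \ref{lem82}(i). The only difference is that you spell out the cone monotonicity and the commutant membership of the transporters, which the paper states without proof; these details are verified correctly.
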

\begin{proof}
Fix $t\ge 0$. 
For any $\sigma\in O^{rU}_{\mathop{\mathrm{bd},\lm {r2}(t)}}$,
by (\ref{niigata}), we have
\begin{align}
\iota^{(\lz,\lzr)}\lmk{\rho_{\lz}}: \sigma\rmk=
\lim_{t_2\to\infty}\lmk \Vrl \sigma{\lm {r2}(t_2)}\rmk^*\Tbd{\rho_{\lz}}{\lzr}\unit
\lmk\Vrl \sigma{\lm {r2}(t_2)}\rmk,
\end{align}
and for each $t_2\ge t$,
$\Vrl \sigma{\lm {r2}(t_2)}\in \pbd\lmk \caA_{(\lm {r2}(t))^c\cap\hu}\rmk'\cap\fbd$,
hence
\begin{align}
\begin{split}
\lV
\iota^{(\lz,\lzr)}\lmk{\rho_{\lz}}: \sigma\rmk-\unit
\rV\le
\lV
\left. \lmk\Tbd{\rho_{\lz}}{\lzr}\unit-\id\rmk\right\vert_{\pbd\lmk \caA_{(\lm {r2}(t))^c\cap\hu}\rmk'\cap\fbd}
\rV.
\end{split}
\end{align}
Combining with Lemma \ref{lem82}, we obtain the claim.
\end{proof}
\begin{lem}\label{lem83}
Consider the setting in subsection \ref{setting2}.
Let $\rho\in \OrUbd$.
Suppose that for each 
$(\lz,\lzr)\in \pc$, there exists $\rho_{\lz}\in \OrUbdl$
with
$\hat\iota(\rho_{\lz}: -):\OrUbdl \to \caU(\hbd)$ satisfying the following.
\begin{description}
\item[(i)]
There exists $\lm {r2}\in\Crbd$ with $\lz\leftarrow_r\lm {r2}$, $\lm {r2}\subset \lzr$
such that
\begin{align}\label{okayama}
\lim_{t\to\infty}
\sup_{\sigma\in O^{rU}_{\mathop{\mathrm{bd},\lm {r2}(t)}}}
\lV
\hat\iota({\rho_{\lz}} :\sigma)-\unit
\rV=0.
\end{align}
\item[(ii)]
For all $\sigma,\sigma'\in \OrUbdl$,
$S\in (\sigma,\sigma')_U$,
\begin{align}
\hat\iota({\rho_{\lz}} :\sigma')(\unit_{\rho_{\lz}}\otimes S)
=(S\otimes \unit_{\rho_{\lz}})\hat\iota({\rho_{\lz}} :\sigma).
\end{align}
\item[(iii)]
For each $u\in \caU(\caA_{\lz^c\cap\lzr})$, 
$\hat\iota(\rho_{\lz}: \sigma_u)=\unit$,
with $\sigma_u:=\pbd\Ad u$.
\item[(iv)] $\rho\simeq_U\rho_{\lz}$.
\end{description}
Then $\rho$ belongs to $\Obu$.
\end{lem}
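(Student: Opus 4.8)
The plan is to verify directly that $\rho\in\Obu$, i.e. that $\Vbu{\rho}{\Lambda_0}\neq\emptyset$ for every $\Lambda_0\in\CUbk$. Fix such a $\Lambda_0$. Since every cone in $\CUbk$ is contained in a sufficiently wide right boundary cone, we may choose $(\lz,\lzr)\in\pc$ with $\lz=\Lambda_0$, and take the corresponding $\rho_{\lz}\in\OrUbdl$ together with its half-braiding $\hat\iota(\rho_{\lz}:-)$ provided by the hypothesis. Because $\rho\simeq_U\rho_{\lz}$ (property (iv)) and $\Obu$ is stable under $\simeq_U$ (if $\rho=\Ad U\,\rho_{\lz}$ with $U\in\fbd$ and $V\in\Vbu{\rho_{\lz}}{\Lambda_0}$, then $VU^*\in\Vbu{\rho}{\Lambda_0}$), it suffices to prove that $\unit\in\Vbu{\rho_{\lz}}{\lz}$, that is, $\rho_{\lz}\vert_{\caA_{\lhucz}}=\pbd\vert_{\caA_{\lhucz}}$.

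The key step is to test the half-braiding against inner perturbations. For $u\in\caU(\caA_{\lz^c\cap\lzr})$ set $\sigma_u:=\pbd\Ad u\in\OrUbdl$. One checks directly that $\pbd(u)\in(\pbd,\sigma_u)_U$, and $\pbd(u)\in\fbd$ since $\pbd(\abd)\subset\fbd$. Applying the naturality property (ii) with $\sigma=\pbd$, $\sigma'=\sigma_u$ and $S=\pbd(u)$, and computing the two tensor products from the categorical convention $R\otimes S=R\,\Tbd{\rho_{\lz}}{\lzr}\unit(S)$ together with $\Tbd{\rho_{\lz}}{\lzr}\unit\,\pbd=\rho_{\lz}$ and $\Tbd{\pbd}{\lzr}\unit=\id$, one finds $\unit_{\rho_{\lz}}\otimes\pbd(u)=\rho_{\lz}(u)$ and $\pbd(u)\otimes\unit_{\rho_{\lz}}=\pbd(u)$. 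Hence
\[
\hat\iota(\rho_{\lz}:\sigma_u)\,\rho_{\lz}(u)=\pbd(u)\,\hat\iota(\rho_{\lz}:\pbd).
\]
Taking $u=\unit$ in property (iii) gives $\hat\iota(\rho_{\lz}:\pbd)=\unit$, while property (iii) for general $u$ gives $\hat\iota(\rho_{\lz}:\sigma_u)=\unit$. Substituting, $\rho_{\lz}(u)=\pbd(u)$ for every $u\in\caU(\caA_{\lz^c\cap\lzr})$.

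Since the unitaries of the local matrix algebras span $\caA_{\lz^c\cap\lzr,\mathrm{loc}}$, which is norm-dense in $\caA_{\lz^c\cap\lzr}$, this identity and norm-continuity of $*$-homomorphisms yield $\rho_{\lz}\vert_{\caA_{\lz^c\cap\lzr}}=\pbd\vert_{\caA_{\lz^c\cap\lzr}}$. Combining this with $\rho_{\lz}\vert_{\caA_{\lzr^c\cap\hu}}=\pbd\vert_{\caA_{\lzr^c\cap\hu}}$, which holds because $\rho_{\lz}\in\OrUbdl$ is localized in $\lzr$, and using $\lz\subset\lzr$ so that $(\lz^c\cap\lzr)\cup\lzr^c=\lz^c$, we obtain $\rho_{\lz}\vert_{\caA_{\lhucz}}=\pbd\vert_{\caA_{\lhucz}}$, i.e. $\unit\in\Vbu{\rho_{\lz}}{\lz}$. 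By the reduction in the first paragraph this proves $\Vbu{\rho}{\Lambda_0}\neq\emptyset$, and as $\Lambda_0\in\CUbk$ was arbitrary, $\rho\in\Obu$.

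The argument is short once one sees that the correct probe is the family of inner automorphisms $\sigma_u$; the main delicate points are purely bookkeeping, namely the geometric fact that $(\lz,\lzr)\in\pc$ can be arranged with $\lz=\Lambda_0$, and the careful evaluation of $\unit_{\rho_{\lz}}\otimes\pbd(u)$ and $\pbd(u)\otimes\unit_{\rho_{\lz}}$ using the category's tensor structure and the extensions $T$. I note that the asymptotic property (i) is not actually needed for this direction: only the naturality (ii), the inner-triviality (iii), and the equivalence (iv) are used.
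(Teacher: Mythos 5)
Your opening reduction contains a genuine error. You claim that for every $\Lambda_0\in\CUbk$ one can choose $(\lz,\lzr)\in\pc$ with $\lz=\Lambda_0$. This is false: every cone $\lzr=\Lambda_{(a,0),\varphi,\varphi}\in\Crbd$ has its apex on the $x$-axis and argument inside $(0,\pi)$, so $\lzr\subset\hu$; but a cone $\Lambda_0\in\CUbk$ may have its apex strictly below the $x$-axis, in which case it contains points of $\hd$ and hence is contained in no $\lzr\in\Crbd$, so $(\Lambda_0,\lzr)\notin\pc$ for any $\lzr$. As written, your argument therefore only produces elements of $\Vbu{\rho}{\Lambda_0}$ for cones $\Lambda_0\subset\hu$, while membership in $\Obu$ requires a nonempty $\Vbu{\rho}{\Lambda}$ for \emph{every} $\Lambda\in\CUbk$. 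The repair is exactly what the paper does: given $\Lambda_0\in\CUbk$, choose $\lz\in\CUbk$ with $\lz\subset\Lambda_0\cap\hu$ and $(\lz,\lzr)\in\pc$ (such a pair always exists), run your argument for this $\lz$ to get $\rho_{\lz}\vert_{\caA_{\lz^c\cap\hu}}=\pbd\vert_{\caA_{\lz^c\cap\hu}}$, and then use the monotonicity $\Lambda_0^c\cap\hu\subset\lz^c\cap\hu$ to conclude $\rho_{\lz}\vert_{\caA_{\Lambda_0^c\cap\hu}}=\pbd\vert_{\caA_{\Lambda_0^c\cap\hu}}$; your $\simeq_U$-stability step then yields a transporter for $\Lambda_0$ unchanged.

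Apart from this, your core computation is correct and is essentially the paper's own argument in streamlined form: the paper likewise probes the half-braiding with the inner perturbations $\sigma_u=\pbd\Ad u$, $u\in\caU(\caA_{\lz^c\cap\lzr})$, and extracts $\pbd(u)\rho_{\lz}(u^*)=\unit$ from naturality together with (iii), then localizes $\rho_{\lz}$ as you do. Your side observation that hypothesis (i) is dispensable is also correct and worth noting: the paper invokes (i) only to establish the limit formula $\hat\iota(\rho_{\lz}:\sigma)=\lim_{t\to\infty}\lmk \Vrl\sigma{\lm{r2}(t)}\rmk^*\Tbd{\rho_{\lz}}{\lzr}\unit\lmk \Vrl\sigma{\lm{r2}(t)}\rmk$, but when this is specialized to $\sigma_u$ with the $t$-independent transporter $\pbd(u^*)$, the only content extracted from (i) is $\hat\iota(\rho_{\lz}:\pbd)=\unit$, which (iii) at $u=\unit$ already supplies, exactly as in your derivation.
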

\begin{rem}
We already saw that $\rho\in \Obu$ satisfies the assumptions of the Lemma,
under Assumption \ref{assum80} and Assumption \ref{assum80l}.
\end{rem}
\begin{proof}
For any $\ld\in \CUbk$, there exists $(\lz,\lzr)\in \pc$
such that $\lz\subset\ld\cap\hu$.
For this $(\lz,\lzr)\in\pc$, consider the representation $\rho_{\lz}\in \OrUbdl$ in the assumption.
For $\lm{r2}$ in (i) and 
 any $\sigma\in \OrUbdl$, 
take $\Vrl\sigma{\lm {r2}(t)}\in \VUbd\sigma{\lm{r2}(t)}$, $t\ge 0$.
Then we have $\sigma(t):=\Ad\lmk \Vrl\sigma{\lm {r2}(t)}\rmk\sigma\in\OrUbdn{\lm {r2}(t)}\subset  \OrUbdl$ and
$\Vrl\sigma{\lm {r2}(t)}\in \lmk \sigma,\sigma(t)\rmk_U$.
Hence by the naturality (ii), we have
\begin{align}
\hat\iota({\rho_{\lz}} :\sigma)=
(\Vrl\sigma{\lm {r2}(t)}\otimes \unit_{\rho_{\lz}})^*\hat\iota({\rho_{\lz}} :\sigma(t))(\unit_{\rho_{\lz}}\otimes \Vrl\sigma{\lm {r2}(t)})
=(\Vrl\sigma{\lm {r2}(t)})^*\hat\iota({\rho_{\lz}} :\sigma(t))\Tbd{\rho_{\lz}}{\lzr}\unit\lmk  \Vrl\sigma{\lm {r2}(t)}\rmk.
\end{align}
Taking $t\to\infty$, from (i),  we obtain
\begin{align}
\begin{split}
\hat\iota({\rho_{\lz}} :\sigma)
=\lim_{t\to\infty} 
(\Vrl\sigma{\lm {r2}(t)})^*\Tbd{\rho_{\lz}}{\lzr}\unit\lmk  \Vrl\sigma{\lm {r2}(t)}\rmk.
\end{split}
\end{align}
For each 
 $\sigma_u:=\pbd\Ad u$ with $u\in \caU(\caA_{\lz^c\cap\lzr})$, 
and $\pbd(u^*)\in  \VUbd{\sigma_u}{\lm{r2}(t)}$, we apply this formula.
Then, with (iii), we obtain
\begin{align}
\begin{split}
\unit=\hat\iota({\rho_{\lz}} :\sigma_u)
=\pbd(u)\Tbd{\rho_{\lz}}{\lzr}\unit\lmk \pbd(u^*)\rmk
=\pbd(u)\rho_{\lz}(u^*),
\end{split}
\end{align}
for all $u\in \caU(\caA_{\lz^c\cap\lzr})$.
 This, combined with $\rho_{\lz}\in \OrUbdl$,
 implies $\rho_{\lz}\vert_{\caA_{\lz^c\cap \hu}}=\pbd\vert_{\caA_{\lz^c\cap\hu}}$,
 hence $\rho_{\lz}\vert_{\caA_{\ld^c\cap \hu}}=\pbd\vert_{\caA_{\ld^c\cap\hu}}$.
 From (iv), we conclude $\rho\in\Obu$.

\end{proof}

\noindent{Funding and/or Conflicts of interests/Competing interests This work was supported by JSPS KAKENHI Grant Number 19K03534 and 22H01127.
It was also supported by JST CREST Grant Number JPMJCR19T2.
The author has no competing interests to declare that are relevant to the content of this article.}\\\\
\noindent{DATA AVAILABILITY
The data that support the findings of this study are available within the article.}

\bibliographystyle{amsplain}
\bibliography{BA}
\end{document}